\def\eqref#1{equation~\ref{#1}}
\def\1{\bm{1}}
\def\vx{{\bm{x}}}
\def\vy{{\bm{y}}}
\DeclareMathAlphabet{\mathsfit}{\encodingdefault}{\sfdefault}{m}{sl}
\SetMathAlphabet{\mathsfit}{bold}{\encodingdefault}{\sfdefault}{bx}{n}
\newcommand{\E}{\mathbb{E}}
\newcommand{\EE}[1]{\mathbb{E}\!\left[#1\right]}
\DeclareMathOperator{\Tr}{Tr}
\newcommand{\ignore}[1]{\relax}
\definecolor{blueviolet}{rgb}{0.2, 0.2, 0.6}
\definecolor{webgreen}{rgb}{0,.5,0}
\definecolor{webbrown}{rgb}{.6,0,0}
\definecolor{darkred}{HTML}{CC333A}
\newtheorem{theorem}{Theorem}
\newtheorem{proposition}[theorem]{Proposition}
\newtheorem{lemma}[theorem]{Lemma}
\newtheorem{corollary}[theorem]{Corollary}
\newtheorem{conjecture}[theorem]{Conjecture}
\theoremstyle{definition}
\newtheorem{definition}[theorem]{Definition}
\theoremstyle{remark}
\newcommand{\Pois}{\ensuremath{\operatorname{Pois}}\xspace}
\newcommand{\pr}{\mathbb{P}}
\newcommand{\inda}{\ensuremath{\bar{\bm a}}}
\newcommand{\indi}{\ensuremath{\bar{\bm i}}}
\newcommand{\indj}{\ensuremath{\bar{\bm j}}}
\newcommand{\indasub}{\ensuremath{ a}}
\newcommand{\indisub}{\ensuremath{ i}}
\newcommand{\multiset}[1]{\ensuremath{\{\!\!\{#1\}\!\!\}}}
\title{Bounds on the ground state energy of quantum $p$-spin Hamiltonians}
\author{
{\sf Eric R.\ Anschuetz}\thanks{Institute for Quantum Information and Matter, Caltech; Walter Burke Institute for Theoretical Physics, Caltech; MIT Center for Theoretical Physics, MIT; e-mail: {\href{mailto:eans@caltech.edu}{\texttt{eans@caltech.edu}}}.} \thanks{Funding from the Walter Burke Institute for Theoretical Physics at Caltech and the DARPA ONISQ program (grant number W911NF2010021) is gratefully acknowledged.}
\and
{\sf David Gamarnik}    \thanks{Operations Research Center, Statistics and Data Science Center,  Sloan School of Management, MIT; e-mail: {\href{mailto:gamarnik@mit.edu}{\texttt{gamarnik@mit.edu}}}} \thanks{Funding from NSF Grant CISE-2233897 is gratefully acknowledged.}
\and
{\sf Bobak T.\ Kiani}\thanks{John A. Paulson School of Engineering and Applied Sciences, Harvard; Department of Electrical Engineering and Computer Science, MIT; e-mail: {\href{mailto:bkiani@g.harvard.edu}{\texttt{bkiani@g.harvard.edu}}}.}
}
\date{}
\begin{document}

\maketitle

\begin{abstract}
We consider the problem of estimating the ground state energy of quantum $p$-local spin glass random Hamiltonians, the quantum analogues of widely studied classical spin glass models. Our main result shows that the maximum energy achievable by product states has a well-defined limit (for even $p$) as $n\to\infty$ and 
is $E_{\text{product}}^\ast=\sqrt{2 \log p}$ in the limit of large $p$. This value is interpreted as the maximal energy of a much simpler so-called Random Energy Model, widely studied in the setting of classical spin glasses. The proof of the limit existing follows from an extension of Fekete's Lemma after we demonstrate near super-additivity of the (normalized) quenched free energy. The proof of the value follows from a second moment method on the number of states achieving a given energy when restricting to an $\epsilon$-net of product states.

Furthermore, we relate the maximal energy achieved over \emph{all} states to a $p$-dependent constant $\gamma\left(p\right)$, which is defined by the degree of violation of a certain asymptotic independence ansatz over graph matchings. We show that the maximal energy achieved by all states $E^\ast\left(p\right)$ in the limit of large $n$ is at most $\sqrt{\gamma\left(p\right)}E_{\text{product}}^\ast$. We also prove using Lindeberg's interpolation method that the limiting $E^\ast\left(p\right)$ is robust with respect to the choice of the randomness and, for instance, also applies to the case of sparse random Hamiltonians. This robustness in the randomness extends to a wide range of random Hamiltonian models including SYK and random quantum max-cut.

\end{abstract}

\thispagestyle{empty}
\tableofcontents
\thispagestyle{empty}
\clearpage

\section{Introduction}\label{sec:intro}

We  study the values of the maximal energy of the quantum analogue of a classical $p$-spin glass model~\cite{bray1980replica,erdHos2014phase,baldwin2020quenched}, which is a $p$-local random Hamiltonian taking the form
\begin{equation*}
    H_{n,p} = {n \choose p}^{-1/2} \sum_{\substack{\sigma \in \mathcal{P}_n \\ \sigma \text{ is $p$-local}}} \alpha_\sigma \sigma, \quad \quad \alpha_\sigma \sim \mathcal{N}(0,1),
\end{equation*}
where $\mathcal{P}_n$ denotes the set of Pauli matrices on $n$ qubits and $\alpha_\sigma$ are i.i.d.\ Gaussian coefficients (see the more formal definition in \Cref{eq:hamiltonian_concise_form} below). For classical spin glasses, seminal developments originating from the work of Parisi~\cite{parisi1979infinite} and later formalized by Talagrand and Panchenko~\cite{talagrand2006parisi,panchenko2013sherrington} provide a remarkably precise value for this maximal energy for every fixed $p$ via a solution of a certain variational problem. The answer simplifies dramatically
as $p$ increases (after the number of spins $n$ diverges) to the value $\sqrt{2\log\left(2\right)n}$, as was shown recently in~\cite{gamarnik2023shattering}.

Unlike the widely studied classical spin glass model which is now rigorously analyzed and well understood, its quantum counterpart remains notably underdeveloped. Here, we prove various results for the quantum spin model. Our main result shows that the maximum energy achievable by product states has a well-defined limit as $n\to\infty$ and is $(1+o_p(1))\sqrt{2 \log\left(p\right)n}$ in the limit of large $p$.
The limit value $\sqrt{2\log\left(p\right)n}$ is interpreted as an extreme value of an uncorrelated
sequence of Gaussians and arises naturally in the so-called Random Energy Model (REM)---a tractable simplification of classical spin glasses. 
We also take some first steps towards estimating the maximum energy achievable by \emph{all} states by upper bounding its limiting value under an asymptotic independence ansatz over graph matchings.
This ansatz states that the joint distribution of certain
functionals of a random matching in a sparse random hypergraph model is approximately of product form up to $\gamma(p)$ multiplicative approximation factors, which we bound as $\gamma(p)\leq 3^p$.
The value $\sqrt{\gamma(p)}$ corresponds almost directly to the ratio of approximation achievable by product states in comparison to the true ground energy.

Finally, we prove a universality result---namely, that the limiting maximal energy is asymptotically (i.e., as $n\to\infty$) independent from the specifics of the distribution of coefficients $\alpha_\sigma$. In particular, the energy remains the same if $\bm{\alpha}$ is supported on a sparse random hypergraph with arbitrarily slowly growing average degree. We prove our result for a broad range of local Hamiltonians bounded in norms, not necessarily corresponding to the Pauli operators. In particular, our result applies to the widely-studied SYK model~\cite{PhysRevLett.70.3339}. Our universality result in this context is particularly remarkable since it is known, for example, that the so-called Gaussian states achieve a constant factor approximation for sparse random hypergraphs~\cite{herasymenko2023optimizing}, while this is provably not the case for the original (dense) model~\cite{hastings2022optimizing}.

\subsection*{Informal statement of results}

We first give an explicit description of the quantum $p$-spin model we here consider. We adopt super-index notation $\indi =(i_1,\ldots,i_p)\in [n]^p$ and $\inda =(a_1,\ldots,a_p)\in \{1,2,3\}^p$, each of  which is a vector of length $p$. Letting $\mathcal{I}_p^n=\{\indi \in [n]^p:\indisub_1 < \indisub_2 < \cdots <\indisub_p \}$ denote the set of ordered $p$-tuples, the Hamiltonian of the quantum $p$-spin model can be written as:
\begin{equation} \label{eq:hamiltonian_concise_form}
     H_{n,p} = \frac{1}{{n \choose p}^{1/2}   } \sum_{\indi \in \mathcal{I}_p^n} \sum_{\inda \in \{1,2,3\}^p} \alpha[\indi; \inda] \; P_{\indi}^{\inda},
\end{equation}
where $\alpha[\indi; \inda]$ are i.i.d.\ standard normal coefficients and $P_{\indi}^{\inda} =\sigma_{\indisub_1}^{\indasub_1} \sigma_{\indisub_2}^{\indasub_2} \cdots \sigma_{\indisub_p}^{\indasub_p}$. We use the notation where $\sigma_i^a$ indicates Pauli matrix $\sigma^a$ acting on qubit $i$, where $1, 2, 3 $ correspond to the standard Pauli operators $ X, Y, Z$, respectively. We will later demonstrate that there is some flexibility in the choice of distribution for $\bm{\alpha}$. When the $\bm \alpha$ are specified, we denote the Hamiltonian with coefficients $\bm \alpha$ as $H_{n,p}(\bm \alpha)$.

An important quantity is the constant $E^*(p)$ corresponding to the largest average eigenvalue or maximum energy of the random Hamiltonian $H_{n,p}$ in the limit of large $n$:
\begin{equation} \label{eq:ground_energy_limit_def}
    E^*(p) \coloneqq \lim_{n \to \infty}E_n^*(p)\coloneqq\lim_{n \to \infty} \E \left[ \frac{1}{\sqrt{n}} \lambda_{\text{max}}(H_{n,p}) \right] = \lim_{n \to \infty} \E \left[ \frac{1}{\sqrt{n}} \max_{\ket{\phi} \in \mathcal{S}_{\rm all}^n} \bra{\phi}H_{n,p} \ket{\phi} \right].
\end{equation}
This limiting extremal value is characterized in two equivalent ways: as the largest eigenvalue $\lambda_{\text{max}}$ or as the maximum energy over the set $\mathcal{S}_{\rm all}^n$ of all quantum states on $n$ qubits.
In fact, this average limiting energy also holds in the high probability sense (namely, the model exhibits self-averaging) as standard concentration bounds can be applied (see \Cref{sec:concentration}) to show that in the limit of large $n$, the maximum energy $\lambda_{\text{max}}(H_{n,p})$ concentrates around its average. An analogous constant $E_{\rm product}^*(p)$ defined as:
\begin{equation}\label{eq:prod_state_limit_def}
    E_{\rm product}^*(p) \coloneqq \lim_{n \to \infty} \E \left[ \frac{1}{\sqrt{n}} \max_{\ket{\phi} \in \mathcal{S}_{\rm product}^n} \bra{\phi}H_{n,p} \ket{\phi} \right]
\end{equation}
quantifies the limiting maximum energy when optimizing over the set $\mathcal{S}_{\rm product}^n$ of all product states on $n$ qubits. 

Justifying the existence of the above limits $E^*(p)$ and $E_{\rm product}^*(p)$ is far from obvious. Fortunately, for even $p$, we can extend proofs  from classical spin glasses~\cite{guerra2002thermodynamic} (see \Cref{sec:limit_exists}) to show that the limit $E_{\rm product}^*(p)$ does indeed exist. Formally, we cannot guarantee that the limit in the definition of $E^*(p)$ (\Cref{eq:ground_energy_limit_def}) exists, and leave such a proof of existence as a conjecture (see \Cref{conj:limit_exists_full}). In stating our formal results, we will rely on $\limsup$ and $\liminf$ instead.

As stated previously, our results depend on the degree of which a certain trace functional of random matchings asymptotically depends only on the marginals of these matchings on individual qubits, which we state as a conjecture (formally given as \Cref{conj:main-0}).
\begin{conjecture}[Asymptotic matching ansatz (informal)]\label{conj:inf}
Consider the uniform distribution of matchings $M$ on $2r$ cardinality-$p$ superindices $\indi_1,\ldots,\indi_{2r}$. Given $M$, let $M_j$ be the induced matching on superindices containing $j$ as an element, and define:
\begin{equation}
    \Tr_{\rm sum}\left(M_j\right)\triangleq \frac{1}{2}\Tr\left(\sum\limits_{\sigma:M_j}\prod\limits_{1\leq k\leq 2r}\sigma_k\right),
\end{equation}
where $\sigma_k=\sigma_l$ whenever $k\sim l$ according to the induced matching $M_j$.

For every constant $C$ and $r\le Cn$, there exists a constant $\gamma(p)$ depending on $p$ but independent of $n$ and $r$ such that the following holds: 
\begin{equation*}
\E_{\indi_1, \dots, \indi_{2r}} \E_{M}\left[ \prod_{j\in [n]}\Trace_{\rm sum}(M_j)\right]\leq\gamma(p)^r \exp(O_p(1)n) \; \E_{\indi_1, \dots, \indi_{2r}} \left[ \prod_{j\in [n]} \E_{M_j} \left[\Trace_{\rm sum}(M_j) \right]\right].
\end{equation*}
\end{conjecture}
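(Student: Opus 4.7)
The plan is to prove the inequality by first recasting the joint trace functional as a global Pauli sum, and then arguing by induction on the number of matched pairs $r$, with $\gamma(p)$ emerging as a per-pair overhead from isolated Paulis. The outer expectation over the superindices is preserved by a trivial interchange, so throughout I would fix $\indi_1,\ldots,\indi_{2r}$. The first step is the reformulation: swapping per-qubit Pauli sums with the product over qubits and using $\prod_j \Tr(A_j)=\Tr(\bigotimes_j A_j)$ gives the identity
\begin{equation*}
\prod_{j\in[n]} \Tr_{\rm sum}(M_j) = 2^{-n} \sum_{\substack{(\sigma_k)_{k=1}^{2r}\\ \sigma_{a,j}=\sigma_{b,j}\ \forall\,(a,b)\in M,\ j\in \indi_a\cap \indi_b}} \Tr\Bigl(\prod_{k=1}^{2r}\sigma_k\Bigr),
\end{equation*}
where each $\sigma_k$ is a Pauli string supported on $\indi_k$. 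The matching constraint cancels the paired Paulis on $\indi_a\cap\indi_b$ to the identity, so the trace collapses to a product over qubits of traces of ``residual'' Paulis coming from the symmetric differences $\indi_a\triangle\indi_b$. The right-hand side of the conjecture admits an analogous representation in which the single global matching constraint is replaced by independent per-qubit constraints drawn from the marginals of $M_j$.

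The second step is induction on $r$. I would condition on the partner $M(1)$ of superindex $1$, which is uniform on $\{2,\ldots,2r\}$, and apply the inductive hypothesis to the uniform random matching $M'$ on the remaining $2r-2$ superindices. The goal is to split $\prod_j\Tr_{\rm sum}(M_j)$ into three pieces: (i) a factor depending only on $M'$, from qubits disjoint from $\indi_1\cup\indi_{M(1)}$; (ii) a clean multiplicative factor of $3$ per qubit in $\indi_1\cap\indi_{M(1)}$, arising from $\sigma\cdot\sigma=I$; and (iii) a factor from the at most $2p$ qubits in $\indi_1\triangle\indi_{M(1)}$, where a single ``free'' Pauli from the peeled-off pair enters the trace. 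A careful Pauli-algebra accounting of (iii), summing over the $3$ choices per free position and exploiting trace identities on products of Paulis, should yield the target $\gamma(p)\le 3^p$ per matched pair, giving $\gamma(p)^r$ after $r$ iterations. The slack $\exp(O_p(1)n)$ is expected to absorb lower-order corrections from the $O(rp)=O_p(n)$ isolated incidences.

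The main obstacle is that piece (iii) does not cleanly decouple from the rest of the induction: the free Pauli introduced by the peeled pair at an isolated qubit is not independent of $M'$, because the surviving matching $M'$ induces its own structure at that same qubit, and the residual single-qubit trace couples the two. Closing the inductive step thus requires showing that the marginal law of $M_j$ under the uniform matching on $[2r]$ is multiplicatively close, for every $j$, to the law of $M_j$ induced by $M'$ augmented by an independent choice of partner for the peeled pair's endpoints. This is a local-central-limit-flavored statement about restrictions of uniform perfect matchings to random subsets; quantifying it without incurring an $n$- or $r$-dependent overhead is the principal technical hurdle. Should the induction resist, a plausible alternative is a direct cumulant-moment comparison exploiting the negative association of uniform matchings: one rewrites $\prod_j\Tr_{\rm sum}(M_j)$ as a polynomial in matching-edge indicators and bounds its joint cumulants to obtain the same $\gamma(p)^r \exp(O_p(1)n)$ control.
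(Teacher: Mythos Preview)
The statement you are attempting to prove is a \emph{conjecture} in the paper, and the paper does not prove it in the intended sense. The only rigorous content the paper supplies is the trivial observation that $\gamma(p)=3^p$ always works, and this follows from a one-line bound rather than any induction: each $\Tr_{\rm sum}(M_j)$ is a sum of $3^{\Delta(j)}$ terms, each of the form $\tfrac{1}{2}\Tr(\sigma_1\cdots\sigma_{2\Delta(j)})$ and hence bounded by $1$ in absolute value, so
\[
\Bigl|\prod_{j\in[n]}\Tr_{\rm sum}(M_j)\Bigr|\le \prod_{j\in[n]}3^{\Delta(j)}=3^{\sum_j\Delta(j)}=3^{pr}=(3^p)^r,
\]
while the right-hand side satisfies $\prod_j(2\Delta(j)+1)\ge 1$. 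That is the entirety of the paper's argument; the conjectural content is whether $\gamma(p)$ can be taken to grow strictly slower than $3^p$.

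Your proposal is therefore aimed at reproving, via an elaborate induction on $r$, a bound that follows from term-counting. The induction you sketch is both unnecessary for $\gamma(p)=3^p$ and, as you yourself identify, genuinely obstructed at step~(iii): peeling off a matched pair leaves residual Paulis at qubits in $\indi_1\triangle\indi_{M(1)}$ that couple nontrivially to the remaining matching $M'$, so the product does not factor into an $M'$-only piece times a clean per-pair overhead. Even if this coupling were controlled, the accounting you describe---summing over $3$ Pauli choices at each of up to $2p$ free positions---again produces at best $3^p$ per pair, so the induction would not improve on the trivial bound. Neither the paper nor your proposal makes progress on the actual open question, which is whether $\gamma(p)=o(3^p)$.
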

At this stage we do not have any rigorous treatment of the actual growth rate of $\gamma(p)$ other than an upper bound of $\gamma\left(p\right)\leq 3^p$. The essence of our conjecture is that the growth rate of $\gamma(p)$ is slower than $3^p$.
We leave the tightest choice of $\gamma\left(p\right)$ for which this bound holds  as an open question.

Product states possess no entanglement and have efficient classical descriptions. Thus comparing $E^*(p)$ and $E_{\rm product}^*(p)$ helps answer whether extrema of the random quantum  Hamiltonians are ``complex'' quantum states. Our main result stated---informally below---quantifies this in terms of the coefficient $\gamma(p)$ for large $p$: 
\begin{theorem}[Main result (informal)]\label{theorem:Main-result-informal}
    The limiting maximum energy attainable by product states $E_{\rm product}^*(p)$ (\Cref{eq:prod_state_limit_def}) as a function of locality $p$ satisfies $E_{\rm product}^*(p) = (1 + o_p(1))\sqrt{2 \log(p)}$, and furthermore
    \begin{equation} \label{eq:main_bounds}
        (1 + o_p(1))\sqrt{2 \log(p)}
=
E_{\rm product}^*(p) \leq E^*(p) \leq \sqrt{2\gamma(p)\log p},
    \end{equation}
    where  $\gamma(p)$ is from the ansatz of \Cref{conj:inf},  and
     $E^*(p)$ is the (conjectural) limit of the ground energy of the $p$-local quantum spin glass Hamiltonian (\Cref{eq:ground_energy_limit_def}).
\end{theorem}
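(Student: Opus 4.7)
The theorem packages three distinct statements: (a) the asymptotic $E_{\rm product}^*(p) = (1+o_p(1))\sqrt{2\log p}$, (b) the trivial containment bound $E_{\rm product}^*(p)\leq E^*(p)$, which follows from $\mathcal{S}_{\rm product}^n\subseteq \mathcal{S}_{\rm all}^n$, and (c) the upper bound $E^*(p)\leq \sqrt{2\gamma(p)\log p}$ conditional on the matching ansatz \Cref{conj:inf}. The existence of $E_{\rm product}^*(p)$ for even $p$ is already reduced to near super-additivity of the quenched free energy via the Fekete-style extension announced in the abstract, and the conjectural $E^*(p)$ in (c) is treated as a $\limsup$. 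The substantive work is therefore in (a) and (c), which use quite different methods.

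For (a), the starting observation is that for any pure product state $\ket{\phi}=\bigotimes_i\ket{\phi_i}$ with unit Bloch vectors $\vec r_i$, the random variable $\bra{\phi}H_{n,p}\ket{\phi}$ is a centered Gaussian of variance exactly $1$, since $\binom{n}{p}^{-1}\sum_{\indi,\inda}\prod_k (r_{i_k}^{a_k})^2$ collapses via $\sum_{a}(r^a)^2=1$. The covariance between two product states equals $\binom{n}{p}^{-1}\sum_{\indi}\prod_k\langle \vec r_{\phi,i_k},\vec r_{\psi,i_k}\rangle$, which for typical pairs decays as the $p$-th power of an overlap, making the ensemble REM-like for large $p$. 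The upper bound in (a) follows from an $\epsilon$-net argument on the $n$-fold Bloch sphere at resolution $\epsilon=\Theta(1/\sqrt{p})$, so that the log net cardinality is $\Theta(n\log p)$, combined with a Gaussian union bound and a Lipschitz estimate for $\phi\mapsto\bra{\phi}H\ket{\phi}$. The matching lower bound comes from a Paley--Zygmund second-moment argument on the count $Z_t$ of net points whose energy exceeds $t\sqrt{n}$ for $t$ just below $\sqrt{2\log p}$: the factorized covariance forces $\E[Z_t^2]\leq(1+o(1))\E[Z_t]^2$ once one restricts to a well-separated sub-net.

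For (c) the plan is the Gaussian moment method. Since $\lambda_{\max}(H_{n,p})^{2r}\leq \Tr(H_{n,p}^{2r})$ deterministically, Jensen's inequality yields $\E[\lambda_{\max}(H_{n,p})]\leq \E[\Tr(H_{n,p}^{2r})]^{1/(2r)}$ for any positive integer $r$. Expanding $H_{n,p}^{2r}$ and applying Wick's formula to the i.i.d.\ Gaussian coefficients reduces $\E[\Tr(H_{n,p}^{2r})]$ to a normalized sum over matchings $M$ of $2r$ superindex/Pauli-string pairs of a qubit-local trace that factorizes into exactly the $\prod_j \Tr_{\rm sum}(M_j)$ appearing in \Cref{conj:inf}. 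The ansatz replaces this joint average by $\gamma(p)^r$ times the product of single-qubit marginal averages, a much simpler combinatorial object estimated directly from the same Bloch-sphere calculation used in (a). Choosing $r$ to grow sufficiently fast with $n$ to absorb the $2^{n/(2r)}$ slack in the moment bound, and optimizing against the $\gamma(p)^r$ factor, then yields $E^*(p)\leq \sqrt{2\gamma(p)\log p}$, with the $\log p$ arising from the same Gaussian extreme-value mechanism as in (a).

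The principal technical obstacle is in (c): the Wick decomposition of $\E[\Tr(H^{2r})]$ produces a sum over matchings whose qubit-local trace factors depend intricately on the induced matchings $M_j$, and the ansatz is precisely what makes this sum tractable. Even granted the ansatz, the trade-off between the factor $\gamma(p)^r$ and the requirement that $r$ grow with $n$ fast enough to probe $\lambda_{\max}$ must be calibrated delicately, and any improvement on the current $\gamma(p)\leq 3^p$ estimate directly tightens the final bound. A secondary difficulty in (a) is calibrating the $\epsilon$-net resolution so that the second moment is not inflated by near-diagonal pairs whose covariance has not yet been suppressed to REM scale by the $p$-th power of the overlap; without careful pruning the leading constant $\sqrt{2}$ can be lost.
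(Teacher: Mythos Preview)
Your outline matches the paper's strategy on all three parts, and for (c) the paper's use of the annealed free energy $(\beta n)^{-1}\log\E\Tr\exp(\beta\sqrt n H)$ is just the generating-function version of your direct $2r$-th-moment bound. Three specific claims need adjustment.

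For the upper half of (a), a single-scale net at Bloch resolution $\Theta(1/\sqrt p)$ plus a bare Lipschitz bound will not recover the constant $\sqrt 2$: the canonical Gaussian-process diameter of a cell at that resolution is $\sqrt{2-2(1-\Theta(1/\sqrt p))^p}\to\sqrt 2$, so the in-cell fluctuation is $\Theta(\sqrt n)$ rather than $o(\sqrt{n\log p})$. The paper instead uses a net at canonical resolution $C/\log p$ and bounds the residual by Dudley's entropy integral, which is $O_p(1)\sqrt n$; chaining, not a one-scale estimate, is what lets cardinality and fluctuation be balanced. For the lower half, the second-moment ratio on the $q^n$-point product net is not $1+o(1)$ but only $\exp(-O_p(1)n)$, because pairs agreeing on most qubits contribute nonnegligibly; the paper upgrades the resulting Paley--Zygmund probability $\exp(-O_p(1)n)$ to a whp statement via Frieze's contradiction-with-concentration trick. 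For (c), the single-qubit marginal $\E_{M_j}[\Tr_{\rm sum}(M_j)]$ is not a Bloch-sphere computation: the paper obtains the exact formula $\E_{M_j}[\Tr_{\rm sum}(M_j)]=2\Delta(j)+1$ by a combinatorial induction on matchings of Pauli strings, then shows the degrees $\Delta(j)$ are approximately independent $\Pois(pr/n)$, giving $\E\prod_j(2\Delta(j)+1)\le\exp(o(n))(1+2pr/n)^n$; optimized over $\beta$ this factor is the source of the $\log p$, not a Gaussian extreme-value mechanism.
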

In particular, the extent to which $E^*(p)$ matches $E_{\rm product}^*(p)$ is controlled by the ansatz parameter $\gamma(p)$. If it is the case that $\gamma(p)=1+o_p(1)$ then the ground state value is achieved by product states asymptotically.

In addition to the aforementioned result regarding the existence of the limit $E^*_{\rm product}(p)$ for every $p$,
we also establish distributional insensitivity (universality) of our results with respect to the choice of randomness of $\alpha$. In particular, in 
\Cref{sec:equivalence_distribution} we establish that the limit $\lim_p E^*(p)=\sqrt{2\gamma(p)\log p}$ holds also when $\alpha$ is a zero mean sparse random
variable, i.e., has support on sparse random hypergraphs. Here, sparsity refers to the fact that the likelihood $\alpha\ne 0$ is of the order $\omega(1/n^{p-1})$, implying order
$\omega(n)$ number of non-zero coefficients $\alpha$ appearing in the Hamiltonian. The term $\omega(n)$ underscores that the number of terms grows faster than $n$ at any arbitrarily slow rate.
This sparsity regime corresponds to the so-called sparse random hypergraph 
model with ``large'' average degree, also called the diluted spin glass model,  which was widely studied in the classical literature both 
in physics~\cite{franz2003replica} and the theory of random graphs~\cite{sen2018optimization}. To prove this result, 
we employ a variant of Lindeberg's interpolation 
technique~\cite{chatterjee2005simple}, which is also the main tool underlying~\cite{sen2018optimization}.

\subsection*{Proof sketch}

We now describe the proof idea behind \Cref{theorem:Main-result-informal}. It is based on obtaining an upper bound
on $E^*(p)$ and obtaining lower and upper bounds on $E^*_{\rm product}(p)$.

We first sketch the lower bound on $E^*_{\rm product}(p)$. We do this via the so-called second moment method. This is done by estimating the first two moments of a certain
counting quantity $N$ and then showing that $N\ge 1$ with high probabiltiy (whp) using the Paley--Zygmund inequality, provided the second moment
nearly matches the square of the first moment.
In our case the quantity of interest $N$
is the number of high-energy states within a set of order approximately $p^n$ roughly evenly spaced product states $\ket{\psi}$ achieving normalized energy
approaching $\sqrt{2\log p}$. These states are chosen by taking the $n$-fold tensor product of an $\epsilon$-net over the single qubit pure states of size roughly $O(p)$. We are able to calculate the first and second moment of $N$ since the energy associated with each $\ket{\psi}$ is a Gaussian
with mean $0$, variance $1/n$, and covariance described by the overlaps within the states for each of the individual qubits. The spacing of the $\epsilon$-net is chosen so that differences in the overlap of at least $\epsilon$ in individual qubits render the covariance between pairs of state to be sufficiently uncorrelated. Note that a trivial
upper bound on the first moment is obtained as the extreme of order $p^n$ variance-$1/n$ Gaussians, which is
$(1/\sqrt{n})\sqrt{2\log p^n}=\sqrt{2\log p}$. This number is then justified by controlling the second moment.
This approach follows an approach recently developed for purely classical
spin glass models with large $p$~\cite{gamarnik2023shattering}. 
This is the special case of our model
when all Paulis are restricted to be identical, which restricts the number of configurations to $2^n$ leading to the asymptotic maximum eigenenergy $\sqrt{2\log 2}$. 
Our proof obtains a bound $\E[N^2]=\exp\left(O_p\left(1\right)n\right)\E[N]^2$ (where $O_p\left(1\right)$ hides constants in $p$ and $n$), and the claimed result follows by adopting a trick introduced by Frieze~\cite{FriezeIndependentSet} which takes advantage of concentration inequalities.

The upper bound on $E_{\text{product}}^*\left(p\right)$ uses a similar net construction, and then uses Dudley's inequality to bound how the energy varies away from the net.

We now sketch the upper bound on $E^*(p)$. We consider the expected
trace of the partition function with a judicious choice of the inverse temperature of the form $\beta\sqrt{n}$ for a constant
$\beta>0$.
By Jensen's inequality
this provides an upper bound on the (exponent of the) largest eigenvalue. The partition function is then expanded using the Taylor series. This is in fact a standard method to analyze $H_{n,p}$, used
both in~\cite{erdHos2014phase,baldwin2020quenched} for quantum spin glasses and in~\cite{feng2019spectrum} for a related SYK model.
In both cases they study the free energy with temperature on the order $O(1)$, such that the dominating terms in the Taylor
expansion come from $O(1)$ powers of the Hamiltonian. In our case, as we prove, the dominating contribution comes from powers of the Hamiltonian that are linear in $n$.

The associated terms in the Taylor expansion are expected traces of $O\left(n\right)$-fold products of Pauli operators. We are able to rewrite this as an expectation over matchings of $2m$ $p$-local Pauli operators---with induced matching $M_i$ on Pauli operators with support on qubit $i$---of a certain trace object that depends on the induced matchings $M_i$. Though we are unable to treat this exactly, we bound the associated quantity by some factor multiplied by the equivalent expression if the $M_i$ were chosen independently; this is informally stated as \Cref{conj:inf} (and formally stated as \Cref{conj:main-0}). The degree to which this asymptotic independence ansatz on matchings is incorrect---i.e., the looseness of this bound---directly impacts our associated upper bound on $E^\ast$ through the $p$-dependent constant $\gamma\left(p\right)$. We show here that $\gamma\left(p\right)\leq 3^p$ and leave its true value as an open question.

Upon applying the conjecture, for the $r$-th moment in the trace, the contribution over all qubits splits into an expectation over individual qubits. 
For each qubit, based on our prior closed-form calculation, its contribution is then a function of the of the number of tuples in the $r$-th moment that contain the given qubit, which is (roughly) a Poisson random variable with mean concentrated on the order of $\frac{pr}{n}$. 
From this, we obtain that the expected trace grows as a function $\exp(g(\beta, p)n)$ with dominant term $\frac{\beta^2\gamma\left(p\right)}{2}+\log(1+p\gamma\left(p\right)\beta^2)$.
Optimizing over the choice of $\beta$ achieves our
claimed upper bound $(1+o_p(1))\sqrt{2\gamma\left(p\right)\log p}$.

\subsection*{Algorithmic (in)tractability of classical and quantum spin glasses}

The energy achievable by product states indicates a certain kind of tractability associated to the representation complexity of states at a given energy. 
Here, product states are often called ``trivial'' as they can be constructed by constant depth (specifically depth-1) quantum circuits.
Triviality and non-triviality of states underlies quantum state representation
complexity questions, such as the validity of the NLTS (No-Low-Energy-Trivial-State) conjecture which was recently positively resolved
in~\cite{anshu2023nlts}. By no means, however, does the triviality of near-ground states imply algorithmic tractability. 
Namely, the algorithmic complexity of generating low energy trivial (i.e., product) states is far from clear. Recent
advances based on a geometric property of solution spaces called the overlap gap property has ruled out large classes of algorithms
as contenders in the classical setting---even in the setting when the disorder $\alpha$ is randomly generated as in our setting---and in fact  provide strong evidence that no polynomial time algorithm exists for 
this problem~\cite{huang2022tight,gamarnik2021overlap,gamarnik2022disordered,gamarnik2023shattering,gamarnik2021overlapMessagePassing,gamarnik2020low}.

We emphasize that our results have no bearing on, e.g., the $\textsc{NP}$-hardness of the ground-state problem for the $p$-spin glass model as $\textsc{NP}$-hardness is a statement of the worst-case complexity of the problem---our results are whp over the disorder $\bm{\alpha}$. This distinguishes this work from previous results studying product-state approximations of the ground state of $p$-spin models, which previously had been considered in a worst-case setting. For example, \cite{doi:10.1137/110842272} and \cite{10.1145/2488608.2488719} studied worst-case product-state approximations for dense $p$-local models and give efficient classical algorithms for finding the approximations; however, the associated guarantees on the multiplicative error grows with $n$ for the model we consider here due to its highly frustrated nature. \cite{Harrow2017extremaleigenvalues} gives bounds for sparse models which also scale incorrectly with $n$ in the average case. \cite{10.1063/1.5085428} manages a multiplicative approximation to the ground state in the worst case, though only study $p=2$-local Hamiltonians.

\subsection*{Open questions}

We put forward now several open questions for future investigation. 
The first question of interest is obtaining estimates of the ground state energy for fixed values of $p$, e.g., $p=2$. This is open even when the optimization is over product states $\mathcal{S}_{\rm product}^n$, of which the classical spin glass model is a special case.
For this one would need to have a tighter control on the actual (quenched) value of the partition function as opposed to its
annealed (expectation) approximation. This direction, we expect, either requires tools from Parisi's ansatz~\cite{parisi1979infinite,parisi1980sequence} and their rigorization by Guerra--Toninelli~\cite{guerra2002thermodynamic,talagrand2006parisi} or the use of the Kac--Rice formula with a control on its second moment~\cite{10.1214/16-AOP1139}.

Second, future work is needed to determine better upper bounds for $E^*(p)$, the limiting expected maximum energy achievable by all states. One way to achieve this would be to more tightly control the coefficient $\gamma(p)$ used in our ansatz \Cref{conj:main-0} 
improving over  the trivial bound $\gamma(p)\leq 3^p$.

Another question is whether 
our result extends to other similar quantum spin models, of which the Sachdev-Ye-Kitaev (SYK) model \cite{sachdev1993gapless,rosenhaus2019introduction} is of particular
importance due to its physics applications. Also of potential interest are whether our results extend to quantum analogues of classical optimization problems such as quantum versions of random $k$-SAT (denoted $k$-QSAT) \cite{bravyi2009bounds,bravyi2011efficient,ambainis2012quantum,laumann2009phase} and quantum max-cut \cite{gharibian2019almost,hwang2023unique}.

\subsection*{Structure of the paper}
The remainder of the paper is structured as follows. We formally detail our main results in \Cref{sec:main_results}. In \Cref{sec:preliminaries}, we prove preliminary technical results on the concentration and covariance of the energies of the Hamiltonians which we use throughout this work. Then, in \Cref{sec:quantum_p_spin_proof}, we prove \Cref{theorem:Main-result-formal} (formal statement of \Cref{theorem:Main-result-informal}) about the limiting maximum energy of the quantum $p$-spin model. \Cref{sec:limit_exists} and \Cref{sec:equivalence_distribution} contain the proofs of the existence of the limit (\Cref{theorem:limit-esist}) and the universality under different choices of distribution (\Cref{theorem:equivalence}), respectively. Finally, we study the statistical properties of entangled states in \Cref{sec:properties_entangled_states} to show entangled states typically have exponentially smaller variance in comparison with product states, thus giving some intuition as to why product states are ``special'' as approximates of the ground state. 

\section{Main results}
\label{sec:main_results}

We denote the set of $n$ elements as $[n] \coloneqq \{1, 2, \dots,n\}$. Throughout this study, $n$ refers to the number of qubits. 
Systems of $n$ qubits live on the vector space $\mathbb{C}^{2^n}$. A Hamiltonian is a Hermitian matrix acting on this vector space, i.e. a matrix $H \in \mathbb{C}^{2^n \times 2^n}$ obeying the Hermitian relation $H^\dagger = H$. We denote the largest  eigenvalue of a matrix $M$ as  $\lambda_{\text{max}}(M)$. 

The Pauli matrices form a basis for the space of Hermitian matrices acting on $n$ qubits. For a single qubit, there are four Pauli matrices (Paulis $I$, $X$, $Y$, and $Z$):
\begin{equation}
    \sigma^0  =
    \begin{pmatrix}
      1&0\\
      0&1
    \end{pmatrix}, \quad \quad
    \sigma^1  =
        \begin{pmatrix}
        0&1\\
        1&0
        \end{pmatrix}, \quad \quad
    \sigma^2  =
        \begin{pmatrix}
        0& -i \\
        i&0
        \end{pmatrix}, \quad \quad
    \sigma^3 =
        \begin{pmatrix}
        1&0\\
        0&-1
        \end{pmatrix}.
\end{equation}
For $n$ qubits, the set of Pauli matrices $\mathcal{P}_n$ is the set of all $4^n$ possible $n$-fold tensor products of Pauli matrices, 
\begin{equation}
    \mathcal{P}_n = \{ \sigma^{a_1} \otimes \cdots \otimes \sigma^{a_n}: a_1, \dots, a_n \in \{0,1,2,3\} \}.
\end{equation}
If a Pauli matrix equals the identity $\sigma^0$ at all but a single qubit, we let $\sigma_{i}^a \in \mathcal{P}_n$ denote such a Pauli matrix where $\sigma^a$ acts on qubit $i$ tensored with the identity Pauli $\sigma^0$ on all other qubits. Given superindices $\indi=(i_1,\ldots,i_p) \in [n]^p$ and $\inda=(a_1,\ldots,a_p) \in \{0,1,2,3\}^p$, we define the Pauli matrix $P_{\indi}^{\inda} \in \mathcal{P}_n$ as
\begin{equation}
    P_{\indi}^{\inda} = \prod_{k=1}^p \sigma_{i_k}^{a_k}.
\end{equation}
As a simple example, for $n=5$ qubits, superindices $\indi = [2,4], \inda = [1,3]$ denote the Pauli matrix $P_{\indi}^{\inda} = \sigma^0 \otimes \sigma^1 \otimes \sigma^0 \otimes \sigma^3 \otimes \sigma^0$. Any Pauli matrix that can be written in such a form containing exactly $p$ elements in its tensor product not equal to $\sigma^0$ is called a $p$-local Pauli matrix. Similarly, a Hamiltonian is called a $p$-local Hamiltonian if it can be written as a linear combination of (at most) $p$-local Pauli matrices.

For $n$ qubits, we use bra-ket notation to denote states where a ket $\ket{\phi}$ is a vector in the vector space $\ket{\phi} \in \mathbb{C}^{2^n}$ and $\bra{\phi}$ denotes its conjugate transpose. 
We denote the set of all  pure states by $\mathcal{S}_{\rm all}^n = \{ \ket{\phi} \in \mathbb{C}^{2^n}: |\braket{\phi }|^2 = 1\}$. A state $\ket{\phi}$
is a product state on $n$ qubits if it can be written as a tensor product of states $\ket{\phi^{(i)}}$ on its individual qubits $i\in [n]$:
$\ket{\phi} = \ket{\phi^{(1)}} \otimes \cdots \otimes \ket{\phi^{(n)}}$. The set of such product states on $n$ qubits we denote as 
\begin{equation}
    \mathcal{S}^n_{\rm product} = \{ \ket{\phi} = \ket{\phi^{(1)}} \otimes \cdots \otimes \ket{\phi^{(n)}} \in \mathcal{S}_{\rm all}^n:\;  
    \ket{\phi^{(i)}} \in \mathcal{S}_{\rm all}^1, \forall i \in [n] \}.
\end{equation}

We now introduce the following $n$-dependent prelimit analogues of \Cref{eq:ground_energy_limit_def} and \Cref{eq:prod_state_limit_def}
with respect to the same Hamiltonian $H_{n,p}$ as defined in \Cref{eq:hamiltonian_concise_form}. 
\begin{align} 
    E^*_n(p) & \coloneqq   \frac{1}{\sqrt{n}} \lambda_{\text{max}}(H_{n,p})  =  \frac{1}{\sqrt{n}} \max_{\ket{\phi} \in \mathcal{S}_{\rm all}^n} \bra{\phi}H_{n,p} \ket{\phi}, \label{eq:ground_energy_n_def} \\
    E_{n,\rm prod}^*(p) & \coloneqq   \frac{1}{\sqrt{n}} \max_{\ket{\phi} \in \mathcal{S}_{\rm product}^n} \bra{\phi}H_{n,p} \ket{\phi} 
    \label{eq:prod_state_n_def}.
\end{align}
Note the absence of the expectation operator in the definitions above. In particular $ E_{n,\rm prod}^*(p)\le E^*_n(p)$ are random variables, with inequalities holding trivially. We will show below, however, that these random variables are highly concentrated around their means.

We now state our main result. 

\paragraph{Statement of main results}
We now state our main result regarding the energy achievable by product states.

\begin{theorem}[Main result (formal)]\label{theorem:Main-result-formal} 
For every $\epsilon>0$ there exists $p_0$ such that for all $p\ge p_0$ the following holds with high probability 
with respect to the randomness of $\alpha=\left( \alpha[\indi; \inda], \indi\in \mathcal{I}_p^n, \inda\in \{1,2,3\}^p\right)$
as $n\to\infty$:
\begin{align}
(1-\epsilon)\sqrt{2\log p}\leq E_{n,\rm prod}^*(p)\leq(1+\epsilon)\sqrt{2\log p}.
\end{align}
Furthermore, assuming the the validity of \Cref{conj:main-0}, for every $\epsilon>0$ there exists $p_0$ such that for all $p\ge p_0$, the following holds again with high probability as $n\to\infty$.
\begin{align}
E^*_n(p)
\le (1+\epsilon)\sqrt{2\gamma(p) \log p}.
\end{align}
\end{theorem}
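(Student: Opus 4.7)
The plan is to prove the theorem by establishing three inequalities: a lower bound on $E_{n,\rm prod}^*(p)$, a matching upper bound on $E_{n,\rm prod}^*(p)$, and an upper bound on $E_n^*(p)$ under \Cref{conj:inf}. Together with the trivial $E_{n,\rm prod}^*(p)\le E_n^*(p)$, these suffice. All three steps rest on analyzing a Gaussian process indexed by quantum states whose covariance structure is supplied by the preliminaries of \Cref{sec:preliminaries}.

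For the lower bound on $E_{n,\rm prod}^*(p)$ I would apply the second moment method. Construct an $\epsilon$-net over the single-qubit Bloch sphere of cardinality $\Theta(p)$, and form the $n$-fold tensor net $\Sigma$ of cardinality $\approx p^n$. Let $N$ denote the number of $\ket{\psi}\in\Sigma$ whose normalized energy $n^{-1/2}\bra{\psi}H_{n,p}\ket{\psi}$ exceeds $(1-\epsilon)\sqrt{2\log p}$. The energies are mean-zero Gaussians of variance $\Theta(1/n)$ with covariance given by a product of single-qubit overlap factors, so the net spacing forces distinct pairs to be sufficiently decorrelated that a direct computation yields $\E[N^2]\le \exp(O_p(1)n)\E[N]^2$. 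Paley--Zygmund then gives $\Pr(N\ge 1)\ge \exp(-O_p(1)n)$. Combined with the sharp Gaussian concentration of $n^{-1/2}\lambda_{\text{max}}(H_{n,p})$ from \Cref{sec:concentration}, this weak probabilistic statement bootstraps via Frieze's trick to $E_{n,\rm prod}^*(p)\ge (1-2\epsilon)\sqrt{2\log p}$ with high probability.

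The matching upper bound on $E_{n,\rm prod}^*(p)$ I would obtain from a chaining argument. On the same net $\Sigma$ a union bound over $|\Sigma|\approx p^n$ Gaussians of variance $\Theta(1/n)$ produces a maximum concentrated near $\sqrt{2(\log p^n)/n}=\sqrt{2\log p}$. To extend from the net to $\mathcal{S}_{\rm product}^n$ I would apply Dudley's entropy integral to the centered Gaussian process $\ket{\phi}\mapsto \bra{\phi}H_{n,p}\ket{\phi}$, using the volumetric covering-number estimate $O(n\log(1/\delta))$ at scale $\delta$ on the product-state manifold. Choosing the initial scale small enough makes the Dudley chaining residual $o_p(\sqrt{\log p})$, yielding the matching upper bound.

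For the upper bound on $E_n^*(p)$ I would pass through the partition function. For $\beta>0$, Jensen's inequality combined with concentration reduces the task to bounding $\log \E\,\Tr e^{\beta\sqrt{n}H_{n,p}}$. Expanding the exponential, only the even powers $\E\,\Tr H_{n,p}^{2r}$ survive, and each rewrites by a Gaussian Wick expansion as an expectation over perfect matchings on $2r$ Pauli superindices of a trace functional that factorizes across qubits conditioned on the induced per-qubit matchings $M_j$. Applying \Cref{conj:inf} replaces the joint matching law with a product of independent single-qubit laws at cost $\gamma(p)^r\exp(O_p(1)n)$. A closed-form single-qubit trace evaluation combined with a Poisson approximation (with mean $\sim pr/n$) for the number of superindices incident to each qubit then yields an $r$-dependent contribution whose resummation over $r$ produces $\exp(n\,g(\beta,p))$ with $g(\beta,p)$ of leading form $\tfrac{1}{2}\beta^2\gamma(p)+p^{-1}\log(1+p\gamma(p)\beta^2)$. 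Optimizing $\beta$ and dividing by $\beta\sqrt{n}$ gives the claimed $(1+o_p(1))\sqrt{2\gamma(p)\log p}$. The main obstacle I anticipate is twofold: in the second moment step, off-diagonal covariance contributions must be controlled carefully as a function of the net's overlap spectrum so they do not blow up on the exponential scale; in the partition-function step, the dominant $r$ grows linearly in $n$, so Taylor-series tails at $r\gg n$ and sub-leading Poisson approximation errors must both be shown to fit inside the $\exp(O_p(1)n)$ error budget uniformly in $\beta$ before the final optimization.
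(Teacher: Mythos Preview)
Your proposal is correct and matches the paper's approach in all three parts: the second-moment argument on a Bloch-sphere net with Paley--Zygmund plus the Frieze concentration bootstrap for the lower bound, the net union bound plus Dudley's integral for the matching upper bound, and the annealed trace expansion over matchings with the ansatz, Poisson degree approximation, and optimization over $\beta$ for the $E_n^*(p)$ bound. Two small remarks: the paper switches to Rademacher disorder (via universality) rather than using the Gaussian Wick rule directly, which is merely a bookkeeping convenience; and your leading form for $g(\beta,p)$ should not carry the $p^{-1}$ prefactor on the logarithm---the correct dominant terms are $\tfrac{1}{2}\beta^2\gamma(p)+\log(1+p\gamma(p)\beta^2)$, though this does not change the optimized value $(1+o_p(1))\sqrt{2\gamma(p)\log p}$.
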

As above,
$\gamma\left(p\right)$ is a constant characterizing our asymptotic independence ansatz over graph matchings described informally in \Cref{conj:inf}; we discuss this constant in more detail in \Cref{sec:quantum_p_spin_upper_bound}.

Next we turn to our results regarding the existence of limits of random variables 
$E^*_n(p)$ and $E_{n,\rm prod}^*(p)$.
While we leave open the question of the existence of the limit of $E^*_n(p)$, we can confirm these limits for the latter variables.

\begin{theorem}\label{theorem:limit-esist}
    For every even $p$ there exist constants
    $E^*_{\rm prod}(p)$
    such that whp as $n\to\infty$,
    $E_{n,\rm prod}^*(p)\to E^*_{\rm prod}(p)$.
\end{theorem}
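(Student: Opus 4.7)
The plan is to reduce the desired high-probability convergence of $E_{n,\rm prod}^*(p)$ to a deterministic constant to two ingredients: (i) concentration of $E_{n,\rm prod}^*(p)$ around its expectation, which is supplied by \Cref{sec:concentration}, and (ii) convergence of the normalized expectation $\E[E_{n,\rm prod}^*(p)]/\sqrt{n}$, which I would establish via a Guerra--Toninelli-style Gaussian interpolation~\cite{guerra2002thermodynamic} proving near super-additivity of a quenched free energy over product states, followed by an extension of Fekete's lemma that absorbs sub-linear corrections.

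To implement (ii), let $\mu_n$ denote the product of uniform (Haar) measures on single-qubit pure states, and for $\beta \geq 0$ define the Gibbs partition function and quenched free energy
\begin{equation*}
Z_n(\beta) = \int_{\mathcal{S}^n_{\rm product}} \exp\!\left(\beta \sqrt{n}\, \bra{\phi} H_{n,p} \ket{\phi}\right) d\mu_n(\phi),\qquad f_n(\beta) = \frac{1}{n}\,\E[\log Z_n(\beta)].
\end{equation*}
A standard Laplace asymptotic, together with an $\epsilon$-net on the polynomial-covering single-qubit manifold and Lipschitz continuity of $\bra{\phi} H_{n,p} \ket{\phi}/\sqrt{n}$, yields $\lim_{\beta \to \infty} f_n(\beta)/\beta = \E[E_{n,\rm prod}^*(p)]/\sqrt{n} + o_n(1)$. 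Existence of $\lim_n \E[E_{n,\rm prod}^*(p)]/\sqrt{n}$ therefore reduces to showing that $\lim_n f_n(\beta)$ exists for every fixed $\beta \geq 0$, together with a uniform-in-$n$ control of $f_n(\beta)/\beta$ that permits exchanging the two limits.

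For the near super-additivity of $n f_n(\beta)$, I would interpolate between a decoupled and a coupled Hamiltonian on $n = n_1+n_2$ qubits. Taking independent copies $\tilde H_{n_1,p},\tilde H_{n_2,p}$ on the two halves, set $H(t) = \sqrt{t}\, H_{n,p} + \sqrt{1-t}\,(\tilde H_{n_1,p} \otimes I + I \otimes \tilde H_{n_2,p})$ and let $\tilde f_n(\beta,t)$ be the corresponding interpolated quenched free energy. Gaussian integration by parts rewrites $\partial_t \tilde f_n(\beta,t)$ as a two-replica Gibbs average of the difference between the covariance kernels of the two Hamiltonians. Since product-state expectations of Paulis factor across qubits, this covariance difference has the schematic form
\begin{equation*}
\left(\tfrac{1}{n}\sum_{i=1}^n q_i\right)^{p} - \tfrac{n_1}{n}\left(\tfrac{1}{n_1}\sum_{i\in A} q_i\right)^{p} - \tfrac{n_2}{n}\left(\tfrac{1}{n_2}\sum_{i\in B} q_i\right)^{p},
\end{equation*}
where $q_i = \sum_{a\in\{1,2,3\}}\bra{\phi_i}\sigma^a\ket{\phi_i}\bra{\psi_i}\sigma^a\ket{\psi_i} \in [-1,1]$ is the Bloch-vector overlap between two product-state replicas $\ket{\phi},\ket{\psi}$, and $A, B$ partition $[n]$ into halves of sizes $n_1, n_2$. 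For \emph{even} $p$, convexity of $x \mapsto x^p$ applied to the convex combination $\tfrac{1}{n}\sum_i q_i = (n_1/n)\bar q_A + (n_2/n)\bar q_B$ shows this expression is $\leq 0$; this sign is then transferred through the integration-by-parts formula to give $\partial_t \tilde f_n(\beta,t) \geq 0$ and hence $n f_n(\beta) \geq n_1 f_{n_1}(\beta) + n_2 f_{n_2}(\beta) - O(1)$. The $O(1)$ correction stems from the combinatorial mismatch between ${n \choose p}$ and ${n_1 \choose p} + {n_2 \choose p}$ caused by the ``cross'' $p$-tuples spanning $A$ and $B$ that the decoupled Hamiltonian omits.

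An extension of Fekete's lemma applied to the near-super-additive sequence $\{n f_n(\beta)\}_n$ then yields $f_\infty(\beta) = \lim_n f_n(\beta)$ for every $\beta \geq 0$. Passing to $\beta \to \infty$ via monotonicity and uniform boundedness of $f_n(\beta)/\beta$ (the latter obtained from a standard a priori estimate $\|H_{n,p}\| = O(\sqrt{n})$ whp) identifies the deterministic limit $E^*_{\rm prod}(p)$, and combining with the concentration of \Cref{sec:concentration} delivers the high-probability convergence. I expect the main obstacle to be the sign analysis in the interpolation: the Bloch overlaps $q_i$ can be negative, so positivity of $\partial_t \tilde f_n(\beta,t)$ rests sharply on the parity of $p$, and one must carefully track both the boundary contributions from the ``cross'' $p$-tuples and verify that the continuous (rather than discrete Ising) nature of the product-state manifold produces no extra non-vanishing pieces in the Gaussian integration-by-parts formula. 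A secondary obstacle is the exchange of the $n\to\infty$ and $\beta\to\infty$ limits, which I would handle via a Dini-type argument exploiting concentration of $\log Z_n(\beta)$ around its mean and monotonicity of $f_n(\beta)/\beta$ in $\beta$.
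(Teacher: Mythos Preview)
Your proposal follows essentially the same Guerra--Toninelli interpolation strategy as the paper: both arguments compute the derivative of an interpolated quenched free energy via Gaussian integration by parts, reduce the sign of that derivative to the inequality
\[
\Bigl(\tfrac{1}{n}\textstyle\sum_i q_i\Bigr)^p \le \tfrac{n_1}{n}\Bigl(\tfrac{1}{n_1}\textstyle\sum_{i\in A} q_i\Bigr)^p + \tfrac{n_2}{n}\Bigl(\tfrac{1}{n_2}\textstyle\sum_{i\in B} q_i\Bigr)^p
\]
obtained from convexity of $x\mapsto x^p$ for even $p$ (this is exactly the content of the paper's \Cref{lem:super-additivity-covariance}), and then feed the resulting near super-additivity into the de~Bruijn--Erd\H{o}s extension of Fekete's lemma together with the concentration of \Cref{prop:product_state_concentration}.

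The one substantive implementation difference is how the passage from free energy to ground-state energy is handled. You propose to establish $\lim_n f_n(\beta)$ for each fixed $\beta$ and then exchange the $n\to\infty$ and $\beta\to\infty$ limits via a Dini-type argument. The paper instead works over a \emph{finite} $\epsilon$-net $S_\epsilon^{n+m}$ of product states and uses the elementary sandwich $\hat E_{n+m}\le F_{\beta,n+m}\le \hat E_{n+m}+\beta^{-1}\log|S_\epsilon^{n+m}|$ together with a choice $\epsilon=(n+m)^{-k}$ and $\beta=\beta'\sqrt{n+m}$, so that the entropy correction is $O(\sqrt{n+m}\log(n+m))$ and the interpolation error $\beta c$ is $O(\sqrt{n+m})$. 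This yields near super-additivity of the \emph{maximum energy itself}, with corrections satisfying $\int f(t)/t^2\,dt<\infty$, and avoids any limit interchange. Your route should also work, but note that the uniformity you need---that $|f_n(\beta)/\beta-\E[E^*_{n,\rm prod}]|$ is bounded by a function of $\beta$ alone---is exactly what the discrete-net entropy bound delivers, so in practice you would likely end up reintroducing the net; the paper's choice to scale $\beta$ with $n$ from the outset is simply the shorter path.
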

The status of this question for $E^*_{n}(p)$ remains
a conjecture.
\begin{conjecture}\label{conj:limit_exists_full}
    For every $p$ there exist constants 
    $E^*(p)$ 
    such that whp as $n\to\infty$,
    $E_{n}^*(p)\to E^*(p)$.
\end{conjecture}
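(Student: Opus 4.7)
The plan is to adapt the Guerra--Toninelli interpolation method from classical $p$-spin glasses to the quantum setting restricted to product states, exploiting the fact that a product state on $n_1 + n_2$ qubits factorizes across any bipartition. For a fixed inverse temperature $\beta > 0$, I would define the product-state quenched free energy
\begin{equation*}
    \Phi_n(\beta, p) := \E\left[ \log \int_{\mathcal{S}^n_{\rm product}} \exp\left( \beta \sqrt{n}\, \bra{\phi} H_{n, p} \ket{\phi} \right) d\mu^{\otimes n}(\phi) \right],
\end{equation*}
where $\mu$ is the uniform measure on single-qubit pure states (equivalently, uniform on the Bloch sphere). Splitting $n = n_1 + n_2$ and introducing the interpolation
\begin{equation*}
    H(t) := \sqrt{t}\, H_{n, p} + \sqrt{1 - t}\, \left( c_{n_1, n}\, H^{(1)}_{n_1, p} + c_{n_2, n}\, H^{(2)}_{n_2, p} \right)
\end{equation*}
between the full Hamiltonian and two independent block Hamiltonians (with $c_{n_i, n}$ chosen so that the Gaussian variance of $\bra{\phi} H(t) \ket{\phi}$ matches at $t = 0$ and $t = 1$ for every $\ket{\phi} = \ket{\phi_1} \otimes \ket{\phi_2}$), Gaussian integration by parts expresses $\frac{d}{dt} \Phi_n(t; \beta, p)$ as an expectation, under the two-replica product-state Gibbs measure, of a quantity whose sign is controlled by the convexity of $x \mapsto x^p$ on $[-1, 1]$. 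For \emph{even} $p$ this yields
\begin{equation*}
    \Phi_n(\beta, p) \ge \Phi_{n_1}(\beta, p) + \Phi_{n_2}(\beta, p) - o(n),
\end{equation*}
i.e.\ near super-additivity of the normalized quenched free energy. An extension of Fekete's lemma for near-super-additive sequences then yields existence of $\phi(\beta, p) := \lim_{n\to\infty} \Phi_n(\beta, p)/n$ for every $\beta > 0$ and every even $p$.

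To pass from the free energy to the ground-state energy I would use the elementary sandwich
\begin{equation*}
    \sqrt{n}\, E^*_{n, \rm prod}(p) - \frac{C n \log \beta}{\beta} \le \frac{1}{\beta} \log Z_n(\beta) \le \sqrt{n}\, E^*_{n, \rm prod}(p),
\end{equation*}
where the lower bound relies on a Haar-volume lower bound on a small neighborhood of the maximizing product state inside the $2n$-dimensional product of Bloch spheres. Choosing $\beta = \beta(n) \to \infty$ sufficiently slowly in $n$, one concludes $\E\bigl[E^*_{n, \rm prod}(p)\bigr] \to E^*_{\rm prod}(p) := \lim_{\beta \to \infty} \phi(\beta, p) / \beta$. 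Convergence in expectation is upgraded to convergence in probability (hence whp) via the Gaussian concentration inequality from \Cref{sec:concentration}, which applies since $E^*_{n, \rm prod}(p)$ is a dimension-free Lipschitz function of the disorder vector $\bm \alpha$.

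The principal obstacle is the sign analysis of the interpolation derivative in the quantum setting. The natural quantum overlap between product-state replicas that arises in the covariance computation is
\begin{equation*}
    Q_{\phi, \phi'} \;=\; \tfrac{1}{n} \sum_{j=1}^n \sum_{a \in \{1,2,3\}} \bra{\phi_j} \sigma^a \ket{\phi_j}\, \bra{\phi'_j} \sigma^a \ket{\phi'_j} \;\in\; [-1, 1],
\end{equation*}
but the exact covariance is the elementary symmetric polynomial ${\binom{n}{p}}^{-1} e_p(Q_1, \ldots, Q_n)$ in the per-qubit overlaps $Q_j$ rather than the single-scalar $Q_{\phi, \phi'}^p$. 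One must show that this discrepancy is $o(n)$ on the free-energy scale uniformly over configurations, after which the classical convexity argument for even $p$ transports directly. A secondary technicality is uniform-in-$\beta$ control of the $o(n)$ term in the near super-additivity, needed so the Fekete-type conclusion survives the $\beta \to \infty$ limit; this should follow from careful accounting of how $c_{n_i, n} - 1$ decays in $n_1, n_2, n$.
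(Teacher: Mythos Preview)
Your proposal addresses the wrong quantity. The statement in question is \Cref{conj:limit_exists_full}, which concerns $E_n^*(p)=\frac{1}{\sqrt n}\lambda_{\max}(H_{n,p})$, the maximum over \emph{all} states $\mathcal{S}_{\rm all}^n$. Your entire argument is carried out over product states: the partition function integrates over $\mathcal{S}^n_{\rm product}$, the interpolation relies on the fact that $\ket{\phi}=\ket{\phi_1}\otimes\ket{\phi_2}$ factorizes across the $(n_1,n_2)$ split, and your conclusion is about $E^*_{n,\rm prod}(p)$. What you have sketched is essentially the paper's proof of \Cref{theorem:limit-esist} (existence of $E^*_{\rm prod}(p)$ for even $p$), not a proof of the conjecture.

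The paper does not prove \Cref{conj:limit_exists_full}; it is stated precisely as an open conjecture. The obstruction to extending the Guerra--Toninelli argument to $\mathcal{S}_{\rm all}^n$ is exactly that a generic state on $n_1+n_2$ qubits does not decompose as a tensor product, so there is no natural way to compare the free energy on $n_1+n_2$ qubits with the sum of free energies on the two blocks. Moreover, the covariance structure for entangled states (cf.\ \Cref{sec:properties_entangled_states}) is not governed by a single scalar overlap $Q_{\phi,\phi'}$ raised to the $p$th power, so the convexity step that drives the sign of the interpolation derivative has no obvious analogue. Your proposal does not engage with either of these issues; the ``principal obstacle'' you identify (elementary symmetric polynomial versus power of the mean overlap) is a lower-order technicality internal to the product-state argument, not the actual barrier for the conjecture.
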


Finally, we turn to our universality results. Namely,
insensitivity with respect to the choice of distribution
of the disorder $\bm{\alpha}$. Our results hold for a broad range of 
Hamiltonians consisting of a sum of arbitrary local bounded Hamiltonians
weighted by random coefficients. 
More precisely, consider a sequence of random Hamiltonians specified by $\{m_n,H_n,\left\{\mathcal{D}_{n,i}\right\}_i\}_n$ and constructed as follows (see also formal definition in \Cref{def:sequence_of_random_hamiltonians}). 
This sequence is first specified by 
a sequence of positive integers $\{m_n\}_n$ corresponding to the number of terms in the Hamiltonian where we require $m_n = \omega(n)$. The sequence is furthermore specified by
an arbitrary sequence of
$m_n$   operators $H_n^{(1)}, \dots, H_n^{(m_n)}$ 
each bounded in operator norm by $\|H_n^{(1)}\| \leq 1$ 
and random variables $\bm \alpha_n = \{\alpha_{n,1}\sim\mathcal{D}_{n,1}, \ldots, \alpha_{n, m_n}\sim\mathcal{D}_{n,m_n}\}$ drawn independently 
from a collection of distributions $\left\{\mathcal{D}_{n,i}\right\}_i$ satisfying
\begin{equation} \label{eq:tail-D_n}
\begin{split}
    &\E[\alpha_{n,i}]=0, \\
&\E[\alpha_{n,i}^2]=1, \\
&
\forall \epsilon > 0, \; \lim_{n \to \infty} \frac{1}{m_n}\sum_{i=1}^{m_n}  \mathbb{E}\left[(\alpha'_{n,i})^2 ;  |\alpha'_{n,i}|  > \epsilon \sqrt{\frac{m_n}{n} } \right] = 0.
\end{split}
\end{equation}
From these specifications, random Hamiltonians take the form 
    \begin{equation}
        H_n(\bm \alpha_n) = \frac{1}{m_n^{1/2}} \sum_{i=1}^{m_n} \alpha_{n,i} H_n^{(i)}.
    \end{equation} 
When the $\mathcal{D}_{n,i}$ are unspecified, disorder terms $\alpha_{n,1}, \dots \alpha_{n, m_n}$ are assumed to be drawn i.i.d.\ standard Gaussian.
Sequences of random Hamiltonians as above encompass the spin glass Hamiltonians we study here and the SYK model \cite{feng2019spectrum,rosenhaus2019introduction}. 
Our universality statement guarantees that 
models with different distributions of disorder will have
the same limiting ground energy as long as the distributions meet the requirements of \Cref{eq:tail-D_n}.

\begin{restatable}[Equivalence of Hamiltonian models]{theorem}{equivalence}
\label{theorem:equivalence}
    Consider a $\{m_n,H_n,\left\{\mathcal{N}\left(0,1\right)\right\}_i\}_n$ sequence of random Hamiltonians 
    constructed as above
    (see also \Cref{def:sequence_of_random_hamiltonians}), where
    \begin{equation}
        H_n(\bm \alpha_n) = \frac{1}{m_n^{1/2}} \sum_{i=1}^{m_n} \alpha_{n,i} H_n^{(i)}, \quad \quad \| H_n^{(i)} \| \leq 1, \quad \quad m_n = \omega(n).
    \end{equation}
    In particular, let disorder terms be drawn i.i.d. $\alpha_{n,i} \sim \mathcal{N}(0,1)$ for all $i \in [m_n]$. Let $\{m_n,H_n,\{\mathcal{D}_{n,i}\}_{i=1}^{m_n}\}_n$ correspond to a  sequence of random Hamiltonians where disorder coefficients $\alpha_{n,i}$ are replaced by random variables $\alpha'_{n,i} \sim \mathcal{D}_{n,i}$ drawn independently from a collection of distributions $\{\mathcal{D}_{n,i}\}_{i=1}^{m_n}$. Over the set of quantum states $\mathcal{S}_{\rm all}^n$ on $n$ qubits, the limiting average maximum energies are equal under the two distributions
    \begin{equation}
        \lim_{n \to \infty} \E_{\alpha_{n,i} \sim \mathcal{N}(0,1)} \left[ \frac{1}{\sqrt{n}} \max_{\ket{\phi} \in \mathcal{S}_{\rm all}^n} \bra{\phi}H_{n}(\bm \alpha_n) \ket{\phi} \right] = \lim_{n \to \infty} \E_{\alpha'_{n,i} \sim \mathcal{D}_{n,i}} \left[ \frac{1}{\sqrt{n}} \max_{\ket{\phi} \in \mathcal{S}_{\rm all}^n} \bra{\phi}H_{n}(\bm \alpha'_n) \ket{\phi} \right] 
    \end{equation}
    whenever the sequence of distributions $\mathcal{D}_{n,i}$ meets the following three conditions:
    \begin{equation}
        \begin{split}
            (\text{first moment}): & \quad \E[\alpha'_{n,i}] =  0, \\
            (\text{second moment}): & \quad \E[(\alpha'_{n,i})^2]  = 1, \\
            (\text{boundedness}): & \quad \forall \epsilon > 0, \; \lim_{n \to \infty} \frac{1}{m_n}\sum_{i=1}^{m_n}  \mathbb{E}\left[(\alpha'_{n,i})^2 ;  |\alpha'_{n,i}|  > \epsilon \sqrt{\frac{m_n}{n} } \right] = 0.
        \end{split}
    \end{equation}
\end{restatable}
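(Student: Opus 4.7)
The plan is to apply Lindeberg's interpolation method to a log--trace--exp smoothing of the normalized maximum eigenvalue, mirroring the general strategy of \cite{chatterjee2005simple,sen2018optimization}. Introduce
\begin{equation*}
F_\beta(\bm\alpha) \coloneqq \frac{1}{\beta\sqrt{n}}\log\Tr\exp(\beta H_n(\bm\alpha)),
\end{equation*}
which sandwiches the normalized top eigenvalue via $\lambda_{\max}(H_n(\bm\alpha))/\sqrt{n} \leq F_\beta(\bm\alpha) \leq \lambda_{\max}(H_n(\bm\alpha))/\sqrt{n} + (n\log 2)/(\beta\sqrt{n})$. Choosing $\beta = c\sqrt{n}$ for a constant $c > 0$ makes the smoothing error at most $(\log 2)/c$, so the task reduces to bounding $|\E[F_\beta(\bm\alpha)] - \E[F_\beta(\bm\alpha')]|$ and then taking iterated limits $n \to \infty$, $\epsilon \to 0$, $c \to \infty$.

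First I would telescope via the hybrids $\bm\alpha^{(k)} \coloneqq (\alpha'_{n,1},\ldots,\alpha'_{n,k},\alpha_{n,k+1},\ldots,\alpha_{n,m_n})$, reducing the problem to bounding a sum of $m_n$ single-coordinate swaps. Conditional on the other coordinates, Taylor-expanding $F_\beta$ in the $k$th coordinate around $0$ gives $G(x) = G(0) + G'(0)x + \tfrac{1}{2}G''(0)x^2 + R(x)$. Since the first and second moments of $\alpha_{n,k}$ and $\alpha'_{n,k}$ match, the first two Taylor terms cancel in the swap in expectation, leaving only $\E[R(\alpha'_{n,k})] - \E[R(\alpha_{n,k})]$. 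Because $\mathcal{D}_{n,k}$ may lack a uniform third moment, I would split the remainder at threshold $T = \epsilon\sqrt{m_n/n}$ using the standard Lindeberg trick: on $|x| \leq T$, apply $|R(x)| \leq \tfrac{M_3}{6}|x|^3 \leq \tfrac{M_3 T}{6}x^2$; on $|x| > T$, apply $|R(x)| \leq M_2 x^2$ using only the second-derivative bound. This produces a per-swap bound $\tfrac{M_3 T}{6} + M_2\,\E[\alpha^2\, \mathbf{1}\{|\alpha|>T\}]$.

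Next I would bound the partial derivatives of $F_\beta$ via Duhamel's identity. The first derivative satisfies $\partial_i \log\Tr e^{\beta H} = (\beta/\sqrt{m_n})\langle H_n^{(i)}\rangle_\beta$, and higher derivatives unfold into thermal expectations of Duhamel integrals of commutators of operators with norm at most $1$. This yields
\begin{equation*}
\left|\frac{\partial^k F_\beta}{\partial \alpha_{n,i}^k}\right| \leq \frac{C_k\, \beta^{k-1}}{\sqrt{n}\, m_n^{k/2}}, \quad k = 1,2,3,
\end{equation*}
so $M_2 \lesssim \beta/(\sqrt{n}\, m_n)$ and $M_3 \lesssim \beta^2/(\sqrt{n}\, m_n^{3/2})$. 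Summing the per-swap bound over $k = 1,\ldots,m_n$ and substituting $\beta = c\sqrt{n}$ gives
\begin{equation*}
\left|\E[F_\beta(\bm\alpha)] - \E[F_\beta(\bm\alpha')]\right| \leq C c^2 \epsilon + C' c\, \delta_n(\epsilon),
\end{equation*}
where $\delta_n(\epsilon) \coloneqq \frac{1}{m_n}\sum_i \E[(\alpha'_{n,i})^2 \cdot \mathbf{1}\{|\alpha'_{n,i}|>\epsilon\sqrt{m_n/n}\}]$ tends to zero by hypothesis (the Gaussian side vanishes super-exponentially). Combining with the smoothing error and iterating limits in the order $n \to \infty$, then $\epsilon \to 0$, then $c \to \infty$ closes the argument.

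The main obstacle is the tension between the two competing uses of $\beta$: the sandwich $F_\beta \to \lambda_{\max}/\sqrt{n}$ demands $\beta \gg \sqrt{n}$, while the third-derivative estimate produces an error of order $\beta^2/n$, which forces $\beta = O(\sqrt{n})$. Working at the critical scale $\beta = \Theta(\sqrt{n})$ and committing to the iterated-limit order above is what makes the Lindeberg condition of \Cref{eq:tail-D_n} precisely tight. A secondary technical point is the careful Duhamel derivation of the $\beta^k m_n^{-k/2}$ scaling of higher derivatives, which uses only the uniform bound $\|H_n^{(i)}\| \leq 1$ and is the step where the assumption $m_n = \omega(n)$ is consumed: it ensures $T = \epsilon\sqrt{m_n/n} \to \infty$ and hence that both the truncated and tail contributions vanish in the appropriate order of limits.
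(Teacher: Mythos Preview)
Your proposal is correct and essentially identical to the paper's proof: both smooth the maximum via the log--trace--exp free energy, invoke Chatterjee's Lindeberg framework with a truncation at $K=\epsilon\sqrt{m_n/n}$, bound the first three coordinate derivatives of $F_\beta$ via the Duhamel formula (the paper packages this as \Cref{lem:free_energy_constants} using singular-value majorization), and close by setting $\beta=\Theta(n)$ in the paper's normalization (equivalently your $\beta=c\sqrt{n}$) before taking the iterated limits $n\to\infty$, $\epsilon\to 0$, $c\to\infty$. The only cosmetic difference is that the paper cites Chatterjee's Theorem~1.1 directly rather than writing out the telescoping hybrids.
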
 

Specializing to the spin glass model,
a canonical model where
our theorem above applies
is the setting of a sparse random graph. 
Let $\alpha'[\indi, \inda]$ be distributed according to
\begin{equation}
    \alpha'[\indi, \inda] = \begin{cases}
    0 & \text{ with probability } 1- 3^{-p}d_n/{n\choose p-1} \\
    +\sqrt{3^p{n\choose p-1}\over d_n} & \text{ with probability }  (1/2)3^{-p}d_n/{n\choose p-1}  \\
    -\sqrt{3^p{n\choose p-1}\over d_n} & \text{ with probability }  (1/2)3^{-p}d_n/{n\choose p-1} \\
    \end{cases}
\end{equation}
where $d_n$ corresponds to the average degree and is assumed to be an arbitrarily slowly growing function of $n$. 
In other words, the random Hamiltonian $H_n(\bm \alpha_n')$ has on average $d_n n$ nonzero terms.
One can easily verify that $\E[\alpha'[\indi, \inda]]=0$, $\E[\alpha'[\indi, \inda]^2]=1$ and $\mathbb{P}\left(\alpha'[\indi, \inda]>
n^{p-1\over 2}\epsilon \right)$
goes to zero for $d_n$ large enough 
guaranteeing the third condition in \Cref{eq:tail-D_n} holds trivially for large enough $n$. \Cref{theorem:equivalence} then states that
 \begin{align}
\lim_{n \to \infty} \E_{\alpha[\indi, \inda] \sim \mathcal{N}(0,1)} \left[ \frac{1}{\sqrt{n}} \max_{\ket{\phi} \in \mathcal{S}_{\rm all}^n} \bra{\phi}H_{n,p}(\bm \alpha) \ket{\phi} \right] = \lim_{n \to \infty} \E_{\alpha'[\indi,\inda] \sim \mathcal{D}_n} \left[ {1\over \sqrt{n}}\max_{\ket{\phi} \in \mathcal{S}_{\rm all}^n} \bra{\phi} H_{n,p}( \bm \alpha') \ket{\phi} \right].
\end{align}
In fact, as in the classical setting, our proof implies the following double limit identity. Suppose $d$ is a constant not depending on $n$. Define the associated distribution $\mathcal{D}_{n,d}$ of the disorder as above with $d$ replacing $d_n$.
Then 
 \begin{align}
\lim_{n \to \infty} \E_{\alpha[\indi, \inda] \sim \mathcal{N}(0,1)} \left[ \frac{1}{\sqrt{n}} \max_{\ket{\phi} \in \mathcal{S}_{\rm all}^n} \bra{\phi}H_{n,p}(\bm \alpha) \ket{\phi} \right] = \lim_{d\to\infty}\lim_{n \to \infty} \E_{\alpha'[\indi,\inda] \sim \mathcal{D}_{n,d}} \left[{1\over \sqrt{n}}\max_{\ket{\phi} \in \mathcal{S}_{\rm all}^n} \bra{\phi} H_{n,p}(\bm \alpha') \ket{\phi} \right],
\end{align}
which is a more common way of expressing this universality in the classical setting. 
We do not bother with this formal strengthening for simplicity.

\section{Preliminary technical results}
\label{sec:preliminaries}

\subsection{Variance and covariance of energies}

An important quantity in our study is the covariance of the energy of the random Hamiltonian with respect to given states $\ket{\phi}$ and $\ket{\psi}$. \Cref{lem:covariance_product} quantifies this covariance for product states and is crucially used throughout our proofs in later sections. For completeness, we also calculate variance and covariance for the energy over arbitrary and potentially entangled states in \Cref{sec:properties_entangled_states} and comment on how entanglement typically reduces the magnitude of the variance there.

\begin{lemma} \label{lem:covariance_product}
    For any two product states $\ket{\phi}, \ket{\psi} \in \mathcal{S}_{\rm product}^n$ where $\ket{\phi}=\ket{\phi^{(1)}} \otimes \cdots \otimes \ket{\phi^{(n)}}$ and $\ket{\psi}= \ket{\psi^{(1)}} \otimes \cdots \otimes \ket{\psi^{(n)}}$:
    \begin{equation}
        \E\left[\bra{\phi} H_{n,p} \ket{\phi} \bra{\psi} H_{n,p} \ket{\psi}\right] = \frac{1}{\binom{n}{p}} \sum_{\indi \in \mathcal{I}_p^n} \prod_{k=1}^p \left( 2 \left|\bra{ \phi^{(\indisub_k)}}\ket{\psi^{(\indisub_k)}} \right|^2 - 1\right).
    \end{equation}
\end{lemma}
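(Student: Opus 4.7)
The plan is to expand the product $\bra{\phi}H_{n,p}\ket{\phi}\bra{\psi}H_{n,p}\ket{\psi}$ using the definition of $H_{n,p}$ from \Cref{eq:hamiltonian_concise_form}, then exploit (i) independence of the Gaussian coefficients, (ii) the product structure of the states, and (iii) a standard single-qubit identity coming from the SWAP decomposition.

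First, I would write
\begin{equation*}
\bra{\phi}H_{n,p}\ket{\phi}\bra{\psi}H_{n,p}\ket{\psi} = \frac{1}{\binom{n}{p}}\sum_{\indi,\indi'\in\mathcal{I}_p^n}\sum_{\inda,\inda'\in\{1,2,3\}^p}\alpha[\indi;\inda]\,\alpha[\indi';\inda']\,\bra{\phi}P_{\indi}^{\inda}\ket{\phi}\bra{\psi}P_{\indi'}^{\inda'}\ket{\psi}.
\end{equation*}
Taking expectation and using $\E[\alpha[\indi;\inda]\alpha[\indi';\inda']] = \mathbf{1}\{\indi=\indi',\inda=\inda'\}$ collapses the double sum to a single one over $(\indi,\inda)$.

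Next, I would use the product structure $\ket{\phi}=\bigotimes_j\ket{\phi^{(j)}}$ and $\ket{\psi}=\bigotimes_j\ket{\psi^{(j)}}$ to factor the Pauli expectations as $\bra{\phi}P_{\indi}^{\inda}\ket{\phi} = \prod_{k=1}^p\bra{\phi^{(\indisub_k)}}\sigma^{\indasub_k}\ket{\phi^{(\indisub_k)}}$, and similarly for $\ket{\psi}$. Because $\indi$ has distinct entries and the $\sum_{\inda\in\{1,2,3\}^p}$ is over an independent choice at each slot, this sum factorizes across $k=1,\dots,p$, yielding
\begin{equation*}
\E\left[\bra{\phi}H_{n,p}\ket{\phi}\bra{\psi}H_{n,p}\ket{\psi}\right] = \frac{1}{\binom{n}{p}}\sum_{\indi\in\mathcal{I}_p^n}\prod_{k=1}^p\left(\sum_{a=1}^3\bra{\phi^{(\indisub_k)}}\sigma^a\ket{\phi^{(\indisub_k)}}\bra{\psi^{(\indisub_k)}}\sigma^a\ket{\psi^{(\indisub_k)}}\right).
\end{equation*}

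Finally, I would invoke the single-qubit identity $\sum_{a=1}^3\bra{u}\sigma^a\ket{u}\bra{v}\sigma^a\ket{v} = 2|\braket{u|v}|^2-1$, which is immediate from the SWAP decomposition $\mathrm{SWAP}=\tfrac{1}{2}(I\otimes I+\sum_{a=1}^3\sigma^a\otimes\sigma^a)$ together with the identity $\bra{u}\bra{v}\mathrm{SWAP}\ket{u}\ket{v}=|\braket{u|v}|^2$ for pure single-qubit states. Substituting this in each factor $k$ gives precisely the claimed formula.

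There is no genuine obstacle here: the argument is a direct moment computation. The only mild subtlety is ensuring the sum over $\inda\in\{1,2,3\}^p$ factorizes cleanly; this is guaranteed because $\indi\in\mathcal{I}_p^n$ consists of distinct indices, so each qubit position in the factored tensor product appears at most once per term, letting the product and sum interchange without ambiguity.
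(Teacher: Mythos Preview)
Your proposal is correct and follows essentially the same approach as the paper: collapse the double sum via $\E[\alpha\alpha']=\mathbf{1}\{(\indi,\inda)=(\indi',\inda')\}$, factor over qubits using the product-state structure, and then apply the single-qubit SWAP identity $\sum_{a=1}^3\sigma^a\otimes\sigma^a=2\,\mathrm{SWAP}-I$ to obtain $2|\braket{u|v}|^2-1$ in each slot. The paper's proof is organized in exactly this way.
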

\begin{proof}
    By taking the expectation over coefficients, it is straightforward to note that
    \begin{equation} \label{eq:covariance_first_step}
        \E\left[ \bra{\phi} H_{n,p} \ket{\phi} \bra{\psi} H_{n,p} \ket{\psi}\right] = \frac{1}{\binom{n}{p}} \sum_{\indi \in \mathcal{I}_p^n} \sum_{\inda \in \{1,2,3\}^p} \bra{\phi} P_{\indi}^{\inda} \ket{\phi} \bra{\psi} P_{\indi}^{\inda} \ket{\psi}.
    \end{equation}

    Since $\ket{\phi}$ is a product state, we denote its tensor components as $\ket{\phi} = \ket{\phi^{(1)}} \otimes \cdots \otimes \ket{\phi^{(n)}}$ and similarly for $\ket{\psi}$. Looking at each term within the first sum, for each $\indi$:
\begin{equation}
    \begin{split}
        \sum_{\inda \in \{1,2,3\}^p} \bra{\phi} P_{\indi}^{\inda} \ket{\phi} \bra{\psi} P_{\indi}^{\inda} \ket{\psi} &= \sum_{\indasub_1 \in \{1,2,3\}} \bra{\phi^{(\indisub_1)}} \sigma_{\indasub_1} \ket{\phi^{(\indisub_1)}} \bra{\psi^{(\indisub_1)}} \sigma_{\indasub_1} \ket{\psi^{(\indisub_1)}} \\
        & \quad \times \sum_{\indasub_2 \in \{1,2,3\}} \bra{\phi^{(\indisub_2)}} \sigma_{\indasub_2} \ket{\phi^{(\indisub_2)}} \bra{\psi^{(\indisub_2)}} \sigma_{\indasub_2} \ket{\psi^{(\indisub_2)}} \\
        & \quad \times \cdots \\
        & \quad \times \sum_{\indasub_p \in \{1,2,3\}} \bra{\phi^{(\indisub_p)}} \sigma_{\indasub_p} \ket{\phi^{(\indisub_p)}} \bra{\psi^{(\indisub_p)}} \sigma_{\indasub_p} \ket{\psi^{(\indisub_p)}} .
    \end{split}
\end{equation}

For each individual sum on the right hand side, we have 
\begin{equation} \label{eq:intermediate_before_swap}
    \sum_{\indasub_k \in \{1,2,3\}} \bra{\phi^{(\indisub_k)}} \sigma_{\indasub_k} \ket{\phi^{(\indisub_k)}} \bra{\psi^{(\indisub_k)}} \sigma_{\indasub_k} \ket{\psi^{(\indisub_k)}} =  \bra{\phi^{(\indisub_k)}} \bra{\psi^{(\indisub_k)}} \left( \sum_{\indasub_k \in \{1,2,3\}} \sigma_{\indasub_k} \otimes \sigma_{\indasub_k}  \right)\ket{\phi^{(\indisub_k)}} \ket{\psi^{(\indisub_k)}}.
\end{equation}

Defining the $\operatorname{SWAP}$ operator as 
\begin{equation}
    \operatorname{SWAP} = \begin{bmatrix} 1&0&0&0\\0&0&1&0\\0&1&0&0\\0&0&0&1\end{bmatrix},
\end{equation}
we note that
\begin{equation}\label{eq:sum_pauli_to_swap}
    \sum_{a \in \{1,2,3\}} \sigma^{a} \otimes \sigma^{a} =\begin{bmatrix} 0&0&0&1\\0&0&1&0\\0&1&0&0\\1&0&0&0\end{bmatrix}+\begin{bmatrix} 0&0&0&-1\\0&0&1&0\\0&1&0&0\\-1&0&0&0\end{bmatrix}+\begin{bmatrix} 1&0&0&0\\0&-1&0&0\\0&0&-1&0\\0&0&0&1\end{bmatrix}= 2(\operatorname{SWAP}) - I.
\end{equation}
Therefore, returning to \Cref{eq:intermediate_before_swap}:
\begin{equation}
\begin{split}
    \bra{\phi^{(\indisub_k)}} \bra{\psi^{(\indisub_k)}} \left( \sum_{\indasub_k \in \{1,2,3\}} \sigma_{\indasub_k} \otimes \sigma_{\indasub_k}  \right)\ket{\phi^{(\indisub_k)}} \ket{\psi^{(\indisub_k)}} &= \bra{\phi^{(\indisub_k)}} \bra{\psi^{(\indisub_k)}} \left[  2(\operatorname{SWAP}) - I \right]\ket{\phi^{(\indisub_k)}} \ket{\psi^{(\indisub_k)}} \\ 
    &= 2 \left|\bra{\phi^{(\indisub_k)}}\ket{ \psi^{(\indisub_k)}} \right|^2 - 1,
\end{split}
\end{equation}
and thus
\begin{equation}
    \sum_{\inda \in \{1,2,3\}^p} \bra{\phi} P_{\indi}^{\inda} \ket{\phi} \bra{\psi} P_{\indi}^{\inda} \ket{\psi} = \prod_{k=1}^p \left( 2 \left|\bra{\phi^{(\indisub_k)}}\ket{ \psi^{(\indisub_k)}} \right|^2 - 1\right).
\end{equation}

Returning to \Cref{eq:covariance_first_step}, we now have that
\begin{equation} \label{eq:variance_lemma_final}
    \begin{split}
        \E\left[\bra{\phi} H_{n,p} \ket{\phi} \bra{\psi} H_{n,p} \ket{\psi}\right] &= \frac{1}{\binom{n}{p}} \sum_{\indi \in \mathcal{I}_p^n} \sum_{\inda \in \{1,2,3\}^p} \bra{\phi} P_{\indi}^{\inda} \ket{\phi} \bra{\psi} P_{\indi}^{\inda} \ket{\psi} \\
        &= \frac{1}{\binom{n}{p}} \sum_{\indi \in \mathcal{I}_p^n} \prod_{k=1}^p \left( 2 \left|\bra{\phi^{(\indisub_k)}}\ket{ \psi^{(\indisub_k)}} \right|^2 - 1\right).
    \end{split}
\end{equation}

\end{proof}

As a consequence of the above, for any product state $\ket{\phi} \in \mathcal{S}_{\rm product}^n$, the variance is normalized so that $\E\left[\bra{\phi} H_{n,p} \ket{\phi} \bra{\phi} H_{n,p} \ket{\phi}\right] = 1$.

\subsection{Concentration}
\label{sec:concentration}

Throughout this study, we aim to study properties of ``typical" random Hamiltonians $H_{n,p}$ by studying these properties on average. In the case of the largest eigenvalue, standard Gaussian concentration inequalities can be applied to show that $\lambda_{\text{max}}(\sqrt{n} H_{n,p})$ is concentrated around its mean.
\begin{proposition}[Concentration of largest eigenvalue] \label{prop:spin_glass_concentration}
    For all $t\geq 0$:
    \begin{equation}
        \mathbb{P}\left[ \left| \lambda_{\text{max}}(\sqrt{n} H_{n,p}) - \E \lambda_{\text{max}}(\sqrt{n} H_{n,p}) \right| \geq tn \right] \leq 2 \exp\left( -\frac{t^2n}{2 \cdot 3^p} \right).
    \end{equation}
\end{proposition}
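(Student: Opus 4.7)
The plan is to view $\lambda_{\text{max}}(\sqrt{n}\,H_{n,p})$ as a deterministic function $f(\bm{\alpha})$ of the i.i.d.\ standard Gaussian vector $\bm{\alpha} = (\alpha[\indi;\inda])_{\indi,\inda}$ and apply the Gaussian concentration inequality (Tsirelson--Ibragimov--Sudakov): if $f$ is $L$-Lipschitz in the Euclidean norm, then
$$\mathbb{P}\!\left[|f(\bm{\alpha}) - \E f(\bm{\alpha})| \geq s\right] \leq 2\exp\!\left(-\frac{s^2}{2L^2}\right).$$
The bound claimed in the proposition is then obtained by choosing $s = tn$ and $L = \sqrt{n \cdot 3^p}$, so the entire task reduces to establishing this Lipschitz estimate.

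To bound $L$, I would chain two standard facts. First, $\lambda_{\text{max}}(\cdot)$ is $1$-Lipschitz with respect to the operator norm, so it suffices to control $\|\sqrt{n}\,H_{n,p}(\bm{\alpha}) - \sqrt{n}\,H_{n,p}(\bm{\alpha}')\|_{\text{op}}$ by $\sqrt{n\cdot 3^p}\,\|\bm{\alpha}-\bm{\alpha}'\|_2$. Second, for any Hermitian $M$ one has $\|M\|_{\text{op}} = \max_{\ket{\phi}\in\mathcal{S}_{\rm all}^n}|\bra{\phi}M\ket{\phi}|$, so it is enough to bound, uniformly over $\ket{\phi}$, the quantity
$$\bra{\phi}\sqrt{n}(H_{n,p}(\bm{\alpha}) - H_{n,p}(\bm{\alpha}'))\ket{\phi} = \sum_{\indi,\inda}(\alpha-\alpha')[\indi;\inda]\,v_\phi[\indi;\inda],$$
where $v_\phi[\indi;\inda] = \tfrac{\sqrt{n}}{\binom{n}{p}^{1/2}}\bra{\phi}P_{\indi}^{\inda}\ket{\phi}$. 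Cauchy--Schwarz turns this into $\|\bm{\alpha}-\bm{\alpha}'\|_2 \cdot \|v_\phi\|_2$, and since each Pauli expectation has absolute value at most one while the index set has size $\binom{n}{p}\cdot 3^p$, the state-independent bound $\|v_\phi\|_2^2 \leq n\cdot 3^p$ follows immediately. This produces $L \leq \sqrt{n\cdot 3^p}$.

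Feeding $L$ back into the Tsirelson inequality with $s = tn$ gives exactly the claimed tail bound. I do not anticipate any substantive obstacle: this is a textbook application of Gaussian concentration, and the only bookkeeping subtlety is ensuring that the normalizing factor $\sqrt{n}/\binom{n}{p}^{1/2}$ combined with the $\binom{n}{p}\cdot 3^p$ Pauli terms gives precisely the $3^p$ appearing in the denominator of the exponent. If one wanted a tighter constant, one could try to exploit structural properties of Pauli expectations (e.g., purity-type inequalities for product versus entangled $\ket{\phi}$), but these refinements are not needed for the stated bound.
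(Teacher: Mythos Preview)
Your proposal is correct and essentially identical to the paper's own proof: both apply the Gaussian concentration inequality for Lipschitz functions and establish the Lipschitz constant $L=\sqrt{3^p n}$ by bounding $|\bra{\phi}P_{\indi}^{\inda}\ket{\phi}|\le 1$ and counting the $3^p\binom{n}{p}$ terms. The only cosmetic difference is that the paper computes $\|\nabla_{\bm\alpha}\bra{\phi}\sqrt{n}H_{n,p}\ket{\phi}\|_2^2$ directly and then passes to the supremum over $\ket{\phi}$ via an elementary argument, whereas you phrase it as $\lambda_{\max}$ being $1$-Lipschitz in operator norm followed by Cauchy--Schwarz; these are the same computation.
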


To prove this, we apply the Gaussian concentration inequality for Lipschitz functions. A function $f:\mathbb{R}^N \to \mathbb{R}$ is $L$-Lipschitz if $|f(x) - f(y)| \leq L \|x - y\|_2$ for all $x,y \in \mathbb{R}^n$. 

\begin{lemma}[Gaussian concentration inequality for Lipschitz functions; Theorem 2.26 of \cite{wainwright2019high}] \label{lem:gaussian_concentration}
    Let $X_1, \dots, X_N$ be i.i.d.\ standard Normal variables and $f:\mathbb{R}^N \to \mathbb{R}$ be an $L$-Lipschitz function. Then for any $t \geq 0$:
    \begin{equation}
        \mathbb{P}\left[ \left| f(X) - \E f(X) \right| \geq t \right] \leq 2 \exp\left( -\frac{t^2}{2L^2} \right).
    \end{equation}
\end{lemma}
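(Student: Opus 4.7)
The plan is to establish this via the classical Herbst argument combined with Gross's Gaussian logarithmic Sobolev inequality. First I would reduce to the case where $f \in C^1(\mathbb{R}^N)$ with $\|\nabla f(x)\|_2 \leq L$ pointwise: any $L$-Lipschitz $f$ can be smoothed by convolution with a narrow Gaussian kernel to produce $f_\varepsilon \in C^\infty$ that remain $L$-Lipschitz and converge to $f$ pointwise, and the desired tail bound passes to the limit by Fatou and dominated convergence. It also suffices to prove the one-sided statement $\mathbb{P}[f(X)-\E f(X) \geq t] \leq \exp(-t^2/(2L^2))$, since applying the bound to $-f$ and taking a union bound yields the two-sided inequality at the cost of a factor of $2$.

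For the core step, define $Z = f(X) - \E f(X)$ and the log-MGF $\psi(\lambda) = \log \E e^{\lambda Z}$ for $\lambda \geq 0$. Gross's log-Sobolev inequality for the standard Gaussian measure on $\mathbb{R}^N$ states that for any smooth $g>0$,
\[
\E[g^2 \log g^2] - \E[g^2]\log \E[g^2] \leq 2\,\E\|\nabla g\|_2^2.
\]
Applied to $g = e^{\lambda f/2}$, the chain rule gives $\|\nabla g\|_2^2 \leq (L\lambda/2)^2 g^2$, and after dividing through by $\E[e^{\lambda f}]$ the inequality becomes
\[
\lambda \frac{\E[f e^{\lambda f}]}{\E[e^{\lambda f}]} - \log \E[e^{\lambda f}] \leq \frac{L^2 \lambda^2}{2}.
\]
Re-expressing the left side in terms of $\psi$ (after shifting $f$ by $\E f$, the first term becomes $\lambda \psi'(\lambda)$ and the second $\psi(\lambda)$), this is the differential inequality $\lambda \psi'(\lambda) - \psi(\lambda) \leq L^2\lambda^2/2$, equivalently $\bigl(\psi(\lambda)/\lambda\bigr)' \leq L^2/2$. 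Together with the boundary condition $\psi(\lambda)/\lambda \to \psi'(0) = \E Z = 0$ as $\lambda \to 0^+$, integration yields $\psi(\lambda) \leq L^2 \lambda^2 / 2$.

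The conclusion follows by the Chernoff bound: $\mathbb{P}[Z \geq t] \leq \exp(-\lambda t + L^2 \lambda^2/2)$, optimized at $\lambda = t/L^2$ to give $\exp(-t^2/(2L^2))$. The main technical ingredient is the Gaussian log-Sobolev inequality itself, which I would cite (Gross, 1975) rather than re-derive, since its standard proof via the two-point inequality on $\{\pm 1\}^N$, tensorization, and the central limit theorem is entirely orthogonal to the concerns of this paper. Everything else is a routine ODE argument together with the Chernoff transform.
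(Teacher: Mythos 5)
Your argument is correct, but note that the paper never proves this lemma: it is imported wholesale as Theorem 2.26 of \cite{wainwright2019high} and used as a black box, so there is no internal proof to compare against. What you have written is the standard Herbst argument, and every step checks: Gross's inequality applied to $g=e^{\lambda f/2}$ with $\|\nabla g\|_2^2\leq (\lambda^2L^2/4)g^2$ gives $\lambda\psi'(\lambda)-\psi(\lambda)\leq L^2\lambda^2/2$, i.e.\ $\left(\psi(\lambda)/\lambda\right)'\leq L^2/2$, and with $\psi(\lambda)/\lambda\to \E Z=0$ as $\lambda\to 0^+$ this integrates to $\psi(\lambda)\leq L^2\lambda^2/2$; Chernoff at $\lambda=t/L^2$ and the union bound over $\pm f$ give the two-sided bound with the factor $2$. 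Two small points you should make explicit in a write-up: (i) the moment generating function of $Z=f(X)-\E f(X)$ is finite for every $\lambda$ (since $|f(X)|\leq |f(0)|+L\|X\|_2$ has Gaussian tails), which is needed before $\psi$ and $\psi'$ can be manipulated and before the limit $\psi(\lambda)/\lambda\to\psi'(0)$ is legitimate; (ii) in the mollification step, the means $\E f_\varepsilon(X)$ converge to $\E f(X)$ (e.g.\ because $\|f_\varepsilon-f\|_\infty\leq L\varepsilon\,\E\|G\|_2$), so the centered tail bound really does pass to the limit — alternatively one can skip smoothing and invoke Rademacher's theorem to get $\|\nabla f\|_2\leq L$ almost everywhere. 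Your proof outsources the Gaussian log-Sobolev inequality to Gross exactly as the paper outsources the entire lemma to Wainwright, which is a reasonable modularization; it buys a self-contained derivation of the sharp constant $2L^2$ at the cost of one deep cited input, whereas the paper simply treats the whole concentration statement as known.
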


\begin{proof}[Proof of \Cref{prop:spin_glass_concentration}]
    Applying \Cref{lem:gaussian_concentration} with $f(\bm \alpha) = \lambda_{\text{max}}(\sqrt{n} H_{n,p}(\bm \alpha))$, it suffices to show that the function $\lambda_{\text{max}}(\sqrt{n} H_{n,p}(\bm \alpha))$ has Lipschitz constant bounded by $n$. Note that for any given state $\ket{\phi}$ 
    \begin{equation} \label{eq:lipschitz_from_hamiltonian}
        \|\nabla_{\bm \alpha} \bra{\phi} \sqrt{n} H_{n,p}(\bm \alpha) \ket{\phi}  \|_2^2 = \frac{n}{\binom{n}{p}} \sum_{\indi \in \mathcal{I}_p^n} \sum_{\inda \in \{1,2,3\}^p} \left(\bra{\phi} P_{\indi}^{\inda} \ket{\phi}\right)^2.
    \end{equation}
    Since $\left(\bra{\phi} P_{\indi}^{\inda} \ket{\phi}\right)^2 \leq 1$ for any Pauli matrix $P_{\indi}^{\inda}$, this implies that  $\|\nabla_{\bm \alpha} \bra{\phi} \sqrt{n} H_{n,p} \ket{\phi}  \|_2^2 \leq 3^p n$ and the Lipschitz constant of the function $\bra{\phi} \sqrt{n} H_{n,p}(\bm{\alpha}) \ket{\phi} $ is bounded by $L \leq \sqrt{3^p n}$.
    
    The function $\lambda_{\text{max}}(\sqrt{n} H_{n,p}(\bm \alpha))$ also has the same Lipschitz bound $L \leq \sqrt{3^p n}$. This can be shown by the following argument. $\lambda_{\text{max}}(\sqrt{n} H_{n,p}(\bm \alpha))$ is the maximum of $\bra{\phi} \sqrt{n} H_{n,p}(\bm \alpha) \ket{\phi} $ over all states $\ket{\phi}$. For a given draw of coefficients $\bm \alpha$, let $\ket{\psi}$ be a state which achieves the optimum $\bra{\psi} \sqrt{n} H_{n,p}(\bm \alpha) \ket{\psi} = \lambda_{\text{max}}(\sqrt{n} H_{n,p}(\bm \alpha))$; then, 
    \begin{equation}
        \lambda_{\text{max}}(\sqrt{n} H_{n,p}(\bm \alpha)) \leq  \bra{\psi} \sqrt{n} H_{n,p}(\bm \alpha') \ket{\psi} + L \|\bm \alpha - \bm \alpha'\| \leq \lambda_{\text{max}}(\sqrt{n} H_{n,p}(\bm \alpha')) + L \|\bm \alpha - \bm \alpha'\|.
    \end{equation}
    Applying a symmetric bound, we similarly have that $\lambda_{\text{max}}(\sqrt{n} H_{n,p}(\bm \alpha)) \geq \lambda_{\text{max}}(\sqrt{n} H_{n,p}(\bm \alpha')) - L \|\bm \alpha - \bm \alpha'\|$.
    This then implies that $\left| \lambda_{\text{max}}(\sqrt{n} H_{n,p}(\bm \alpha)) - \lambda_{\text{max}}(\sqrt{n} H_{n,p}(\bm \alpha')) \right| \leq L\|\alpha- \alpha'\|$, which is the same Lipschitz bound. Taking $L^2=3^pn$ in \Cref{lem:gaussian_concentration} completes the proof.
\end{proof}

The concentration bound above can be strengthened by obtaining tighter control of the variance (see \Cref{sec:properties_entangled_states}). In fact, we can remove the dependence on $p$ when considering the maximum energy over product states. This arises due to the fact that the variance of the energy of any product state is equal to one (see \Cref{lem:covariance_product}) and independent of $p$.

\begin{proposition}[Concentration of maximum energy over product states] \label{prop:product_state_concentration}
    Let $E_{n,\rm Prod}^*(p)$ be a random variable quantifying the maximum normalized energy over product states defined as
    \begin{equation}
        E_{n,\rm Prod}^*(p) \coloneqq \frac{1}{\sqrt{n}} \max_{\ket{\phi} \in \mathcal{S}_{\rm product}^n} \bra{\phi}H_{n,p} \ket{\phi}.
    \end{equation}
    Then, for all $t\geq 0$:
    \begin{equation}
        \mathbb{P}\left[ \left| E_{n,\rm Prod}^*(p) - \E \left[ E_{n,\rm Prod}^*(p) \right] \right| \geq t \right] \leq 2 \exp\left( -\frac{t^2n}{2 } \right).
    \end{equation}
\end{proposition}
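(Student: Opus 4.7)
The plan is to imitate the proof of \Cref{prop:spin_glass_concentration}, replacing the crude bound $(\bra{\phi} P_{\indi}^{\inda} \ket{\phi})^2 \le 1$ (which loses a factor of $3^p$ after summing over Pauli labels) with a tight identity available specifically for product states. Concretely, I would apply \Cref{lem:gaussian_concentration} to the function $f(\bm\alpha) = E_{n,\rm Prod}^*(p)(\bm\alpha) = \tfrac{1}{\sqrt{n}} \max_{\ket{\phi}\in\mathcal{S}_{\rm product}^n} \bra{\phi} H_{n,p}(\bm\alpha)\ket{\phi}$ and show that its Lipschitz constant is at most $L=1/\sqrt{n}$, so that $L^2 = 1/n$ reproduces the stated tail $2\exp(-t^2 n/2)$.

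The main step is to bound the Lipschitz constant of $\bm\alpha \mapsto \tfrac{1}{\sqrt{n}}\bra{\phi} H_{n,p}(\bm\alpha) \ket{\phi}$ uniformly in the product state $\ket{\phi}$. Exactly as in \Cref{eq:lipschitz_from_hamiltonian}, I would compute
\begin{equation*}
\left\|\nabla_{\bm\alpha}\, \tfrac{1}{\sqrt{n}}\bra{\phi} H_{n,p}(\bm\alpha)\ket{\phi}\right\|_2^{2}
= \frac{1}{n\binom{n}{p}} \sum_{\indi \in \mathcal{I}_p^n} \sum_{\inda \in \{1,2,3\}^p} \bigl(\bra{\phi} P_{\indi}^{\inda} \ket{\phi}\bigr)^{2}.
\end{equation*}
The key observation is that for a product state $\ket{\phi} = \ket{\phi^{(1)}}\otimes\cdots\otimes\ket{\phi^{(n)}}$, setting $\ket{\psi}=\ket{\phi}$ in (the proof of) \Cref{lem:covariance_product} gives the identity
\begin{equation*}
\sum_{\inda \in \{1,2,3\}^p} \bigl(\bra{\phi} P_{\indi}^{\inda} \ket{\phi}\bigr)^{2}
= \prod_{k=1}^p \bigl(2\left|\langle \phi^{(\indisub_k)} | \phi^{(\indisub_k)}\rangle\right|^{2} - 1\bigr) = 1,
\end{equation*}
since each single-qubit self-overlap equals one. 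Summing over $\indi \in \mathcal{I}_p^n$ cancels the $\binom{n}{p}$ denominator, yielding $\|\nabla_{\bm\alpha}\,\tfrac{1}{\sqrt{n}}\bra{\phi} H_{n,p}\ket{\phi}\|_2 \le 1/\sqrt{n}$.

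Finally, because $f(\bm\alpha)$ is the supremum over $\ket{\phi}\in\mathcal{S}_{\rm product}^n$ of functions that are each $(1/\sqrt n)$-Lipschitz in $\bm\alpha$, the argument at the end of the proof of \Cref{prop:spin_glass_concentration} (pick an optimizer $\ket{\psi}$ for $\bm\alpha$, compare against $\bm\alpha'$, then swap roles) shows that $f$ is itself $(1/\sqrt n)$-Lipschitz. Plugging $L^2 = 1/n$ into \Cref{lem:gaussian_concentration} yields the claimed bound $2\exp(-t^2 n/2)$. I do not anticipate a real obstacle here: the only subtlety is spotting that the product-state structure kills the factor $3^p$ via the single-qubit self-overlap identity; once that identity is in hand, the concentration argument is a direct adaptation of \Cref{prop:spin_glass_concentration}.
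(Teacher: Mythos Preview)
Your proposal is correct and follows essentially the same approach as the paper: compute the gradient norm via \Cref{eq:lipschitz_from_hamiltonian} (rescaled by $1/n$), invoke the $\ket{\psi}=\ket{\phi}$ case of \Cref{lem:covariance_product} to collapse the Pauli sum to $1$ for product states, extend the resulting $1/\sqrt{n}$ Lipschitz bound to the supremum exactly as in \Cref{prop:spin_glass_concentration}, and finish with \Cref{lem:gaussian_concentration}.
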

\begin{proof}
    We start with \Cref{eq:lipschitz_from_hamiltonian} and normalize it by a factor $1/n$ obtaining
    \begin{equation} 
        \left\|\nabla_{\bm \alpha} \bra{\phi} \frac{1}{\sqrt{n}} H_{n,p}(\bm \alpha) \ket{\phi}  \right\|_2^2 = \frac{1}{n\binom{n}{p}} \sum_{\indi \in \mathcal{I}_p^n} \sum_{\inda \in \{1,2,3\}^p} \left(\bra{\phi} P_{\indi}^{\inda} \ket{\phi}\right)^2.
    \end{equation}
    The above is equivalent up to a factor $1/n$ to the equation in the calculation of the variance in \Cref{lem:covariance_product} (see specifically \Cref{eq:covariance_first_step}). Following the proof of \Cref{lem:covariance_product} from \Cref{eq:covariance_first_step} to \Cref{eq:variance_lemma_final}, we can simplify the above as
    \begin{equation} 
        \left\|\nabla_{\bm \alpha} \bra{\phi} \frac{1}{\sqrt{n}} H_{n,p}(\bm \alpha) \ket{\phi}  \right\|_2^2 = \frac{1}{n\binom{n}{p}} \sum_{\indi \in \mathcal{I}_p^n} \prod_{k=1}^p \left( 2 \left|\bra{\phi^{(\indisub_k)}}\ket{ \phi^{(\indisub_k)}} \right|^2 - 1\right) = \frac{1}{n}.
    \end{equation}
    Thus, the energy of a product state has the Lipschitz bound $L\leq \frac{1}{\sqrt{n}}$. By the same reasoning in the proof of \Cref{prop:spin_glass_concentration}, this Lipschitz bound extends to the maximum over all product states. Applying $L= \frac{1}{\sqrt{n}}$ into \Cref{lem:gaussian_concentration} completes the proof.
\end{proof}

\section{Expected maximum energy for product states} 
\label{sec:quantum_p_spin_proof}

We now compute lower and upper bounds of the expected maximum energy achieved by product state in the limit of large $p$. By \Cref{prop:product_state_concentration} this is equivalent to bounds on the maximum product state energy that hold with high probability.

\subsection{Lower bound}

We begin with the lower bound. We first construct a set $\mathcal{S}_\delta^n \subset \mathcal{S}^n_{\rm product}$ of product states where pairs of states are sufficiently far apart and then applying the second moment method to the number $N_\epsilon$ of states $\ket{\bm{\mu}} \in \mathcal{S}_\delta^n$ above normalized energy threshold $C_\epsilon\sqrt{\log p}$ for some $\epsilon \geq 0$, where
\begin{equation}
    C_\epsilon\coloneqq\left(1-\epsilon\right)\sqrt{2}.
\end{equation}
Specifically,
\begin{equation}
    N_\epsilon = \left| \left\{ \ket{\bm{\mu}} \in \mathcal{S}_\delta^n: \sqrt{n}\bra{\bm{\mu}}H_{n,p}\ket{\bm{\mu}} \geq C_\epsilon\sqrt{\log\left(p\right)} n \right\} \right|.
\end{equation}
Throughout, we normalize the quantum Hamiltonian as $\sqrt{n} H_{n,p}$ so that the variance (over the distribution of coefficients $\bm{\alpha}$) of the energy expectation for a product state is equal to $n$. The second moment method lower bounds the probability that $N_\epsilon > 0$ via the Paley--Zygmund inequality:
\begin{equation}
    \mathbb{P}\left[ N_\epsilon > 0 \right] \geq \frac{\E[N_\epsilon]^2}{\E{[N_\epsilon^2]}}.
\end{equation}

\paragraph{Construction of $\mathcal{S}_\delta^n$.}
Consider a set $S_\delta=\left\{\bm{\hat{n}_\alpha}\right\}_{\alpha=1}^{\frac{q}{2}}\cup\left\{-\bm{\hat{n}_\alpha}\right\}_{\alpha=1}^{\frac{q}{2}}$ (for some constant $\delta$) of unit vectors $\bm{\hat{n}_\alpha} \in \mathbb{R}^3$ such that for $\alpha\neq\beta$:
\begin{equation}
    \left\lvert \langle \bm{\hat{n}_\alpha}, \bm{\hat{n}_\beta} \rangle \right\rvert\leq 1-p^{-\left(1-\delta\right)}.
\end{equation}
By \Cref{lem:packing_of_sphere} (derived from \cite{6773576}), for any $\delta$ small enough, $\frac{q}{2}=\frac{3}{4}p^{1-\delta}$ points can be chosen satisfying this constraint such that all $\bm{\hat{n}_i}$ lie in the spherical cap up to a polar angle of $\frac{\pi}{2}-0.01$. We thus take $q=\frac{3}{2}p^{1-\delta}$ in the following.

Given $S_\delta$, let us label pure product states via their Bloch sphere representation~\cite{nielsen_chuang_2010}:
\begin{equation}
    \ket{\bm{\mu}}\bra{\bm{\mu}}=\bigotimes\limits_{i=1}^n\left(\frac{\sigma^0}{2}+\frac{\bm{\hat{n}_{\mu_i}}\cdot\bm{\sigma}}{2}\right),
\end{equation}
for $\mu_k\in\left[q\right]$ and where $\bm{\sigma} = [\sigma^1, \sigma^2, \sigma^3]^\intercal$ is a vector consisting of pauli $X,Y,Z$ matrices. Since $\|\bm{\hat{n}_{\mu_i}}\|=1$, the above is guaranteed to be a pure state. Then,
\begin{equation} \label{eq:construction_of_logp_set}
    \mathcal{S}_\delta^n = \left\{  \ket{\bm{\mu}} \in   \mathcal{S}^n_{\rm product}:  \forall i \in [n], \mu_i \in S_\delta \right\}.
\end{equation}

By \Cref{lem:covariance_product}
this construction features the covariance:
\begin{equation} \label{eq:covariance_product_basis_states}
    \begin{aligned}
        \E\left[ \bra{\bm{\mu}} H_{n,p} \ket{\bm{\mu}} \bra{\bm{\tilde{\mu}}} H_{n,p} \ket{\bm{\tilde{\mu}}}\right]&= \frac{1}{\binom{n}{p}} \sum_{\indi \in \mathcal{I}_p^n}\prod\limits_{k=1}^p\left(2\tr\left(\left(\frac{\sigma^0}{2}+\frac{\bm{\hat{n}_{\mu_{i_k}}}\cdot\bm{\sigma}}{2}\right)\left(\frac{\sigma^0}{2}+\frac{\bm{\hat{n}_{\tilde{\mu}_{i_k}}}\cdot\bm{\sigma}}{2}\right)\right)-1\right)\\
        &= \frac{1}{\binom{n}{p}} \sum_{\indi \in \mathcal{I}_p^n}\prod\limits_{k=1}^p\bm{\hat{n}_{\mu_{i_k}}}\cdot\bm{\hat{n}_{\tilde{\mu}_{i_k}}}.
    \end{aligned}
\end{equation}
Note that $-1 \leq \bm{\hat{n}_{\mu_{i_k}}}\cdot\bm{\hat{n}_{\tilde{\mu}_{i_k}}} \leq 1$, and thus:
\begin{equation}
    \left| \left( \frac{1}{n} \sum_{i=1}^n\bm{\hat{n}_{\mu_i}}\cdot\bm{\hat{n}_{\tilde{\mu}_i}}\right)^p - \frac{p! {n \choose p}}{n^p} \frac{1}{{n \choose p}} \sum_{\substack{i_1 < \cdots < i_p}}\prod\limits_{k=1}^p\bm{\hat{n}_{\mu_{i_k}}}\cdot\bm{\hat{n}_{\tilde{\mu}_{i_k}}} \right| \leq \frac{n^p - p! {n \choose p}}{n^p} = O\left(\frac{1}{n}\right),
\end{equation}
so in particular:
\begin{equation} \label{eq:covar_product_mu}
    \E\left[ \bra{\bm{\mu}} H_{n,p} \ket{\bm{\mu}} \bra{\bm{\tilde{\mu}}} H_{n,p} \ket{\bm{\tilde{\mu}}}\right]=  \left( \frac{1}{n} \sum_{i=1}^n\bm{\hat{n}_{\mu_k}}\cdot\bm{\hat{n}_{\tilde{\mu}_k}}\right)^p+O\left(\frac{1}{n}\right).
\end{equation}

\paragraph{First and second moments.}
We begin with the first moment.
By \Cref{eq:covariance_product_basis_states}, it can be verified that $\sqrt{n}\bra{\bm{\mu}}H_{n,p}\ket{\bm{\mu}}$ is Gaussian with mean zero and variance $n$ for any $\ket{\bm{\mu}}$. Denoting the cumulative distribution function (CDF) of the standard normal distribution as $\Phi$, we have that the first moment is equal to:
\begin{equation}\label{eq:1st-moment}
\begin{split}
    \E N_\epsilon &= \sum_{\bm{\mu}\in\left[q\right]^n} \mathbb{P}[\sqrt{n} \bra{\bm{\mu}} H_{n,p}\ket{\bm{\mu}} \geq C_\epsilon \sqrt{\log (p)}n] \\
    &= q^n \left(1-\Phi(C_\epsilon\sqrt{\log\left(p\right)n})\right).
\end{split}
\end{equation}

For the second moment, we will use the shorthand:
\begin{equation}
    R\left(\bm{\mu},\bm{\tilde{\mu}}\right)\coloneqq\E\left[ \bra{\bm{\mu}} H_{n,p} \ket{\bm{\mu}} \bra{\bm{\tilde{\mu}}} H_{n,p} \ket{\bm{\tilde{\mu}}}\right]
\end{equation}
for this covariance. The joint probability is
\begin{equation}
\begin{split}
    \mathbb{P}&\left[\sqrt{n}\langle \bm{\mu}| H_{n,p} |\bm{\mu} \rangle\geq C_\epsilon \sqrt{\log\left(p\right)}n \cap \sqrt{n}\langle \bm{\tilde{\mu}}| H_{n,p} |\bm{\tilde{\mu}}\rangle \geq C_\epsilon \sqrt{\log\left(p\right)}n \right]  \\
    & \leq \mathbb{P}\left[\sqrt{n}\langle \bm{\mu}| H_{n,p} |\bm{\mu} \rangle \geq C_\epsilon \sqrt{\log\left(p\right)}n \right] \mathbb{P}\left[ \sqrt{n}\langle \bm{\tilde{\mu}}| H_{n,p} |\bm{\tilde{\mu}}\rangle \geq C_\epsilon \sqrt{\log\left(p\right)}n \;|\; \sqrt{n}\langle \bm{\mu}| H_{n,p} |\bm{\mu} \rangle =  C_\epsilon \sqrt{\log\left(p\right)}n \right] \\
    & = \left[1-\Phi(C_\epsilon \sqrt{\log\left(p\right)n} ) \right] \left[1- \Phi\left( C_\epsilon \left( 1- R(\bm{\mu},\bm{\tilde{\mu}})^2\right)^{-1/2} \left( 1- R(\bm{\mu},\bm{\tilde{\mu}})\right) \sqrt{\log\left(p\right)n}  \right)\right].
 \end{split}
\end{equation}
In the second line above, we use the monotonicity of the conditional CDF of the Gaussian for the inequality since the conditional CDF is minimized when conditioned on the smallest value possible. Applying the above formula, we have
\begin{equation}
    \E[ N_\epsilon^2] \leq \sum_{\bm{\mu},\bm{\tilde{\mu}}}\left[1-\Phi(C_\epsilon \sqrt{\log\left(p\right)n} ) \right] \left[1- \Phi\left( C_\epsilon \left( 1- R(\bm{\mu},\bm{\tilde{\mu}})^{2}\right)^{-1/2} \left( 1- R(\bm{\mu},\bm{\tilde{\mu}})\right) \sqrt{\log\left(p\right)n}  \right)\right].
\end{equation}
Eventually, we will show that the ratio $\frac{\E[N_\epsilon]^2}{\E{[N_\epsilon^2]}}$ is sufficiently bounded to obtain a contradiction with the concentration of $N_\epsilon$. The Lemma below establishes this bound.

\begin{lemma} \label{lem:pre_contradiction_quantum_lower_bound}
    For any small fixed $\epsilon>0$ and all $\delta>0$ such that $\delta<2\epsilon - \epsilon^2$ in the construction of $\mathcal{S}_\delta^n$ (see \Cref{eq:construction_of_logp_set}), the ratio $\frac{\E[N_\epsilon]^2}{\E{[N_\epsilon^2]}}$ obeys the asymptotic relation
    \begin{equation}
        \frac{\E[N_\epsilon]^2}{\E{[N_\epsilon^2]}} \geq \exp\left(-O_p( 1 )n - o_n(1)\right).
    \end{equation}
\end{lemma}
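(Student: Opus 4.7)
My plan is to start from the Gaussian conditional factorization already displayed in the excerpt, namely
\begin{equation}
\frac{\E[N_\epsilon]^2}{\E[N_\epsilon^2]}\;\geq\;\frac{q^{2n}\bigl(1-\Phi(\theta)\bigr)}{\sum_{\bm\mu,\bm{\tilde\mu}}\bigl(1-\Phi\bigl(\theta\sqrt{(1-R)/(1+R)}\bigr)\bigr)}\,,\qquad \theta \coloneqq C_\epsilon\sqrt{\log(p)\,n}\,,
\end{equation}
and use the Mills-ratio bounds $1-\Phi(u)\leq \tfrac{1}{2}e^{-u^2/2}$ for $u\geq 0$ together with $1-\Phi(\theta)\geq c\,\theta^{-1}e^{-\theta^2/2}$ for an absolute constant $c>0$ to cancel the common factor $e^{-\theta^2/2}$. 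After cancellation the lemma reduces to establishing
\begin{equation}
\mathcal{M}\coloneqq \E_{\bm\mu,\bm{\tilde\mu}}\!\left[e^{\theta^2 R(\bm\mu,\bm{\tilde\mu})/(1+R(\bm\mu,\bm{\tilde\mu}))}\right]\;\leq\;\exp\bigl(O_p(1)\,n+o_n(1)\bigr)\,,
\end{equation}
where $\bm\mu,\bm{\tilde\mu}$ are drawn independently and uniformly from $\mathcal{S}_\delta^n$.

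The key structural input is the covariance identity \Cref{eq:covar_product_mu}, which gives $R=\bar Y^{p}+O(1/n)$ with $\bar Y=\tfrac{1}{n}\sum_{i=1}^n Y_i$ a sample average of i.i.d.\ bounded random variables $Y_i=\hat{n}_{\mu_i}\cdot\hat{n}_{\tilde\mu_i}\in[-1,1]$ of mean zero. Since $p$ is even, $R\geq -O(1/n)$ and $R/(1+R)\leq \bar Y^p + O(1/n)$, so it is enough to bound $\E[e^{\theta^2 \bar Y^p}]$ with $\theta^2=2(1-\epsilon)^2\log(p)\,n$ via a large-deviation analysis of $\bar Y$. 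I would split the range of $\bar Y$ into a bulk region $|\bar Y|\leq 1-C/p$ (for a constant $C$ to be chosen) and a boundary region $|\bar Y|>1-C/p$.

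On the bulk, Hoeffding's inequality gives the rate-function lower bound $I(y)\geq y^2/2$, and a direct computation shows that the effective exponent $\phi(y)=2(1-\epsilon)^2\log(p)\,y^p - y^2/2$ has a stationary point $y^*$ with $(y^*)^{p-2}=1/(2(1-\epsilon)^2 p\log p)$ and value $\phi(y^*)=(y^*)^2(1/p-1/2)\leq 0$; moreover taking $C=O(\log\log p)$ ensures $y^p\leq 1/\log p$ throughout the bulk so Hoeffding already dominates $\theta^2 y^p$ there. On the boundary region, the construction of $\mathcal{S}_\delta^n$ in \Cref{eq:construction_of_logp_set} forces $\mu_i=\pm\tilde\mu_i$ on all but an $O(p^{-\delta})$-fraction of sites whenever $|\bar Y|>1-C/p$; a Chernoff bound on the $\mathrm{Bernoulli}(2/q)$ coincidence indicators then yields $I(y)\geq (1-\delta-o_p(1))\log p$ uniformly on this range. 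Combined with the trivial bound $y^p/(1+y^p)\leq 1/2$, the effective exponent on the boundary is at most $\bigl((1-\epsilon)^2 - (1-\delta-o_p(1))\bigr)\log p = (\delta + \epsilon^2 - 2\epsilon + o_p(1))\log p$, which is strictly negative precisely under the hypothesis $\delta<2\epsilon-\epsilon^2$.

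The main obstacle I anticipate is the quantitative matching across the bulk/boundary transition: the Hoeffding bound is sharp only near $y=0$ while the coincidence-counting bound is sharp only near $y=\pm 1$, and the intermediate regime $1-O(\log\log p/p)\leq |y|\leq 1-C/p$ must be handled by an interpolating rate-function estimate that combines the rapid decay $y^p\leq e^{-(1-y)p}$ with a monotonicity-based lower bound on $I$. Once this gluing is done, combining the two regimes via Laplace's method yields $\mathcal{M}\leq \exp(o_n(1))$, so that the desired ratio bound $\E[N_\epsilon]^2/\E[N_\epsilon^2]\geq \exp(-O_p(1)n - o_n(1))$ (with $O_p(1)$ absorbing the polynomial-in-$\theta$ prefactor produced by Mills) follows directly from the Gaussian tail cancellation.
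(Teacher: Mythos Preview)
Your approach is correct and essentially the same as the paper's: both cancel the common Gaussian tail factor and then split pairs into a bulk region (small covariance, contribution $e^{O_p(1)n}$) and a near-diagonal boundary region, where the entropic cost $\approx(1-\delta)\log p$ of forcing near-coincidence $\mu_i=\pm\tilde\mu_i$ on almost all sites beats the energetic gain $(1-\epsilon)^2\log p$ precisely under the hypothesis $\delta<2\epsilon-\epsilon^2$. The paper parametrizes the split by the exact-match count $f_1=\#\{i:|\hat n_{\mu_i}\!\cdot\hat n_{\tilde\mu_i}|=1\}$ with threshold $f_1\leq n(1-p^{-\delta/2})$, while you parametrize by $|\bar Y|$ with threshold $1-C/p$; since $|\bar Y|>1-C/p$ forces $1-f_1/n<Cp^{-\delta}$ by the net construction, the two decompositions are equivalent.

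A few small corrections. Your stationary point $y^*$ is a local \emph{minimum} of $\phi$ for $p\geq 3$ (since $\phi''(y^*)=p-2>0$), so that computation is a red herring; the bulk maximum of $\phi$ is attained at the endpoint $y=1-C/p$, which you do handle correctly. With $C=\Theta(\log\log p)$ as you propose, the bulk and boundary are already complementary, so the ``main obstacle'' you anticipate (an intermediate regime to glue) does not arise. Finally, your closing claim $\mathcal{M}\leq\exp(o_n(1))$ is too strong: on the bulk one only has $\theta^2 y^p\leq 2(1-\epsilon)^2 n$, giving $\mathcal{M}_{\mathrm{bulk}}\leq e^{O(1)n}$, and the paper obtains the same. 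This is harmless since $\exp(O_p(1)n)$ is exactly what the lemma requires.
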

\begin{proof}
    We enumerate over pairs as follows. Fix $\bm{\mu},\bm{\tilde{\mu}}$ and let $f_1$ be the number of components $\mu_k$ such that:
\begin{equation}
    f_1 = \left|\{ k \in [n]: \left\lvert\bm{\hat{n}_{\mu_k}}\cdot\bm{\hat{n}_{\tilde{\mu}_k}}\right\rvert=1 \} \right|,
\end{equation}
with associated index set $K_1 = \{ k \in [n]: \left\lvert\bm{\hat{n}_{\mu_k}}\cdot\bm{\hat{n}_{\tilde{\mu}_k}}\right\rvert=1 \}$. Define:
\begin{equation}
    r_1\coloneqq\sum\limits_{k\in K_1}\bm{\hat{n}_{\mu_k}}\cdot\bm{\hat{n}_{\tilde{\mu}_k}},
\end{equation}
taking values in the set $\{-f_1, -f_1+2, \dots, f_1\}$. Note the combinatorial factor of the number of $\bm{\tilde{\mu}}$ which, given $\bm{\mu}$, have a given value of $f_1$ and $r_1$:
\begin{equation}
    \left(q-2\right)^{n-f_1}\binom{n}{f_1}\binom{f_1}{\frac{f_1+r_1}{2}}
\end{equation}
At any given $f_1,r_1$, we have the upper bound:
\begin{equation}
    R\left(\bm{\mu},\bm{\tilde{\mu}}\right)\leq\left(\frac{\left\lvert r_1\right\rvert}{n}+\left(1-\frac{f_1}{n}\right)\left(1-p^{-\left(1-\delta\right)}\right)\right)^p+O\left(\frac{1}{n}\right)\eqqcolon\tilde{R}\left(f_1,r_1\right)+O\left(\frac{1}{n}\right).
\end{equation}
In what follows, we will enumerate over ordered pairs by splitting pairs into three sets.
\begin{itemize}
    \item $S_{R_1}$: subset of $(\bm \mu,\bm{\tilde{\mu}})$ such that $R\left(\bm{\mu},\bm{\tilde{\mu}}\right) = 1$,
    \item $F_{p,\delta}$: subset of $(\bm \mu,\bm{\tilde{\mu}})$ with $f_1\leq n\left(1-p^{-\frac{\delta}{2}}\right)$,
    \item $(S_{R_1} \cup F_{p,\delta})^C$: subset of $(\bm \mu,\bm{\tilde{\mu}})$ such that $R\left(\bm{\mu},\bm{\tilde{\mu}}\right) \neq 1$ and $f_1> n\left(1-p^{-\frac{\delta}{2}}\right)$.
\end{itemize}
From here, we can perform the second moment calculation as:
\begin{equation}
    \begin{split}
        \frac{\mathbb{E}\left[N_\epsilon^2\right]}{\mathbb{E}\left[N_\epsilon\right]^2} &\leq \sum_{(\bm{\mu},\bm{\tilde{\mu}}) \in S_{R_1}} P(\bm{\mu}, \bm{\tilde{\mu}}) 
        + \sum_{(\bm{\mu},\bm{\tilde{\mu}}) \in F_{p,\delta}} P(\bm{\mu}, \bm{\tilde{\mu}}) 
        + \sum_{(\bm{\mu},\bm{\tilde{\mu}}) \in (S_{R_1} \cup F_{p,\delta})^C} P(\bm{\mu}, \bm{\tilde{\mu}}),
    \end{split}
\end{equation}
where
\begin{equation}
    P(\bm{\mu}, \bm{\tilde{\mu}}) 
    =  
    q^{-2n} \left[1-\Phi(C_\epsilon \sqrt{\log\left(p\right)n} ) \right]^{-1} \left[1- \Phi\left( 
    \frac{1- R(\bm{\mu},\bm{\tilde{\mu}})}{\sqrt{ 1- R(\bm{\mu},\bm{\tilde{\mu}})^{2}}}
     C_\epsilon \sqrt{\log\left(p\right)n}  \right)\right].
\end{equation}

Treating the three subsets separately, we start with $(\bm \mu,\bm{\tilde{\mu}}) \in S_{R_1}$, where $R\left(\bm{\mu},\bm{\tilde{\mu}}\right) = 1$ (completely correlated Gaussians); we have:
\begin{equation}
    \begin{aligned}
        \left[1-\Phi(C_\epsilon \sqrt{\log\left(p\right)n} ) \right]^{-1} &\left[1- \Phi\left( C_\epsilon \left( 1- R(\bm{\mu},\bm{\tilde{\mu}})^{2}\right)^{-1/2} \left( 1- R(\bm{\mu},\bm{\tilde{\mu}})\right) \sqrt{\log\left(p\right)n}  \right)\right]\\
        &=\left(1+o_n\left(1\right)\right)\exp\left(\frac{1}{2}C_\epsilon^2\log\left(p\right)n\right).
    \end{aligned}
\end{equation}
Here we have used the tail bound~\cite{savage1962mills}:
\begin{equation}
    1- \Phi(x) = \frac{1}{x + O\left(\frac{1}{x}\right)} \frac{1}{\sqrt{2\pi}}  \exp\left(-\frac{x^2}{2} \right). 
    \label{eq:savage}
\end{equation}
Thus, given $q=\frac{3}{2}p^{1-\delta}$ and $(1-\delta)>1/2$:
\begin{equation}
\begin{split}
        \sum_{(\bm{\mu},\bm{\tilde{\mu}}) \in S_{R_1}} P(\bm{\mu}, \bm{\tilde{\mu}}) 
        &= \left(1+o_n\left(1\right)\right)q^{-2n}\exp\left(\frac{1}{2}C_\epsilon^2\log\left(p\right)n\right) \\
        &= \left(1+o_n\left(1\right)\right)\exp\left[ ((1-\epsilon)^2-2(1-\delta)) \log(p) n - 2\log(3/2)n \right] \\
        &= o_n(1).
\end{split}
\end{equation}

For pairs in $F_{p,\delta}$ where $f_1\leq n\left(1-p^{-\frac{\delta}{2}}\right)$ and thus $r_1\leq n\left(1-p^{-\frac{\delta}{2}}\right)$, we have that the asymptotic covariance is bounded by $\xi_{p,\delta}$, which is equal to
\begin{equation}\label{eq:r_tilde_upper_bound}
\begin{split}    
    \tilde{R}\left(f_1,r_1\right)\leq \xi_{p,\delta} 
    &\coloneqq
    \left(1-p^{-\frac{\delta}{2}}+p^{-\frac{\delta}{2}}\left(1-p^{-\left(1-\delta\right)}\right)\right)^p \\
    &=\exp\left[-p^{\delta/2} +o_p(1)\right].
\end{split}
\end{equation}
Once again from the tail bound of \Cref{eq:savage} we have:
\begin{equation}
    \begin{split}
        \sum_{(\bm{\mu},\bm{\tilde{\mu}}) \in F_{p,\delta}}& P(\bm{\mu}, \bm{\tilde{\mu}}) \leq q^{-2n}\sum_{(\bm{\mu},\bm{\tilde{\mu}})\in F_{p,\delta}}\exp\left(\frac{C_\epsilon^2\log\left(p\right)}{1+R\left(\bm{\mu},\bm{\tilde{\mu}}\right)^{-1}}n\right)O_n\left(1\right)\sqrt{\frac{1+R\left(\bm{\mu},\bm{\tilde{\mu}}\right)}{1-R\left(\bm{\mu},\bm{\tilde{\mu}}\right)}}.
    \end{split}
\end{equation}
Splitting the above sum into the different possible values of $f_1, r_1$ with multiplicity $\left(q-2\right)^{n-f_1}\binom{n}{f_1}\binom{f_1}{\frac{f_1+r_1}{2}}$ for given $f_1, r_1$: 
\begin{equation}
    \begin{split}
        \sum_{(\bm{\mu},\bm{\tilde{\mu}}) \in F_{p,\delta}}& P(\bm{\mu}, \bm{\tilde{\mu}}) \leq O_n\left(1\right)q^{-n}\sum\limits_{f_1\leq n\left(1-p^{-\frac{\delta}{2}}\right),r_1}\left(q-2\right)^{n-f_1}\binom{n}{f_1}\binom{f_1}{\frac{f_1+r_1}{2}}\times \\
        &\quad \quad \quad \quad \quad \quad \quad \quad \quad \quad \quad \quad \quad \quad \quad \times \exp\left(\frac{C_\epsilon^2\log\left(p\right)}{1+\tilde{R}\left(f_1,r_1\right)^{-1}}n\right)\sqrt{\frac{1+\tilde{R}\left(f_1,r_1\right)}{1-\tilde{R}\left(f_1,r_1\right)}}\\
        &\leq O_n\left(1\right)\sqrt{\frac{1+\xi_{p,\delta}}{1-\xi_{p,\delta}}}\sum\limits_{f_1\leq n\left(1-p^{-\frac{\delta}{2}}\right),r_1}\exp_2\left(\left(H_2\left(\frac{f_1}{n}\right)+\frac{f_1}{n}H_2\left(\frac{1}{2}+\frac{r_1}{2f_1}\right)\right)n + o_n(1)\right) \times \\
        &\quad \quad \quad \quad \quad \quad \quad \quad \quad \quad \quad \quad \quad \quad \quad \times \exp\left(\left(-\frac{f_1\log\left(q\right)}{n\log\left(p\right)}+C_\epsilon^2\xi_{p,\delta}\right)\log\left(p\right)n \right).
    \end{split}
\end{equation}
In the above, we use the fact that $\tilde{R}\left(f_1,r_1\right)\leq \xi_{p,\delta}$ as defined earlier and note that $\frac{1}{1+\tilde{R}\left(f_1,r_1\right)^{-1}} \leq \frac{\xi_{p,\delta}}{\xi_{p,\delta}+1} \leq \xi_{p,\delta}$. We also use the bound on the binomial coefficient ${n \choose k} \leq \exp_2(H_2(k/n)n)$ where $H_2(\cdot)$ is the binary entropy function. Keeping track of the terms in the exponent of order $\log(p)$:
\begin{equation}
    \begin{split}
        \sum_{(\bm{\mu},\bm{\tilde{\mu}}) \in F_{p,\delta}}& P(\bm{\mu}, \bm{\tilde{\mu}}) \leq O_n\left(1\right)\sqrt{\frac{1+\xi_{p,\delta}}{1-\xi_{p,\delta}}}\sum\limits_{f_1\leq n\left(1-p^{-\frac{\delta}{2}}\right),r_1}\exp\left(C_\epsilon^2\xi_{p,\delta}\log\left(p\right)n + O_p(1)n + o_n(1)\right)\\
        &\leq O_n\left(n^2\right)\sqrt{\frac{1+\xi_{p,\delta}}{1-\xi_{p,\delta}}}\exp\left(C_\epsilon^2\xi_{p,\delta}\log\left(p\right)n + O_p(1)n + o_n(1)\right).
    \end{split}
\end{equation}
Noting that $\xi_{p,\delta}=\exp\left[-p^{\delta/2} +o_p(1)\right]$, we have:
\begin{equation}
    \begin{split}
        \sum_{(\bm{\mu},\bm{\tilde{\mu}}) \in F_{p,\delta}}& P(\bm{\mu}, \bm{\tilde{\mu}}) \leq  O_n\left(n^2\right)\exp\left[O_p(1)n + o_n(1)\right].
    \end{split}
\end{equation}

Finally, for the remaining subset $(S_{R_1} \cup F_{p,\delta})^C$, we use the same approach as before and expand the set of pairs into the possible values of $f_1, r_1$:
\begin{equation}
    \begin{aligned}
        \sum_{(\bm{\mu},\bm{\tilde{\mu}}) \in (S_{R_1} \cup F_{p,\delta})^C}& P(\bm{\mu}, \bm{\tilde{\mu}}) \\
        &\leq q^{-2n}\sum_{\bm{\mu};\bm{\tilde{\mu}} \in (S_{R_1} \cup F_{p,\delta})^C}\left[1-\Phi(C_\epsilon \sqrt{\log\left(p\right)n} ) \right]^{-1} \left[1- \Phi\left( 
        \frac{1- R(\bm{\mu},\bm{\tilde{\mu}})}{\sqrt{ 1- R(\bm{\mu},\bm{\tilde{\mu}})^{2}}}
        C_\epsilon \sqrt{\log\left(p\right)n}  \right)\right] \\
        &\leq q^{-2n}\sum_{\bm{\mu};\bm{\tilde{\mu}} \in (S_{R_1} \cup F_{p,\delta})^C}\left[1-\Phi(C_\epsilon \sqrt{\log\left(p\right)n} ) \right]^{-1}
        \\
        &\leq O_n(n)q^{-n}\sum\limits_{\substack{f_1>n\left(1-p^{-\frac{\delta}{2}}\right) \\ r_1\leq\min\left(f_1,n-1\right)}}\left(q-2\right)^{n-f_1}\binom{n}{f_1}\binom{f_1}{\frac{f_1+r_1}{2}} \exp\left(\frac{C_\epsilon^2\log\left(p\right)}{2}n\right).
    \end{aligned}
\end{equation}
Again, applying bound on the binomial coefficient ${n \choose k} \leq \exp_2(H_2(k/n)n)$, we have:
\begin{equation}
    \begin{aligned}
        \sum_{(\bm{\mu},\bm{\tilde{\mu}}) \in (S_{R_1} \cup F_{p,\delta})^C}& P(\bm{\mu}, \bm{\tilde{\mu}}) \\
        &\leq O_n(n)q^{-n}\sum\limits_{\substack{f_1>n\left(1-p^{-\frac{\delta}{2}}\right) \\ r_1\leq\min\left(f_1,n-1\right)}}\left(q-2\right)^{n-f_1}\exp_2\left(nH_2\left(\frac{f_1}{n}\right)+f_1H_2\left(\frac{1}{2}+\frac{r_1}{2f_1}\right)\right)\times \\
        & \quad \quad \quad \quad \quad \quad \quad \quad \quad \quad \quad
        \times \exp\left(\frac{C_\epsilon^2\log\left(p\right)}{2}n\right)\\
        &<O_n(n) \sum\limits_{\substack{f_1>n\left(1-p^{-\frac{\delta}{2}}\right) \\ r_1\leq\min\left(f_1,n-1\right)}}\exp\left(\left(-\frac{\log\left(q\right)}{\log\left(p\right)}+\frac{C_\epsilon^2}{2}+O_p\left(\frac{1}{\log\left(p\right)}\right)\right)\log\left(p\right)n+o_n\left(n\right)\right)\\
        &= O_n(n)\sum\limits_{\substack{f_1>n\left(1-p^{-\frac{\delta}{2}}\right) \\ r_1\leq\min\left(f_1,n-1\right)}}\exp\left(\left(-\frac{\log\left(q\right)}{\log\left(p\right)}+\left(1-\epsilon\right)^2+O_p\left(\frac{1}{\log\left(p\right)}\right)\right)\log\left(p\right)n+o_n\left(n\right)\right).
    \end{aligned}
\end{equation}
Noting that there are are most $n^2$ terms in the sum and that $q=\frac{3}{2}p^{1-\delta}$, we have for all $\delta<2\epsilon-\epsilon^2$ that
\begin{equation}
    -\frac{\log\left(q\right)}{\log\left(p\right)}+\left(1-\epsilon\right)^2 = -\log(3/2)\log(p)^{-1} - (1-\delta) + (1-\epsilon)^2 =   -\Omega_p(1),
\end{equation}
and then
\begin{equation}
    \sum_{(\bm{\mu},\bm{\tilde{\mu}}) \in (S_{R_1} \cup F_{p,\delta})^C} P(\bm{\mu}, \bm{\tilde{\mu}}) = O_n(n^3)\exp(-\Omega_p(1)n) = o_n(1).
\end{equation}

Combining the results from the three subsets, we have that the dominant contribution is from $F_{p,\delta}$, so
\begin{equation}
    \frac{\mathbb{E}\left[N_\epsilon\right]^2}{\mathbb{E}\left[N_\epsilon^2\right]} = \exp\left[-O_p(1)n - o_n(1)\right].
\end{equation}
\end{proof}

\paragraph{Completion of the lower bound proof.}
We now proceed to prove the lower bound in the main theorem (\Cref{theorem:Main-result-informal}) via a contradiction argument with the concentration properties of $E_{n,\rm prod}^*(p)$.

\begin{theorem}\label{thm:quantum_second_moment_lower_bound}
    For every $\epsilon>0$, there exists $p_0$ such that for all $p\geq p_0$, the limiting maximum energy over product states
    \begin{equation*}
        E_{n,\rm prod}^*(p)  \coloneqq   \frac{1}{\sqrt{n}} \max_{\ket{\phi} \in \mathcal{S}_{\rm product}^n} \bra{\phi}H_{n,p} \ket{\phi}
    \end{equation*}
    has the property that with high probability
    \begin{equation}
        E_{n,\rm prod}^*(p) \geq (1-\epsilon)\sqrt{2(\log p)}.
    \end{equation}
\end{theorem}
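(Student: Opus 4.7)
The plan is to combine the second-moment estimate of \Cref{lem:pre_contradiction_quantum_lower_bound} with the Lipschitz concentration of \Cref{prop:product_state_concentration} via a contradiction argument in the style of Frieze. The Paley--Zygmund inequality alone only yields $\mathbb{P}[N_\epsilon > 0] \geq \exp(-O_p(1)n)$, which is too weak, but the concentration rate is $\exp(-\Omega(\log(p))n)$, which wins for sufficiently large $p$. The contradiction then forces the expectation (hence, by concentration, the random variable itself) to lie above $(1-\epsilon)\sqrt{2\log p}$.

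More precisely, I would fix $\epsilon > 0$, choose $\delta \in (0,2\epsilon - \epsilon^2)$ so that \Cref{lem:pre_contradiction_quantum_lower_bound} applies, and argue by contradiction. Assume $\mathbb{E}[E_{n,\rm prod}^*(p)] \leq (1-\epsilon)\sqrt{2\log p}$ for arbitrarily large $n$. Apply \Cref{prop:product_state_concentration} with $t = (\epsilon/2)\sqrt{2\log p}$ to obtain
\begin{equation*}
    \mathbb{P}\!\left[E_{n,\rm prod}^*(p) \geq (1-\epsilon/2)\sqrt{2\log p}\right] \leq 2\exp\!\left(-\tfrac{\epsilon^2 \log(p)}{4}\,n\right).
\end{equation*}
On the other hand, since $N_{\epsilon/2} > 0$ implies $E_{n,\rm prod}^*(p) \geq C_{\epsilon/2}\sqrt{\log p} = (1-\epsilon/2)\sqrt{2\log p}$, the Paley--Zygmund inequality together with \Cref{lem:pre_contradiction_quantum_lower_bound} (applied with $\epsilon/2$ in place of $\epsilon$) yields
\begin{equation*}
    \mathbb{P}\!\left[E_{n,\rm prod}^*(p) \geq (1-\epsilon/2)\sqrt{2\log p}\right] \geq \mathbb{P}[N_{\epsilon/2}>0] \geq \frac{\mathbb{E}[N_{\epsilon/2}]^2}{\mathbb{E}[N_{\epsilon/2}^2]} \geq \exp\!\left(-O_p(1)n - o_n(1)\right).
\end{equation*}

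Comparing the two bounds gives $\exp(-O_p(1)n - o_n(1)) \leq 2\exp(-\tfrac{\epsilon^2 \log(p)}{4}n)$, which is a contradiction once $p$ is large enough that $\epsilon^2 \log(p)/4$ exceeds the constant $O_p(1)$ produced by \Cref{lem:pre_contradiction_quantum_lower_bound}. Hence for all such $p$ and all sufficiently large $n$, $\mathbb{E}[E_{n,\rm prod}^*(p)] > (1-\epsilon)\sqrt{2\log p}$. A second application of \Cref{prop:product_state_concentration} with a mildly smaller slack (say $(\epsilon/2)\sqrt{2\log p}$, after renaming $\epsilon$) transfers this lower bound from the expectation to a high-probability statement, completing the proof.

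The main obstacle is to ensure that the constant hidden in the $O_p(1)$ term of \Cref{lem:pre_contradiction_quantum_lower_bound} is genuinely bounded independently of $p$ (or, at worst, grows strictly slower than $\log p$), so that the gap $\epsilon^2 \log(p)/4 - O_p(1)$ can be made positive for $p \geq p_0(\epsilon)$. Inspecting the proof of that lemma, the dominant contribution comes from the $F_{p,\delta}$ set, where the exponent is $C_\epsilon^2 \xi_{p,\delta}\log(p)n + O_p(1)n$; since $\xi_{p,\delta}\log(p) \to 0$ and the remaining entropy/combinatorial terms are $O_p(1)$ uniformly in $p$, the comparison works for all $p$ above a threshold $p_0 = p_0(\epsilon)$, which is exactly the quantification appearing in the theorem statement.
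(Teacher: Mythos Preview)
Your proposal is correct and follows essentially the same Frieze-style contradiction argument as the paper: assume the expectation is too small, use \Cref{prop:product_state_concentration} to upper bound the probability of exceeding a slightly larger threshold by $\exp(-\Omega(\epsilon^2\log(p))n)$, and contradict this with the Paley--Zygmund lower bound $\exp(-O_p(1)n)$ coming from \Cref{lem:pre_contradiction_quantum_lower_bound}. The paper parameterizes with thresholds $C_{2\epsilon}$ and $C_\epsilon$ where you use $(1-\epsilon)$ and $(1-\epsilon/2)$, and your closing paragraph is in fact more explicit than the paper about why the $O_p(1)$ constant from the second-moment lemma is eventually dominated by $\epsilon^2\log p$.
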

\begin{proof}
As a reminder, we use the notation $C_\epsilon\coloneqq\left(1-\epsilon\right)\sqrt{2}$. We claim that
\begin{equation}\label{eq:exp_lower_bound_quantum}
    \mathbb{E}\left[\max_{\ket{\bm{\mu}}}\sqrt{n}\bra{\bm{\mu}}H_{n,p}\ket{\bm{\mu}}\right]\geq C_{2\epsilon}\sqrt{\log\left(p\right)} n
\end{equation}
for every $\epsilon>0$ and sufficiently large $p$. Indeed, assume that this is not the case. Then by the concentration bound of \Cref{prop:product_state_concentration}:
\begin{equation}
    \begin{aligned}
        \mathbb{P}&\left(\max_{\ket{\bm{\mu}}}\sqrt{n}\bra{\bm{\mu}}H_{n,p}\ket{\bm{\mu}} \geq C_\epsilon\sqrt{\log\left(p\right)} n\right)\\
        &\leq\mathbb{P}\left(\max_{\ket{\bm{\mu}}}\sqrt{n}\bra{\bm{\mu}}H_{n,p}\ket{\bm{\mu}}-\mathbb{E}\left[\max_{\ket{\bm{\mu}}}\sqrt{n}\bra{\bm{\mu}}H_{n,p}\ket{\bm{\mu}}\right]\geq C_\epsilon\sqrt{\log\left(p\right)} n-C_{2\epsilon}\sqrt{\log\left(p\right)} n\right)\\
        &\leq 2\exp\left(-\frac{\left(C_\epsilon-C_{2\epsilon}\right)^2}{2}\log\left(p\right)n\right).
    \end{aligned}
\end{equation}
However, \Cref{lem:pre_contradiction_quantum_lower_bound} states that for sufficiently large $p$, we have by Paley-Zygmund
\begin{equation}
    \mathbb{P}\left(\max_{\ket{\bm{\mu}}}\sqrt{n}\bra{\bm{\mu}}H_{n,p}\ket{\bm{\mu}}  \geq C_\epsilon\sqrt{\log\left(p\right)} n\right) \geq\frac{\mathbb{E}\left[N_\epsilon\right]^2}{\mathbb{E}\left[N_\epsilon^2\right]} = \exp\left(-O_p( 1 )n - o_n(1)\right) 
\end{equation}
which results in the contradiction since for sufficiently large $p$ and $n$
\begin{equation}
    2\exp\left(-\frac{\left(C_\epsilon-C_{2\epsilon}\right)^2}{2}\log\left(p\right)n\right)<\exp\left(-O_p( 1 )n - o_n(1)\right).
\end{equation}
Thus, \eqref{eq:exp_lower_bound_quantum} holds. By applying the concentration bound again we thus have that with probability at least $1-\exp\left(-\Omega\left(n\right)\right)$
\begin{equation}
    \max_{\ket{\bm{\mu}}}\sqrt{n}\bra{\bm{\mu}}H_{n,p}\ket{\bm{\mu}}\geq\left(1-3\epsilon\right)\sqrt{2\log\left(p\right)}n.
\end{equation}
\end{proof}

\subsection{Matching upper bound}

To prove a matching upper bound, we follow a two step procedure. First, we apply Markov's inequality over the discrete net of product states studied previously of size $\exp(n \left( \log(p) + o(\log(p))\right))$. We will show that with high probability, any product state over this discrete net will have energy less than $\left(1+o_p\left(1\right)\right)\sqrt{2  \log (p) n}$. Then, we will extend this proof over the discrete net to cover all product states by applying a variant of Dudley's inequality~\cite{vershynin2010introduction}.

As before, we label pure product states via their Bloch sphere representation~\cite{nielsen_chuang_2010}:
\begin{equation}
    \ket{\bm{\mu}}\bra{\bm{\mu}}=\bigotimes\limits_{i=1}^n\left(\frac{\sigma^0}{2}+\frac{\bm{n_{\mu_i}}\cdot\bm{\sigma}}{2}\right),
\end{equation}
where $\bm{n_{\mu_i}} \in \mathbb{R}^3$ and $\|\bm{n_{\mu_i}}\| = 1$ to ensure the above is pure.
To simplify notation, 
for a product state $\ket{\bm{\mu}}$, let us denote
\begin{equation}
    X_{\bm{\mu}} = \bra{\bm{\mu}} H_{n,p} \ket{\bm{\mu}}
\end{equation}
as the (Gaussian) random variable corresponding to the energy of $\ket{\bm{\mu}}$. 
The random variables $(X_{\bm{\mu}})_{\ket{\bm{\mu}} \in \mathcal{S}_{\rm product}^n} $ 
form a centered Gaussian process indexed by the space 
of product states $\mathcal{S}_{\rm product}^n$. 
The canonical norm on $\mathcal{S}_{\rm product}^n$
induced by the Gaussian process is
\begin{equation}
    d(\bm{\mu}, \bm{\tilde{\mu}}) \coloneqq \sqrt{\mathbb{E}\left[ (X_{\bm{\mu}} - X_{\bm{\tilde{\mu}}})^2 \right]}.
\end{equation}

In fact, by applying \Cref{eq:covar_product_mu},
\begin{equation}
    d(\bm{\mu}, \bm{\tilde{\mu}}) =  \sqrt{2-2\left( \frac{1}{n} \sum_{i=1}^n\bm{n_{\mu_k}}\cdot\bm{n_{\tilde{\mu}_k}}\right)^p}+o_n\left(1\right).
\end{equation}

Dudley's inequality provides a bound on the supremum of a Gaussian process.
\begin{theorem}[Dudley's Inequality \cite{vershynin2010introduction} (see also Theorem 3.1.2 of \cite{ko2020free})] \label{thm:dudley_sup}
    For the centered Gaussian process $(X_{\bm{\mu}})_{\ket{\bm{\mu}} \in \mathcal{S}_{\rm product}^n}$ indexed by metric space $(\mathcal{S}_{\rm product}^n,d)$, there exists an absolute constant $L$ such that 
    \begin{equation}
        \E \sup_{d(\bm{\mu}, \bm{\tilde{\mu}}) \leq \Delta } \left| X_{\bm{\mu}} - X_{\bm{\tilde{\mu}}} \right| \leq L \int_0^{\Delta} \sqrt{\log \mathcal{N}\left(\mathcal{S}_{\rm product}^n,d,\epsilon\right)} \; \dd \epsilon,
    \end{equation}
    where $\mathcal{N}\left(\mathcal{S}_{\rm product}^n,d,\epsilon\right)$ is the covering number of $\mathcal{S}_{\rm product}^n$ with $\epsilon$-balls in the metric $d$.
\end{theorem}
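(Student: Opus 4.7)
The plan is to establish this bound via Talagrand's chaining argument applied to the centered Gaussian process $(X_{\bm{\mu}})_{\ket{\bm{\mu}} \in \mathcal{S}_{\rm product}^n}$. The key idea is to approximate each element $\bm{\mu}$ by elements of a hierarchy of successively finer nets and telescope the Gaussian increments along these approximations, using a union bound at each level to control the maximum over all approximations at that scale.

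First I would fix a geometric sequence of scales $\epsilon_k = 2^{-k}\Delta$ and choose $\epsilon_k$-nets $\mathcal{T}_k \subset \mathcal{S}_{\rm product}^n$ of minimal cardinality $|\mathcal{T}_k| = \mathcal{N}(\mathcal{S}_{\rm product}^n, d, \epsilon_k)$. For each $\bm{\mu}$, let $\pi_k(\bm{\mu}) \in \mathcal{T}_k$ denote its nearest element in the metric $d$. Observing that $\pi_k(\bm{\mu}) \to \bm{\mu}$ as $k \to \infty$ (in the sense that $X_{\pi_k(\bm{\mu})} \to X_{\bm{\mu}}$ in $L^2$), I would use the telescoping identity
\[
X_{\bm{\mu}} - X_{\pi_0(\bm{\mu})} = \sum_{k \geq 1} \bigl( X_{\pi_k(\bm{\mu})} - X_{\pi_{k-1}(\bm{\mu})} \bigr),
\]
where each summand is a centered Gaussian whose variance is bounded by $d(\pi_k(\bm{\mu}), \pi_{k-1}(\bm{\mu}))^2 \leq (3\epsilon_{k-1})^2$ via the triangle inequality.

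Next I would invoke the standard Gaussian maximal inequality: the expected supremum of $N$ centered Gaussians with variance at most $\sigma^2$ is bounded by $\sigma\sqrt{2\log N}$. Since the number of distinct pairs $(\pi_k(\bm{\mu}), \pi_{k-1}(\bm{\mu}))$ as $\bm{\mu}$ ranges over $\mathcal{S}_{\rm product}^n$ is at most $|\mathcal{T}_k| \cdot |\mathcal{T}_{k-1}| \leq |\mathcal{T}_k|^2$, this yields
\[
\E \sup_{\bm{\mu}} \bigl| X_{\pi_k(\bm{\mu})} - X_{\pi_{k-1}(\bm{\mu})} \bigr| \lesssim \epsilon_{k-1}\sqrt{\log \mathcal{N}(\mathcal{S}_{\rm product}^n, d, \epsilon_k)}.
\]
Summing over $k$ and comparing the resulting geometric sum against the integral $\int_0^\Delta \sqrt{\log \mathcal{N}(\mathcal{S}_{\rm product}^n,d,\epsilon)} \, \dd\epsilon$, using monotonicity of the covering number, delivers the claim up to an absolute constant $L$. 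The final step is to restrict to pairs $\bm{\mu}, \bm{\tilde{\mu}}$ with $d(\bm{\mu},\bm{\tilde{\mu}}) \leq \Delta$ by choosing $\mathcal{T}_0$ to be a single point (so that the $\Delta$-scale absorbs the diameter of the relevant subset).

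The main obstacle is the standard subtlety of making the chaining work uniformly: the telescoping identity must hold simultaneously for all $\bm{\mu}, \bm{\tilde{\mu}}$ in the $\Delta$-ball, so one must take suprema inside each level of the union bound and carefully handle the fact that the sequence of approximations $\pi_k(\bm{\mu})$ depends on $\bm{\mu}$. One must also verify that the tail of the telescoping series is negligible, which follows from $L^2$-continuity of the process together with separability considerations on $\mathcal{S}_{\rm product}^n$. Since this is a classical result, in the actual write-up I would simply invoke the statement from \cite{vershynin2010introduction} or \cite{ko2020free} rather than reproducing the chaining argument in full.
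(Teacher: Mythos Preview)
Your proposal is correct: the chaining argument you outline is the standard proof of Dudley's inequality, and your final remark that you would simply cite the result matches exactly what the paper does---it states the theorem with attribution to \cite{vershynin2010introduction} and \cite{ko2020free} and provides no proof of its own.
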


There exists a cover $\{\bm{x}_i\}_{i=1}^N$ on the set of points on the sphere ($\bm{x}_i \in \mathbb{R}^3$ such that $\|\bm{x}_i\| = 1$)
of cardinality $O(1/\epsilon')$ such that given any point on the sphere $\bm{y}$, 
there exists a point in the cover such that $\bm{x_i} \cdot \bm{y} \geq 1-\epsilon'$ (see Lemma~\ref{lem:upper_bound_cov}). Using such a cover on all the qubits, we have that for any state $\ket{\bm{\mu}}$, there exists a state $\ket{\bm{\tilde \mu}}\neq\ket{\bm{\mu}}$ in the cover such that
\begin{equation}
\begin{split}
    \left(\frac{1}{n} \sum_{i=1}^n\bm{n_{\mu_k}}\cdot\bm{n_{\tilde{\mu}_k}} \right)^p &\geq \left( 1 - \epsilon' \right)^p \\
    &\geq 1-p\epsilon'.
\end{split}
\end{equation}
Setting $\epsilon = p\epsilon'$, this implies that 
\begin{equation}
    \log \mathcal{N}\left(\mathcal{S}_{\rm product}^n,d,\epsilon\right) \leq n \left( \log(p) + \log(1/\epsilon) + O(1) \right).\label{eq:entropy-bound}
\end{equation}

Similar to the proof of the lower bound, consider a specific $\widetilde \delta$-cover $\widetilde S_{\widetilde \delta}^n$ where $\widetilde \delta = C (\log p)^{-1}$ for some constant $C>0$. By our upper bound on the covering number we have that 
\begin{equation}
    \log |\widetilde S_{\widetilde \delta}^n| \leq n \left( \log(p) + \log \log (p) + O(1) \right).
\end{equation}
Now consider the number $\widetilde N_{\theta}$ of states $\ket{\bm{\mu}} \in \widetilde S_{\widetilde \delta}^n$ above normalized energy threshold
\begin{equation}
    C_\theta = \sqrt{2(\log(p) + \theta \log \log (p) ) }
\end{equation}
for some $\theta \in \mathbb{R}$. Specifically,
\begin{equation} 
    \widetilde N_\theta = \left| \left\{ \ket{\bm{\mu}} \in \widetilde S_{\widetilde \delta}^n: \sqrt{n}\bra{\bm{\mu}}H_{n,p}\ket{\bm{\mu}} \geq C_\theta n \right\} \right|.
\end{equation}

From here, we have as in \Cref{eq:1st-moment} that the expected number of states above normalized energy $C_\theta$ is
\begin{equation}
    \E \left[ \widetilde N_\theta \right] = \exp\left[ ( (1-\theta) \log\log(p) + O(1) )n \right] ,
\end{equation}
which for large enough $p$ and $\theta>1$ is exponentially decaying in $n$ and by Markov's inequality:
\begin{equation} \label{eq:prob_pre_dudley}
    \mathbb{P}\left[ \widetilde N_\theta \geq 1 \right] \leq \exp(- \Omega(\log\log(p)) n),
\end{equation}
again for large enough $p$ and $\theta>1$.

We now extend this from the discrete net $\tilde{S}_{\tilde{\delta}}^n$ to
all states by bounding fluctuations of $X_{\bm{\mu}}$ away from $\ket{\bm{\mu}}$ in the net. To do this, we use \Cref{thm:dudley_sup}, 
setting $\Delta = \tilde{\delta} = C(\log p)^{-1}$ so that
\begin{equation} \label{eq:moment_post_dudley}
    \begin{split}
        \E \sup_{d(\bm{\mu}, \bm{\tilde{\mu}}) \leq C(\log p)^{-1} } \left| X_{\bm{\mu}} - X_{\bm{\tilde{\mu}}} \right| &\leq 
        \sqrt{n}\left( \int_0^{C(\log p)^{-1}} \sqrt{\log(p/\epsilon)} \; \dd{\epsilon} + O_p(1) \right) \\
        &=  O_p\left(1\right)\sqrt{n},
    \end{split}
\end{equation}
where we have used the integral:
\begin{equation}
    \begin{aligned}
        \int_0^{C(\log p)^{-1}} \sqrt{\log(p/\epsilon)} \; \dd{\epsilon}&=\frac{p\sqrt{\pi}}{2}\operatorname{erfc}\left(\sqrt{\log\left(\frac{p}{C}\log\left(p\right)\right)}\right)+\frac{C}{\log\left(p\right)}\sqrt{\log\left(\frac{p}{C}\log\left(p\right)\right)}\\
        &=O_p\left(1\right)\frac{\sqrt{\log\left(p\log\left(p\right)\right)}}{\log\left(p\right)}\\
        &=o_p\left(1\right).
    \end{aligned}
\end{equation}
Combining \Cref{eq:prob_pre_dudley} and \Cref{eq:moment_post_dudley} along with the concentration bound in \Cref{sec:concentration}, we obtain that
\begin{equation}    
\mathbb{E} \left[ \max_{\ket{\bm{\mu}} \in \mathcal{S}_{\rm product}^n} n^{-1/2} \bra{\bm{\mu}}H_{n,p}\ket{\bm{\mu}}  \right] \leq \sqrt{2(\log(p) + O_p(\log \log (p)) ) } + O_p(1) \leq (1+o_p(1)) \sqrt{2 \log(p)}.
\end{equation}

\section{Conjectured upper bound of expected maximum energy} 
\label{sec:quantum_p_spin_upper_bound}

A natural question is what approximation ratio product states achieve
with respect to the expected maximum energy over all states $E^*_n(p)=\E[\lambda_{\text{max}}(H_{n,p})/\sqrt{n}]$. As was the case for product states, by \Cref{prop:spin_glass_concentration} this is equivalent to a bound on the maximum energy that holds with high probability.
From applying the trace method, we can obtain an upper bound that is dependent on
conjectured asymptotic properties of certain hypergraph matchings formally stated in \Cref{conj:main-0}.
A constant $\gamma(p)$ in this conjecture directly results in corresponding bounds on the approximation factor achievable by product states. Our formal result (with proof deferred to \Cref{app:proof_of_upper_bound}) states that
\begin{align}\label{eq:formal_lambda_ub}
\E[\lambda_{\text{max}}(H_{n,p})/\sqrt{n}]
\le
(1
+o_p(1))
\sqrt{2\gamma(p)\log p},
\end{align}
where $\gamma(p)$ is the constant given in \Cref{conj:main-0}.
At this stage, we do not have tight control over the values $\gamma(p)$ 
or its asymptotics for large $p$ so we defer proofs and further formal discussion to \Cref{app:proof_of_upper_bound}.

\paragraph{Conjectured independence property}

We now formulate the asymptotic independence ansatz from which the $\gamma(p)$ of \Cref{eq:formal_lambda_ub} derives. Given a positive integer $d$, consider any matching $M$ of the $2d$ elements in the set $[2d]$ (namely pairing them into $d$ groups of two elements each). Let $\Trace_{\rm sum}(M)$ denote $\frac{1}{2}\sum \Trace(\sigma_1\ldots\sigma_{2d})$ where the sum runs over all choices of each $\sigma_j \in \{\sigma^1, \sigma^2, \sigma^3\}$ of single qubit Paulis, subject to the constraint that within each pair $(i,j) \in M$ in the matching $M$ the two Pauli matrices $\sigma_i = \sigma_j$ are identical. 

Next, suppose $r$ tuples $\indi_1,\ldots,\indi_r$
are chosen uniformly at random independently where each $\indi_j$ is a random sequence of distinct $p$-subsets of $[n]$ chosen also uniformly at random from the set of all $n(n-1)\cdots (n-p+1)$  such sets. 
Equivalently, we have a random hypergraph on the set of nodes $[n]$,
where $\indi_1,\ldots,\indi_r$ describes the set of
$p$-uniform hyper edges. 

Consider a permutation $\pi$ of $2r$ elements $\indi_1,\ldots,\indi_r, \indi_1,\ldots,\indi_r$ (namely each tuple appears exactly twice), 
chosen uniformly at random. For each $j\in [n]$ consider the set of tuples containing $j$. The permutation $\pi$ also induces a random uniform matching $M_j$ of these tuples (as each of them appears twice in the $2r$ sequence). In particular we have $\Trace_{\rm}(M_j)$ for each $j$. Our conjecture postulates an asymptotic independence for the sequence of this trace sums in expectation when $r=O(n)$, up to a sub-exponential factor. Specifically, we conjecture the following. 

\begin{conjecture}[Asymptotic matching ansatz]\label{conj:main-0}
For every constant $C$ and $r\leq Cn$, there exists a constant $\gamma(p)$ depending on $p$ but independent of $n$ and $r$ such that
\begin{align*}
\E_{\indi_1, \dots, \indi_r} \E_{\pi}\left[ \prod_{j\in [n]}\Trace_{\rm sum}(M_j)\right]&\leq\gamma(p)^r \exp(O_p(1)n) \; \E_{\indi_1, \dots, \indi_r} \left[ \prod_{j\in [n]} \E_{M_j} \left[\Trace_{\rm sum}(M_j) \right]\right],
\end{align*}
where the expecation is with respect to the randomness of the set $\indi_1,\ldots,\indi_r$ and the permutation $\pi$ of the associated $2r$ set of elements.
\end{conjecture}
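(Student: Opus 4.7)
The plan is to proceed by a two-stage decoupling: first condition on the qubit-degree sequence, then invoke a configuration-model identity to compare the joint matching measure with the product of local marginals. Let $n_j=|\{k:j\in\indi_k\}|$, so that $\sum_j n_j=pr$ and the induced $M_j$ is a uniform matching of $2n_j$ elements. The deterministic pointwise estimate $|\Tr_{\rm sum}(M_j)|\le 3^{n_j}$, which holds because the sum runs over $3^{n_j}$ Pauli assignments and each normalized trace $\tfrac12\Tr(\sigma_1\cdots\sigma_{2n_j})$ lies in $\{-1,0,1\}$, already yields the trivial bound $\gamma(p)\le 3^p$ once one checks that the RHS is at least of order $\exp(O_p(1)n)$ after accounting for the Poisson-type degree fluctuations.

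For a strict improvement I would use the configuration-model viewpoint: sampling the uniform global matching $\pi$ on the $2r$ elements is equivalent to sampling local matchings $M_j$ independently (each uniform on matchings of $2n_j$ elements), reweighted by a factor $W(M_1,\ldots,M_n)$ equal to the number of global $\pi$'s consistent with the specified local data, normalized by the uniform weight. Thus
\begin{equation*}
\E_\pi\Bigl[\prod_j \Tr_{\rm sum}(M_j)\Bigr] = \E_{M_1,\ldots,M_n \text{ indep.}}\Bigl[W\cdot\prod_j \Tr_{\rm sum}(M_j)\Bigr].
\end{equation*}
A deterministic bound $W\le\gamma(p)^r\exp(O_p(1)n)$ would suffice, but $W$ vanishes on most local configurations, so a Cauchy--Schwarz step to factor out $(\E W^2)^{1/2}$ looks more promising. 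The second moment $\E W^2$ counts pairs of global matchings agreeing on the local marginals and is amenable to Br\'egman-type permanent estimates and to Shearer's inequality at the level of qubit marginals; the remaining factor $\E[\prod_j\Tr_{\rm sum}(M_j)^2]$ is a quantity of the same kind as the original one but with all signs removed, so it can be controlled by the trivial bound applied to a doubled system.

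The main obstacle is that any absolute-value approach is provably stuck at $3^p$: the ansatz is really a statement about \emph{signed} cancellations in the Pauli trace. To obtain $\gamma(p)\ll 3^p$ one must use the algebraic fact that $\tfrac12\Tr(\sigma_1\cdots\sigma_{2n_j})\ne 0$ only when the Pauli product reduces to $\pm I$, which enforces a $\mathbb{Z}_2^2$-parity constraint at each qubit (separately tracking the $X$- and $Z$-content modulo $2$). This transforms $\prod_j\Tr_{\rm sum}(M_j)$ into the partition function of a $\mathbb{Z}_2^2$-valued loop gas on the bipartite tuple--qubit incidence graph, and the expectation over $\pi$ becomes a signed sum over loop configurations. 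I would then attempt a cluster/high-temperature expansion of this loop gas; for $r=O(n)$ the typical loop count in a random sparse matching is $O(\log n)$, which heuristically produces a $\gamma(p)$ subexponential (possibly polynomial) in $p$. Rigorously controlling the resulting signed series --- specifically, bounding the cluster activities so that cancellations do not degenerate into fluctuations that inflate $\gamma(p)$ back toward $3^p$ --- is the crux of the problem and, in my view, the real obstruction to closing the gap between the trivial and the conjectured value.
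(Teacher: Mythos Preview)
Your first paragraph is exactly what the paper does. The statement is a \emph{conjecture}, not a theorem, and the paper proves only the trivial case $\gamma(p)=3^p$: it uses precisely the pointwise bound $|\Trace_{\rm sum}(M_j)|\le 3^{\Delta(j)}$ together with $\sum_j\Delta(j)=pr$ to get the LHS $\le(3^p)^r$, and then notes that $(2\Delta(j)+1)\ge 1$ makes the RHS at least $(3^p)^r$ once $\gamma(p)=3^p$. (You invoke ``Poisson-type degree fluctuations'' for the lower bound on the RHS, but this is unnecessary: the product $\prod_j(2\Delta(j)+1)\ge 1$ deterministically, so the $\exp(O_p(1)n)$ factor can be taken to be $1$.) So for the part that is actually established, you and the paper agree.

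Everything after your first paragraph goes strictly beyond what the paper attempts. The paper does not pursue a configuration-model decoupling, a Cauchy--Schwarz/Br\'egman route, or a $\mathbb{Z}_2^2$ loop-gas reformulation; it explicitly states that it has ``no rigorous treatment of the actual growth rate of $\gamma(p)$'' beyond $3^p$ and leaves any improvement as an open problem. Your heuristics are reasonable directions --- the observation that any absolute-value method is stuck at $3^p$ and that genuine progress must exploit signed cancellation is exactly right and implicitly the point of the conjecture --- but you also correctly flag that controlling the signed cluster expansion is the real obstruction, and you do not claim to resolve it. In short: your proof of what the paper proves is the same as the paper's; your additional material is exploratory and not a proof, which you acknowledge, and the paper offers nothing to compare it against.
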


The factor $\gamma(p)$ loosely controls the amount of independence
in the random matchings upon going from a sequence of $r$ to $r+1$ tuples.
This factor allows us to upper bound the expected maximum energy as
\begin{align}
\E[\lambda_{\text{max}}(H_{n,p})/\sqrt{n}]
\le
(1
+o_p(1))
\sqrt{2\gamma(p)\log p}.
\end{align}
In \Cref{app:proof_of_upper_bound} we will show the conjecture is satisfied by taking $\gamma\left(p\right)=3^p$; the conjectural aspect is whether or not this choice of $\gamma\left(p\right)$ is optimal.

\section{Existence of maximum energy limit for product states} 
\label{sec:limit_exists}

Throughout this study, we are considering the limiting maximum energy over product states: 
\begin{equation}
        \lim_{n \to \infty}\E \frac{1}{\sqrt{n}}\max_{\ket{\phi} \in \mathcal{S}_{\rm product}^n} \bra{\phi} H_{n,p} \ket{\phi}.
\end{equation} 
To justify the existence of this limit (for even $p$), we extend classical techniques showing super-additivity of the energy in the Sherrington--Kirkpatrick model \cite{talagrand2010mean,guerra2002thermodynamic}, a condition which guarantees the existence of the limit by Fekete's Lemma \cite{fekete1923verteilung}. For our purposes, we will use a weaker notion of \emph{near super-additivity} which also guarantees the existence of the limit \cite{de1951some} (see \Cref{lem:fekete}).
\begin{definition}[Near super-additive sequence] \label{def:near_super_additive}
    A sequence of numbers ${a_n}$ is near super-additive if there exists some function $f:\mathbb{R}^+ \to \mathbb{R}^+$ satisfying 
    \begin{equation}
        \int_1^\infty \frac{f(t)}{t^2} \; dt < \infty
    \end{equation}
    such that $a_{n+m} \geq a_n + a_m - f(n+m)$ for all $n,m>0$.    
\end{definition}

Our proof will show that the maximum energy of the Hamiltonian $H_{n,p}$ is near super-additive. Formally, we are only able to prove the existence of the limit for even $p$ as our proof relies on convexity of the covariance function for even $p$.
\begin{restatable}[Existence of limit]{lemma}{limitexists} \label{lem:limit_exists}
    In the $p$-spin Hamiltonian model defined in \Cref{eq:hamiltonian_concise_form}, for even $p$, the limit 
    \begin{equation}
        \lim_{n \to \infty}\E \frac{1}{\sqrt{n}}\max_{\ket{\phi} \in\mathcal{S}_{\rm product}^n} \bra{\phi} H_{n,p} \ket{\phi} 
    \end{equation} 
    exists.
\end{restatable}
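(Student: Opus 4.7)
The strategy is to extend the Guerra--Toninelli finite-temperature interpolation argument from classical spin glasses to this quantum setting. Rather than working with the maximum energy directly, I would pass through the quenched free energy
\[
F_n(\beta) := \E\log\int_{\mathcal{S}_{\rm product}^n} \exp\!\bigl(\beta\sqrt{n}\,\bra\phi H_{n,p}\ket\phi\bigr)\,d\mu_n(\phi),
\]
where $\mu_n$ is the product of single-qubit Fubini--Study measures on the $n$ Bloch spheres, and $M_n := \max_{\ket\phi \in \mathcal{S}_{\rm product}^n}\bra\phi H_{n,p}\ket\phi$. By \Cref{lem:covariance_product} the integrand is Gaussian with variance $1$ at every product state, and the Laplace-type sandwich $\sqrt n\,\E M_n \le F_n(\beta)/\beta \le \sqrt n\,\E M_n + (\log\text{vol of near-maximizer ball})/\beta$ yields $F_n(\beta)/(n\beta)\to \E M_n/\sqrt n$ as $\beta\to\infty$ with an error of the form $O(\log\beta/\beta)$ that will be uniform enough in $n$ to permit a clean limit exchange at the end.

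The core step is to prove $F_n(\beta)$ is nearly super-additive in $n$. Fix $n,m\ge 1$, take independent $p$-spin Hamiltonians $H'_n,H''_m$ on the first $n$ and last $m$ qubits, and set
\[
H^{(1)} := H_{n+m,p},\qquad H^{(2)} := \sqrt{n/(n+m)}\,H'_n+\sqrt{m/(n+m)}\,H''_m,
\]
so that $H^{(1)}$ and $H^{(2)}$ have matching variance at every product state. For the interpolation $H_t = \sqrt{t}H^{(1)}+\sqrt{1-t}H^{(2)}$ with $H^{(1)},H^{(2)}$ independent, define
\[
G(t) := \E\log\int\exp\!\bigl(\beta\sqrt{n+m}\,\bra\psi H_t\ket\psi\bigr)\,d\mu_{n+m}(\psi).
\]
The factorization $\mu_{n+m}=\mu_n\otimes\mu_m$ gives $G(0)=F_n(\beta)+F_m(\beta)$ and $G(1)=F_{n+m}(\beta)$, while Gaussian integration by parts yields
\[
G'(t) = -\tfrac{1}{2}\beta^2(n+m)\,\E\bigl\langle C^{(1)}(\psi,\psi')-C^{(2)}(\psi,\psi')\bigr\rangle_t
\]
for two replicas $\psi,\psi'$ under the interpolated Gibbs measure, with $C^{(i)}$ the product-state covariances. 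Using \Cref{lem:covariance_product} together with the elementary identity $\binom{N}{p}^{-1}e_p(q_1,\ldots,q_N) = R^p + O(1/N)$ (where $R = N^{-1}\sum q_j$ and $e_p$ is the $p$-th elementary symmetric polynomial in the single-qubit overlaps $q_j\in[-1,1]$) gives
\[
C^{(1)} = R^p + O\!\bigl(\tfrac{1}{n+m}\bigr),\qquad C^{(2)} = \lambda R_n^p+(1-\lambda)R_m^p+O\!\bigl(\tfrac{1}{\min(n,m)}\bigr),
\]
with $\lambda = n/(n+m)$ and $R = \lambda R_n + (1-\lambda)R_m$. For \emph{even} $p$, convexity of $x\mapsto x^p$ on $[-1,1]$ gives $R^p \le \lambda R_n^p + (1-\lambda) R_m^p$ by Jensen's inequality, so $C^{(1)} - C^{(2)} \le O(1/\min(n,m))$ uniformly. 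Integrating $t\in[0,1]$ yields the bound $F_{n+m}(\beta) \ge F_n(\beta) + F_m(\beta) - f(n+m)$ of \Cref{def:near_super_additive}; applying \Cref{lem:fekete} produces $g(\beta) := \lim_n F_n(\beta)/n$ for every $\beta>0$. The uniform-in-$n$ Laplace sandwich then lets me commute $n\to\infty$ and $\beta\to\infty$ by Moore--Osgood, establishing $E^*_{\rm prod}(p) := \lim_n\E M_n/\sqrt{n} = \lim_\beta g(\beta)/\beta$, and the Gaussian concentration of \Cref{prop:product_state_concentration} upgrades convergence in expectation to the high-probability statement.

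\textbf{Main obstacle.} The key difficulty is the finite-size error in the covariance comparison: the pointwise inequality $C^{(1)}\le C^{(2)}$ needed for a direct Slepian argument actually \emph{fails} already for $n=m=p=2$ with alternating single-qubit overlaps, so one cannot sidestep the finite-temperature framework. The technical crux is therefore to control the $O(1/\min(n,m))$ defect in $G'(t)$ carefully enough (for example by reducing to balanced splittings $n+m=2k$, where the integrated error becomes a constant $C(\beta,p)$) that the resulting bound fits the integrability hypothesis $\int_1^\infty f(t)/t^2\,dt<\infty$ of the near-super-additive Fekete lemma. This is also precisely where the parity of $p$ is essential, since $x\mapsto x^p$ is convex on $[-1,1]$ only for even $p$; for odd $p$ the Jensen direction reverses and the argument collapses, which is why the statement is restricted to even $p$ and why the general case remains a conjecture.
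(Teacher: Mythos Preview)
Your plan is the same Guerra--Toninelli interpolation the paper uses: interpolate between $H_{n+m,p}$ and the direct sum, apply Gaussian integration by parts, and compare covariances via convexity of $x\mapsto x^p$ for even $p$, then feed near super-additivity into a Fekete-type lemma. One correction dissolves the ``main obstacle'' you flag: the finite-size defect in $C^{(2)}$ is $O(1/(n+m))$, not $O(1/\min(n,m))$. The $O(1/n)$ error in $\binom{n}{p}^{-1}e_p(q_1,\ldots,q_n)=R_n^p+O(1/n)$ enters $C^{(2)}$ with weight $\lambda=n/(n+m)$, contributing $\lambda\cdot O(1/n)=O(1/(n+m))$, and symmetrically for the $m$-block. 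Hence $(n+m)(C^{(1)}-C^{(2)})\le c$ for a constant $c=c(p)$, so $G'(t)\ge -c\beta^2/2$ and $F_{n+m}(\beta)\ge F_n(\beta)+F_m(\beta)-c\beta$ for \emph{all} $n,m\ge p$; no balanced-splitting restriction is needed. This is precisely the content of the paper's \Cref{lem:super-additivity-covariance} and \Cref{lem:super_additive_free_energy}.

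Where the paper differs is in two technical choices that shortcut your endgame. First, it works over a finite $\epsilon$-net $S_\epsilon^{n+m}$ rather than the continuous Fubini--Study product measure, so the free-energy/max-energy sandwich becomes the trivial $\hat E\le F_\beta\le\hat E+\beta^{-1}\log|S_\epsilon^{n+m}|$, with no need to bound volumes of near-maximizer balls or control random Lipschitz constants of $\phi\mapsto\bra\phi H_{n,p}\ket\phi$ (which your lower Laplace bound implicitly requires). Second, rather than fixing $\beta$, proving $\lim_n F_n(\beta)/n$ exists, and then exchanging limits via Moore--Osgood, the paper takes $\beta=\beta'\sqrt{n+m}$ and $\epsilon=(n+m)^{-k}$ directly inside the inequality to obtain $\hat E_{n+m}-\hat E_n-\hat E_m\ge -O(\sqrt{n+m}\log(n+m))$, applying \Cref{lem:fekete} to the max-energy sequence itself. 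Your route is viable, but the uniform-in-$n$ lower Laplace bound is less immediate than you suggest, and the paper's discretize-then-scale-$\beta$ device is what sidesteps it.
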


To prove the limit exists, we consider the interpolating Hamiltonian $H_t$ defined as
\begin{equation}
    H_t = \sqrt{t} \sqrt{n+m} H_{n+m,p} + \sqrt{1-t}\left( \sqrt{n}H_{n,p} + \sqrt{m} H_{m,p} \right),
\end{equation}
where w.l.o.g.\ we assume that $H_{n,p}$ is supported on the first $n$ qubits and $H_{m,p}$ is supported on qubits $n+1, \dots, n+m$. Here, the Hamiltonians $H_{n,p},H_{m,p},H_{n+m,p}$ all have coefficients drawn independently from each other.

This Hamiltonian interpolates between $H_0 = \sqrt{n}H_{n,p} + \sqrt{m} H_{m,p} $ and $H_1=\sqrt{n+m} H_{n+m,p}$. Given a finite set of states $\mathcal{S}$, we also define the quantities
\begin{equation}
    \varphi(t) = \frac{1}{n+m} \beta^{-1} \E \log Z_\beta(t) , \quad Z_\beta(t) = \sum_{\ket{\phi} \in \mathcal{S}} \exp\left(\beta \bra{\phi} H_t \ket{\phi} \right)
\end{equation}
as the normalized free energy and partition function of the interpolating Hamiltonian respectively. Later, we will set $\mathcal{S}$ equal to some $\epsilon$-cover over the set of product states. The interpolating Hamiltonian also allows one to interpolate between the free energy of the full system and subsystem.
Accordingly, we define the free energy terms of the full system as 
\begin{equation}
    F_{\beta, n+m} \coloneqq \beta^{-1} \E \log Z_{\beta, n+m} = \beta^{-1} \E\left[ \log \sum_{\ket{\phi} \in \mathcal{S}} \exp\left(\beta \bra{\phi} \sqrt{n+m} H_{n+m,p} \ket{\phi}\right) \right] .
\end{equation}
Similar statements hold for the subsystem free energies $F_{\beta, n}$ and $F_{\beta, m}$. Note that $F_{\beta, n+m} = (n+m) \varphi(1)$.

Classical proofs of the existence of the limit for the Sherrington--Kirkpatrick model show that the limit exists by proving the super-additivity of the free energy and applying Fekete's Lemma \cite{fekete1923verteilung}. We will use an enhancement of Fekete's Lemma from de Bruijn and Erdős~\cite{de1951some} which guarantees the existence of the limit for near super-additive sequences \cite{de1951some,furedi2020nearly} (see also Theorem 1.9.2 of \cite{steele1997probability}).

\begin{lemma}[de Bruijn--Erdős extension of Fekete's Lemma \cite{de1951some}] \label{lem:fekete}
    For every near super-additive sequence $\{a_n\}$ (see \Cref{def:near_super_additive}), the limit $\lim_{n \to \infty} \frac{a_n}{n}$ exists and is equal to
    \begin{equation}
        \lim_{n \to \infty} \frac{a_n}{n} = \sup_{n \geq 1} \frac{a_n}{n}.
    \end{equation}
\end{lemma}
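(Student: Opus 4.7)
The plan is to adapt Fekete's classical argument to the near super-additive setting, using the summability of $f$ to absorb the accumulated additive error. Set $L := \sup_{n \geq 1} a_n/n \in (-\infty, +\infty]$. The bound $\limsup_n a_n/n \leq L$ is immediate from the definition of supremum, so the task reduces to the matching inequality $\liminf_n a_n/n \geq L$.

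For the nontrivial direction, fix $m \geq 1$ and, given $n \geq m$, write $n = qm + r$ with $0 \leq r < m$ and $q = \lfloor n/m\rfloor$. Iterating the near super-additivity hypothesis $q$ times yields
\[
a_n \;\geq\; a_r + q\,a_m \;-\; \sum_{j=1}^{q} f(r + jm).
\]
Dividing by $n$, the first two terms contribute $(qm/n)(a_m/m) + a_r/n \to a_m/m$ as $n \to \infty$ (with $m$ fixed), since $qm/n \to 1$ and $a_r$ takes only finitely many values as $r$ ranges over $\{0,\dots,m-1\}$. The remaining task is to show that the error $E(n,m) := (1/n)\sum_{j=1}^{q} f(r+jm)$ satisfies $\limsup_{n} E(n,m) \to 0$ as $m \to \infty$. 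Once that is in hand, taking $\sup$ over $m$ in the resulting inequality $\liminf_n a_n/n \geq a_m/m - \limsup_n E(n,m)$ produces $\liminf_n a_n/n \geq L$, matching the upper bound.

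The main obstacle is controlling $E(n,m)$, as a crude pointwise bound on $f$ fails whenever $f$ is unbounded. My plan is first to reduce to the case of non-decreasing $f$ by replacing it with its running maximum $\tilde f(t) := \sup_{s \leq t} f(s)$, which preserves both the near super-additivity inequality and the integral condition. The cleanest route is then a dyadic iteration: from $a_{2n} \geq 2a_n - f(2n)$ one gets $c_{2n} \geq c_n - f(2n)/(2n)$ for $c_k := a_k/k$, giving
\[
c_{2^k m} \;\geq\; c_m \;-\; \sum_{j=1}^{k} \frac{f(2^j m)}{2^j m}.
\]
The Cauchy-condensation equivalence $\int_1^\infty f(t)/t^2\, dt < \infty \iff \sum_j f(2^j)/2^j < \infty$ (valid for non-decreasing $f$) ensures the telescoped error is a convergent series whose tail vanishes as $m \to \infty$. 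To pass from the dyadic subsequence $\{2^k m\}$ to an arbitrary $n$, decompose $n$ in binary as $n = 2^{k_1} + \cdots + 2^{k_s}$ and apply near super-additivity block by block; the resulting error is bounded by $(1/n)\sum_{j \leq \log_2 n} f(2^{j+1})$, which splits into a bounded head and a tail controlled by the same summable series (using that $f(2^j)/2^j$ is eventually $<\epsilon$ makes $\sum_{j \geq J}^{\log_2 n} f(2^j) \leq 2\epsilon\, n$). Combining everything gives $\liminf_n a_n/n \geq L$, which together with the trivial upper bound yields $\lim_n a_n/n = L = \sup_{n \geq 1} a_n/n$.
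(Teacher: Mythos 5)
The paper itself does not prove this lemma---it is imported wholesale from de Bruijn--Erd\H{o}s \cite{de1951some}---so there is no in-paper argument to compare against; judged on its own terms, your proposal has two genuine gaps. The first is the concluding step, which is a non sequitur aimed at a conclusion that is in fact unprovable. Your scheme yields, for each fixed $m$, an inequality of the form $\liminf_n a_n/n \geq a_m/m - \epsilon_m$, where the error $\epsilon_m$ is only shown to tend to $0$ as $m\to\infty$, not to vanish for fixed $m$ (and it cannot vanish for fixed $m$ when $f$ is unbounded, e.g.\ $f(t)=\sqrt{t}$ makes $\limsup_n E(n,m)=\infty$ in your first blocking). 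Taking the supremum over $m$ therefore gives only $\liminf_n a_n/n \geq \sup_m\left(a_m/m - \epsilon_m\right)$, and the most one can extract from $\epsilon_m\to0$ is $\liminf_n a_n/n \geq \limsup_m a_m/m$, i.e.\ existence of the limit---not $\liminf_n a_n/n\geq \sup_m a_m/m$. This is not a fixable slip: under \Cref{def:near_super_additive} the equality with the supremum is false. Take $f\equiv 2$ (so $\int_1^\infty f(t)t^{-2}\,dt=2<\infty$), $a_1=1$, and $a_n=0$ for $n\geq 2$; this sequence is near super-additive, yet $\lim_n a_n/n = 0 < 1 = \sup_n a_n/n$. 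So at most you can prove existence (which is all the paper uses); recovering ``$=\sup$'' needs extra hypotheses such as $f(t)\to 0$.

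The second gap is the monotonization step: replacing $f$ by its running maximum $\tilde f(t)=\sup_{s\leq t}f(s)$ does \emph{not} preserve the integral condition. If $f(t)=t$ on tiny neighbourhoods of the points $t=2^k$ and $f=0$ elsewhere, then $\int_1^\infty f(t)t^{-2}\,dt<\infty$ while $\tilde f(t)\geq t/2$ for large $t$, so $\int_1^\infty \tilde f(t)t^{-2}\,dt=\infty$ and the Cauchy-condensation step has nothing to act on. Nor can the reduction be circumvented: without a monotonicity-type assumption on $f$ (omitted in \Cref{def:near_super_additive} but assumed by de Bruijn--Erd\H{o}s) even the existence claim fails---take $f=2t^{3/2}$ on tiny neighbourhoods of $N_k=2^k$ and $0$ elsewhere, $a_n=0$ for $n\notin\{N_k\}$ and $a_{N_k}=-N_k^{3/2}$; the defining inequality holds for all $n,m$, yet $a_n/n$ oscillates between $0$ and $-\sqrt{N_k}\to-\infty$. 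For nondecreasing $f$ your dyadic machinery is essentially sound (condensation is valid, and peeling binary blocks largest-first gives the error bound you state), with one further wrinkle: decomposing $n$ in plain binary only compares $a_n$ with the seeds $a_{2^j}$, so to dominate $\limsup_m a_m/m$ you must decompose $n$ into blocks of the form $2^j m$ plus a remainder for an \emph{arbitrary} seed $m$. The correct repair, then, is to assume $f$ nondecreasing (true in the paper's application, where $f(t)=O(\sqrt{t}\log t)$) and to claim only the existence of $\lim_n a_n/n$, not its equality with the supremum.
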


Our goal is to prove near super-additivity of the free energy and extend this to show that the maximum energy is also near super-additive. Both of these will be consequences of the near convexity of the covariance functions for the Hamiltonian $H_{n,p}$.
\begin{lemma}[Near super-additivity of free energy (adapted from Theorem 1.1 of \cite{panchenko2013sherrington})] \label{lem:super_additive_free_energy}
    For a constant $c>0$, the normalized interpolating free energy $\varphi(t)$ satisfies the relation
    \begin{equation}
        \varphi(1) \geq \varphi(0) - \frac{\beta c}{n+m}.
    \end{equation}
\end{lemma}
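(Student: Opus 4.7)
The plan is to follow the Guerra--Toninelli interpolation recipe, adapted to the covariance structure of Lemma~\ref{lem:covariance_product}. First I would compute $\varphi'(t)$ by Gaussian integration by parts against the independent standard-normal coefficients of $H_{n+m,p}$, $H_{n,p}$, and $H_{m,p}$. Writing $A_\phi = \sqrt{n+m}\,\bra{\phi}H_{n+m,p}\ket{\phi}$ and $B_\phi = \sqrt{n}\,\bra{\phi}H_{n,p}\ket{\phi}+\sqrt{m}\,\bra{\phi}H_{m,p}\ket{\phi}$, the chain rule gives
\begin{equation*}
\varphi'(t) = \frac{1}{n+m}\,\E\left\langle \tfrac{1}{2\sqrt{t}}A_\phi - \tfrac{1}{2\sqrt{1-t}}B_\phi\right\rangle_t,
\end{equation*}
and Stein's identity applied separately to the $A$ and $B$ fields produces matching $\sqrt{t}$ and $\sqrt{1-t}$ factors that cancel the prefactors. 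Using Lemma~\ref{lem:covariance_product} to identify the two Gaussian covariance kernels, and noting that the diagonal terms $K_A(\phi,\phi)=K_B(\phi,\phi)=n+m$ cancel, this simplifies to
\begin{equation*}
\varphi'(t) = \frac{\beta}{2(n+m)}\,\E\left\langle n\,C_n(\phi_1|_A,\phi_2|_A) + m\,C_m(\phi_1|_B,\phi_2|_B) - (n+m)\,C_{n+m}(\phi_1,\phi_2)\right\rangle_t,
\end{equation*}
where $\phi_1,\phi_2$ are independent Gibbs replicas, $C_k$ is the covariance from Lemma~\ref{lem:covariance_product} on $k$ qubits, and $\phi|_A,\phi|_B$ denote the restrictions of $\phi$ to the first $n$ and last $m$ qubits.

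Next I would exploit the evenness of $p$ via a pointwise convexity estimate on the replica overlaps. Setting $q_i = 2\left|\bra{\phi_1^{(i)}}\ket{\phi_2^{(i)}}\right|^2 - 1 \in [-1,1]$ together with $\bar q_A = n^{-1}\sum_{i\le n}q_i$, $\bar q_B = m^{-1}\sum_{i>n}q_i$, and $\bar q = \tfrac{n}{n+m}\bar q_A + \tfrac{m}{n+m}\bar q_B$, a direct combinatorial expansion shows $k\,C_k = k\,\bar q^{\,p} + O_p(1)$ uniformly over $q\in[-1,1]^k$ (the error coming from comparing $\binom{k}{p}^{-1}\sum_{\indi\in\mathcal{I}_p^k}\prod_j q_{i_j}$ to the full power $\bar q^{\,p}$, with the discrepancy counted by tuples with repeated indices). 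Because $p$ is even, $x\mapsto x^p$ is convex on $[-1,1]$, so Jensen yields $(n+m)\,\bar q^{\,p} \le n\,\bar q_A^{\,p} + m\,\bar q_B^{\,p}$. Combining these two facts gives, uniformly in $\phi_1,\phi_2$,
\begin{equation*}
n\,C_n(\phi_1|_A,\phi_2|_A) + m\,C_m(\phi_1|_B,\phi_2|_B) - (n+m)\,C_{n+m}(\phi_1,\phi_2) \ge -c
\end{equation*}
for a constant $c$ depending only on $p$.

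Plugging this pointwise lower bound back into the derivative yields $\varphi'(t) \ge -\beta c/(2(n+m))$ uniformly on $(0,1)$, and integrating from $0$ to $1$ gives $\varphi(1) \ge \varphi(0) - \beta c/(n+m)$ after absorbing the factor of two into $c$. The main obstacle will be the derivative simplification step: one has to manage the replica bookkeeping carefully and verify that the apparent singularities $1/\sqrt{t}$ and $1/\sqrt{1-t}$ cancel exactly against the Stein factors so the integrand is uniformly bounded across $[0,1]$. Once that cancellation is established, the convexity step is elementary, but makes essential use of evenness of $p$---and this is precisely why \Cref{lem:limit_exists} is restricted to even $p$, since $x^p$ is not convex on $[-1,1]$ for odd $p$.
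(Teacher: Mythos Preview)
Your proposal is correct and follows essentially the same route as the paper: Guerra--Toninelli interpolation, Gaussian integration by parts to reduce $\varphi'(t)$ to a Gibbs average of covariance differences, the approximation of $\binom{k}{p}^{-1}\sum_{\indi}\prod q_{i_j}$ by $\bar q^{\,p}$ up to an $O_p(1)$ error, and then Jensen on $x\mapsto x^p$ for even $p$. The paper packages the covariance inequality as a separate lemma (\Cref{lem:super-additivity-covariance}) and invokes Panchenko's standard integration-by-parts identity so the $t^{-1/2}$ cancellation is implicit, but the content is the same.
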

\begin{proof}
    The derivative of $\varphi(t)$ with respect to $t$ satisfies
    \begin{equation}
        \varphi'(t) = (n+m)^{-1} \beta^{-1} \mathbb{E} \left[ \left\langle \beta \frac{\partial}{\partial t} \bra{\phi} H_t \ket{\phi} \right\rangle_{\ket{\phi} \in \mathcal{S}} \right],
    \end{equation}
    where 
    \begin{equation}
    \left \langle f(\ket{\phi}) \right\rangle_{\ket{\phi} \in \mathcal{S}} \coloneqq \sum_{\ket{\phi} \in \mathcal{S}} \frac{\exp\left(\beta \bra{\phi} H_t \ket{\phi} \right)}{Z_\beta(t)} f(\ket{\phi})
    \end{equation}
    denotes the average of the function $f:\mathcal{S} \to \mathbb{R}$ with respect to the Gibbs measure. We can evaluate the expectation of the Gibbs average via standard formulas derived by Gaussian integration by parts (see Lemma 1.1 in \cite{panchenko2013sherrington}), giving
    \begin{equation}
        \varphi'(t) = (n+m)^{-1} \beta \mathbb{E} \left[ \left\langle  \mathbb{E} \left[ \bra{\phi} H_t \ket{\phi} \frac{\partial}{\partial t} \bra{\phi} H_t \ket{\phi} \right] - \mathbb{E} \left[ \bra{\phi} H_t \ket{\phi} \frac{\partial}{\partial t} \bra{\psi} H_t \ket{\psi} \right]  \right\rangle_{\ket{\phi},\ket{\psi} \in \mathcal{S}} \right].
    \end{equation}
    Evaluating the derivative, we note that
    \begin{equation}
    \begin{split}
        \mathbb{E} \left[ \bra{\phi} H_t \ket{\phi} \frac{\partial}{\partial t} \bra{\psi} H_t \ket{\psi} \right] &= \frac{n+m}{2} \mathbb{E}\left[ \bra{\phi}H_{n+m,p}\ket{\phi} \bra{\psi}H_{n+m,p}\ket{\psi} \right] \\
        &\quad - \frac{n}{2} \mathbb{E}\left[ \bra{\phi}H_{n,p}\ket{\phi} \bra{\psi}H_{n,p}\ket{\psi} \right] - \frac{m}{2} \mathbb{E}\left[ \bra{\phi}H_{m,p}\ket{\phi} \bra{\psi}H_{m,p}\ket{\psi} \right].
    \end{split}
    \end{equation}
    \Cref{lem:super-additivity-covariance} evaluates the variance and covariance terms above. Namely, we have
    \begin{equation}
    \begin{split}
        \varphi'(t) = -(n+m)^{-1} \frac{\beta}{2} \E \biggl\langle  & \E\left[  (n+m)\bra{\phi} H_{n+m,p} \ket{\phi} \bra{\psi} H_{n+m,p} \ket{\psi}\right] \\
        &- \E\left[ n \bra{\phi} H_{n,p} \ket{\phi} \bra{\psi} H_{n,p} \ket{\psi}\right]  - \E\left[  m\bra{\phi} H_{m,p} \ket{\phi} \bra{\psi} H_{m,p} \ket{\psi}\right]\biggr\rangle_{\ket{\phi},\ket{\psi} \in \mathcal{S}}.
    \end{split}
    \end{equation}
    Applying the inequality in \Cref{lem:super-additivity-covariance} with a constant $c>0$,
    \begin{equation}
    \begin{split}
        \varphi'(t) \geq -(n+m)^{-1} \frac{\beta}{2}  c .
    \end{split}
    \end{equation}
    Integrating from $t=0$ to $t=1$ completes the proof.
\end{proof}

The previous Lemma proves that the free energy is near super-additive. Using standard bounds relating the free energy to the maximum energy, we can extend this near super-additivity to the maximum energy as well to prove \Cref{lem:limit_exists}.

\begin{proof}[Proof of \Cref{lem:limit_exists}]
    For a given finite set of product states $\mathcal{S}$, let $\hat{E}_{n+m}$ denote the maximum energy defined as
    \begin{equation}
        \hat{E}_{n+m} \coloneqq \E \max_{\ket{\phi} \in \mathcal{S}} \bra{\phi} \sqrt{n+m}H_{m+n,p} \ket{\phi}.
    \end{equation}
    Similar definitions hold for the maximum energies of the subsystems $\hat{E}_n$ and $\hat{E}_m$. We have the following standard bounds for $\hat{E}_{n+m}$ in relation to the free energy which hold for any $\beta>0$:
    \begin{equation}
       F_{\beta, n+m} \geq \hat{E}_{n+m} \geq F_{\beta, n+m} - \frac{\log | \mathcal{S}|}{\beta}.
    \end{equation}
    The first bound follows from noting $\hat{E}_{n+m} = \beta^{-1} \log \exp(\beta \hat{E}_{n+m}) \leq F_{\beta, n+m}$. The second bound follows from noting that $\beta^{-1} \log \exp(\beta\hat{E}_{n+m}) \geq \beta^{-1} \log(Z_\beta/|\mathcal{S}|). $ Furthermore, from \Cref{lem:super_additive_free_energy}, we have that for some $c>0$,
    \begin{equation}
        F_{\beta, n+m} - F_{\beta, n} - F_{\beta, m} \geq -\beta c.
    \end{equation}
    Combining the previous two equations,
    \begin{equation} \label{eq:intermediate_in_energy_max}
        \hat{E}_{n+m} - \hat{E}_{n} - \hat{E}_{m} \geq -\beta c - \frac{\log | \mathcal{S}|}{\beta}.
    \end{equation}

    We now let $S_\epsilon$ be an $\epsilon$ cover of the pure states on a single qubit. More precisely, for any state $\ket{\phi} \in \mathcal{S}_{\rm all}^1$ there exists $\ket{\phi'} \in S_\epsilon$ such that $\| \ket{\phi} - \ket{\phi'}\|_2 \leq \epsilon$. For product states on $n+m$ qubits, we extend the epsilon cover by taking the product of $n+m$ sets $S_\epsilon$:
    \begin{equation}
        S_\epsilon^{n+m} = \{ \ket{\phi} = \ket{\phi^{(1)}} \otimes \cdots \otimes \ket{\phi^{(n+m)}} \in \mathcal{S}_{\rm product}^{n+m}: \ket{\phi^{(i)}} \in S_\epsilon \; \; \forall i \in [n+m]\}.
    \end{equation}
    
    The $S_\epsilon^{n+m}$ guarantees the existence of a product state $\ket{\phi_\epsilon}$ within $1-\epsilon(n+m)$ in overlap squared $| \bra{\phi_\epsilon}\ket{ \phi}|^2$ to any product state $\ket{\phi}$. To see this, we choose a $\epsilon$-close state on each qubit $\ket{\phi_\epsilon} = \ket{\phi_\epsilon^{(1)}} \otimes \cdots \otimes \ket{\phi_\epsilon^{(n+m)}}$ and have
    \begin{equation}
        1-\epsilon\leq\left(1-\frac{\epsilon}{2}\right)^2\leq\Re\left(\bra{\phi_\epsilon^{(i)}}\ket{\phi^{(i)}}\right)^2\leq\Re\left(\bra{\phi_\epsilon^{(i)}}\ket{\phi^{(i)}}\right)^2+\Im\left(\bra{\phi_\epsilon^{(i)}}\ket{\phi^{(i)}}\right)^2=\left\lvert\bra{\phi_\epsilon^{(i)}}\ket{\phi^{(i)}}\right\rvert^2,
    \end{equation}
    which implies
    \begin{equation}
        \left| \bra{\phi_\epsilon}\ket{\phi} \right|^2 \geq (1-\epsilon)^{n+m} \geq 1-\epsilon(m+n).
    \end{equation}

    Using the fact that $\E \lambda_{\text{max}}(\sqrt{m+n}H_{m+n,p}) = \Theta(m+n)$,
    \begin{equation}
    \begin{split}
        \hat{E}_{n+m} &\coloneqq 
        \E \max_{\ket{\phi} \in \mathcal{S}_{\rm product}^{n+m}}\bra{\phi} \sqrt{n+m}H_{m+n,p} \ket{\phi} \\
        &\leq \E \max_{\ket{\phi_\epsilon} \in S_\epsilon^{n+m}} \bra{\phi_\epsilon} \sqrt{n+m}H_{m+n,p} \ket{\phi_\epsilon} + \Theta(\epsilon (m+n)^2),        
    \end{split}
    \end{equation}
    and similarly for the subsystems. Applying this in \Cref{eq:intermediate_in_energy_max},
    \begin{equation} 
        \hat{E}_{n+m} - \hat{E}_{n} - \hat{E}_{m} \geq -\beta c - \frac{\log | \mathcal{S_\epsilon}|^{n+m}}{\beta} - \Theta(\epsilon (m+n)^2).
    \end{equation}
    To achieve an $\epsilon$ cover, it suffices to have $|S_\epsilon| = (1-2/\epsilon)^3$ points (e.g., see Lemma 5.2 of \cite{vershynin2010introduction}). Therefore, setting $\epsilon = (n+m)^{-k}$ for any fixed $k$,
    \begin{equation}
        \log | \mathcal{S_\epsilon}|^{m+n} = O((m+n) \log(m+n)).
    \end{equation}
    Choosing $k>1$, we have
    \begin{equation} 
        \hat{E}_{n+m} - \hat{E}_{n} - \hat{E}_{m} \geq -\beta c - O(\beta^{-1}(m+n) \log(m+n)) - o(m+n).
    \end{equation}
    Setting $\beta = \beta' \sqrt{n+m}$ for some constant $\beta'$ achieves the required bounds for near super-additivity. Applying \Cref{lem:fekete} completes the proof.
\end{proof}

Finally, we complete the proof of \Cref{lem:limit_exists} by proving the near sub-additivity of the subsystem covariance, which was referred to in our proof of \Cref{lem:super_additive_free_energy}.

\begin{lemma}[Near sub-additivity of subsystem covariance]\label{lem:super-additivity-covariance}
    Given product states $\ket{\phi}, \ket{\psi} \in \mathcal{S}_{\rm product}^{n+m}$, there exists a constant $c>0$ such that
    \begin{equation}
    \begin{split}
        \E\left[  (n+m)\bra{\phi} H_{n+m,p} \ket{\phi} \bra{\psi} H_{n+m,p} \ket{\psi}\right] &\leq \E\left[ n \bra{\phi} H_{n,p} \ket{\phi} \bra{\psi} H_{n,p} \ket{\psi}\right]  \\
        &\quad + \E\left[  m\bra{\phi} H_{m,p} \ket{\phi} \bra{\psi} H_{m,p} \ket{\psi}\right] + c
    \end{split}
    \end{equation}
    for even $p$. Additionally, the variance terms satisfy
    \begin{equation}
    \begin{split}
        n+m &= \E\left[  (n+m)\bra{\phi} H_{n+m,p} \ket{\phi} \bra{\phi} H_{n+m,p} \ket{\phi}\right] \\
        &= \E\left[ n \bra{\phi} H_{n,p} \ket{\phi} \bra{\phi} H_{n,p} \ket{\phi}\right]  + \E\left[  m\bra{\phi} H_{m,p} \ket{\phi} \bra{\phi} H_{m,p} \ket{\phi}\right].
    \end{split}
    \end{equation}
\end{lemma}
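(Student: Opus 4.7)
The plan is to reduce the inequality to a statement about elementary symmetric polynomials and then invoke convexity of $x \mapsto x^p$, which is exactly where the evenness of $p$ is used. First I would write $\ket{\phi} = \bigotimes_i \ket{\phi^{(i)}}$ and $\ket{\psi} = \bigotimes_i \ket{\psi^{(i)}}$, and introduce the single-qubit overlaps
\begin{equation*}
c_i \coloneqq 2\bigl|\!\bra{\phi^{(i)}}\ket{\psi^{(i)}}\!\bigr|^2 - 1 \in [-1,1], \qquad i \in [n+m].
\end{equation*}
By \Cref{lem:covariance_product} applied to the subsystem on qubits $[n]$, to the subsystem on qubits $\{n+1,\dots,n+m\}$, and to the full system on $[n+m]$, each of the three covariances in the claim is of the form $\binom{N}{p}^{-1} e_p(\vec c_T)$, where $T$ is the index set of the relevant subsystem, $N = |T| \in \{n,m,n+m\}$, and $e_p$ is the $p$-th elementary symmetric polynomial.

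Next I would replace $e_p$ by the $p$-th power of the row sum $S_T \coloneqq \sum_{i \in T} c_i$. Expanding $S_T^p = \sum_{\vec j \in T^p} \prod_k c_{j_k}$ and separating tuples with distinct entries (which contribute $p!\,e_p(\vec c_T)$) from tuples with at least one repeated entry, and using $|c_i|\le 1$, gives $|S_T^p - p!\,e_p(\vec c_T)| \le |T|^p - p!\binom{|T|}{p} = O_p(|T|^{p-1})$. Since $p!\binom{|T|}{p} = |T|^p(1 + O_p(1/|T|))$ and $|S_T/|T|| \le 1$, this yields the uniform asymptotic
\begin{equation*}
|T| \cdot \binom{|T|}{p}^{-1} e_p(\vec c_T) \;=\; |T|\left(\tfrac{S_T}{|T|}\right)^{p} + O_p(1),
\end{equation*}
with the $O_p(1)$ constant depending only on $p$. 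Summing the resulting three contributions and using $S_{[n+m]} = S_{[n]} + S_{\{n+1,\dots,n+m\}}$, the desired covariance bound reduces, up to an additive constant $c = c(p)$, to the scalar inequality
\begin{equation*}
(n+m)\left(\tfrac{S_n+S_m}{n+m}\right)^{p} \;\le\; n\left(\tfrac{S_n}{n}\right)^{p} + m\left(\tfrac{S_m}{m}\right)^{p}.
\end{equation*}

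This last inequality is precisely Jensen's inequality for the convex function $x \mapsto x^p$ applied to the weighted average of $S_n/n$ and $S_m/m$ with weights $n/(n+m)$ and $m/(n+m)$. Crucially, $S_n/n$ and $S_m/m$ may be negative (the $c_i$ take both signs), so I need convexity on all of $\mathbb{R}$, which holds exactly when $p$ is even — this is the only place the evenness hypothesis is used. The variance equality at the end of the lemma is immediate: taking $\ket{\psi}=\ket{\phi}$ forces $c_i = 1$ for every $i$, hence $e_p(\vec c_T) = \binom{|T|}{p}$ exactly, so each of the three quantities equals its cardinality with no error and the additive identity $n + m = n + m$ is trivial.

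The only subtle point — and the step I would flag as the main obstacle — is ensuring that the approximation error in passing from $\binom{|T|}{p}^{-1}e_p(\vec c_T)$ to $(S_T/|T|)^p$ is bounded by a constant depending on $p$ alone, uniformly over $n$, $m$, and the choice of states. This follows from the $|c_i|\le 1$ bound together with the elementary estimate $|T|^p - p!\binom{|T|}{p} = O_p(|T|^{p-1})$, so no further quantum-mechanical input is required beyond \Cref{lem:covariance_product} itself; the rest of the argument is pure scalar convexity.
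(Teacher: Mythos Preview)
Your proposal is correct and follows essentially the same route as the paper: both apply \Cref{lem:covariance_product} to rewrite each covariance as $\binom{N}{p}^{-1}$ times an elementary symmetric polynomial in the single-qubit overlaps $c_i\in[-1,1]$, replace $p!\,e_p$ by the $p$-th power of the sum with an $O_p(N^{p-1})$ error coming from repeated-index tuples, and then finish with Jensen's inequality for the convex map $x\mapsto x^p$ (even $p$) applied to the weighted average of $S_n/n$ and $S_m/m$. The variance identity is handled identically in both, by noting $c_i\equiv 1$ when $\ket{\psi}=\ket{\phi}$.
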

\begin{proof}
    Let $\ket{\phi}, \ket{\psi} \in \mathcal{S}_{\rm product}^{n+m}$ where $\ket{\phi}=\ket{\phi^{(1)}} \otimes \cdots \otimes \ket{\phi^{(n+m)}}$ and $\ket{\psi}= \ket{\psi^{(1)}} \otimes \cdots \otimes \ket{\psi^{(n+m)}}$ denote two arbitrary product states. From \Cref{lem:covariance_product}, we have:
    \begin{equation}
        \E\left[\bra{\phi} H_{n+m,p} \ket{\phi} \bra{\psi} H_{n+m,p} \ket{\psi}\right] = \frac{1}{\binom{n+m}{p}} \sum_{\indi \in \mathcal{I}_p^{n+m}} \prod_{k=1}^p \left( 2 \left|\bra{\phi^{(\indisub_k)}}\ket{ \psi^{(\indisub_k)}} \right|^2 - 1\right).
    \end{equation}

Using the above, it is straightforward to note that 
\begin{equation}
    \mathbb{E}\left[(n+m) \left(\bra{\phi} H_{n+m,p} \ket{\phi}\right)^2 \right] = n+m.
\end{equation}

Similar statements hold for the subsystem Hamiltonians $H_{n,p}$ and $H_{m,p}$. For convenience, from here on in the proof, let $R_k = \left( 2 \left|\bra{\phi^{(k)}}\ket{ \psi^{(k)}} \right|^2 - 1\right)$. Then,
\begin{equation}
\begin{split}
    \E[ \bra{\phi} (\sqrt{n}H_{n,p} + \sqrt{m}H_{m,p}) \ket{\phi} & \bra{\psi} (\sqrt{n}H_{n,p} + \sqrt{m}H_{m,p}) \ket{\psi} ] \\ &= n {n \choose p}^{-1} E_p(R_1, \dots, R_n) +   m {m \choose p}^{-1} E_p(R_{n+1}, \dots R_{n+m}),
\end{split}
\end{equation}
where $E_p$ is the $p$-th degree elementary symmetric polynomial defined as 
\begin{equation}
    E_p(x_1, \dots, x_n) = \sum_{1 \leq i_1 < \cdots < i_p \leq n} x_{i_1} x_{i_2} \cdots x_{i_p}.
\end{equation}

Similarly, 
\begin{equation}
    \E\left[  (n+m)\bra{\phi} H_{n+m,p} \ket{\phi} \bra{\psi} H_{n+m,p} \ket{\psi}\right]  = (m+n) {m+n \choose p}^{-1} E_p(R_{1}, \dots, R_{n+m}).
\end{equation}

We are left with showing the following inequality holds:
\begin{equation}
    (m+n)\frac{E_p(R_{1}, \dots, R_{n+m})}{ {m+n \choose p}} \leq n\frac{E_p(R_{1}, \dots, R_{n})}{ {n \choose p}} + m\frac{E_p(R_{n+1}, \dots, R_{m+n})}{  {m \choose p}} + O(1).
\end{equation}

Note 
\begin{equation}
 \frac{(m+n)^p}{{m+n \choose p}}=p!+O(1/(m+n)).
 \end{equation}
Similar identities  hold for ${m \choose p}$ and ${n \choose p}$. Thus it suffices to prove that
\begin{equation}
    \frac{E_p(R_{1}, \dots, R_{n+m})}{(m+n)^{p-1}} 
    \leq \frac{E_p(R_{1}, \dots, R_{n}) }{n^{p-1}} + \frac{E_p(R_{n+1}, \dots, R_{m+n}) }
    {m^{p-1}}+O(1). \label{eq:super-add}
\end{equation}

We have 
\begin{equation}
p!E_p(R_{1}, \dots, R_{n+m})=\sum_{i_1\ne i_2\ne\cdots \ne i_p} R_{i_1} R_{i_2} \cdots R_{i_p}.
\end{equation}
Note that 
\begin{equation}
\sum_{i_1\ne i_2\ne\cdots \ne i_p} R_{i_1} R_{i_2} \cdots R_{i_p}=
\sum_{1\le i_1,\ldots,i_p\le n}R_{i_1} R_{i_2} \cdots R_{i_p}-\sum_{(i_1,\ldots,i_p)\in \mathcal{I}_{\rm repeat}}
R_{i_1} R_{i_2} \cdots R_{i_p},
\end{equation}
where $\mathcal{I}_{\rm repeat}$ denotes the set of tuples $(i_1,\ldots,i_p)$ with at least one element repeated.
We have that the first sum is $\left(\sum_{1\le i\le m+n} R_i\right)^{p}$. At the same time, since each $R_i$ is bounded by $1$
in absolute value, the sum 
$\sum_{(i_1,\ldots,i_p)\in \mathcal{I}_{\rm repeat}}$ is of the order $O\left( (m+n)^{p-1}\right)$. Thus, to show \Cref{eq:super-add},
it suffices to verify that
\begin{equation}
{\left(\sum_{1\le i\le m+n} R_i\right)^{p} \over (m+n)^{p-1}}\le 
{\left(\sum_{1\le i\le m} R_i\right)^{p} \over m^{p-1}}+
{\left(\sum_{m+1\le i\le m+n} R_i\right)^{p} \over n^{p-1}}.
\end{equation}
We have:
\begin{equation}
\begin{split}
{\left(\sum_{1\le i\le m+n} R_i\right)^{p} \over (m+n)^{p-1}}&=(m+n){\left(\sum_{1\le i\le m+n} R_i \over m+n\right)^{p}} \\
&=(m+n)\left({m\over m+n} {\sum_{1\le i\le m} R_i \over m}+{n\over m+n}{\sum_{m+1\le i\le m+n} R_i \over n}\right)^{p} \\
&\le 
(m+n){m\over m+n} \left({\sum_{1\le i\le m} R_i \over m}\right)^p+(m+n){n\over m+n}\left({\sum_{m+1\le i\le m+n} R_i \over n}\right)^{p} \\
&=
{\left(\sum_{1\le i\le m} R_i\right)^{p} \over m^{p-1}}+
{\left(\sum_{m+1\le i\le m+n} R_i\right)^{p} \over n^{p-1}},
\end{split}
\end{equation}
and the claim is established. The inequality uses the convexity of the function $x^p$ for even $p$.

\end{proof}

\section{Universality of the ground state values}
\label{sec:equivalence_distribution}

We primarily study random Hamiltonians whose set of coefficients $\bm \alpha$ are drawn i.i.d.\ from the standard normal distribution. Nevertheless, if one is interested in analyzing the asymptotic values of quantities such as the ground state energy or free energy, these coefficients can be changed to a different set $\bm \alpha'$ with a different distribution to study equivalent models. The equivalence statements we prove will hold for a broad range of random Hamiltonians which can be constructed as follows.
\begin{definition}[$\{m_n,H_n,\left\{\mathcal{D}_{n,i}\right\}_i\}_n$ sequence of random Hamiltonians] \label{def:sequence_of_random_hamiltonians}
    A $\{m_n,H_n,\left\{\mathcal{D}_{n,i}\right\}_i\}_n$ sequence of random Hamiltonians is constructed as follows. First, choose a sequence of positive integers $m_n = \omega(n)$. Then, for each value of $n$ in the sequence corresponding to the number of qubits, choose a collection of $m_n$ Hermitian operators $\{H_n^{(i)}\}_{i=1}^{m_n}$, each of bounded operator norm $\|H_n^{(i)}\| \leq 1$. A random Hamiltonian is drawn by placing these $m_n$ operators in a sum weighted by random coefficients $\bm \alpha_n = \{\alpha_{n,1}\sim\mathcal{D}_{n,1}, \dots, \alpha_{n,m_n}\sim\mathcal{D}_{n,{m_n}}\}$:
    \begin{equation}
        H_n(\bm \alpha_n) = \frac{1}{m_n^{1/2}} \sum_{i=1}^{m_n} \alpha_{n,i} H_n^{(i)}, \quad \quad \| H_n^{(i)} \| \leq 1.
    \end{equation} 
    The disorder distributions $\mathcal{D}_{n,i}$ are assumed to be standard normal $\mathcal{N}\left(0,1\right)$ when not specified.
\end{definition}

Setting $m_n = {n \choose p}3^p$ and the operators $\{H_n^{(i)}\}_{i=1}^{m_n}$ to be the set of $p$-local Pauli matrices corresponds to the spin glass model model (\Cref{eq:hamiltonian_concise_form}). The normalization term $m_n^{-1/2}$ above is chosen so that the expected maximum eigenvalue scales as $\sqrt{n}$ for typical models when the disorder terms $\alpha_{n,i}$ are i.i.d. Gaussian\footnote{This can informally be observed by noting that with this normalization, the variance of the energy with respect to a state is roughly $O_n(1)$. Then the maximum energy over $2^n$ eigenstates should be expected to scale loosely as the maximum of $2^n$ independent Gaussians which is $O(\sqrt{\log 2^n}) = O(\sqrt{n})$.}. The above class of Hamiltonians encompasses the spin glass Hamiltonians we study here as well as others including the SYK model \cite{feng2019spectrum,rosenhaus2019introduction} and quantum versions of $k$-SAT (denoted $k$-QSAT) \cite{bravyi2009bounds,bravyi2011efficient,ambainis2012quantum,laumann2009phase} and quantum max-cut \cite{gharibian2019almost,hwang2023unique}.

\Cref{theorem:equivalence} states sufficient conditions for which equivalence holds. Our proof extends the techniques outlined in \cite{chatterjee2005simple} to the quantum setting. \cite{crawford2007thermodynamics} previously has shown equivalence of the limiting mean free energy for quantum spin models, albeit under a stronger assumption requiring that the third moment $\E[|\alpha'|^3] < \infty$ be finite for the equivalent distribution. This importantly does not encompass the sparse random hypergraph setting described by \Cref{eq:sparse} below. 

As an example of such equivalence in the spin glass setting, any collection of  distributions where the coefficients $\alpha'$ are drawn independently from a bounded random variable (e.g. Rademacher) with variance 1 and mean 0 suffices. Perhaps more interestingly, one can also consider sparse random Hamiltonians where the active terms are supported on a random hypergraph (e.g., Erdős–-Rényi) model. Here, for example, the coefficients can be distributed according to 
\begin{equation}
    \alpha' \propto \begin{cases}
    0 & \text{ with probability } 1- 2 p_n \\
    +1 & \text{ with probability }  p_n  \label{eq:sparse}\\
    -1 & \text{ with probability }  p_n \\
    \end{cases}
    \end{equation}
where $p_n \in [0,1]$ is the probability that any given Pauli term appears and we require $p_n = \omega(n^{-\left(p-1\right)})$ for equivalence to hold due to the boundedness condition in \Cref{theorem:equivalence} copied below. 

\equivalence*
 
As a direct corollary of the above, we have the following equivalence statement for the particular spin glass Hamiltonians $H_{n,p}$ that we study here.
\begin{corollary}
    \label{prop:spin_glass_disorder_equivalence}
    Given the random Hamiltonian $H_{n,p}(\bm \alpha)$ (\Cref{eq:hamiltonian_concise_form}) with coefficients $\alpha[\indi, \inda] \sim \mathcal{N}(0,1)$, let $H_{n,p}(\bm \alpha')$  denote a random Hamiltonian where the coefficients $\alpha[\indi, \inda]$ are replaced by random variables $\alpha'[\indi,\inda] \sim \mathcal{D}_{\indi, \inda}$ drawn independently from a collection of distributions $\mathcal{D}_{\indi,\inda}$. Over the set of quantum states $\mathcal{S}_{\rm all}^n$ on $n$ qubits, the limiting average maximum energies are equal under the two distributions:
    \begin{equation}
        \lim_{n \to \infty} \E_{\alpha[\indi, \inda] \sim \mathcal{N}(0,1)} \left[ \frac{1}{\sqrt{n}} \max_{\ket{\phi} \in \mathcal{S}_{\rm all}^n} \bra{\phi}H_{n,p}(\bm \alpha) \ket{\phi} \right] = \lim_{n \to \infty} \E_{\alpha'[\indi,\inda] \sim \mathcal{D}_{\indi,\inda}} \left[ \frac{1}{\sqrt{n}} \max_{\ket{\phi} \in \mathcal{S}_{\rm all}^n} \bra{\phi}H_{n,p}(\bm \alpha') \ket{\phi} \right] 
    \end{equation}
    whenever the distributions $\mathcal{D}_{\indi,\inda}$ meet the three conditions:
    \begin{equation} \label{eq:spin_glass_universality_requirements}
        \begin{split}
            (\text{first moment}): & \quad \E_{\alpha'[\indi,\inda] \sim \mathcal{D}_{\indi,\inda}}[\alpha'[\indi,\inda]] =  0, \\
            (\text{second moment}): & \quad \E_{\alpha'[\indi,\inda] \sim \mathcal{D}_{\indi,\inda}}[\alpha'[\indi,\inda]^2]  = 1, \\
            (\text{boundedness}): & \quad \forall \epsilon > 0, \; \lim_{n \to \infty} \frac{1}{{n \choose p}}\sum_{\indi \in \mathcal{I}_p^n} \sum_{\inda \in \{1,2,3\}^p}  \mathbb{E}[\alpha'[\indi, \inda]^2 ;  |\alpha'[\indi, \inda]|  > n^{(p-1)/2} \epsilon ] = 0.
        \end{split}
    \end{equation}
\end{corollary}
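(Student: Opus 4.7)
The plan is to invoke \Cref{theorem:equivalence} directly, since the spin-glass Hamiltonian $H_{n,p}$ of \Cref{eq:hamiltonian_concise_form} fits into the $\{m_n, H_n, \{\mathcal{D}_{n,i}\}_i\}_n$ framework of \Cref{def:sequence_of_random_hamiltonians}. I would take $m_n = 3^p \binom{n}{p}$ (one summand per pair $(\indi, \inda) \in \mathcal{I}_p^n \times \{1,2,3\}^p$) and identify each $H_n^{(i)}$ with a $p$-local Pauli operator $P_{\indi}^{\inda}$, whose operator norm equals one. The normalization $\binom{n}{p}^{-1/2}$ in \Cref{eq:hamiltonian_concise_form} differs from the prescribed $m_n^{-1/2}$ of \Cref{def:sequence_of_random_hamiltonians} by the $n$-independent factor $3^{p/2}$; since the corollary asserts an equality between two limits at a common normalization, this overall scalar appears on both sides and can be absorbed.

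Next, I would verify that the three conditions of \Cref{theorem:equivalence} reduce to those stated in \Cref{eq:spin_glass_universality_requirements}. The first and second moment conditions coincide verbatim. For the boundedness condition, with $m_n = 3^p \binom{n}{p}$ one has $\sqrt{m_n/n} = 3^{p/2} \sqrt{\binom{n}{p}/n}$, which is of order $n^{(p-1)/2}$ up to a $p$-dependent constant. Hence the truncation threshold $\epsilon \sqrt{m_n/n}$ in \Cref{theorem:equivalence} is, up to a constant depending only on $p$, the threshold $\epsilon\, n^{(p-1)/2}$ appearing in the corollary; the constant is harmlessly absorbed into $\epsilon$ because the condition is required to hold for every $\epsilon > 0$. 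Similarly the prefactor $1/m_n$ differs from $1/\binom{n}{p}$ by the fixed factor $3^p$, which does not affect whether the limit equals zero.

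With these identifications the corollary is an immediate consequence of \Cref{theorem:equivalence}. I do not anticipate any substantive obstacle, since the Lindeberg interpolation underlying \Cref{theorem:equivalence} is already in place. The only point requiring care is the careful bookkeeping of the $p$-dependent but $n$-independent constants that arise when matching the two normalization conventions, together with a brief verification that the maximum-over-$\mathcal{S}_{\rm all}^n$ in the corollary coincides with $\lambda_{\max}(H_{n,p})$ so that the setting of \Cref{theorem:equivalence} (stated in terms of $\max_{\ket{\phi} \in \mathcal{S}_{\rm all}^n}\bra{\phi}H_n(\bm\alpha_n)\ket{\phi}$) applies verbatim.
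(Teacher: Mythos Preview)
Your proposal is correct and matches the paper's treatment: the paper states the result as a direct corollary of \Cref{theorem:equivalence} with $m_n = 3^p\binom{n}{p}$ and the $H_n^{(i)}$ taken to be the $p$-local Pauli operators, without spelling out the constant bookkeeping you describe. Your handling of the $3^{p/2}$ normalization mismatch and the reduction of the truncation threshold $\epsilon\sqrt{m_n/n}$ to $\epsilon\, n^{(p-1)/2}$ (absorbing $p$-dependent constants into $\epsilon$) is exactly what is needed to make the specialization rigorous.
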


A canonical model of spin glass Hamiltonians where
the above applies
is the setting of a sparse random graph. 
Construct the distribution $\mathcal{D}_n$ from sparse random variables $\bm{\tilde\alpha}$ distributed as follows: entries of $\bm{\tilde\alpha}=
\left\{\tilde\alpha[\indi; \inda]: \indi\in \mathcal{I}_p^n, \inda\in \{1,2,3\}^p\right\}$
are Rademacher distributed and take values $\pm 1$ with probability $(1/2)3^{-p}d_n/{n\choose p-1}$
and take value $0$ otherwise, independently across $\indi$ and $\inda$. Namely, non-zero terms are supported on a sparse random hypergraph with average degree (number of non-zero disorders per node)
equal to $(d_n 3^{-p}/{n\choose p-1})3^p{n-1\choose p-1}=d_n$, where $d_n$ is assumed to be an arbitrarily slowly growing function of $n$. The $3^p$ factors account for the $3^p$ possible Pauli matrices on each hyperedge. 

In order to bring $\tilde\alpha$ to the setting of \Cref{eq:spin_glass_universality_requirements} we rescale the distribution $\mathcal{D}_n$ as follows: $\alpha'=\sqrt{3^p{n\choose p-1}\over d_n}\tilde\alpha$.
Then $\E[\alpha']=0$ and $\E[\alpha'^2]=1$. We have
\begin{align}
\mathbb{P}\left(|\alpha'|>
n^{p-1\over 2}\epsilon \right)
=
\mathbb{P}\left(|\tilde\alpha|>
\sqrt{d_n 3^{-p} (p-1)!}\epsilon +o(1)\right)
\end{align}
which becomes zero as soon as $d_n$ is large enough. This means that the third condition in 
 \Cref{eq:spin_glass_universality_requirements} holds trivially for large enough $n$ and thus \Cref{prop:spin_glass_disorder_equivalence} holds. We obtain the identity:
 \begin{align}
\lim_{n \to \infty} \E_{\alpha[\indi, \inda] \sim \mathcal{N}(0,1)} \left[ \frac{1}{\sqrt{n}} \max_{\ket{\phi} \in \mathcal{S}_{\rm all}^n} \bra{\phi}H_{n,p}(\bm \alpha) \ket{\phi} \right] = \lim_{n \to \infty} \E_{\alpha'[\indi,\inda] \sim \mathcal{D}_n} \left[ {1\over \sqrt{n}}\max_{\ket{\phi} \in \mathcal{S}_{\rm all}^n} \bra{\phi} H_{n,p}( \bm \alpha') \ket{\phi} \right].
\end{align}

\subsection{Proof of equivalence}
\label{app:equivalence}

As a reminder, we let $\bm \alpha_n = \{\alpha_{n,i}\}_{i \in [m_n]}$ denote the set of random variables which constitute the coefficients of the random Hamiltonian $H_n(\bm \alpha_n)$. All random variables discussed here take values in an open interval $I \subseteq \mathbb{R}$. For simplicity, we will sometimes drop the subscript and denote the set of random variables as $\bm \alpha$ where the number of qubits $n$ is obvious from context.

To prove the equivalence between distributions, we will make use of two linear algebraic facts that allow us to calculate derivatives and bound the free energy in its matrix form.
\begin{lemma}[Derivative of matrix exponential \cite{wilcox1967exponential}] \label{lem:deriv_of_exp}
    Given a Hermitian matrix $H_\theta \in \mathbb{C}^{n \times n}$ which depends on a parameter $\theta \in \mathbb{R}$, 
    \begin{equation}
        \partial_\theta \exp(H_\theta) = \int_0^1 \exp( t H_\theta ) (\partial_\theta H_\theta) \exp ( (1-t)H_\theta )  \; dt.
    \end{equation}
\end{lemma}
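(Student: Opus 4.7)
The plan is to prove this via a standard ODE/auxiliary-function argument. First I would introduce the auxiliary operator-valued function
\begin{equation*}
    G(s) \coloneqq \exp(-sH_\theta)\,\partial_\theta \exp(sH_\theta), \qquad s \in [0,1],
\end{equation*}
and note that $G(0)=0$, while the quantity of interest is obtained from $G(1)$ by left-multiplication with $\exp(H_\theta)$. The strategy is to derive a simple first-order ODE in $s$ for $G$ and integrate it.

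Next I would differentiate $G$ with respect to $s$, using that $\partial_s \exp(sH_\theta) = H_\theta \exp(sH_\theta) = \exp(sH_\theta) H_\theta$ (since $H_\theta$ commutes with any function of itself at a fixed $\theta$), and commute the mixed partials $\partial_s\partial_\theta = \partial_\theta\partial_s$ (justified since $\exp(sH_\theta)$ is jointly analytic in $(s,\theta)$ on any compact set by uniform convergence of the power series). This gives
\begin{equation*}
    G'(s) = -H_\theta \exp(-sH_\theta)\,\partial_\theta \exp(sH_\theta) + \exp(-sH_\theta)\Big[(\partial_\theta H_\theta)\exp(sH_\theta) + H_\theta\,\partial_\theta \exp(sH_\theta)\Big].
\end{equation*}
The first and last terms cancel (using again that $H_\theta$ commutes with $\exp(-sH_\theta)$), leaving the clean identity $G'(s) = \exp(-sH_\theta)(\partial_\theta H_\theta)\exp(sH_\theta)$.

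Finally I would integrate from $0$ to $1$, obtaining
\begin{equation*}
    G(1) = \int_0^1 \exp(-sH_\theta)(\partial_\theta H_\theta)\exp(sH_\theta)\,ds,
\end{equation*}
then left-multiply by $\exp(H_\theta)$ and substitute $t = 1-s$ to get the claimed formula. The only potential subtlety is the swap of the two partial derivatives and the differentiation of the exponential series termwise; I would dispatch this by working with the power series $\exp(sH_\theta) = \sum_k (sH_\theta)^k/k!$, which converges absolutely and uniformly in operator norm on compact $(s,\theta)$-sets, so termwise differentiation in both variables (and hence interchange of $\partial_s$ and $\partial_\theta$) is legitimate. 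I do not expect any serious obstacle; the main thing to be careful about is to avoid the naive identity $\partial_\theta \exp(H_\theta) = (\partial_\theta H_\theta)\exp(H_\theta)$, which fails when $H_\theta$ and $\partial_\theta H_\theta$ do not commute—and is precisely why the integral representation is needed.
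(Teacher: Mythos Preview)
Your proof is correct and is one of the standard derivations of this identity. Note, however, that the paper does not actually prove this lemma: it is stated with a citation to Wilcox~\cite{wilcox1967exponential} and used as a black box in the subsequent bounds on the free-energy derivatives. So there is no ``paper's own proof'' to compare against; your argument simply supplies what the paper takes for granted.
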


\begin{lemma}[Theorem IV.2.5 of \cite{bhatia2013matrix}]\label{lem:sing_val_majorization}
    Given a matrix $A \in \mathbb{C}^{n \times n}$, let $s(A) \in \mathbb{R}^n$ denote the vector of singular values of $A$ in descending order. Then for any two matrices $A,B \in \mathbb{C}^{n \times n}$ and any $r>0$,
    \begin{equation}
        s(A)^r s(B)^r \succ_w s(AB)^r,
    \end{equation}
     where $ \succ_w$ indicates that $s(AB)$ is ``weakly majorized'' (or dominated) by $s(A)s(B)$, i.e.,
     \begin{equation}
         \sum_{i=1}^k s(A)^r_i s(B)^r_{i} \geq \sum_{i=1}^k s(AB)^r_i \; \; \; \text{ for all } k\in [n].
     \end{equation}
\end{lemma}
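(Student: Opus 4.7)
The plan is to deduce the statement from the multiplicative Horn--Weyl inequality for singular values, combined with a standard convexity argument that converts log-majorization into weak majorization of $r$-th powers. The proof is purely matrix-analytic and does not interact with any other ingredient of the paper, so I would treat it as a stand-alone two-step derivation.

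First, I would establish the multiplicative majorization $\prod_{i=1}^k s_i(AB) \leq \prod_{i=1}^k s_i(A)\, s_i(B)$ for every $k \in [n]$. The cleanest route is via the $k$-th exterior power (compound matrix) $\Lambda^k M$, using the two facts $\|\Lambda^k M\|_{\mathrm{op}} = \prod_{i=1}^k s_i(M)$ and the functorial identity $\Lambda^k(AB) = (\Lambda^k A)(\Lambda^k B)$. Submultiplicativity of the operator norm then gives
\begin{equation*}
\prod_{i=1}^k s_i(AB) \;=\; \|\Lambda^k(AB)\|_{\mathrm{op}} \;\leq\; \|\Lambda^k A\|_{\mathrm{op}}\, \|\Lambda^k B\|_{\mathrm{op}} \;=\; \prod_{i=1}^k s_i(A)\, s_i(B).
\end{equation*}
Setting $w_i = \log s_i(AB)$ and $u_i = \log s_i(A) + \log s_i(B)$ (both sequences decreasing, since the $s_i(A)$ and $s_i(B)$ are), this rearranges to the additive weak majorization $w \prec_w u$.

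The second step upgrades this log-majorization to the claimed $r$-th-power inequality. I would invoke the standard implication: if $w \prec_w u$ with both sequences non-increasing in $\mathbb{R}$, then $\phi(w) \prec_w \phi(u)$ for every non-decreasing convex $\phi : \mathbb{R} \to \mathbb{R}$. Applying this to $\phi(t) = e^{rt}$, which is non-decreasing and convex for every $r > 0$, yields
\begin{equation*}
\sum_{i=1}^k s_i(AB)^r \;\leq\; \sum_{i=1}^k s_i(A)^r\, s_i(B)^r \qquad \text{for every } k \in [n],
\end{equation*}
which is the stated lemma.

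The only nontrivial ingredient is the ``$w \prec_w u$ plus non-decreasing convex $\phi$ implies $\phi(w) \prec_w \phi(u)$'' implication, and this is the main obstacle in the sense that it is the one step that is not a one-line computation. The standard argument inflates $w$ to a sequence $\tilde w \geq w$ that is fully majorized by $u$ (raise the smallest entries of $w$ until the partial sums match $u$), applies Schur-convexity of $x \mapsto \sum_i \phi(x_i)$ on the full-majorization side to get $\sum_i \phi(\tilde w_i) \leq \sum_i \phi(u_i)$, and then uses monotonicity of $\phi$ together with $w_i \leq \tilde w_i$ to pass back to $w$. Since the authors cite Bhatia's textbook for the packaged result, in practice I would simply invoke that reference rather than reproduce the inflation argument in full.
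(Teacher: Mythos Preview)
Your proof sketch is correct and follows the standard route (Horn--Weyl via exterior powers, then log-majorization upgraded through a non-decreasing convex function). Note, however, that the paper does not give its own proof of this lemma at all: it is stated as a citation to Bhatia's textbook (Theorem~IV.2.5 there) and used as a black box in the derivative bounds for the free energy. So there is no in-paper proof to compare against; what you have written is essentially the argument behind the cited reference, and the authors' intent is precisely that one invoke Bhatia rather than reproduce it.
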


\cite{chatterjee2005simple} offers a general approach to proving such equivalences of random models, recapitulated in \Cref{thm:invariance} below. Before stating this theorem, we define the quantities $\lambda_r$ which bound the magnitude of influence of any given variable on a function or function class.
\begin{definition}[Definition 1.1 in \cite{chatterjee2005simple}] \label{def:lambda_bound}
    For any open interval $I$ containing 0, any positive integer $n$, any function $f :I^n \to \mathbb{R}$ which is three times differentiable in each coordinate, and $1 \leq r \leq 3$, let 
    \begin{equation}
        \lambda_r(f) \coloneqq \sup \left\{| \partial_i^p f(\vx)|^{r/p} : 1 \leq i \leq n, 1 \leq p \leq r, \vx \in I^n \right\},
    \end{equation}
    where $\partial_i^p$ denotes $p$-fold differentiation with respect to the $i$-th coordinate. For a collection $\mathcal{F}$ of such functions, define $\lambda_r(\mathcal{F}) \coloneqq \sup_{f\in \mathcal{F}} \lambda_r(f)$.
\end{definition}

\begin{theorem}[Adapted from Theorem 1.1 of \cite{chatterjee2005simple}] \label{thm:invariance}
    Let $\vx = (x_1, \dots, x_n) \in I^n$ and $\vy = (y_1, \dots, y_n) \in I^n$ be two vectors of independent random variables whose first and second moments agree: $\mathbb{E}[x_i]=\mathbb{E}[y_i]$ and $\mathbb{E}[x_i^2]=\mathbb{E}[y_i^2]$ for all $i$. Let $f:I^n \to \mathbb{R}$ be a three times differentiable function. Then for any $K>0$,
    \begin{equation}
        \begin{split}
            |\mathbb{E}[f(\vx)] - \mathbb{E}[f(\vy)]| \leq &\lambda_2(f)\sum_{i=1}^n \left[ \mathbb{E}[\vx_i^2 ;  |x_i|  > K ] + \mathbb{E}[\vy_i^2 ;  |y_i|  > K ] \right] \\ 
            &+ \lambda_3(f)\sum_{i=1}^n \left[ \mathbb{E}[|\vx_i|^3 ;  |x_i| \leq K ] + \mathbb{E}[|\vy_i|^3 ;  |y_i|  \leq K ] \right],
        \end{split}
    \end{equation}
    where $\lambda_r(f)$ is as in \Cref{def:lambda_bound}. 
\end{theorem}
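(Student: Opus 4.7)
The plan is to carry out the classical Lindeberg interpolation (swapping) argument. Define intermediate vectors $\vz^{(0)} = \vy$ and $\vz^{(n)} = \vx$, where for $1 \le i \le n$ the vector $\vz^{(i)}$ is obtained from $\vz^{(i-1)}$ by replacing the $i$-th coordinate $y_i$ with $x_i$. Telescoping gives
\[
\mathbb{E}[f(\vx)] - \mathbb{E}[f(\vy)] = \sum_{i=1}^n \mathbb{E}\bigl[f(\vz^{(i)}) - f(\vz^{(i-1)})\bigr],
\]
so it suffices to control each swap summand. For each $i$, let $\vw^{(i)}$ be the vector that agrees with $\vz^{(i)}$ and $\vz^{(i-1)}$ in every coordinate except the $i$-th, which is set to $0$. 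By the independence of the entries of $\vx$ and $\vy$, the base point $\vw^{(i)}$ is independent of the pair $(x_i, y_i)$.

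Next I will Taylor expand $f$ to second order in its $i$-th coordinate about $\vw^{(i)}$:
\[
f(\vz^{(i)}) = f(\vw^{(i)}) + \partial_i f(\vw^{(i)})\,x_i + \tfrac{1}{2}\partial_i^2 f(\vw^{(i)})\,x_i^2 + T_i(x_i),
\]
and similarly with $y_i$ in place of $x_i$. Using $\mathbb{E}[x_i] = \mathbb{E}[y_i]$ and $\mathbb{E}[x_i^2] = \mathbb{E}[y_i^2]$ together with independence from $\vw^{(i)}$, the constant, linear, and quadratic contributions cancel in expectation, leaving the pure remainder identity
\[
\mathbb{E}\bigl[f(\vz^{(i)}) - f(\vz^{(i-1)})\bigr] = \mathbb{E}[T_i(x_i)] - \mathbb{E}[T_i(y_i)].
\]
The main analytic step is to bound $|T_i(t)|$ in two complementary ways. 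A third-order Taylor remainder estimate gives $|T_i(t)| \lesssim \lambda_3(f)\,|t|^3$, using the bound $|\partial_i^3 f| \le \lambda_3(f)$ coming from the definition of $\lambda_3$. Separately, applying the triangle inequality to the second-order Taylor remainder of $f(\vz^{(i)}) - f(\vw^{(i)}) - \partial_i f(\vw^{(i)}) t$ together with direct bounding of the explicit quadratic term yields $|T_i(t)| \lesssim \lambda_2(f)\,t^2$, using $|\partial_i^2 f| \le \lambda_2(f)$.

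Finally, I split the expectation according to the threshold $K$: apply the $\lambda_3$ bound on the event $\{|t| \le K\}$ and the $\lambda_2$ bound on $\{|t| > K\}$, giving
\[
|\mathbb{E}[T_i(x_i)]| \lesssim \lambda_3(f)\,\mathbb{E}\bigl[|x_i|^3;\,|x_i| \le K\bigr] + \lambda_2(f)\,\mathbb{E}\bigl[x_i^2;\,|x_i| > K\bigr],
\]
and likewise for $y_i$. Summing over $i \in [n]$ yields the claimed inequality. The main obstacle is bookkeeping: one must carefully verify that both Taylor bounds on $T_i$ hold with the precise constants asserted in the theorem (the excerpt writes them without explicit numerical prefactors), and one must be careful that the $\lambda_2$-bound on $|T_i|$ genuinely does not need to invoke $|\partial_i f|$ in a way that inflates the constant---this is achieved by separately bounding the second-order Taylor remainder and the explicit quadratic coefficient, each contributing at most $\tfrac{1}{2}\lambda_2(f)\,t^2$. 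Beyond this constant-tracking, every step is a routine application of Taylor's theorem and the independence structure of the entries.
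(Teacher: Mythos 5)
Your proof is correct, but it takes a different route from the paper: the paper does not reprove this statement at all---it obtains it by specializing Theorem 1.1 of \cite{chatterjee2005simple} to the identity test function $g$, noting that $C_1(g)=1$ and $C_2(g)=1/6<1$ in that formulation---whereas you reconstruct the underlying Lindeberg swapping argument from scratch (which is, in essence, the proof of the cited theorem). Your constant bookkeeping checks out: the third-order Taylor remainder gives $|T_i(t)|\le \tfrac{1}{6}\lambda_3(f)|t|^3\le\lambda_3(f)|t|^3$, and writing $T_i$ as the second-order remainder minus the explicit quadratic term gives $|T_i(t)|\le\tfrac{1}{2}\lambda_2(f)t^2+\tfrac{1}{2}\lambda_2(f)t^2=\lambda_2(f)t^2$, so the prefactors of $1$ in the statement suffice after splitting at $K$. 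Two points are worth making explicit to close the argument: (i) the Taylor expansion about the base point $\vw^{(i)}$ with $i$-th coordinate set to $0$ uses that $0\in I$ and that the segment from $0$ to $x_i$ (or $y_i$) lies in $I$, which is exactly what \Cref{def:lambda_bound} guarantees by requiring $I$ to be an open interval containing $0$; (ii) your assertion that $\vw^{(i)}$ is independent of the pair $(x_i,y_i)$ needs $\vx$ and $\vy$ to be independent of each other, which is without loss of generality since both sides of the inequality depend only on the marginal laws of $\vx$ and $\vy$, so one may pass to independent copies. With these remarks your argument is complete and has the advantage of being self-contained, at the cost of redoing work the paper outsources to the citation.
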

The above theorem directly follows upon setting the function $g:\mathbb{R} \to \mathbb{R}$ in Theorem 1.1 of \cite{chatterjee2005simple} to the identity function and noting that $C_1(g)= 1$ and $C_2(g) = 1/6 < 1$ in that corresponding formulation of the theorem.
    
This theorem will be applied to the set of random variables corresponding to the possible energies of the Hamiltonian $H_{n}$ over the set of states $\mathcal{S}_{\rm all}^n$. 
The function class $\mathcal{F}$ in our setting will be the energies of these states. More formally, let $f_\phi:\mathbb{R}^{m_n} \to \mathbb{R}$ denote the normalized energy of a state as a function of the coefficients $\bm \alpha_n$, defined as
\begin{equation}
    f_\phi(\bm \alpha_n) = \frac{1}{\sqrt{n}} \bra{\phi} H_{n}(\bm \alpha_n) \ket{\phi}.
\end{equation}
The function class $\mathcal{F}_{\mathcal{S}_{\rm all}^n}$ contains all such possible functions given the set of states $\mathcal{S}_{\rm all}^n$:
\begin{equation}
    \mathcal{F}_{\mathcal{S}_{\rm all}^n} \coloneqq \{ f_\phi : \ket{\phi} \in \mathcal{S}_{\rm all}^n \}.
\end{equation}

In order to prove \Cref{theorem:equivalence}, we first prove \Cref{thm:max_invariance}, using the result of \Cref{thm:invariance} on the free energy.

\begin{theorem}[adapted from Theorem 1.4 of \cite{chatterjee2005simple}] \label{thm:max_invariance}
Given the set of quantum states $\mathcal{S}_{\rm all}^n$ on $n$ qubits, denote $U=\max_{f \in \mathcal{F}_{\mathcal{S}_{\rm all}^n}} f(\bm \alpha_n) $ and $V=\max_{f \in \mathcal{F}_{\mathcal{S}_{\rm all}^n}} f(\bm \alpha'_n)$ as functions of random variables $\bm \alpha_n$ and $\bm \alpha'_n$. Then for any $K \geq 0$ and $\beta \geq 1$,
\begin{equation}
\begin{split}
    \left| \mathbb{E} U - \mathbb{E} V \right| \leq& 2 \beta^{-1} \log(2^n)   \\
    &+ 2 \beta \frac{1}{n m_n} \sum_{i \in [m_n]}\left[ \mathbb{E}[\alpha_{n,i}^2 ;  |\alpha_{n,i}|  > K ] + \mathbb{E}[(\alpha'_{n,i})^2 ;  |\alpha'_{n,i}|  > K ] \right]\\
    &+ 7 \beta^2 \left(\frac{1}{n m_n}\right)^{3/2} \sum_{i \in [m_n]}\left[ \mathbb{E}[|\alpha_{n,i}|^3 ;  |\alpha_{n,i}| \leq K ] + \mathbb{E}[|\alpha'_{n,i}|^3 ;  |\alpha'_{n,i}|  \leq K ] \right].
\end{split}
\end{equation}
\end{theorem}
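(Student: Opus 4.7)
The plan is to interpolate between the two disorder distributions via a smooth surrogate of the maximum, following the Lindeberg-type strategy of \Cref{thm:invariance}. Because the index set $\mathcal{S}_{\rm all}^n$ is uncountable, the classical log-sum-exp softmax used in \cite{chatterjee2005simple} must be replaced by its matrix analogue
\begin{equation*}
G(\bm\alpha_n) \coloneqq \frac{1}{\beta}\log \Tr \exp\!\left(\beta H_n(\bm\alpha_n)/\sqrt{n}\right).
\end{equation*}
Retaining only the top eigenvector of $H_n/\sqrt{n}$ gives $G \geq U$, while bounding the trace by $2^n$ copies of the largest eigenvalue gives $G \leq U + \beta^{-1}\log 2^n$, and analogously for $V$ and $G' \coloneqq G(\bm\alpha'_n)$. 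The triangle inequality therefore reduces the theorem to the bound
\begin{equation*}
\left|\mathbb{E}\,G(\bm\alpha_n) - \mathbb{E}\,G(\bm\alpha'_n)\right| \leq \text{(the second and third summands in the theorem)},
\end{equation*}
which is exactly \Cref{thm:invariance} applied to $G$ viewed as a function on $I^{m_n}$, provided we can control its Lindeberg constants $\lambda_2(G)$ and $\lambda_3(G)$ from \Cref{def:lambda_bound}.

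The core computation is bounding $|\partial^p_{\alpha_{n,i}} G|$ for $p=1,2,3$. Writing $A\coloneqq \beta H_n/\sqrt{n}$ and $M_i \coloneqq \beta H_n^{(i)}/\sqrt{n m_n}$ and iterating \Cref{lem:deriv_of_exp}, every such derivative is a finite sum of normalized Duhamel-type traces
\begin{equation*}
\frac{1}{\Tr[e^A]}\Tr\!\left[e^{a_0 A} M_i\, e^{a_1 A} M_i \cdots e^{a_{k-1}A} M_i\, e^{a_k A}\right], \quad \sum_{j=0}^k a_j = 1,\ a_j \geq 0,\ k \leq p,
\end{equation*}
plus products of lower-order Gibbs expectations arising from the quotient rule. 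The key inequality is that each such trace is bounded by $\|M_i\|^k\,\Tr[e^A]$, a consequence of H\"older's inequality for Schatten $1/a_j$-norms combined with $\|e^{a A}\|_{1/a} = (\Tr e^A)^a$; in singular-value form this is exactly the weak majorization of \Cref{lem:sing_val_majorization}. Since $\|H_n^{(i)}\|\leq 1$ and $\beta \geq 1$, straightforward bookkeeping yields
\begin{equation*}
|\partial_i G| \leq (n m_n)^{-1/2},\qquad |\partial_i^2 G| \leq 2\beta\,(n m_n)^{-1},\qquad |\partial_i^3 G| \leq 6\beta^2\,(n m_n)^{-3/2},
\end{equation*}
so that $\lambda_2(G) \leq 2\beta/(n m_n)$ and $\lambda_3(G) \leq 7\beta^2/(n m_n)^{3/2}$. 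Substituting into \Cref{thm:invariance} produces the second and third summands of the theorem, and the two applications of the sandwich $U \leq G \leq U + \beta^{-1}\log 2^n$ produce the first $2\beta^{-1}\log 2^n$ summand.

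The principal obstacle is the third-derivative estimate. Naively bounding $|\Tr[e^{a_0 A} M_i e^{a_1 A} M_i e^{a_2 A} M_i e^{a_3 A}]|$ by the operator norm of each factor discards the $\Tr[e^A]$ normalization and introduces a spurious factor of $2^n e^{\|A\|}$, which would render the resulting estimate useless. The Schatten--H\"older inequality (equivalently, the weak majorization of \Cref{lem:sing_val_majorization}) is precisely what restores the correct trace-normalized scaling and, after the $1/\beta$ prefactor in $G$ is absorbed, converts what would be a $\beta^3$ dependence into the $\beta^2$ appearing in the theorem. Once this inequality is in hand, the remaining work---assembling Duhamel terms via the quotient rule, collecting constants, and substituting into \Cref{thm:invariance}---is routine bookkeeping.
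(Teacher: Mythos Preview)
Your proposal is correct and follows essentially the same route as the paper: the matrix log-trace softmax $G=F_\beta$ as surrogate for the max, the sandwich $U\le G\le U+\beta^{-1}\log 2^n$, and then \Cref{thm:invariance} together with the Duhamel expansion (\Cref{lem:deriv_of_exp}) and the singular-value majorization (\Cref{lem:sing_val_majorization}) to bound $\lambda_2,\lambda_3$; this is exactly the content of the paper's proof and its \Cref{lem:free_energy_constants}. The only discrepancy is cosmetic---the paper counts seven Duhamel-type terms in $\partial_i^3 G$ rather than your six---but since you ultimately quote $\lambda_3(G)\le 7\beta^2(nm_n)^{-3/2}$ the final statement is unaffected.
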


\begin{proof}[Proof of \Cref{thm:max_invariance}]
    The proof of this statement follows from \cite{chatterjee2005simple} replacing sums with traces.  For completeness, we replicate the proof and make the relevant changes below. The one major distinction is that the non-commuting nature of the Hamiltonians requires a re-derivation of the bounds for $\lambda_r$ for the free energy function. To keep this proof self-contained, we bound those quantities later in \Cref{lem:free_energy_constants}.
    
    For $\beta \geq 1$, we have that
    \begin{equation}
        \begin{split}
            \max_{\ket{\phi} \in \mathcal{S}_{\rm all}^n} \frac{1}{\sqrt{n}} \bra{\phi} H_n(\bm \alpha_n)\ket{\phi}  &= \beta^{-1} \log \left[ \exp( \beta \max_{\ket{\phi} \in \mathcal{S}_{\rm all}^n} f_\phi(\bm \alpha_n)) \right] \\
            &\leq  \beta^{-1} \log \Tr[ \exp(\frac{1}{\sqrt{ n}}\beta H_n(\bm \alpha_n)) ] \\
            & \leq \beta^{-1} \log\left[ 2^n \exp( \beta \max_{\ket{\phi} \in \mathcal{S}_{\rm all}^n} f_\phi(\bm \alpha_n) ) \right].
        \end{split}
    \end{equation}
    The quantity in the second line $F_\beta(\bm \alpha_n) \coloneqq \beta^{-1} \log  \Tr[ \exp(\frac{1}{\sqrt{ n}}\beta H_n(\bm \alpha_n)) ] $ is the free energy. Denoting $F(\bm \alpha_n) \coloneqq \max_{\ket{\phi} \in \mathcal{S}_{\rm all}^n} f_\phi(\bm \alpha_n)$, we have that  
    \begin{equation}
        \left| F(\bm \alpha_n) - F_\beta(\bm \alpha_n) \right| \leq \beta^{-1} \log(2^n).
    \end{equation}
    The above implies that
    \begin{equation}\label{eq:153}
        \left| \mathbb{E} U - \mathbb{E} V \right| \leq 2 \beta^{-1} \log(2^n) + \left| \E F_\beta(\bm \alpha_n) - \E F_\beta(\bm \alpha_n') \right|.
    \end{equation}

    We bound the above by using \Cref{thm:invariance} on the free energy. Using the bounds in \Cref{lem:free_energy_constants} below for $\lambda_2$ and $\lambda_3$ then yields
    \begin{equation}
    \begin{split}
        \left| \mathbb{E} U - \mathbb{E} V \right| \leq& 2 \beta^{-1} \log(2^n)   \\
        &+ 2 \beta \frac{1}{n m_n} \sum_{i \in [m_n]}\left[ \mathbb{E}[\alpha_{n,i}^2 ;  |\alpha_{n,i}|  > K ] + \mathbb{E}[(\alpha'_{n,i})^2 ;  |\alpha'_{n,i}|  > K ] \right]\\
        &+ 7 \beta^2 \left(\frac{1}{n m_n}\right)^{3/2} \sum_{i \in [m_n]}\left[ \mathbb{E}[|\alpha_{n,i}|^3 ;  |\alpha_{n,i}| \leq K ] + \mathbb{E}[|\alpha'_{n,i}|^3 ;  |\alpha'_{n,i}|  \leq K ] \right],
    \end{split}
    \end{equation}
    which is the desired result.
\end{proof}

\begin{lemma}\label{lem:free_energy_constants}
    Given $\beta \geq 1$ and the free energy $F_\beta(\bm \alpha) \coloneqq \beta^{-1} \log  \Tr[ \exp(\frac{1}{\sqrt{ n}}\beta H_n(\bm \alpha)) ]$ corresponding to the random spin glass Hamiltonian, parameters $\lambda_r(F_\beta)$ as defined in \Cref{def:lambda_bound} are bounded by
    \begin{equation}
        \begin{split}
            \lambda_1(F_\beta) \leq \left(\frac{1}{n m_n}\right)^{1/2}, \quad \quad
            \lambda_2(F_\beta) \leq 2\beta \left(\frac{1}{n m_n}\right), \quad \quad
            \lambda_3(F_\beta) \leq 7\beta^2 \left(\frac{1}{n m_n}\right)^{3/2}.
        \end{split}
    \end{equation}
\end{lemma}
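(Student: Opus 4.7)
Set $A(\bm\alpha)\coloneqq\tfrac{\beta}{\sqrt{n}}H_n(\bm\alpha)$, so that $F_\beta=\beta^{-1}\log Z$ with $Z\coloneqq\Tr[\exp(A)]$, and write $B_i\coloneqq H_n^{(i)}$, so that $\partial_{i}A=\tfrac{\beta}{\sqrt{nm_n}}B_i$ with $\|B_i\|\le 1$. The plan is to compute $\partial_i^p F_\beta$ for $p\in\{1,2,3\}$ by iterating \Cref{lem:deriv_of_exp}, and to bound the resulting ``time-ordered'' trace expressions by the noncommutative matrix H\"older inequality (a standard consequence of \Cref{lem:sing_val_majorization}), namely
\begin{equation*}
|\Tr[e^{t_1 A}C_1 e^{t_2 A}C_2\cdots e^{t_k A}C_k]|\le\Tr[e^A]\prod_{j=1}^k\|C_j\|\qquad\text{whenever }t_j\ge 0,\ \sum_j t_j=1.
\end{equation*}
Each differentiation contributes one factor of $\beta/\sqrt{nm_n}$ from $\partial_i A$, which after cancellation of the overall $\beta^{-1}$ from the logarithm produces the advertised scaling $\beta^{p-1}(nm_n)^{-p/2}$, modulo an $O(1)$ combinatorial constant that I must identify in each case.

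For $p=1$, \Cref{lem:deriv_of_exp} combined with cyclicity of trace gives $\partial_i Z=\tfrac{\beta}{\sqrt{nm_n}}\Tr[e^A B_i]$, hence
\begin{equation*}
\partial_i F_\beta=\frac{1}{\sqrt{nm_n}}\cdot\frac{\Tr[e^A B_i]}{\Tr[e^A]}.
\end{equation*}
The second factor is a Gibbs expectation of a Hermitian operator and is bounded by $\|B_i\|\le 1$. For $p=2$, differentiating this identity once more produces two terms, one from the numerator and one from the denominator:
\begin{equation*}
\partial_i^2 F_\beta=\frac{\beta}{nm_n}\!\left[\frac{\int_0^1\!\Tr[e^{tA}B_i e^{(1-t)A}B_i]\,dt}{\Tr[e^A]}-\left(\frac{\Tr[e^A B_i]}{\Tr[e^A]}\right)^{\!2}\right].
\end{equation*}
The first summand is controlled by matrix H\"older pointwise in $t$ (with exponents $1/t,1/(1-t),\infty,\infty$), the second trivially, each by $\|B_i\|^2$; summing gives $|\partial_i^2 F_\beta|\le 2\beta/(nm_n)$.

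For $p=3$, differentiating once more and again invoking \Cref{lem:deriv_of_exp} produces three-time trace integrals $\iint\Tr[e^{t_1 A}B_i e^{t_2 A}B_i e^{t_3 A}B_i]\,dt$ on the $2$-simplex $\{t_k\ge 0,\sum_k t_k=1\}$, plus cross terms which are products of the lower order ratios already controlled. Concretely, $\partial_i^3 F_\beta$ organizes as a $Z'''/Z$ piece, a $-3(Z''/Z)(Z'/Z)$ piece, and a $2(Z'/Z)^3$ piece; matrix H\"older bounds each three-time correlator pointwise by $\|B_i\|^3\Tr[e^A]$, and the $2+3+2$ tally from the iterated product rule yields $|\partial_i^3 F_\beta|\le 7\beta^2(nm_n)^{-3/2}$. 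The claimed bounds on $\lambda_2(F_\beta)$ and $\lambda_3(F_\beta)$ then follow from \Cref{def:lambda_bound} once I check that, for $\beta\ge 1$, the highest order derivative dominates the powers $|\partial_i F_\beta|^p$ and $|\partial_i^2 F_\beta|^{p/2}$ of the lower order ones.

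The main technical obstacle is the third derivative: noncommutativity of $A$ and $B_i$ prevents a direct reduction to scalar cumulants, so I must work uniformly on the $2$-simplex of time variables and respect the operator ordering, relying on the matrix H\"older inequality rather than any elementary pointwise argument; the combinatorial bookkeeping recovering the constant $7$ must also be done carefully. Once these two ingredients---the iterated derivative formula from \Cref{lem:deriv_of_exp} and the H\"older-type bound displayed above---are in place, the three bounds on $\lambda_1,\lambda_2,\lambda_3$ follow by a clean tracking of the $\beta/\sqrt{nm_n}$ factors produced at each step of the differentiation.
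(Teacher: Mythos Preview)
Your proposal is correct and follows essentially the same approach as the paper: iterate \Cref{lem:deriv_of_exp} to expand $\partial_i^p F_\beta$ into time-ordered trace correlators, bound each correlator by $\Tr[e^A]\prod_j\|B_i\|$ via the singular-value majorization of \Cref{lem:sing_val_majorization}, and track the $\beta/\sqrt{nm_n}$ factors. The only cosmetic difference is that the paper writes out the third derivative term-by-term (five terms with coefficients $1,1,1,2,2$) while you package it through the cumulant identity $(\log Z)'''=Z'''/Z-3(Z''/Z)(Z'/Z)+2(Z'/Z)^3$, arriving at the same constant $2+3+2=7$; your explicit check that for $\beta\ge1$ the top-order derivative bound dominates the fractional powers of the lower-order ones is a detail the paper leaves implicit.
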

\begin{proof}
    Throughout this, we let $Z_\beta$ denote the partition function defined as
    \begin{equation}
        Z_\beta \coloneqq \Tr[ \exp\left( \frac{\beta}{\sqrt{n}} H_n(\bm \alpha_n) \right) ].
    \end{equation}
    For convenience, we will sometimes drop the subscript and denote disorder terms as just $\bm \alpha$. Using \Cref{lem:deriv_of_exp} and calculating the first derivative of the free energy $F_\beta(\bm \alpha)$ with respect to parameter $\alpha_{n,i}$, we have
    \begin{equation} 
        \begin{split}
            \partial_{\alpha_{n,i}} F_\beta(\bm \alpha) &= \frac{1}{\sqrt{nm_n}  Z_\beta} \int_0^1\Tr[ \exp\left( t \frac{\beta}{\sqrt{n}} H_n \right) H_n^{(i)} \exp\left( (1-t) \frac{\beta}{\sqrt{n}} H_n \right) ] \; dt\\
            &=  \frac{1}{\sqrt{nm_n}  Z_\beta} \Tr[ H_n^{(i)} \exp\left( \frac{\beta}{\sqrt{n}} H_n \right) ].
        \end{split}
    \end{equation}
    The second step above uses the cyclic property of the trace. Applying \Cref{lem:sing_val_majorization} and noting that the singular values of $H_n^{(i)}$ are all bounded by $1$ by assumption, we have
    \begin{equation} \label{eq:post_majorization}
        \begin{split}
            \left| \partial_{\alpha_{n,i}} F_\beta(\bm \alpha) \right|
            & \leq \frac{1}{\sqrt{nm_n}  Z_\beta} \Tr[  \exp\left( \frac{\beta}{\sqrt{n}} H_n \right) ] \\
            & = \frac{1}{\sqrt{nm_n} }.
        \end{split}
    \end{equation}
    Similarly, calculating the second derivative using \Cref{lem:deriv_of_exp}:
    \begin{equation} \label{eq:second_deriv_free_energy}
        \begin{split}
            \partial_{\alpha_{n,i}}^2 F_\beta(\bm \alpha) &= \frac{1}{\sqrt{nm_n}  Z_\beta}  \left( \partial_{\alpha_{n,i}}  \Tr[ H_n^{(i)} \exp\left( \frac{\beta}{\sqrt{n}} H_n \right) ] - \frac{1}{Z_\beta} \Tr[ H_n^{(i)} \exp\left( \frac{\beta}{\sqrt{n}} H_n \right) ] \partial_{\alpha_{n,i}} Z_\beta \right) \\
            &= \underbrace{\frac{1}{\sqrt{nm_n}  Z_\beta}  \Tr[ H_n^{(i)} \partial_{\alpha_{n,i}} \exp\left( \frac{\beta}{\sqrt{n}} H_n \right) ]}_{(I)} - \underbrace{\frac{1}{Z_\beta^2} \frac{\beta}{n m_n} \Tr[ H_n^{(i)} \exp\left( \frac{\beta}{\sqrt{n}} H_n \right) ]^2}_{(II)},
        \end{split}
    \end{equation}
    where $\partial_{\alpha_{n,i}} \exp\left( \frac{\beta}{\sqrt{n}} H_n \right) = \frac{\beta}{\sqrt{nm_n} } \int_0^1 \exp\left( \frac{\beta t}{\sqrt{n}} H_n \right) H_n^{(i)} \exp\left( \frac{\beta(1-t)}{\sqrt{n}} H_n \right)  \; dt $ from applying \Cref{lem:deriv_of_exp}. We use a majorization argument to bound the magnitude of the two terms in the above. For the first term, we use \Cref{lem:sing_val_majorization} and denote $s_k(A)$ as the $k$-th largest singular value of matrix $A$. Since $s_k(H_n^{(i)}) \leq 1$, we have that:
    \begin{equation}
        \begin{split}
            |(I)| &\leq \biggl| \frac{\beta}{n m_n Z_\beta} \int_0^1  \Tr[ H_n^{(i)}   \exp\left( \frac{\beta t}{\sqrt{n}} H_n \right) H_n^{(i)} \exp\left( \frac{\beta(1-t)}{\sqrt{n}} H_n \right)   ] \; dt \biggr| \\
            &\leq \left| \frac{\beta}{n m_n Z_\beta} \int_0^1 \left[ \sum_{k=1}^{2^n}   s_k\left( \exp\left( \frac{\beta t}{\sqrt{n}} H_n \right) \right)  s_k\left(\exp\left( \frac{\beta(1-t)}{\sqrt{n}} H_n \right) \right)  \right] \; dt \right| \\
            &= \frac{\beta}{n m_n}.
        \end{split}
    \end{equation}
    In the above we used the fact that
    \begin{equation}
        Z_\beta = \sum_{k=1}^{2^n}   s_k\left( \exp\left( \frac{\beta t}{\sqrt{n}} H_n \right) \right)  s_k\left(\exp\left( \frac{\beta(1-t)}{\sqrt{n}} H_n \right) \right)
    \end{equation}
    for any $t$. A similar argument shows that the second term in \Cref{eq:second_deriv_free_energy} is also bounded in magnitude by the same quantity. Thus, we have $\left| \partial_{\alpha_{n,i}}^2 F_\beta(\bm \alpha) \right| \leq 2\frac{\beta}{n m_n}$.
    Finally, for the third derivative, we have that:
    \begin{equation}
    \begin{split}
        \partial_{\alpha_{n,i}}^3 F_\beta(\bm \alpha) &= \frac{\beta}{n m_n Z_\beta}    \Tr[ H_n^{(i)} \left(\int_0^1 \partial_{\alpha_{n,i}}\left[\exp\left( \frac{\beta t}{\sqrt{n}} H_n \right) \right] H_n^{(i)} \exp\left( \frac{\beta(1-t)}{\sqrt{n}} H_n \right) \; dt \right) ]\\
        & \quad +     \frac{\beta}{n m_n Z_\beta}    \Tr[ H_n^{(i)} \left(\int_0^1 \exp\left( \frac{\beta t}{\sqrt{n}} H_n \right) H_n^{(i)} \partial_{\alpha_{n,i}}\left[\exp\left( \frac{\beta(1-t)}{\sqrt{n}} H_n \right)\right]  \; dt \right) ]   \\
        & \quad - \frac{\beta}{n m_n Z_\beta^2}  \partial_{\alpha_{n,i}} [Z_\beta]   \Tr[ H_n^{(i)} \left(\int_0^1 \exp\left( \frac{\beta t}{\sqrt{n}} H_n \right)  H_n^{(i)} \exp\left( \frac{\beta(1-t)}{\sqrt{n}} H_n \right) \; dt \right) ]\\
        & \quad + \frac{2\beta}{n m_n Z_\beta^2} \Tr[ H_n^{(i)} \exp\left( \frac{\beta}{\sqrt{n}} H_n \right) ] \partial_{\alpha_{n,i}} \Tr[ H_n^{(i)} \exp\left( \frac{\beta}{\sqrt{n}} H_n \right) ] \\
        & \quad + \frac{2\beta}{n m_n Z_\beta^3} \Tr[ H_n^{(i)} \exp\left( \frac{\beta}{\sqrt{n}} H_n \right) ]^2 \partial_{\alpha_{n,i}} Z_\beta.         
    \end{split}
    \end{equation}
    The same majorization argument applied earlier can be applied to each term in the sum above to show that $\left| \partial_{\alpha_{n,i}}^3 F_\beta(\bm \alpha) \right| \leq \frac{7\beta^2}{n^{3/2} m_n^{3/2}}$. Plugging in the bounds on the derivative magnitudes into $\lambda_r(F_\beta)$ in \Cref{thm:invariance}, we arrive at the final result.
\end{proof}

Finally, we are ready to prove \Cref{theorem:equivalence}.

\begin{proof}[Proof of \Cref{theorem:equivalence}]
    As in the setting of \Cref{thm:max_invariance}, denote $U=\max_{f \in \mathcal{F}_{\mathcal{S}_{\rm all}^n}} f(\bm \alpha_n) $ and $V=\max_{f \in \mathcal{F}_{\mathcal{S}_{\rm all}^n}} f(\bm \alpha'_n)$ to denote the random variables corresponding to the maximum energy under disorder $\bm \alpha_n$ and $\bm \alpha_n'$ respectively. Applying \Cref{thm:max_invariance} with $K = \epsilon \sqrt{m_n/n}$ and $\beta = A n$, and using the bound $\mathbb{E}[|x_i|^3 ;  |x_i| \leq K ] \leq K \mathbb{E}[x_i^2]$, we get 
    \begin{equation} \label{eq:sum_of_terms_in_invariance}
    \begin{split}        
        \left| \mathbb{E} U - \mathbb{E} V \right| &\leq C_1 A^{-1} \\
        & \quad +C_2 \frac{A}{m_n} \sum_{i \in [m_n]} \left( \mathbb{E}[\alpha_{n,i}^2 ;  |\alpha_{n,i}|  > \sqrt{m_n/n} \epsilon ] + \mathbb{E}[(\alpha'_{n,i})^2 ;  |\alpha'_{n,i}|  > \sqrt{m_n/n} \epsilon ] \right) \\
        & \quad + C_3  A^2  \epsilon
    \end{split}
    \end{equation}
    for some positive constants $C_1$, $C_2$, and $C_3$ independent of $n$. 
    Thus, taking the limit of \Cref{eq:sum_of_terms_in_invariance} with $n \to \infty$ and using the boundedness assumption on $\bm \alpha'$ along with the assumption that $m_n = \omega(n)$ so that $\mathbb{E}[\alpha_{n,i}^2 ;  |\alpha_{n,i}|  > \sqrt{m_n/n} \epsilon ] \to 0$, we have that:
    \begin{equation}
        \lim_{n \to \infty} \left| \mathbb{E} U - \mathbb{E} V \right| \leq C_1 A^{-1} + C_3  A^2 \epsilon.
    \end{equation}
    Since $A$ and $\epsilon$ are arbitrary, the result can be shown by properly taking limits of $A$ and $\epsilon$.
\end{proof}

\section{Statistical properties of entangled states}
\label{sec:properties_entangled_states}

One intuition for why product states can be effective at optimizing random spin glass Hamiltonians is because on average, product states have exponentially larger variance in comparison to completely random states. Similar intuition was given in \cite{Harrow2017extremaleigenvalues}, which studied product state approximations for the ground states of sparse models. Here, we consider the statistical properties of entangled states to more formally describe this intuition and detail the technical challenges in working with entangled states.

\paragraph{Variance of random states}
We first show that states chosen uniformly at random from the Haar measure on the set of pure states have variance on average exponentially smaller than that of product states.

\begin{lemma}[Variance of random states] \label{lem:variance_random_states}
    Let $\mathcal{D}_{\mathcal{S}_{\rm all}^n}$ denote the uniform distribution over the set of all possible states $\mathcal{S}_{\rm all}^n$ on $n$ qubits. The expected variance of the $p$-spin glass Hamiltonian $H_{n,p}$ over this distribution is
    \begin{equation}
        \E_{\ket{\phi} \sim \mathcal{D}_{\mathcal{S}_{\rm all}^n}} \E \left[ \left(\bra{\phi} H_{n,p} \ket{\phi}\right)^2\right] = \frac{3^p}{2^n+1}.
    \end{equation}
\end{lemma}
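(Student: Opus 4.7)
The plan is to swap the order of the two expectations (over the Gaussian disorder $\bm{\alpha}$ and over the Haar-random state $\ket{\phi}$) and then use a standard second-moment Haar integration formula together with a counting argument.

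First I would take expectation over the i.i.d.\ standard Gaussian coefficients $\alpha[\indi;\inda]$. Since these are mean-zero with unit variance and independent across distinct $(\indi,\inda)$, only the diagonal terms of the double sum survive, yielding
\begin{equation*}
    \E\left[\left(\bra{\phi} H_{n,p} \ket{\phi}\right)^2\right] = \frac{1}{\binom{n}{p}} \sum_{\indi \in \mathcal{I}_p^n} \sum_{\inda \in \{1,2,3\}^p} \left(\bra{\phi} P_{\indi}^{\inda} \ket{\phi}\right)^2.
\end{equation*}
This is the same identity underlying \Cref{lem:covariance_product} but specialized to $\ket{\psi}=\ket{\phi}$, so no product structure is being used yet.

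Next I would average over $\ket{\phi} \sim \mathcal{D}_{\mathcal{S}_{\rm all}^n}$. The key input is the standard Haar second-moment identity on the unit sphere of $\mathbb{C}^d$ with $d=2^n$:
\begin{equation*}
    \E_{\ket{\phi}} \left[\bra{\phi} A \ket{\phi}\,\bra{\phi} B \ket{\phi}\right] = \frac{\Tr(A)\Tr(B) + \Tr(AB)}{d(d+1)}.
\end{equation*}
Applying this with $A=B=P_{\indi}^{\inda}$, and using that every $p$-local Pauli $P_{\indi}^{\inda}$ with $p\ge 1$ is traceless while $(P_{\indi}^{\inda})^2 = I$ has trace $d=2^n$, we obtain
\begin{equation*}
    \E_{\ket{\phi}}\left[\left(\bra{\phi} P_{\indi}^{\inda} \ket{\phi}\right)^2\right] = \frac{1}{2^n+1}.
\end{equation*}

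Finally I would plug this into the previous display and count: the outer sum runs over $\binom{n}{p}$ choices of $\indi$ and $3^p$ choices of $\inda$, so the $\binom{n}{p}^{-1}$ prefactor cancels and leaves $3^p/(2^n+1)$, as claimed. No step here is an obstacle, since both the Gaussian integration and the Haar second moment are standard; the only thing to be careful about is invoking the Haar identity with its correct normalization and verifying the tracelessness of $P_{\indi}^{\inda}$, both of which are immediate.
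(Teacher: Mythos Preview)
Your proposal is correct and follows essentially the same route as the paper: first take the disorder expectation to reduce to $\binom{n}{p}^{-1}\sum_{\indi,\inda}(\bra{\phi}P_{\indi}^{\inda}\ket{\phi})^2$, then compute the Haar average of each term using tracelessness and $(P_{\indi}^{\inda})^2=I$. The only cosmetic difference is that the paper phrases the Haar step via the Weingarten formula on the doubled space $\E_U[U^{\otimes 2}\ket{0}\ket{0}\bra{0}\bra{0}(U^\dagger)^{\otimes 2}]=\frac{1}{2\binom{2^n+1}{2}}(I\otimes I+\text{SWAP})$, which is exactly equivalent to the second-moment identity you quote.
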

\begin{proof}
    For a given state $\ket{\phi} \in \mathcal{S}_{\rm all}^n$, we have that its variance is
\begin{equation} 
\begin{split}
    \E\left[ \left(\bra{\phi} H_{n,p} \ket{\phi}\right)^2\right] &= \frac{1}{\binom{n}{p}} \sum_{\indi \in \mathcal{I}_p^n} \sum_{\inda \in \{1,2,3\}^p} \bra{\phi} P_{\indi}^{\inda} \ket{\phi} \bra{\phi} P_{\indi}^{\inda} \ket{\phi} \\
    &= \frac{1}{\binom{n}{p}} \sum_{\indi \in \mathcal{I}_p^n}  \bra{\phi} \bra{\phi} \left[\sum_{\inda \in \{1,2,3\}^p} P_{\indi}^{\inda} \otimes P_{\indi}^{\inda} \right] \ket{\phi}  \ket{\phi} \\
    &= \frac{1}{\binom{n}{p}} \sum_{\indi \in \mathcal{I}_p^n} \sum_{\inda \in \{1,2,3\}^p} \Tr[ \ket{\phi}  \ket{\phi} \bra{\phi} \bra{\phi}  P_{\indi}^{\inda} \otimes P_{\indi}^{\inda} ].
\end{split}
\end{equation}

For a given term above, we can take the average over all possible states by replacing the uniform distribution over states with that of the Haar distribution $\mathcal{U}_{2^n}$ over the $2^n$ dimensional unitary group of matrices:
\begin{equation} \label{eq:intermediate_variance_average_state}
    \E_{\ket{\phi} \sim \mathcal{D}_{\mathcal{S}_{\rm all}^n}} \E\left[ \left(\bra{\phi} H_{n,p} \ket{\phi}\right)^2\right] = \frac{1}{\binom{n}{p}} \sum_{\indi \in \mathcal{I}_p^n} \sum_{\inda \in \{1,2,3\}^p} \Tr\left[ \E_{U \sim \mathcal{U}_{2^n}} \left[ U \otimes U \ket{0}  \ket{0} \bra{0} \bra{0} U^\dagger \otimes U^\dagger \right] P_{\indi}^{\inda} \otimes P_{\indi}^{\inda} \right].
\end{equation}

The matrix $\E_{U \sim \mathcal{U}_{2^n}} \left[ U \otimes U \ket{0}  \ket{0} \bra{0} \bra{0} U^\dagger \otimes U^\dagger \right]$ has a closed form that can be evalauted via the Weingarten calculus \cite{collins2006integration}. Namely, for our given form, we can use Equation 7.186 of \cite{watrous2018theory} to evaluate this matrix as 
\begin{equation}
    \begin{split}
        \E_{U \sim \mathcal{U}_{2^n}} \left[ U \otimes U \ket{0}  \ket{0} \bra{0} \bra{0} U^\dagger \otimes U^\dagger \right] &= \frac{1}{2 {2^n+1 \choose 2}} \left(  I \otimes I + \sum_{a,b=1}^{2^n} E_{a,b} \otimes E_{b,a} \right) ,
    \end{split}
\end{equation}
where $E_{a,b}$ is the indicator matrix equal to one in entry $a,b$ and zero in every other entry. Noting that Pauli matrices are traceless and 
\begin{equation}
    \Tr\left[\left( \sum_{a,b=1}^{2^n} E_{a,b} \otimes E_{b,a} \right) P_{\indi}^{\inda} \otimes P_{\indi}^{\inda} \right] = \Tr[ P_{\indi}^{\inda} P_{\indi}^{\inda} ] = 2^n,
\end{equation}
we therefore have for any Pauli $P_{\indi}^{\inda}$ that
\begin{equation}
    \begin{split}
        \Tr\left[ \E_{U \sim \mathcal{U}_{2^n}} \left[ U \otimes U \ket{0}  \ket{0} \bra{0} \bra{0} U^\dagger \otimes U^\dagger \right] P_{\indi}^{\inda} \otimes P_{\indi}^{\inda} \right] &= \frac{1}{2 {2^n+1 \choose 2}} 2^n = \frac{1}{2^n+1}.
    \end{split}
\end{equation}

Plugging in the above into \Cref{eq:intermediate_variance_average_state}, we arrive at the final result:
\begin{equation} 
    \E_{\ket{\phi} \sim \mathcal{D}_{\mathcal{S}_{\rm all}^n}} \E\left[ \left(\bra{\phi} H_{n,p} \ket{\phi}\right)^2\right] = \frac{3^p}{2^n+1}.
\end{equation}

\end{proof}

\Cref{lem:variance_random_states} shows that the variance of the $p$-spin Hamiltonian decays exponentially fast with the number of qubits on average for random states. In fact, noting that the variance function is $\sqrt{3^p}$ Lipschitz (see proof in \Cref{lem:gaussian_concentration}), the variance is also concentrated around this average with high probability. This can equivalently be stated in terms of Levy's lemma (see Theorem 7.37 of \cite{watrous2018theory}).
\begin{corollary}
    By Levy's lemma, for any $\epsilon>0$, there exists a constant $c(p)$ which depends on $p$ but which is independent of $n$ such that
    \begin{equation}
        \mathbb{P}_{\ket{\psi} \sim \mathcal{D}_{\mathcal{S}_{\rm all}^n}}\left[ \left| \E\left[\left(\bra{\psi} H_{n,p} \ket{\psi}\right)^2 \right]  - \E_{\ket{\phi} \sim \mathcal{D}_{\mathcal{S}_{\rm all}^n}} \E\left[ \left(\bra{\phi} H_{n,p} \ket{\phi}\right)^2\right]  \right|  \geq \epsilon \right] \leq 3 \exp\left( -c(p) \epsilon^2 2^n \right).
    \end{equation}
\end{corollary}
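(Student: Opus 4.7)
The plan is to apply Levy's lemma directly to the function $f : \mathcal{S}_{\rm all}^n \to \mathbb{R}$ defined by
\begin{equation*}
f(\ket{\psi}) \coloneqq \E\!\left[\left(\bra{\psi} H_{n,p} \ket{\psi}\right)^2\right] = \frac{1}{\binom{n}{p}} \sum_{\indi \in \mathcal{I}_p^n}\sum_{\inda \in \{1,2,3\}^p} \bra{\psi} P_{\indi}^{\inda} \ket{\psi}^2,
\end{equation*}
where the equality uses the Gaussian expectation already computed in the proof of \Cref{lem:variance_random_states}. By \Cref{lem:variance_random_states}, the Haar average of $f$ equals $3^p/(2^n+1)$, which is the constant the corollary subtracts off. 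The content of the corollary is therefore a concentration statement, so the only task is to bound the Lipschitz constant of $f$ by a quantity $L(p)$ that depends on $p$ but not on $n$, and then quote Levy's lemma (Theorem 7.37 of \cite{watrous2018theory}) verbatim.

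To estimate $L(p)$, I would identify $\mathbb{C}^{2^n}$ with $\mathbb{R}^{2^{n+1}}$ and work on the real unit sphere. For any fixed Pauli $P$ with $\|P\| \leq 1$, the single-summand function $g_P(\psi) = \langle \psi, P \psi\rangle^2$ has Euclidean gradient $\nabla g_P(\psi) = 4\,\langle \psi, P\psi\rangle\, P \psi$ (treating $P$ as its real symmetric representation), so $\|\nabla g_P(\psi)\| \leq 4 |\langle \psi, P\psi\rangle| \cdot \|P\psi\| \leq 4$ for every unit $\psi$. Applying the triangle inequality to the sum of $\binom{n}{p} \cdot 3^p$ such terms, each rescaled by $\binom{n}{p}^{-1}$, yields
\begin{equation*}
\|\nabla f(\psi)\| \leq \frac{1}{\binom{n}{p}} \sum_{\indi,\inda} \|\nabla g_{P_{\indi}^{\inda}}(\psi)\| \leq 4 \cdot 3^p.
\end{equation*}
Thus $f$ is $L(p)$-Lipschitz on the real unit sphere with $L(p) = 4 \cdot 3^p$, independent of $n$.

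With the Lipschitz bound in hand, Levy's lemma (for any $L$-Lipschitz function on $S^{N-1}$, $\pr[|f - \E f| \geq \epsilon] \leq 3\exp(-C \epsilon^2 N / L^2)$ for an absolute constant $C$) applied with $N = 2^{n+1}$ and $L = 4 \cdot 3^p$ gives
\begin{equation*}
\pr_{\ket{\psi} \sim \mathcal{D}_{\mathcal{S}_{\rm all}^n}}\!\left[\, |f(\ket{\psi}) - \E f(\ket{\psi})| \geq \epsilon \,\right] \leq 3 \exp\!\left(-c(p)\,\epsilon^2\, 2^n\right),
\end{equation*}
with $c(p) = \Omega(3^{-2p})$, which is exactly the claim.

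I expect no real obstacle beyond the small bookkeeping issues of (i) passing from the complex quantum state to its real representation and making sure Levy's lemma is applied in a form whose ambient dimension is $\Theta(2^n)$, and (ii) checking that the factor of $\binom{n}{p}^{-1}$ in the definition of $f$ precisely cancels the $\binom{n}{p}$ summands so that the Lipschitz constant does not grow with $n$. Both amount to routine verification; the cancellation in (ii) is the reason any dependence on $n$ is absorbed entirely into the $2^n$ factor in the tail, as required.
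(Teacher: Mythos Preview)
Your proposal is correct and follows the same approach as the paper: bound the Lipschitz constant of the variance function $f(\ket{\psi})=\E[(\bra{\psi}H_{n,p}\ket{\psi})^2]$ by an $n$-independent quantity and then invoke Levy's lemma (Theorem~7.37 of \cite{watrous2018theory}). The paper merely asserts a Lipschitz constant of $\sqrt{3^p}$ in the text preceding the corollary (with a cross-reference that actually points to the Gaussian concentration lemma rather than to a Lipschitz computation), whereas you give an explicit gradient bound yielding $4\cdot 3^p$; since the corollary only claims the existence of some $c(p)$ independent of $n$, either constant suffices and your argument is in fact more carefully justified than the paper's.
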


As an aside, the exponential decay in variance also holds over any distribution of states that forms a unitary 2-design \cite{harrow2023approximate,gross2007evenly}. Furthermore, this exponential decay in the variance has implications for optimizing random spin glass Hamiltonians, and is related to an untrainability phenomenon in quantum variational optimization where it is shown that relevant quantities of optimization typically decay exponentially for collections of random states \cite{mcclean2018barren,cerezo2021variational,anschuetz2022critical,anschuetz2022quantum}.

\paragraph{Variance as function of purity}

More generally, the variance of the Hamiltonian with respect to a state $\ket{\phi}$ can be written as a function of the average local purity of the state as detailed below. In this form, the relationship between the entanglement properties of the state and the variance are more apparent. Notably, unlike product states, entangled states can have widely ranging values for the variance. This fact, in part, makes it challenging to prove the existence of the limit of the ground state energy for all states.
\begin{proposition}[Variance as function of local purity]
    Given a state $\ket{\phi} \in \mathcal{S}_{\rm all}^n$, the variance of the energy of the random Hamiltonian $H_{n,p}$ for $\ket{\phi}$ is equal to 
    \begin{equation}
        \E\left[ \left(\bra{\phi} H_{n,p} \ket{\phi}\right)^2\right] =  \sum_{k=0}^p (-1)^{p-k}  2^k {p \choose k} A_k,
    \end{equation}
    where $A_k$ is the average purity over all possible subsystems of $k$ qubits. More formally, 
    \begin{equation}
        A_k = {n \choose k}^{-1} \sum_{S \in {[n] \choose k}} \Tr(\rho_S^2),\label{eq:avg_purity}
    \end{equation}
    where $\rho_S = \Tr_{[n] \setminus S}( \ket{\phi}\bra{\phi} )$ is the reduced density matrix of $\ket{\phi}$ for the qubits in $S$ and ${[n] \choose k}$ denotes the set of ordered $k$-tuples drawn without replacement from $[n]$.
\end{proposition}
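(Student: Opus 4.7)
The plan is to mimic the opening step of \Cref{lem:covariance_product} (with $\ket{\psi}=\ket{\phi}$) and then resolve the Pauli sum on each qubit into swap operators, after which an inclusion--exclusion expansion produces the claimed purity sum. First, averaging over the Gaussian coefficients gives
\begin{equation*}
    \E\!\left[\left(\bra{\phi}H_{n,p}\ket{\phi}\right)^{2}\right]
    = \frac{1}{\binom{n}{p}}\sum_{\indi \in \mathcal{I}_p^n}\sum_{\inda\in\{1,2,3\}^p}\bra{\phi}\bra{\phi}\bigl(P_{\indi}^{\inda}\otimes P_{\indi}^{\inda}\bigr)\ket{\phi}\ket{\phi}.
\end{equation*}
Since $P_{\indi}^{\inda}\otimes P_{\indi}^{\inda}$ factorises over qubits and the sum over single-qubit Paulis obeys $\sum_{a\in\{1,2,3\}}\sigma^{a}\otimes\sigma^{a}=2\,\mathrm{SWAP}-I$ (\Cref{eq:sum_pauli_to_swap}), the inner Pauli sum collapses to
\begin{equation*}
    \sum_{\inda}\bra{\phi}\bra{\phi}\bigl(P_{\indi}^{\inda}\otimes P_{\indi}^{\inda}\bigr)\ket{\phi}\ket{\phi}
    = \bra{\phi}\bra{\phi}\prod_{k=1}^{p}\bigl(2\,\mathrm{SWAP}_{i_k}-I_{i_k}\bigr)\ket{\phi}\ket{\phi},
\end{equation*}
where $\mathrm{SWAP}_{i_k}$ acts on the $i_k$-th pair of qubits across the two copies.

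Next I would expand the product by inclusion--exclusion:
\begin{equation*}
    \prod_{k=1}^{p}\bigl(2\,\mathrm{SWAP}_{i_k}-I_{i_k}\bigr)
    = \sum_{S\subseteq\{i_1,\ldots,i_p\}} (-1)^{p-|S|}\,2^{|S|}\prod_{j\in S}\mathrm{SWAP}_{j}.
\end{equation*}
Using the standard swap trick $\bra{\phi}\bra{\phi}\bigl(\prod_{j\in S}\mathrm{SWAP}_{j}\bigr)\ket{\phi}\ket{\phi}=\Tr(\rho_S^{2})$ for the reduced state $\rho_S=\Tr_{[n]\setminus S}\ket{\phi}\bra{\phi}$ (which follows because swapping precisely the $S$-qubits realises the permutation whose expectation on $\ket{\phi}^{\otimes 2}$ computes the 2-Rényi quantity on $S$), we arrive at
\begin{equation*}
    \E\!\left[\left(\bra{\phi}H_{n,p}\ket{\phi}\right)^{2}\right]
    = \frac{1}{\binom{n}{p}}\sum_{\indi \in \mathcal{I}_p^n}\sum_{k=0}^{p}(-1)^{p-k}\,2^{k}\sum_{\substack{S\subseteq\{i_1,\ldots,i_p\}\\ |S|=k}}\Tr(\rho_S^{2}).
\end{equation*}

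Finally I would swap the order of summation and count multiplicities. A fixed $k$-subset $S\subseteq[n]$ is contained in the support $\{i_1,\ldots,i_p\}$ of exactly $\binom{n-k}{p-k}$ ordered $p$-tuples in $\mathcal{I}_p^n$, so
\begin{equation*}
    \E\!\left[\left(\bra{\phi}H_{n,p}\ket{\phi}\right)^{2}\right]
    = \sum_{k=0}^{p}(-1)^{p-k}\,2^{k}\,\frac{\binom{n-k}{p-k}}{\binom{n}{p}}\sum_{S\in\binom{[n]}{k}}\Tr(\rho_S^{2}).
\end{equation*}
A direct factorial manipulation gives $\binom{n-k}{p-k}\binom{n}{k}/\binom{n}{p}=\binom{p}{k}$, so multiplying and dividing by $\binom{n}{k}$ and substituting the definition \Cref{eq:avg_purity} of $A_k$ yields the claimed identity
\begin{equation*}
    \E\!\left[\left(\bra{\phi}H_{n,p}\ket{\phi}\right)^{2}\right]=\sum_{k=0}^{p}(-1)^{p-k}\,2^{k}\binom{p}{k}A_{k}.
\end{equation*}
The only subtle step is the inclusion--exclusion to unpack the product of $(2\,\mathrm{SWAP}-I)$ operators together with the combinatorial identity $\binom{n-k}{p-k}\binom{n}{k}=\binom{p}{k}\binom{n}{p}$; everything else is a direct reuse of the SWAP calculation already established in \Cref{lem:covariance_product}.
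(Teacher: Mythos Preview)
Your proposal is correct and follows essentially the same route as the paper: average over the disorder, rewrite the Pauli sum on each qubit as $2\,\mathrm{SWAP}-I$, expand the product, and identify each term with a subsystem purity via the swap trick. The only cosmetic difference is that you make the multiplicity count explicit via $\binom{n-k}{p-k}\binom{n}{k}=\binom{p}{k}\binom{n}{p}$, whereas the paper compresses this step into a one-line uniformity remark.
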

\begin{proof}
    
For a given state $\ket{\phi} \in \mathcal{S}_{\rm all}^n$, we have that its variance is
\begin{equation} 
\begin{split}
    \E\left[ \left(\bra{\phi} H_{n,p} \ket{\phi}\right)^2\right] &= \frac{1}{\binom{n}{p}} \sum_{\indi \in \mathcal{I}_p^n} \sum_{\inda \in \{1,2,3\}^p} \bra{\phi} P_{\indi}^{\inda} \ket{\phi} \bra{\phi} P_{\indi}^{\inda} \ket{\phi} \\
    &= \frac{1}{\binom{n}{p}} \sum_{\indi \in \mathcal{I}_p^n}  \bra{\phi} \bra{\phi} \left[\sum_{\inda \in \{1,2,3\}^p} P_{\indi}^{\inda} \otimes P_{\indi}^{\inda} \right] \ket{\phi}  \ket{\phi} \\
    &= \frac{1}{\binom{n}{p}} \sum_{\indi \in \mathcal{I}_p^n}  \bra{\phi} \bra{\phi} \left[ \prod_{k=1}^p \left[2 (\operatorname{SWAP}) - I \right]_{\indisub_k, \indisub_k+n} \right] \ket{\phi}  \ket{\phi},
\end{split}
\end{equation}
where in the last line we used \Cref{eq:sum_pauli_to_swap} and use subscripts $\left[2 (\operatorname{SWAP}) - I \right]_{i,j}$ to indicate that the $\operatorname{SWAP}$ operator acts on qubits $i$ and $j$. As the state $\ket{\phi}$ is no longer necessarily a product state, we let $\rho_{S}$ denote its reduced state on the qubits in the set $S \subseteq [n]$. Furthermore, for $k\in [n]$, we let $A_k$ denote the average purity of the state $\ket{\phi}$ over all reduced states supported on $k$ qubits as formally defined in \Cref{eq:avg_purity}. With this notation, we can calculate the variance as
\begin{equation} 
\begin{split}
    \E\left[ \left(\bra{\phi} H_{n,p} \ket{\phi}\right)^2\right] &= \frac{1}{\binom{n}{p}} \sum_{k=0}^p \sum_{\indi \in \mathcal{I}_p^n} (-1)^{p-k} 2^k \sum_{\{ r_1, \dots, r_k \} \in {[p] \choose k}}  \Tr( \rho_{\{\indisub_{r_1}, \dots, \indisub_{r_k} \}}^2) \\
    &=  \sum_{k=0}^p (-1)^{p-k}  2^k {p \choose k} A_k.
\end{split}
\end{equation}
In the above, we use the fact that the sum over $\indi \in \mathcal{I}_p^n$ is uniform over all possible $p$-tuples, and thus $k$-tuples in turn drawn uniformly from the draws of the $p$-tuples are thus also uniform over all possible $k$-tuples.

\end{proof}

\paragraph{Product states maximize variance for large $n$}

The results here show that product states have large variance, though we should caution that product states do not necessarily achieve the maximum variance for the spin glass Hamiltonians $H_{n,p}$. As a simple example, for $n=2$ and $p=2$, the GHZ state $\frac{1}{\sqrt{2}}\left( \ket{00} + \ket{11} \right)$ has higher variance than any product state. Nonetheless, the $n=2$ setting is the only one where such a counter-example can be found for $p=2$. We prove this fact here. Our proof is based on a certain uncertainty relation which bounds the maximum variance via the Lovász theta function of a graph constructed from the Hamiltonian \cite{de2023uncertainty} (see also \cite{xu2023bounding,hastings2022optimizing} for similar constructions). Finding bounds of this form for general $p>2$ can be challenging, though numerically we find that similar bounds seem to hold for $p>2$. Thus, we believe our proof can be extended to spin glass models where $p>2$, but we do not concern ourselves with this strengthening here for simplicity. To present this bound, we first define the anti-commutativity graph of a set of Paulis.

\begin{definition}[Anti-commutativity graph \cite{de2023uncertainty}] \label{def:anti_comm_graph}
    Given a subset $S \subseteq \mathcal{P}_n$ of the $n$-qubit Pauli operators, the anti-commutativity graph $G_S$ of $S$ is a graph containing a node for each Pauli matrix in $S$ ($|S|$ total nodes) and for each pair of nodes $i$ and $j$ associated to Paulis $P_i$ and $P_j$, there is an edge between $i$ and $j$ if and only if $P_i P_j + P_j P_i = 0$ (i.e., $P_i$ and $P_j$ anti-commute).
\end{definition}

Bounds on the variance are given by the Lovász theta function, denoted as $\vartheta(G)$, and defined by the following semi-definite program.
\begin{definition}[Lovász theta function \cite{lovasz1979shannon,de2023uncertainty}] \label{def:lovasz}
    The Lovász theta function $\vartheta(G)$ of a graph $G$ on $n$ nodes is equal to
    \begin{equation}
        \begin{split}
        \vartheta(G) = \max_{\substack{B \in \mathbb{R}^{n \times n} }} & \lambda_{\max}(B)    \\ 
         \text{s.t. } &B = B^\top, B \succeq 0, \\ & B_{ij} = 0 \text{ for edges } i \sim j \text{ in $G$}, \\ & B_{ii} = 1 \text{ for all } i \in \{1,\dots, n\},
        \end{split}
    \end{equation}
    where $\lambda_{\max}(B)$ is the largest eigenvalue of $B$.
\end{definition}
The Lovász theta function also has an equivalent minimization formulation \cite{knuth1993sandwich} over the complement graph 
$\overline{G}$:
\begin{equation}
    \begin{split}
    \vartheta(G) = \min_{\substack{B \in \mathbb{R}^{n \times n} }} & \lambda_{\max}(B),    \\ 
     \text{s.t. } &B = B^\top, \\ & B_{ij} = 1 \text{ for edges } i \sim j \text{ in $\overline{G}$}, \\ & B_{ii} = 1 \text{ for all } i.
    \end{split}
\end{equation}

\begin{lemma}[Maximum variance over set of Paulis \cite{de2023uncertainty}] \label{lem:max_variance_lovasz}
    Given a subset $S \subseteq \mathcal{P}_n$ of the $n$-qubit Pauli operators, it holds that
        \begin{equation}
            \max_{\ket{\phi} \in \mathcal{S}_{\rm all}^n} \sum_{P \in S} \left( \bra{\phi} P \ket{\phi} \right)^2 \leq \vartheta(G_S),
        \end{equation}
        where $\vartheta(G_S)$ is the Lovász theta function of the anti-commutativity graph (see \Cref{def:anti_comm_graph,def:lovasz}).
\end{lemma}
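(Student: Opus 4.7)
The plan is to exhibit, for any state $\ket{\phi}$, a single matrix $A$ that is simultaneously a feasible point of the SDP defining $\vartheta(G_S)$ in \Cref{def:lovasz} and a Loewner upper bound for the rank-one matrix $ww^\top$, where $w_i \coloneqq \bra{\phi}P_i\ket{\phi}$. Once both properties are in hand, the conclusion follows in two lines: $\sum_{P \in S}\bra{\phi}P\ket{\phi}^2 = \|w\|_2^2 \leq \lambda_{\max}(A) \leq \vartheta(G_S)$, and taking the maximum over $\ket{\phi}$ yields the lemma.

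The natural candidate for $A$ is the entrywise real part of the Gram-type matrix $C \in \mathbb{C}^{|S| \times |S|}$ with entries $C_{ij} \coloneqq \bra{\phi}P_i P_j\ket{\phi}$. First I would observe that $C$ is Hermitian PSD by recognizing it as the Gram matrix of the vectors $\{P_i\ket{\phi}\}_i$, from which the real symmetric matrix $A \coloneqq \tfrac12(C + \overline C)$ inherits positive semidefiniteness (the real part of a Hermitian PSD matrix is PSD, via the standard embedding $\mathbb{C}^{|S|} \hookrightarrow \mathbb{R}^{2|S|}$). Feasibility then reduces to two direct computations: $A_{ii} = \bra{\phi}P_i^2\ket{\phi} = 1$ because $P_i^2 = I$, and $A_{ij} = \tfrac12\bra{\phi}\{P_i,P_j\}\ket{\phi} = 0$ whenever $\{i,j\}$ is an edge of $G_S$, by the very definition of the anti-commutativity graph.

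The main step is then to establish $A \succeq ww^\top$. The idea is to recognize $A - ww^\top$ as the symmetrized covariance matrix of the observables $\{P_i\}$ on the state $\ket{\phi}$: its $(i,j)$ entry is $\tfrac12\bra{\phi}\{P_i - w_i I,\, P_j - w_j I\}\ket{\phi}$, and for any real vector $c$ one has $c^\top(A - ww^\top)c = \bra{\phi}X^2\ket{\phi} \geq 0$, where $X \coloneqq \sum_i c_i(P_i - w_i I)$ is Hermitian. Testing this Loewner inequality against the unit vector $w/\|w\|_2$ then yields $\|w\|_2^2 \leq \lambda_{\max}(A)$, and chaining with feasibility completes the proof. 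I expect no serious obstacle; the only point deserving care is that the \emph{anti}-commutator appears in two different roles---in enforcing $A_{ij}=0$ on edges of $G_S$, and in the covariance-matrix identity that yields PSDness---and it is precisely this coincidence that makes $\vartheta(G_S)$ the right SDP to invoke.
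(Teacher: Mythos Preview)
Your proposal is correct and follows essentially the same approach as the paper: both construct the same feasible matrix $A_{ij}=\operatorname{Re}\bra{\phi}P_iP_j\ket{\phi}=\tfrac12\bra{\phi}\{P_i,P_j\}\ket{\phi}$ for the SDP defining $\vartheta(G_S)$ and compare $\|w\|_2^2$ to $\lambda_{\max}(A)$ via the variance inequality $\bra{\phi}X^2\ket{\phi}\ge\bra{\phi}X\ket{\phi}^2$. The only cosmetic difference is that the paper applies this inequality directly with $X=\sum_i w_i P_i$ to obtain $\|w\|_2^4\le w^\top A w\le\|w\|_2^2\lambda_{\max}(A)$, whereas you first establish the full Loewner relation $A\succeq ww^\top$ (using arbitrary real $c$) and then specialize to $c=w/\|w\|_2$; the two arguments are equivalent.
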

\begin{proof}
    We follow the proof of \cite{de2023uncertainty}. For a state $\rho$, consider the operator $E_\rho$ equal to
    \begin{equation}
        E_\rho = \sum_{P_i \in S} \Tr[P_i \rho] P_i.
    \end{equation}
    Let $v \in \mathbb{R}^{|S|}$ be a vector with entry $i$ set to $v_i = \Tr[P_i \rho]$. Then, by convexity, we have 
    \begin{equation} \label{eq:lovasz_pre_pre}
        \Tr\left( E_\rho^2 \rho\right) \geq \Tr\left( E_\rho \rho\right)^2.
    \end{equation}
    The right hand side is equal to $\|v\|_2^4$. The left hand side is equal to 
    \begin{equation} \label{eq:lovasz_pre_matrix}
        \Tr\left( E_\rho^2 \rho\right) = \sum_{P_i, P_j \in S } v_i v_j \Tr[P_i P_j \rho] = \sum_{\substack{P_i, P_j \in S \\ P_i P_j + P_i P_j \neq 0} } v_i v_j \operatorname{Re}\left(\Tr[P_i P_j \rho]\right).
    \end{equation}
    In the last step, we used the fact that $\Tr[P_i P_j \rho]^* = \Tr[P_j P_i \rho]$. Denoting $M \in \mathbb{R}^{|S| \times |S|}$ with entries $M_{ij} = \operatorname{Re}\left(\Tr[P_i P_j \rho]\right)$, we note that $M$ is positive semi-definite and has diagonal entries set to $1$ since $\Tr[P_i P_j \rho] = 1$ when $P_i=P_j$. Furthermore $M_{ij} = 0$ whenever $P_i P_j + P_i P_j = 0$ so $M = G_S$ and is the anti-commutativity graph of $S$. Since $\sum_{\substack{P_i, P_j \in S \\ P_i P_j + P_i P_j \neq 0} } v_i v_j M_{ij} \leq \|v\|^2 \lambda_{\max}(M)$, where $\lambda_{\max}(M)$ is the largest eigenvalue of $M$, we have returning to \Cref{eq:lovasz_pre_pre} that for any state $\rho$:
    \begin{equation}
    \begin{split}
        \sum_{P_i \in S} \Tr[P_i \rho]^2 \leq \max_{\substack{B \in \mathbb{R}^{|S| \times |S|} }} & \lambda_{\max}(B),    \\ 
         \text{s.t. } &B = B^\top, B \succeq 0, \\ & B_{ij} = 0 \text{ for edges } i \sim j \text{ in $G_S$}, \\ & B_{ii} = 1 \text{ for all } i.
    \end{split}
    \end{equation}
    The proof is complete upon noting that the above is $\vartheta(G_S)$. 
\end{proof}

Finally, we can show that for $n>2$, the variance of the $p=2$ spin glass Hamiltonian for any state $\ket{\phi} \in \mathcal{S}_{\rm all}^n$ is at most $1$ using the above fact. 
\begin{proposition}[Maximum limiting variance of spin glass Hamiltonian]
    For $p=2$ and any $n \geq 3$, the maximum variance of the energy of the random Hamiltonian $H_{n,p}$ over $\ket{\phi} \in \mathcal{S}_{\rm all}^n$ is 
    \begin{equation}
        \max_{\ket{\phi} \in \mathcal{S}_{\rm all}^n}\E\left[ \left(\bra{\phi} H_{n,p} \ket{\phi}\right)^2\right] = 1.
    \end{equation}
\end{proposition}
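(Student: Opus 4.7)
The plan is to combine the immediate lower bound with a matching upper bound via the Lov\'asz theta function. For the lower bound, \Cref{lem:covariance_product} applied with $\ket{\psi}=\ket{\phi}$ shows that every product state $\ket{\phi}\in\mathcal{S}_{\rm product}^n$ attains variance exactly $1$: each factor $2\left|\bra{\phi^{(\indisub_k)}}\ket{\phi^{(\indisub_k)}}\right|^2-1$ equals $1$, so the product and hence the average over $\indi\in\mathcal{I}_2^n$ is $1$. Thus the sought maximum is at least $1$, and the real content is the matching upper bound.

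For the upper bound I would first observe the identity
\begin{equation*}
\E\left[\left(\bra{\phi}H_{n,2}\ket{\phi}\right)^2\right] \;=\; \frac{1}{\binom{n}{2}}\sum_{P\in S}\left(\bra{\phi}P\ket{\phi}\right)^2,
\end{equation*}
where $S$ denotes the set of $9\binom{n}{2}$ strictly $2$-local Pauli matrices on $n$ qubits; this follows from expanding $H_{n,2}$ and using that its coefficients are i.i.d.\ standard Gaussians. Applying \Cref{lem:max_variance_lovasz} to the set $S$ then gives
\begin{equation*}
\max_{\ket{\phi}\in\mathcal{S}_{\rm all}^n}\E\left[\left(\bra{\phi}H_{n,2}\ket{\phi}\right)^2\right] \;\le\; \frac{\vartheta(G_S)}{\binom{n}{2}},
\end{equation*}
so it suffices to prove $\vartheta(G_S)\le \binom{n}{2}$ for all $n\ge 3$, at which point the lower bound forces equality.

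To bound $\vartheta(G_S)$, I plan to exhibit a feasible matrix $B$ for the minimization-form dual SDP of \Cref{def:lovasz} with $\lambda_{\max}(B)\le\binom{n}{2}$; such a $B$ must satisfy $B_{PP}=1$ and $B_{PQ}=1$ whenever $P\neq Q$ commute. The graph $G_S$ has a rich symmetry group containing permutations of the $n$ qubits together with independent permutations of the three non-identity Pauli labels on each qubit, all of which act as automorphisms of $G_S$; the orbits of anti-commuting pairs are indexed by the overlap size $|e\cap e'|\in\{0,1,2\}$ of their supports together with a short combinatorial type recording how the Pauli labels agree. Setting $B_{PQ}=\beta_t$ on the anti-commuting orbit of type $t$ leaves only a handful of real parameters, and I would compute $\lambda_{\max}(B)$ by decomposing $\mathbb{R}^S$ into isotypic components under the group action, reducing the problem to a finite-dimensional eigenvalue optimization over the $\beta_t$.

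The main obstacle is carrying out this block decomposition explicitly and verifying that the optimal parameter choice yields $\lambda_{\max}(B)\le\binom{n}{2}$ precisely when $n\ge 3$, consistent with the GHZ counterexample at $n=2$. An equivalent but possibly cleaner reformulation is the following: the variance-versus-purity expansion established just above gives $\E\left[\left(\bra{\phi}H_{n,2}\ket{\phi}\right)^2\right]=1+4(A_2-A_1)$, where $A_k=\binom{n}{k}^{-1}\sum_{|T|=k}\Tr(\rho_T^2)$ is the average $k$-body purity of $\ket{\phi}\bra{\phi}$, so the claim reduces to the purity inequality $A_2\le A_1$ for pure $n$-qubit states with $n\ge 3$. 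The $n=3$ case follows immediately from the pure-state identity $A_k=A_{n-k}$, and for general $n$ the inequality can be attacked by viewing $A_1$ and (up to a shift) $A_2$ as the first and second moments of the commuting observable $E_1=\sum_i\operatorname{SWAP}_i$ evaluated on $\rho\otimes\rho$, and applying a Bhatia--Davis-type variance bound that exploits the fact that $\rho\otimes\rho$ is supported on the $+1$ eigenspace of the global swap $\prod_i\operatorname{SWAP}_i$, which enforces an even parity constraint on the multiplicity of $-1$ eigenvalues of the $\operatorname{SWAP}_i$.
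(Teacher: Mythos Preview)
Your reduction matches the paper exactly: product states give the lower bound, the variance equals $\binom{n}{2}^{-1}\sum_{P\in S}(\bra{\phi}P\ket{\phi})^2$, and \Cref{lem:max_variance_lovasz} reduces the upper bound to $\vartheta(G_S)\le\binom{n}{2}$.

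Where you diverge is in how to bound $\vartheta(G_S)$. The paper does not build a dual SDP feasible matrix or carry out a block decomposition. Instead it observes that $G_S$ is vertex symmetric (under qubit permutations together with independent per-qubit Pauli-label permutations) and invokes Knuth's identity $\vartheta(G_S)\vartheta(\overline{G_S})=9\binom{n}{2}$ for vertex-transitive graphs. The bound then follows from $\vartheta(\overline{G_S})\ge 9$, which is certified by exhibiting nine pairwise anti-commuting $2$-local Paulis (an independent set in $\overline{G_S}$); the explicit set uses four qubits, so this argument covers $n\ge 4$, and the paper simply verifies $n=3$ numerically. This route is considerably lighter than your proposed symmetrized-SDP computation, which would work in principle but is unnecessary here.

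Your alternative via the purity identity $\E\bigl[(\bra{\phi}H_{n,2}\ket{\phi})^2\bigr]=1+4(A_2-A_1)$ is a genuinely different reformulation. The $n=3$ case via the pure-state duality $A_k=A_{n-k}$ is in fact \emph{cleaner} than the paper's numerical check, since it gives $A_2=A_1$ and hence variance exactly $1$ for every pure state. For general $n$, however, your Bhatia--Davis/parity sketch is not yet a proof: the even-parity constraint on the $\operatorname{SWAP}_i$ eigenvalues does not by itself yield $A_2\le A_1$ without further work, so as written this branch has a gap that the paper's Lov\'asz argument avoids entirely.
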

\begin{proof}
    Proving that the variance is bounded by $1$ for any state suffices since we have previously shown that product states achieve that bound. For a given state $\ket{\phi} \in \mathcal{S}_{\rm all}^n$, we have that its variance is
\begin{equation} 
\begin{split}
    \E\left[ \left(\bra{\phi} H_{n,p} \ket{\phi}\right)^2\right] &= \frac{1}{\binom{n}{p}} \sum_{\indi \in \mathcal{I}_p^n} \sum_{\inda \in \{1,2,3\}^p} \bra{\phi} P_{\indi}^{\inda} \ket{\phi} \bra{\phi} P_{\indi}^{\inda} \ket{\phi}.
\end{split}
\end{equation}
We now apply \Cref{lem:max_variance_lovasz} and set $S$ equal to the set of ${n \choose 2}9$ many $2$-local Paulis on $n$ qubits. Denoting the commutation graph of this set of ${n \choose 2}9$ Paulis as $G_n$, we have
\begin{equation}
    \sum_{\indi \in \mathcal{I}_p^n} \sum_{\inda \in \{1,2,3\}^p} \bra{\phi} P_{\indi}^{\inda} \ket{\phi} \bra{\phi} P_{\indi}^{\inda} \ket{\phi} \leq \vartheta(G_n).
\end{equation}
We show in \Cref{app:Lovasz_bounds} that for $n\geq 4$, $\vartheta(G_n) \leq {n \choose 2}$ which gives a bound of one on the variance. The case of $n = 3$ can be verified numerically by solving for $\vartheta(G_3)$.
\end{proof}

\paragraph{Adjusted model}
If we consider a slightly adjusted model $H_{n, (1, \dots, p)}$ which includes all up-to-$p$-local Paulis and is defined as 
\begin{equation} 
     H_{n,(1, \dots, p)} = \frac{1}{2^{p/2}{n \choose p}^{1/2}   } \sum_{\indi \in \mathcal{I}_p^n} \sum_{\inda \in \{0,1,2,3\}^p} \alpha[\indi; \inda] \; P_{\indi}^{\inda},
\end{equation}
then it is straightforward to show that product states maximize the variance of this model for all $n$ and $p$. Note the subtle difference in the above model where the sum $\sum_{\inda \in \{0,1,2,3\}^p}$ includes identity as well. We also multiply by a constant factor $2^{-p/2}$ to normalize it as variance one for product states.

\begin{lemma} \label{lem:max_variance_p2}
    The variance of the energy of the adjusted Hamiltonian $H_{n,(1, \dots, p)}$ for any product state $\ket{\phi} \in \mathcal{S}_{\rm product}^n$ is $\E [(\bra{\phi} H_{n,(1, \dots, p)} \ket{\phi})^2] = 1$ and is the maximum possible variance over the set of all states, i.e., for any state $\ket{\psi} \in \mathcal{S}_{\rm all}^n$,
    \begin{equation}
        \E \left[(\bra{\psi} H_{n,(1, \dots, p)} \ket{\psi})^2\right] \leq \E \left[(\bra{\phi} H_{n,(1, \dots, p)} \ket{\phi})^2\right] = 1. 
    \end{equation}
\end{lemma}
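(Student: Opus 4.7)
The plan is to leverage the identity $\sum_{a \in \{0,1,2,3\}} \sigma^a \otimes \sigma^a = 2\,\mathrm{SWAP}$, which is a direct extension of the formula $\sum_{a \in \{1,2,3\}} \sigma^a \otimes \sigma^a = 2\,\mathrm{SWAP} - I$ established in \Cref{eq:sum_pauli_to_swap}: adding the $a=0$ term (contributing $I$) exactly cancels the $-I$. The resulting factor of $2^p$ after taking the $p$-fold tensor product is precisely compensated by the normalization $2^{-p/2}\binom{n}{p}^{-1/2}$ built into the adjusted model, which is why this normalization is natural for the statement at hand.

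Concretely, I would first mimic the expectation computation of \Cref{eq:covariance_first_step} to reduce the variance to
\begin{equation*}
\E\!\left[(\bra{\phi}H_{n,(1,\dots,p)}\ket{\phi})^2\right] = \frac{1}{2^p \binom{n}{p}} \sum_{\indi \in \mathcal{I}_p^n} \sum_{\inda \in \{0,1,2,3\}^p} \bra{\phi}P_{\indi}^{\inda}\ket{\phi}^2 .
\end{equation*}
Then, using the bra-bra/ket-ket trick from the proof of \Cref{lem:covariance_product} (\Cref{eq:intermediate_before_swap}) and the identity above, for each fixed $\indi$ the inner sum equals $2^p\,\bra{\phi}\bra{\phi} \prod_{k=1}^p \mathrm{SWAP}_{i_k,\,i_k+n}\ket{\phi}\ket{\phi}$, which collapses to $2^p\,\Tr(\rho_S^2)$ where $\rho_S$ is the reduced density matrix of $\ket{\phi}$ on $S = \{i_1,\dots,i_p\}$. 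Combining,
\begin{equation*}
\E\!\left[(\bra{\phi}H_{n,(1,\dots,p)}\ket{\phi})^2\right] = \frac{1}{\binom{n}{p}} \sum_{S \in \binom{[n]}{p}} \Tr(\rho_S^2) = A_p,
\end{equation*}
i.e.\ the average $p$-qubit purity of $\ket{\phi}$ in the notation of \Cref{eq:avg_purity}.

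The lemma then follows immediately. For any pure state $\ket{\phi} \in \mathcal{S}_{\rm all}^n$, every reduced density matrix satisfies $\Tr(\rho_S^2) \leq 1$, so $A_p \leq 1$. For a product state, every reduced state is itself pure, so $\Tr(\rho_S^2) = 1$ for every $S$, giving $A_p = 1$ and equality.

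I do not expect any serious obstacle: the entire argument is a one-line adaptation of the proof of \Cref{lem:covariance_product}, and the identification of the variance with average subsystem purity is the same calculation used to derive \Cref{eq:avg_purity}. The only novelty is the cancellation $\sum_{a=0}^3 \sigma^a\otimes\sigma^a = 2\,\mathrm{SWAP}$, which eliminates the cross terms that prevented the analogous bound from holding for $H_{n,p}$ (and are responsible for the counterexamples in small $n$ for $p=2$ noted just before the lemma).
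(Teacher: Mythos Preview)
Your proposal is correct and follows essentially the same argument as the paper: both reduce the variance to the average $p$-qubit purity $\frac{1}{\binom{n}{p}}\sum_{S}\Tr(\rho_S^2)$ via the identity $\sum_{a=0}^{3}\sigma^a\otimes\sigma^a=2\,\mathrm{SWAP}$, and then observe that purity is at most $1$ with equality for product states.
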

\begin{proof}
    For a given state $\ket{\phi} \in \mathcal{S}_{\rm all}^n$ we have that its variance is
\begin{equation} 
\begin{split}
    \E\left[ \left(\bra{\phi} H_{n,(1,\dots,p)} \ket{\phi}\right)^2\right] &= \frac{1}{\binom{n}{p} 2^p} \sum_{\indi \in \mathcal{I}_p^n} \sum_{\inda \in \{0,1,2,3\}^p} \bra{\phi} P_{\indi}^{\inda} \ket{\phi} \bra{\phi} P_{\indi}^{\inda} \ket{\phi} \\
    &= \frac{1}{\binom{n}{p} 2^p} \sum_{\indi \in \mathcal{I}_p^n}  \bra{\phi} \bra{\phi} \left[\sum_{\inda \in \{0,1,2,3\}^p} P_{\indi}^{\inda} \otimes P_{\indi}^{\inda} \right] \ket{\phi}  \ket{\phi} \\
    &= \frac{1}{\binom{n}{p} 2^p} \sum_{\indi \in \mathcal{I}_p^n}  \bra{\phi} \bra{\phi} \left[ \prod_{k=1}^p \left[2 (\operatorname{SWAP})\right]_{\indisub_k, \indisub_k+n} \right] \ket{\phi}  \ket{\phi},
\end{split}
\end{equation}
where in the last line we used \Cref{eq:sum_pauli_to_swap} and use subscripts $\left[2 (\operatorname{SWAP}) \right]_{i,j}$ to indicate that the $\operatorname{SWAP}$ operator acts on qubits $i$ and $j$. Let  $\rho_S = \Tr_{[n] \setminus S}( \ket{\phi}\bra{\phi} )$ denote the reduced density matrix of $\ket{\phi}$ for the qubits in $S$. Then,
\begin{equation}
\begin{split}
    \E\left[ \left(\bra{\phi} H_{n,(1,\dots,p)} \ket{\phi}\right)^2\right] &= \frac{1}{\binom{n}{p}} \sum_{\indi \in \mathcal{I}_p^n}  \Tr( \rho_{\{\indisub_{1}, \dots, \indisub_{p} \}}^2).
\end{split}
\end{equation}
The above is the sum of purities $0 \leq \Tr(\rho_{\{\indisub_{1}, \dots, \indisub_{p} \}}^2) \leq 1$, where the upper bound is saturated for product states. 
\end{proof}

\section*{Acknowledgements}
The authors thank Seth Lloyd, Kunal Marwaha, John Preskill, Aram Harrow, Alexander Zlokapa, Anthony Chen, and Robbie King for very enlightening discussions. The authors also thank Brian Swingle for finding an error in a previous draft of this manuscript.

\clearpage
\appendix

\section{Spherical packing}

We first prove a statement that lower bounds the number of points that can be packed into a portion of the unit sphere in $\mathbb{R}^3$.
\begin{lemma} \label{lem:packing_of_sphere}
    Let $S_+^3 = \{\vx \in \mathbb{R}^3: \|\vx\|=1, \vx_3\geq0\}$ be the set of points on the sphere in $\mathbb{R}^3$ on the positive $z$ axis (i.e., the hemisphere). For $\epsilon$ sufficiently small, there exists a finite set $S_\epsilon \subset S_+^3$ of size $|S_\epsilon| \geq \frac{3}{4}\epsilon^{-1}$ where $\vx_3\geq 0.01$ for every $\vx \in S_\epsilon$ and for every $\vx,\vx' \in S_\epsilon$ where $\vx \neq \vx'$, it holds that $|\langle \vx, \vx' \rangle| \leq 1-\epsilon$.
\end{lemma}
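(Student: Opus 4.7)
The plan is to use a standard volume/greedy packing argument on the spherical cap
$C = \{\vx \in S^2 : \vx_3 \geq 0.01\}$. First, I would observe that the condition
$|\langle \vx,\vx'\rangle| \leq 1-\epsilon$ simplifies considerably on $C$: any two points
$\vx,\vx' \in C$ satisfy
$\langle \vx,\vx'\rangle \geq \vx_3 \vx'_3 - \sqrt{1-\vx_3^2}\sqrt{1-(\vx'_3)^2}
\geq (0.01)^2 - (1-(0.01)^2) > -0.9999$,
so for $\epsilon$ small enough (say $\epsilon < 10^{-4}$) the lower bound
$\langle \vx,\vx'\rangle \geq -(1-\epsilon)$ is automatic. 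Hence the only real constraint is
the upper bound $\langle \vx,\vx'\rangle \leq 1-\epsilon$, equivalently an angular
separation of at least $\theta_0 := \arccos(1-\epsilon)$.

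Next, I would construct $S_\epsilon$ greedily: initialize $S_\epsilon = \emptyset$, and repeatedly add any point $\vx \in C$ with $\langle \vx,\vx'\rangle \leq 1-\epsilon$ for all $\vx' \in S_\epsilon$, until no further point can be added. By maximality, the open spherical caps of angular radius $\theta_0$ centered at the points of $S_\epsilon$ together cover $C$.

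The final step is a comparison of spherical areas. The area of the cap $C$ on the unit sphere is $2\pi(1 - \sin(0.01)) > 2\pi \cdot 0.99$, and the area of a spherical cap of angular radius $\theta_0$ is exactly $2\pi(1 - \cos \theta_0) = 2\pi \epsilon$. Since the covering caps have total area at least that of $C$,
\begin{equation*}
|S_\epsilon| \cdot 2\pi \epsilon \;\geq\; 2\pi \cdot 0.99,
\end{equation*}
which gives $|S_\epsilon| \geq 0.99/\epsilon \geq \tfrac{3}{4}\epsilon^{-1}$ as desired.

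The only potentially delicate point is verifying that the balls used for the covering bound really have area $2\pi\epsilon$ (not an overestimate) even when centered near the boundary of $C$; this is fine because spherical caps have a fixed area depending only on their angular radius, and the covering argument only requires that the balls contain $C$, not that they be contained in $C$. So I expect no serious obstacle—the argument is elementary, and the constant $3/4$ in the target bound leaves a comfortable slack over the $0.99$ we obtain.
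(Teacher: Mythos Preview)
Your proposal is correct and follows essentially the same volume/packing argument as the paper: compare the area of the region $\{z \geq 0.01\}$ to the area $2\pi\epsilon$ of a spherical cap of angular radius $\arccos(1-\epsilon)$, and read off the packing lower bound. One small slip: the area of $\{z \geq 0.01\}$ is $2\pi(1-0.01)$, not $2\pi(1-\sin 0.01)$ (Archimedes' hat-box theorem gives $dA = 2\pi\,dz$), but since $\sin 0.01 < 0.01$ your inequality ``$> 2\pi\cdot 0.99$'' survives and nothing downstream is affected. You are in fact more explicit than the paper on one point---verifying that the lower bound $\langle \vx,\vx'\rangle \geq -(1-\epsilon)$ is automatic on $C$ for small $\epsilon$---which the paper's proof leaves unstated.
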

\begin{proof}
    We follow a standard packing number bound argument as used in Section V of \cite{6773576}. Let $\operatorname{Vol}(S^3_{\epsilon})$ denote the area of the portion of the unit sphere lying above the plane $z=0.01$. Similarly, let $A(\epsilon)$ denote the area of the portion of the sphere up to polar angle $\cos^{-1}(\epsilon)$. Then, we have
    \begin{equation}
        |S_\epsilon| \geq \frac{\operatorname{Vol}(S^3_{\epsilon})}{A(\epsilon)},
    \end{equation}
    since every point $\vx \in S_\epsilon$ covers an area of $A(\epsilon)$. We have that 
    \begin{equation}
        \operatorname{Vol}(S^3_{\epsilon}) \geq \left(2 - 0.05 \right)\pi .
    \end{equation}
    Furthermore, denoting $\theta = \cos^{-1}(1-\epsilon)$ by the calculation in section V of \cite{6773576}, we have:
    \begin{equation}
        A(\epsilon) = 2\pi \int_0^\theta \sin(\phi) d\phi = 2\pi \epsilon. 
    \end{equation}
    Thus, 
    \begin{equation}
        |S_\epsilon| \geq \frac{2-0.05}{2}\epsilon^{-1} \geq \frac{3}{4}\epsilon^{-1}.
    \end{equation}
\end{proof}

We now prove a corresponding upper bound on the covering number, following the general bound given in \cite{Rankin_1955}.
\begin{lemma} \label{lem:upper_bound_cov}
    For $\epsilon$ sufficiently small, there exists a finite set $S_\epsilon \subset S^2$ of size $|S_\epsilon| \leq \frac{2.01}{\epsilon}$ where for every $\vx \in S^2$ there exists a $\vx'\in S_\epsilon$ such that $|\langle \vx, \vx' \rangle| \geq 1-\epsilon$.
\end{lemma}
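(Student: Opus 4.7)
The plan is to construct $S_\epsilon$ explicitly via a latitude--longitude discretization of the upper hemisphere of $S^2$, exploiting the antipodal symmetry of the condition $|\langle \bm{x}, \bm{x}'\rangle|$. First, I would parameterize points by $\bm{x}(\phi,\psi) = (\sin\phi\cos\psi, \sin\phi\sin\psi, \cos\phi)$ and note that any $\bm{y} \in S^2$ can be replaced by $-\bm{y}$ without changing $|\langle\bm{y}, \cdot\rangle|$, so it suffices to build a grid ensuring $\langle \bm{y}, \bm{x}'\rangle \geq 1-\epsilon$ for every $\bm{y}$ in the upper hemisphere $\phi_* \in [0, \pi/2]$.

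Fixing a spacing $\delta \approx 2\sqrt{\epsilon}$, I would place latitudes at $\phi_k = k\delta$ for $0 \leq k \leq K \coloneqq \lfloor \pi/(2\delta)\rfloor$ and, at latitude $\phi_k$, take $N_k \coloneqq \max(1,\lceil 2\pi\sin(\phi_k)/\delta\rceil)$ equally spaced azimuthal angles $\psi_j$. Declaring $S_\epsilon$ to be this union of grid points, the covering property reduces to a Taylor expansion of $\bm{y}\cdot \bm{x}(\phi_k, \psi_j) = \cos(\phi_*-\phi_k) - \sin\phi_*\sin\phi_k(1-\cos(\psi_*-\psi_j))$, which bounds the geodesic distance from $\bm{y}$ to its nearest grid point by $\sqrt{(\phi_*-\phi_k)^2 + \sin^2(\phi_k)(\psi_*-\psi_j)^2}(1+O(\delta)) \leq (\delta/\sqrt{2})(1+O(\delta))$ because $|\phi_* - \phi_k| \leq \delta/2$ and $\sin(\phi_k)|\psi_* - \psi_j| \leq \delta/2$ by construction. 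Comparing with $\arccos(1-\epsilon) = \sqrt{2\epsilon}(1+O(\epsilon))$ then forces a choice of $\delta$ slightly below $2\sqrt{\epsilon}$ to guarantee $|\langle \bm{y}, \bm{x}'\rangle| \geq 1-\epsilon$.

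Next I would count: $|S_\epsilon| \leq 1 + K + (2\pi/\delta)\sum_{k=1}^{K}\sin(k\delta)$. The closed form $\sum_{k=1}^{K} \sin(k\delta) = \sin(K\delta/2)\sin((K+1)\delta/2)/\sin(\delta/2)$ evaluates to $(1/\delta)(1+o_\epsilon(1))$ as $K\delta \to \pi/2$, so the leading behavior is $|S_\epsilon| = (2\pi/\delta^2)(1+o_\epsilon(1)) = (\pi/(2\epsilon))(1+o_\epsilon(1))$. Since $\pi/2 \approx 1.571 < 2.01$, the asymptotic count is comfortably below $2.01/\epsilon$ for all sufficiently small $\epsilon$.

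The main obstacle is keeping careful track of two lower-order effects so that they are absorbed in the slack between $\pi/2$ and $2.01$: the $O(\delta^2)$ error from the second-order Taylor expansion in the covering verification (which forces $\delta$ slightly below $2\sqrt{\epsilon}$), and the $+1$ per latitude from the ceiling operation in defining $N_k$, contributing a total $K = O(1/\sqrt{\epsilon}) = o(1/\epsilon)$. Both are strictly subleading, but the bookkeeping must be explicit. I note in passing that a purely volume-based argument via a maximal $\theta$-separated set in the antipodal metric would only yield the weaker bound $\sim 4/\epsilon$, since disjointness of $(\theta/2)$-balls on $S^2$ is suboptimal by a factor $\sim 2$ compared to the latitude tiling; hence the explicit construction above is essential for achieving the stated constant.
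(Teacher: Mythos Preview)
Your proposal is correct but takes a genuinely different route from the paper. The paper does not construct a net at all: it invokes Theorem~2 of Rankin (1955), which for $n=3$ gives a closed-form upper bound on the spherical covering number. After substituting $\alpha=\arccos(1-\epsilon)$ and $\beta=\arcsin(\sqrt{2}\sin\alpha)$ and simplifying, this yields $|S_\epsilon|\le (2+o_\epsilon(1))/\epsilon$, hence $\le 2.01/\epsilon$ for small $\epsilon$. Your approach instead builds an explicit latitude--longitude grid on the upper hemisphere, using the antipodal symmetry of $|\langle\cdot,\cdot\rangle|$, and actually achieves the sharper asymptotic $(\pi/2+o_\epsilon(1))/\epsilon\approx 1.57/\epsilon$. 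What the paper's route buys is brevity: a citation plus three lines of algebra. What yours buys is self-containment and a better constant, at the cost of the bookkeeping you correctly flag (the $O(\delta)$ Taylor error forcing $\delta$ slightly below $2\sqrt{\epsilon}$, the $+1$ from each ceiling, and the boundary latitude near $\phi=\pi/2$ where $|\phi_*-\phi_k|$ can be up to $\delta$ rather than $\delta/2$). All of these are strictly subleading and fit in the slack between $\pi/2$ and $2.01$, so the argument goes through.
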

\begin{proof}
    Taking $n=3$ in Theorem 2 of \cite{Rankin_1955} gives a bound:
    \begin{equation}
        |S_\epsilon|\leq\frac{\sin\left(\beta\right)\tan\left(\beta\right)}{\int\limits_0^\beta\dd{\phi}\sin\left(\phi\right)\left(\cos\left(\phi\right)-\cos\left(\beta\right)\right)}=\frac{2\sin\left(\beta\right)\tan\left(\beta\right)}{\left(1-\cos\left(\beta\right)\right)^2},
    \end{equation}
    where
    \begin{equation}
        \beta=\arcsin\left(\sqrt{2}\sin\left(\alpha\right)\right)
    \end{equation}
    and
    \begin{equation}
        \alpha=\arccos\left(1-\epsilon\right).
    \end{equation}
    In particular,
    \begin{equation}
        |S_\epsilon|\leq\frac{4\left(1-\left(1-\epsilon\right)^2\right)}{\sqrt{2\left(1-\epsilon\right)^2-1}\left(1-\sqrt{2\left(1-\epsilon\right)^2-1}\right)^2}.
    \end{equation}
    Using the inequalities (for sufficiently small $\epsilon$):
    \begin{equation}
        1-2\epsilon-2\epsilon^2\leq\sqrt{2\left(1-\epsilon\right)^2-1}\leq 1-2\epsilon
    \end{equation}
    we have:
    \begin{equation}
        |S_\epsilon|\leq\frac{8\epsilon}{\left(1-2\epsilon-2\epsilon^2\right)4\epsilon^2}.
    \end{equation}
    For sufficiently small $\epsilon$ this is at most $\frac{2.01}{\epsilon}$.
\end{proof}

\section{Conjectured upper bound on the expected maximum eigenvalue}
\label{app:proof_of_upper_bound}

In this subsection we obtain a conjectured upper bound on $E^*_n(p)=\E[\lambda_{\text{max}}(H_{n,p})/\sqrt{n}]$ by resorting to the trace method. Finding an upper bound appears to be a challenging task and our proofs will hold upon assuming the validity of \Cref{conj:main-0} which characterizes the asymptotic independence structure of certain random matchings of Paulis placed in sequence within the trace. The amount of independence in this structure is controlled by a constant $\gamma(p)$ in \Cref{conj:main-0}. Bounds on this constant will imply corresponding bounds on the approximation factor achievable by product states. Namely, our goal is to prove a bound
\begin{align}
\E[\lambda_{\text{max}}(H_{n,p})/\sqrt{n}]
\le
(1
+o_p(1))
\sqrt{2\gamma(p)\log p}. \label{eq:ground-state-upper}
\end{align}

Fix $\beta>0$. We
have 
\begin{align}
\E[\lambda_{\text{max}}(H_{n,p})/\sqrt{n}]\le {1\over \beta n}
\mathbb{E}\left[\log\Tr\left(\exp\left(\beta\sqrt{n}H_{n,p}\right)\right)\right]
\le 
 {1\over \beta n}
 \log
\mathbb{E}\left[\Tr\left(\exp\left(\beta\sqrt{n}H_{n,p}\right)\right)\right],
\end{align}
and the bound will be obtained via a more tractable right-hand side expression corresponding to the annealed average.
It will be convenient to switch from the Gaussian disorder $\alpha\in \mathcal{N}(0,1)$ to the Rademacher disorder $\alpha=\pm 1$ with probability $1/2$ each
for our upper bound derivation. This is justified by \Cref{theorem:equivalence}. It is straightforward to see that the condition in \Cref{eq:tail-D_n} is verified in particular.

Consider the first moment of the partition function:
\begin{equation} \label{eq:first_moment_preliminary-0}
\mathbb{E}\left[\Tr\left(\exp\left(\beta\sqrt{n}H_{n,p}\right)\right)\right]=\sum\limits_{m=0}^\infty\frac{\left(\beta\sqrt{n}\right)^m}{\binom{n}{p}^{\frac{m}{2}}m!}\sum\limits_{\substack{\bm{\indi_1},\ldots,\bm{\indi_m}\\\bm{\inda_1},\ldots,\bm{\inda_m}}}\mathbb{E}\left[\alpha\left[\bm{\indi_1};\bm{\inda_1}\right]\ldots\alpha\left[\bm{\indi_m};\bm{\inda_m}\right]\right]
\Tr\left(P_{\bm{\indi_1}}^{\bm{\inda_1}}\ldots P_{\bm{\indi_m}}^{\bm{\inda_m}}\right).
\end{equation}

Since the Rademacher disorder is symmetric, the expectation 
$\mathbb{E}\left[\alpha\left[\bm{\indi_1};\bm{\inda_1}\right]\ldots\alpha\left[\bm{\indi_m};\bm{\inda_m}\right]\right]$
is non-zero only when each tuple pair $[\indi,\inda]$ appears in the product an even number of times. In this case the expectation is $1$ (this is where
working with Rademacher disorder is convenient). Thus we rewrite \Cref{eq:first_moment_preliminary-0} as 
\begin{equation} \label{eq:pre_unique_pre_superlinear}
\mathbb{E}\left[\Tr\left(\exp\left(\beta\sqrt{n}H_{n,p}\right)\right)\right]=\sum\limits_{m=0}^\infty\frac{\left(\beta\sqrt{n}\right)^m}{\binom{n}{p}^{\frac{m}{2}}m!}\sum\limits_{\substack{\bm{\indi_1},\ldots,\bm{\indi_m}\\\bm{\inda_1},\ldots,\bm{\inda_m} \\ \left[\bm{\indi};\bm{\inda}\right] \text{ even}} }
\Tr\left(P_{\bm{\indi_1}}^{\bm{\inda_1}}\ldots P_{\bm{\indi_m}}^{\bm{\inda_m}}\right),
\end{equation}
where the sum is restricted to $m$ many pairs $\left(\left[\bm{\indi_1};\bm{\inda_1}\right],\ldots,\left[\bm{\indi_m};\bm{\inda_m}\right]\right)$
such that each term $\left[\bm{\indi};\bm{\inda}\right]$ appears an even number of times in the $m$ pairs; i.e., any pair $\left[\bm{\indi_j};\bm{\inda_j}\right]$ must be matched with another pair $\left[\bm{\indi_k};\bm{\inda_k}\right]$ for the trace term to contribute. This is formalized further by the notion of a matching and pair ordered matchings over given sets.

\begin{definition}[Matching and Pair Ordered Matching] \label{def:matching}
    A matching $M$ of length $2d$ is a set of $d$ pairs $M=\{(a_1, b_1), \dots, (a_d, b_d)\}$ where $a_1, \dots, a_d, b_1, \dots, b_d \in [2d]$ are uniquely chosen so that $\{a_1, \dots, a_d, b_1, \dots, b_d\} = [2d]$. Given a multi-set $S=\multiset{s_1, s_2, \dots, s_d}$ of size $|S|=d$ consisting of elements in $[d]$, a pair ordered matching $\widetilde{M}(S)$ of $S$ is a permutation of the sequence $(s_1,s_1, s_2, s_2, \dots, s_d, s_d)$. When input $S$ is unspecified, it is assumed to be equal to $[d]$. 
    We denote $\mathcal{M}_d$ as the set of all matchings of length $2d$ and $\widetilde{\mathcal{M}}_d(S)$ as the set of all unique pair ordered matchings on the $d$ elements of $S$ (again, $S=[d]$ when unspecified).
\end{definition}

In handling the terms in \Cref{eq:pre_unique_pre_superlinear}, we prove two statements that allow us to simplify the calculations. First, in \Cref{sec:superlinear_restriction}, we show that terms in the sum corresponding to $m=\omega(n)$ --  growing superlinearly with $n$ -- 
have a negligible contribution allowing us to restrict the sum to those where $m \leq Cn$ for some constant $C$ depending on $\beta, p, \epsilon$. 
Also, in \Cref{sec:unique_terms_only}, we show that contributions associated to instances with duplicated tuples, i.e., those where $\indi_c=\indi_d$ for $c\neq d$, are sub-dominating with respect to instances without duplicated tuples. Thus, we sum over unique tuples $\indi_1, \dots, \indi_r$ drawn from the set ${\mathcal{I}_p^n \choose r}$.
Furthermore, tuples can only be appropriately matched when $m$ is even, so we let $m=2r$ and equivalently have 
\begin{equation} \label{eq:tr_exp_expanded_pre_conjecture}
\mathbb{E}\left[\Tr\left(\exp\left(\beta\sqrt{n}H_{n,p}\right)\right)\right]
=
\exp(o(n))
\sum\limits_{r=0}^{Cn}
\frac{\left(\beta^2 n\right)^r}{\binom{n}{p}^{r}(2r)!}\sum\limits_{\substack{\{ \indi_1, \dots, \indi_r \} \in {\mathcal{I}_p^n \choose r}\\ \inda_1, \dots, \inda_r \in \{1,2,3\}^p}} \sum_{\widetilde{M} \in \widetilde{\mathcal{M}}_r}
\Tr\left(P_{\bm{\indi_{\widetilde{M}_1}}}^{\inda_{\widetilde{M}_1}}\ldots P_{\bm{\indi_{\widetilde{M}_{2r}}}}^{\inda_{\widetilde{M}_{2r}}}\right)
+ o(1).
\end{equation}
The above sums over all $(2r)!/2^r$ unique pair ordered matchings as stated in \Cref{def:matching}. 
The $o(1)$ added term in the above handles the contribution from terms associated with $r\geq Cn$ as shown in \Cref{sec:superlinear_restriction}, and the $\exp(o(n))$ pre-factor accounts for contributions from instances where $\{\indi_1,\dots, \indi_r\}$ have duplicated tuples as shown in \Cref{sec:unique_terms_only}.

\subsection{Superlinear terms} \label{sec:superlinear_restriction}
Here, we establish that the contribution from superlinear terms $m=\omega(n)$ in \Cref{eq:pre_unique_pre_superlinear} is negligible.

\begin{lemma}\label{lemma:Super-linear}
For every $\beta,p$ and $\epsilon>0$  there exists $C=C(\beta,p,\epsilon)$ such that the sum in \Cref{eq:pre_unique_pre_superlinear} restricted to the range where $m\ge Cn$
is at most $\exp\left(-\epsilon n\right)$. 
\end{lemma}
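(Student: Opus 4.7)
The plan is to combine the deterministic trace bound $|\Tr(P_{\indi_1}^{\inda_1}\cdots P_{\indi_m}^{\inda_m})|\le 2^n$ (each such product is $\pm 1$ or $\pm\ci$ times a Pauli, and Paulis are either the identity with trace $\pm 2^n$ or traceless) with a combinatorial overcount of the set of ``even'' configurations, and then control the resulting tail of a Poisson-like series. Throughout the argument only even $m$ contribute, so I write $m=2r$.

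First I would upper bound the number of even configurations of length $m=2r$. Any sequence $\big([\indi_1;\inda_1],\ldots,[\indi_{2r};\inda_{2r}]\big)$ in which each distinct pair $[\indi;\inda]$ appears an even number of times can be obtained (with multiplicity at least one) by the following procedure: choose a perfect matching of the $2r$ positions---there are $(2r-1)!!$ of them---and then, for each of the $r$ matched pairs, assign one of the $\binom{n}{p}3^p$ possible labels $[\indi;\inda]$. This gives the bound
\begin{equation*}
\#\{\text{even configurations of length }2r\}\;\le\;(2r-1)!!\,\Big[\tbinom{n}{p}3^p\Big]^{r}.
\end{equation*}
Combining this with the trace bound and the prefactor $\big(\beta\sqrt{n}\big)^{2r}/\big(\binom{n}{p}^{r}(2r)!\big)$ from \Cref{eq:pre_unique_pre_superlinear}, and using the identity $(2r-1)!!/(2r)!=1/(2^{r}r!)$, the $m=2r$ term is bounded by
\begin{equation*}
\frac{(\beta^{2}n)^{r}(2r-1)!!\,3^{pr}}{(2r)!}\,2^{n}
\;=\;\frac{\big(\tfrac{1}{2}\beta^{2}3^{p}n\big)^{r}}{r!}\,2^{n}.
\end{equation*}

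Next I would control the tail $r\ge Cn/2$ of the resulting series. Setting $\lambda\coloneqq \tfrac12\beta^{2}3^{p}n$, Stirling's inequality $r!\ge(r/\ce)^{r}$ yields $\lambda^{r}/r!\le(\ce\lambda/r)^{r}$. For $r\ge Cn/2$ with $C\ge 2\ce\beta^{2}3^{p}\cdot 2^{\epsilon/\ln 2 + 2}$ (for concreteness), we have $\ce\lambda/r\le 1/2$, so the tail is a geometric series bounded by $2\cdot 2^{-Cn/2}$. Thus
\begin{equation*}
\sum_{m\ge Cn}\frac{(\beta\sqrt{n})^{m}}{\binom{n}{p}^{m/2}m!}\sum_{\text{even}}\big|\Tr(\cdots)\big|
\;\le\;2^{n}\sum_{r\ge Cn/2}\frac{\lambda^{r}}{r!}
\;\le\;2^{\,1+n(1-C/2)},
\end{equation*}
and choosing $C$ large enough (depending only on $\beta,p,\epsilon$) makes this at most $\exp(-\epsilon n)$, as required.

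The argument is essentially mechanical once one commits to the right combinatorial overcount; the only subtle point is ensuring that the matching-based enumeration truly dominates the Rademacher even-configuration sum. This follows because any even configuration with pair-multiplicities $2k_1,\ldots,2k_s$ (so $\sum_j k_j=r$) is produced $\prod_j(2k_j-1)!!\ge 1$ times by the matching procedure, so term-by-term the matching count is an upper bound on the count of distinct even configurations, which is what we need after applying the triangle inequality to the trace.
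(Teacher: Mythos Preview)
Your proof is correct and, in fact, cleaner than the paper's. Both arguments start from the trivial trace bound $|\Tr(\cdot)|\le 2^n$ and reduce the tail to a Poisson-type series $\sum_{r\ge Cn/2}\lambda^r/r!$ with $\lambda=\Theta(n)$, which is then dispatched by Stirling (or the Poisson tail bound). The difference is in the combinatorial overcount. The paper lets $r\le m/2$ denote the number of \emph{distinct} pairs $[\indi,\inda]$, bounds the choices of these pairs by $\binom{3^p\binom{n}{p}}{r}$, the multiplicities by a stars-and-bars factor $2^{m+r}$, and the orderings by $m!/2^{m/2}$; this forces a case split according to whether $m$ exceeds $2\cdot 3^p\binom{n}{p}$ (since $r$ is then capped). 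Your matching overcount---pick a perfect matching of positions, then label each matched pair---sidesteps all of this: every even configuration with block sizes $2k_1,\ldots,2k_s$ is produced $\prod_j(2k_j-1)!!\ge 1$ times, so you land directly on $(2r-1)!!\bigl[3^p\binom{n}{p}\bigr]^r$ and the identity $(2r-1)!!/(2r)!=1/(2^r r!)$ gives the clean Poisson term without any range splitting. The paper's route is more flexible if one later wants to track the number of distinct tuples (as it does in \Cref{sec:unique_terms_only}), but for this lemma alone your argument is the more economical one.
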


\begin{proof} 
The total number $r$ of distinct tuple pairs $[\indi,\inda]$
in each set of $m$ pairs is at most $m/2$.
Note that the total number of distinct pairs is 
$3^p{n\choose p}$ corresponding to the product of the $3^p$ choices for tuples $\inda$ and ${n \choose p}$ choices for tuples $\indi$.
Let $m_0=\min(m/2, 3^p{n\choose p})$ which
is an upper bound on the number of distinct pairs
in each set of $m$ pairs.
Over all $r\le m_0$
the number of choices for $r$-sets of such distinct pairs  
 is then at most
\begin{align}
\sum_{r\le m_0}
{3^p{n\choose p}\choose r}
\le 
m_0{\left(3^p{n\choose p}\right)^{m_0}\over m_0!},
\end{align}
where monotonicity
in $r\le m_0$ was used.
Let $m_1,\ldots,m_r$
denote the multiplicities of each distinct tuple pair among the sequence of $m$ pairs. 
By a stars and bars argument, the total number of choices of $m_1,\ldots,m_r$ is at most ${m+r \choose m} \leq 2^{m+r}$ corresponding to the number of ways to place $m$ bars within $r$ stars noting that each bar must be placed next to at least one other bar. An upper bound on the number
of permutations for a sequence of $m$ pairs
is $m!/2^{m/2}$ corresponding to the number of pair ordered matchings upper bounded by the setting where there are $m/2$ unique tuple pairs. Thus we obtain
\begin{align}
\sum\limits_{m\geq Cn} \frac{\left(\beta\sqrt{n}\right)^m}{\binom{n}{p}^{\frac{m}{2}}m!}\sum\limits_{\substack{\bm{\indi_1},\ldots,\bm{\indi_m}\\\bm{\inda_1},\ldots,\bm{\inda_m} \\ \left[\bm{\indi};\bm{\inda}\right] \text{ even}} }
\Tr\left(P_{\bm{\indi_1}}^{\bm{\inda_1}}\ldots P_{\bm{\indi_m}}^{\bm{\inda_m}}\right) 
&\le
2^n\sum\limits_{m\ge Cn}\frac{\left(\beta\sqrt{n}\right)^m}{\binom{n}{p}^{\frac{m}{2}}m!}
m_0{\left(3^p{n\choose p}\right)^{m_0}\over m_0!}m!2^{m+r-m/2} \\
&\leq
2^n\sum\limits_{m\ge Cn}\frac{\left(\beta\sqrt{n}\right)^m}{\binom{n}{p}^{\frac{m}{2}}}
{\left(3^p{n\choose p}\right)^{m_0}\over (m_0-1)!}2^{m}.
\end{align}
We split the sum into two cases. First consider the range
$m\ge 2(3^p){n\choose p}$. In this case
$m_0=3^p{n\choose p}$ and our range
is $m\ge 2m_0$.
We have 
\begin{align}
\sum_{m\ge 2m_0}
{\left(\beta\sqrt{n}\right)^m \over \binom{n}{p}^{m\over 2}}
2^{m} = 
(1+o(1))
\left({2\beta\sqrt{n} \over {n\choose p}^{1\over 2}}\right)^{2m_0}.
\end{align}
We thus obtain a bound
\begin{align}
\sum\limits_{m\ge 2(3^p){n\choose p}} \frac{\left(\beta\sqrt{n}\right)^m}{\binom{n}{p}^{\frac{m}{2}}m!}\sum\limits_{\substack{\bm{\indi_1},\ldots,\bm{\indi_m}\\\bm{\inda_1},\ldots,\bm{\inda_m} \\ \left[\bm{\indi};\bm{\inda}\right] \text{ even}} }
\Tr\left(P_{\bm{\indi_1}}^{\bm{\inda_1}}\ldots P_{\bm{\indi_m}}^{\bm{\inda_m}}\right) 
\leq 2^n
(1+o(1))
{\left(2\beta (3^{p\over 2})\sqrt{n}\right)^{2m_0}
\over (m_0-1)!}.
\end{align}
Since $m_0=\Theta(n^p)$ with $p\ge 2$,
the value above is bounded
as $2^n2^{-\Omega(m_0)}=2^{-\Omega(n^p)}$.

Next we consider the range
$Cn\le m\le 2(3^p){n\choose p}$.
In this case $m_0=m/2$.
We obtain a sum
\begin{align}
2^n\sum_{Cn \le m\le 2(3^p){n\choose p}}\frac{\left(\beta\sqrt{n}\right)^m}{\binom{n}{p}^{\frac{m}{2}}}
{\left(3^p{n\choose p}\right)^{m\over 2}\over (m/2-1)!}2^{m} =
2^n\sum_{Cn \le m\le 2(3^p){n\choose p}}{\left(4\beta^2n3^p\right)^{m\over 2}\over (m/2-1)!}.
\end{align}
This sum is of the form
$\sum_{r\ge Cn} (\lambda n)^r/r!$
for some constant $\lambda$. 
By the tail bound of the Poisson distribution, this
sum is at most $\exp(-\epsilon n)$ for
sufficiently large $C$.
\end{proof}

\subsection{Matched products}
To evaluate \Cref{eq:tr_exp_expanded_pre_conjecture}, we will consider the matchings induced on each individual qubit and expand the trace over all qubits as the product of traces over Paulis acting on each of the $n$ different qubits separately. Here, we evaluate the trace over a sequence of matched Paulis acting on a single qubit.

In what follows, we fix $d$ as a positive integer to denote the number of unique Paulis. Consider products $\sigma_{2d}\cdots \sigma_1$ where $\sigma_i\in \{\sigma^1, \sigma^2, \sigma^3\}$, where as a reminder $\sigma^1, \sigma^2, \sigma^3$ are the single qubit Pauli $X,Y,Z$ matrices respectively. 
Fix a matching $M$ on $1,\ldots,2d$. 
Say $i\sim j, 1\le i<j\le 2d$ if $i$ and $j$ are matched by $M$.
Let  $\Trace_{\rm sum}(M)\triangleq (1/2)\Trace(\sum_{\sigma:M} \prod_{1\le i\le 2d}\sigma_i)$ where the sum is over all products of $2d$ Paulis $\sigma^1, \sigma^2, \sigma^3$ 
with the restriction that $\sigma_i=\sigma_j$ 
whenever $i\sim j$. Suppose the matching $M$ is chosen uniformly at random from $\mathcal{M}_d$ (see \Cref{def:matching}).

\begin{restatable}[]{lemma}{trsumlemma}
\label{lemma:expected-trace-sum}
When $M$ is drawn uniformly at random from $\mathcal{M}_d$ for any integer $d\geq 1$, it holds that 
\begin{equation}
    \E_{M}[\Trace_{\rm sum}(M)]=2d+1.
\end{equation}
\end{restatable}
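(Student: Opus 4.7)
The cleanest route is via Wick's theorem applied to a Gaussian matrix-valued variable. I would let $\xi_1,\xi_2,\xi_3$ be i.i.d.\ standard Gaussians and set $X \coloneqq \xi_1\sigma^1+\xi_2\sigma^2+\xi_3\sigma^3$. Expanding $\Tr(X^{2d})$ and applying Isserlis' theorem to the Gaussian moments $\E[\xi_{a_1}\cdots\xi_{a_{2d}}]$, each Wick contraction enforces $a_i=a_j$ along matched pairs $i\sim j\in M$, so I would obtain
\begin{equation*}
    \E[\Tr(X^{2d})] = \sum_{a_1,\ldots,a_{2d}\in\{1,2,3\}}\E[\xi_{a_1}\cdots\xi_{a_{2d}}]\,\Tr(\sigma^{a_1}\cdots\sigma^{a_{2d}}) = 2\sum_{M\in\mathcal{M}_d}\Trace_{\rm sum}(M),
\end{equation*}
where the factor of $2$ restores the $\tfrac{1}{2}$ suppressed in the definition of $\Trace_{\rm sum}$.

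Next, I would evaluate the left-hand side directly. The anti-commutation relations $\{\sigma^a,\sigma^b\}=2\delta_{ab}I$ make all cross terms in $X^2$ cancel, yielding $X^2 = R^2\,I$ where $R^2 \coloneqq \xi_1^2+\xi_2^2+\xi_3^2$ is $\chi^2_3$-distributed. Hence $\Tr(X^{2d})=2R^{2d}$, and the standard moment identity $\E[(\chi^2_k)^d]=k(k+2)\cdots(k+2d-2)$ at $k=3$ gives $\E[\Tr(X^{2d})]=2\cdot 3\cdot 5\cdots(2d+1)=2(2d+1)!!$.

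Combining the two expressions and using the classical count $|\mathcal{M}_d|=(2d-1)!!$, I would conclude
\begin{equation*}
    \E_M[\Trace_{\rm sum}(M)] = \frac{1}{|\mathcal{M}_d|}\sum_{M\in\mathcal{M}_d}\Trace_{\rm sum}(M) = \frac{(2d+1)!!}{(2d-1)!!} = 2d+1.
\end{equation*}
I do not anticipate a substantive obstacle: once one observes that $X^2$ is a scalar multiple of the identity, the computation collapses to the classical $\chi^2$ moment calculation, and the only care needed is the bookkeeping required to confirm that the Wick contractions reproduce exactly the combinatorics in the definition of $\Trace_{\rm sum}$.
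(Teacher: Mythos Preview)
Your proof is correct and takes a genuinely different route from the paper. The paper gives two proofs, both inductive: the main one conditions on whether positions $1$ and $2$ are matched (probability $1/(2d-1)$), uses the anticommutation relation to handle the complementary case, and reduces to the $d-1$ statement; the appendix version derives a recursion $r_{2d} = \tfrac{2d+1}{2d-1}\,r_{2d-2}$ from a trace identity of Erd\H{o}s--Schr\"oder. Your argument is non-inductive and more global: you package all matchings simultaneously as the Wick expansion of $\E[\Tr(X^{2d})]$, then exploit the single algebraic fact that $X^2$ is a scalar multiple of the identity to collapse the whole thing to a $\chi^2_3$ moment. The paper's approach is elementary and self-contained (no Gaussian machinery), while yours is slicker and explains why the answer $2d+1$ is so clean---it is literally the ratio of consecutive $\chi^2_3$ moments. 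Your method also generalizes immediately: replacing the three Paulis by $k$ anticommuting generators with $\gamma_a^2=I$ yields $\chi^2_k$ moments and hence $\E_M[\Trace_{\rm sum}(M)]=k+2(d-1)$, which connects naturally to the Majorana/SYK setting the paper mentions.
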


\begin{proof}

The proof proceeds by induction in $d$. When $d=1$ the statement holds true since $\Trace_{\rm sum}(M)=3$. 
We assume the validity of the statement for $d'\le d-1$ and now we prove it for $d$. 

Fix any matching $M$.
Define the matching $M_{(12)}$ as follows. If qubits $1$ and $2$ are matched then $M_{(12)}=M$. 
Otherwise $M_{(12)}$ is obtained from $M$ by swapping $1$ and $2$, namely
if qubit $1$ is matched with qubit $j\ne 2$ and qubit $2$ is matched with qubit $k\ne 1$ in $M$, then in $M_{(12)}$ qubit $1$ is matched with $k$ and qubit $2$ with $j$.
Note that if $M$ is chosen uniformly at random from $\mathcal{M}_d$, the same applies to $M_{(12)}$.

In the event qubits $1$ and $2$ are matched, which occurs with probability $1/(2d-1)$
and which we denote by $\mathcal{E}_{1,2}$,
we obtain $\Trace_{\rm sum}(M)=3\Trace_{\rm sum}(\hat M)$ where $\hat M$ is a matching induced by $M$ on $3,4,\ldots,2d$ obtained after eliminating $1,2$. 
$\hat M$ is distributed uniformly at random from $\mathcal{M}_{d-1}$, and thus by the inductive assumption we have
$\E[\Trace_{\rm sum}(\hat M)]=2(d-1)+1$. 
Thus $\E\left[\Trace_{\rm sum}(M)|\mathcal{E}_{1,2}\right]=3(2(d-1)+1)$.

Consider now the event $\mathcal{E}_{1,2}^c$ that $1$ and $2$ are not matched which occurs with probability $1-1/(2d-1)$. For any $a,b\in \{X,Y,Z\}$, let
 $\Trace_{\rm sum,a,b}(M)$ denote the part of $\Trace_{\rm sum}(M)$ subject to the restriction that $\sigma_1=\sigma^a,\sigma_2=\sigma^b$.
Define $\Trace_{\rm sum,a,b}(M_{(12)})$ similarly. We have 
$\Trace_{\rm sum, a,a}(M)=\Trace_{\rm sum,a,a} (M_{(12)})$ for all $a \in \{1,2,3\}$ and $\Trace_{\rm sum, a,b}(M)=-\Trace_{\rm sum, a,b}(M_{(12)})$ for any $a,b \in \{1,2,3\}$ where $a \ne b$ (i.e. swapping anti-commuting Paulis introduces a negative sign).
Then, under the event $\mathcal{E}_{1,2}^c$,
\begin{align}
\Trace_{\rm sum}(M)=\sum_a \Trace_{\rm sum, a,a}(M)+\sum_{a\ne b}\Trace_{\rm sum, a,b}(M),
\end{align}
and we obtain 
\begin{align}
\Trace_{\rm sum}(M)+\Trace_{\rm sum}(M_{(12)})=2 \sum_a \Trace_{\rm sum, a,a}(M).
\end{align}
Taking the expectation over matchings where both $M$ and $M_{(12)}$ are chosen uniformly at random,
we obtain $\E[\Trace_{\rm sum}(M)]=\E[\sum_a \Trace_{\rm sum, a,a}(M)]$. Let $\hat M$ be the matching on  $3,\ldots,2d$ with
$j$ and $k$ matched, where $j$ and $k$ are the matches of qubits $1$ and $2$ respectively in $M$. Note that $\hat M$ is a uniform random matching drawn from $\mathcal{M}_{d-1}$.
 Thus by the inductive assumption $\E{\Trace_{\rm sum}(\hat M)}=2(d-1)+1$. 
Also note
that $\sum_a \Trace_{\rm sum, a,a}(M)=\Trace_{\rm sum}(\hat M)$.

Combining the two cases, we obtain
\begin{align}
r_{2d}=\left(3{1\over 2d-1}+
1-{1\over 2d-1}\right)\left(2(d-1)+1\right)
=2d+1.
\end{align}

\end{proof}

\subsection{Conjectured Asymptotic Independence Structure}
We consider \Cref{eq:tr_exp_expanded_pre_conjecture} replacing the sum over tuples $\indi_1, \dots, \indi_r$ as an expectation over u.a.r. draws from the set ${\mathcal{I}_p^n \choose r}$ of $r$ many $p$-tuples drawn without replacement:
\begin{equation} \label{eq:Trace-sum-m-2}
    \sum\limits_{\substack{\{ \indi_1, \dots, \indi_r \} \in {\mathcal{I}_p^n \choose r}\\ \inda_1, \dots, \inda_r \in \{1,2,3\}^p}} \sum_{\widetilde{M} \in \widetilde{\mathcal{M}}_r}\Tr\left(P_{\indi_{\widetilde{M}_1}}^{\inda_{\widetilde{M}_1}}\ldots P_{\indi_{\widetilde{M}_{2r}}}^{\inda_{\widetilde{M}_{2r}}}\right) 
    = 
    {{n\choose p}\choose r}
    \E_{\indi_1, \dots, \indi_r}\left[ \sum\limits_{\substack{\inda_1, \dots, \inda_r \in \{1,2,3\}^p \\ \widetilde{M} \in \widetilde{\mathcal{M}}_r } }\Tr\left(P_{\indi_{\widetilde{M}_1}}^{\inda_{\widetilde{M}_1}}\ldots P_{\indi_{\widetilde{M}_{2r}}}^{\inda_{\widetilde{M}_{2r}}}\right) \right],
\end{equation}
interpreted as follows. We generate a  sequence of distinct tuples $ \indi_1,\ldots,\indi_r$ u.a.r. without replacement from the total choice of 
${{n\choose p} \choose r}$ such sequences.
For each $j\in [n]$ and each $\indi_k$ containing $j$, 
let $\ell(k,j)$ be the coordinate of $j$ in $\indi_k$. Namely, $\ell(k,j)=\ell\in [p]$ if $\indi_k=(i_{1},\ldots,i_{p})$ and $i_{\ell}=j$. 
With this notation, we sum over all choices of sequences of tuples $\inda_1,\ldots, \inda_r$
the following object:
\begin{align}
\Tr\left(P_{\bm{\indi_{\widetilde{M}_1}}}^{\bm{\inda_{\widetilde{M}_1}}}\ldots P_{\bm{\indi_{\widetilde{M}_{2r}}}}^{\bm{\inda_{\widetilde{M}_{2r}}}}\right)=\prod_{j\in [n]}\Trace\left(\prod_{\substack{k \in \{1, \dots, 2r\} \\ j\in \indi_{\widetilde{M}_{k}}}} \sigma^{\inda_{\ell(\widetilde{M}_{k},j)}}\right). 
\end{align}
Thus, due to product structure of the trace, we can rewrite the double sum 
$\sum_{\bm{\inda_1},\ldots,\bm{\inda_r}} \sum_{\widetilde{M} \in \widetilde{\mathcal{M}}_r}$ appearing in 
 \Cref{eq:Trace-sum-m-2} as follows:
\begin{align}\label{eq:trace-product}
\sum_{\widetilde{M} \in \widetilde{\mathcal{M}}_r} \prod_{j\in [n]}\left(\sum_{\substack{\sigma_1,\ldots,\sigma_{2\Delta(j)}\in \{\sigma^1, \sigma^2, \sigma^3\} \\ \sigma_a = \sigma_b \text{ if matched}}} \Trace\left(\sigma_1,\ldots,\sigma_{2\Delta(j)}\right)\right).
\end{align}
This expression is explained as follows. 
The outer sum is over all $(2r)!/2^{r}$ unique pair ordered matchings of tuples $\indi_1,\ldots, \indi_r$, each replicated twice.  Let $\indj_1,\ldots,\indj_m, m=2r$
denote the set of these replicated tuples. Let
$\Delta(j)$ be the cardinality
of the number of tuples $\indi_k$ containing qubit $j$. Each such ordering induces a matching $M_j$ on the $2\Delta(j)$-set of tuples $\indj_k$
containing  qubit $j$, where elements $s,t\in [2\Delta(j)]$ are matched if they correspond to two replicas $\indj',\indj''$ of the same tuple $\indi$.
The inner sum  is then over all ordered sequences Pauli operators
$\sigma_1,\ldots,\sigma_{2\Delta(i)}\in \{\sigma^1, \sigma^2, \sigma^3\}$ (as a reminder, $\sigma^1, \sigma^2, \sigma^3$ are the Pauli X, Y, and Z operators respectively) respecting the matching, namely $\sigma_s=\sigma_t$ if $s$ and $t$ are matched.
The sum of traces over any fixed matching $M$ is determined by the matching alone, which, as in \Cref{lemma:expected-trace-sum}, was  denoted by $\Trace_{\rm sum}(M)$. We then
rewrite \Cref{eq:trace-product} as 
\begin{align}\label{eq:trace-product-2}
2^n\sum_{\widetilde{M} \in \widetilde{\mathcal{M}}_r} \prod_{j\in [n]}\Trace_{\rm sum}(M_j).
\end{align}
Then we can further rewrite the right hand side of \Cref{eq:Trace-sum-m-2} as 
\begin{align}
 2^n{{n\choose p} \choose r} \E_{\indi_1, \dots, \indi_r}\left[\sum_{\widetilde{M} \in \widetilde{\mathcal{M}}_r} \prod_{j\in [n]}\Trace_{\rm sum}(M_j)\right].
\label{eq:Trace-sum-m-4}
\end{align}
We further replace the sum by $(2r)!/2^{r}\E_{\widetilde{M}}[\cdot]$ where the expectation is also now with respect to the permutation of $m=2r$ elements chosen uniformly at random
up to multiplicities of the replicated pairs. We  obtain
\begin{align}
2^n{{n\choose p} \choose r} {(2r)! \over 2^r} \E_{\indi_1, \dots, \indi_r} \E_{\widetilde{M}}\left[ \prod_{j\in [n]}\Trace_{\rm sum}(M_j)\right].
\label{eq:Trace-sum-m-5}
\end{align}
Here the expectation is both with respect to the random choice of $\indi_1,\ldots,\indi_r$ and random pair ordered matching $\widetilde{M}$ over $m=2r$ elements.

We note that  for each $i\in [n]$, conditioned on the cardinality of $\Delta(i)$, the matching $M_i$ is generated uniformly at random.

We now assume Conjecture~\ref{conj:main-0}, which states:
\begin{align*}
\E_{\indi_1, \dots, \indi_r} \E_{\widetilde{M}}\left[ \prod_{j\in [n]}\Trace_{\rm sum}(M_j)\right]&\leq\gamma(p)^r \exp(O_p(1)n) \; \E_{\indi_1, \dots, \indi_r} \left[ \prod_{j\in [n]} \E_{M_j} \left[\Trace_{\rm sum}(M_j) \right]\right] \\
&=\gamma(p)^r \exp(O_p(1)n) \; \E_{\indi_1, \dots, \indi_r}\left[\prod_{j\in [n]}(2\Delta(j)+1)\right],
\end{align*}
where we recall that $\Delta(j)$ is the (random) cardinality of the number of tuples $\indi_k, k\in [r]$ containing qubit $j$.
Here in the second expression the outer expectation is with respect to the random choice of  tuples  $\indi_1,\ldots,\indi_r$ drawn without replacement 
and the inner expectation is with respect to the distribution $\operatorname{Unif}(\mathcal{M}_{\Delta(j)})$ which is the random matching induced on qubit $j$.

The second part of the conjecture is already verified by \Cref{lemma:expected-trace-sum} since the marginal 
distribution of the matching $M_j$ is u.a.r.. It is the first part of the conjecture which is its essence. The constant $\gamma(p)$ essentially bounds the multiplicative growth in the ratio between the actual versus conjectured values upon adding a tuple $\indi_{r+1}$ at random to a random sequence of matchings of size $r$. Note that it is immediately clear from $(2\Delta(j)+1)\geq 1$ and the number of terms in $\Trace_{\rm sum}(M_j)$ that we have the conjecture is true for $\gamma\left(p\right)=3^p$; what is conjectural is that this can be improved to a more slowly growing function.

\subsection{Upper bound assuming conjectured ansatz}

In what proceeds, we will assume that \Cref{conj:main-0} holds. Our goal is to obtain an upper bound on the quantity
\begin{equation}
    \E_{\indi_1, \dots, \indi_r}\left[\prod_{i\in [n]}(2\Delta(i)+1)\right].
\end{equation}

Suppose $\indi_1,\ldots,\indi_r$ are chosen uniformly at random. This induces a $p$-uniform hypergraph on nodes in the set $[n]$ corresponding to the qubits.
Let $\Delta(i), i\in [p]$ be the number of tuples $\indi_k, 1\le k\le r$ which contain $i$. $\Delta(i)$ is the degree of the node $i$ in this hypergraph. 
We now show that the joint distribution of $(\Delta(i), i\in [n])$ is a product of independent r.v. with $\Pois(pr/n)$ distribution up to $O(\sqrt{r})$ correction.

\begin{lemma}\label{lem:poisson_from_conjecture}
Suppose $r\le Bn$ for some constant $B>0$ and $p \geq 3$. Let $\lambda=pr/n$.
There exists a constant $c(p,B)$ which depends on $B$ and $p$ alone such that
for every fixed  $d_1,\ldots,d_n\ge 0$ satisfying $\sum_i d_i=pr$ it holds
\begin{align}
\pr\left(\Delta(i)=d_i, i\in [n]\right) \le {\sqrt{r}\over c(B,p)}\prod_{i\in [n]}{\lambda^{d_i}\over d_i !}\exp(-\lambda d_i). \label{eq:d-i}
\end{align}
As a result
\begin{align}
\E_{\indi_1, \dots, \indi_r}\left[\prod_{i\in [n]}(2\Delta(i)+1)\right]\le {\sqrt{r}\over c(B,p)} \left({2pr \over n}+1\right)^n.
\end{align}
\end{lemma}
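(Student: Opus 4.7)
The plan is to first establish the pointwise bound \eqref{eq:d-i} by direct combinatorial counting, and then deduce the stated bound on the expectation by summing the pointwise bound over all admissible degree sequences and dropping the constraint $\sum_i d_i = pr$, reducing the problem to a product of one-dimensional Poisson sums.

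For the pointwise bound, I would write
\begin{equation*}
\pr(\Delta(i) = d_i,\; i \in [n]) \;=\; \frac{N(d_1,\ldots,d_n)}{\binom{\binom{n}{p}}{r}},
\end{equation*}
where $N(d_1,\ldots,d_n)$ is the number of $r$-subsets of $\binom{[n]}{p}$ realizing the degree sequence $(d_i)$. A configuration-model argument gives the upper bound
\begin{equation*}
N(d_1,\ldots,d_n) \;\le\; \frac{(pr)!}{r!\,(p!)^r\,\prod_i d_i!},
\end{equation*}
since $(pr)!/\prod_i d_i!$ counts assignments of $pr$ ordered half-edges to vertices with the prescribed degrees, each simple hypergraph is counted $r!(p!)^r$ times (reorderings of hyperedges and within-hyperedge orderings), and the configuration model additionally counts multi-hyperedges which must be excluded. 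For the denominator, the hypothesis $r \le Bn$ together with $p \ge 3$ ensures $r/\binom{n}{p} = O(n^{1-p}) = O(n^{-2})$ and thus $r^2/\binom{n}{p} = O(n^{3-p}) \to 0$, yielding
\begin{equation*}
\binom{\binom{n}{p}}{r} \;\ge\; \frac{\binom{n}{p}^r}{r!}\left(1 - \tfrac{r}{\binom{n}{p}}\right)^r \;\ge\; c_1(B,p)\,\frac{n^{pr}}{r!\,(p!)^r}
\end{equation*}
for some constant $c_1(B,p) > 0$, where I also used $\binom{n}{p} \ge (1-o_n(1))\, n^p/p!$.

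Combining these two estimates and invoking Stirling in the form $(pr)! \le K\sqrt{r}\,(pr/e)^{pr}$ (with $K$ absolute, $p$ fixed) gives
\begin{equation*}
\pr(\Delta(i) = d_i,\; i \in [n]) \;\le\; \frac{K\sqrt{r}}{c_1(B,p)} \cdot \frac{(pr/n)^{pr} e^{-pr}}{\prod_i d_i!} \;=\; \frac{\sqrt{r}}{c(B,p)} \prod_{i\in[n]} \frac{\lambda^{d_i}}{d_i!}\,e^{-\lambda},
\end{equation*}
since $n\lambda = pr$ and $\sum_i d_i = pr$ imply $\lambda^{\sum_i d_i} e^{-n\lambda} = (pr/n)^{pr} e^{-pr}$.

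For the corollary, I would write the expectation as a sum over admissible $(d_i)$ and then extend to all of $\mathbb{Z}_{\ge 0}^n$, which only inflates the bound since every summand is nonnegative:
\begin{equation*}
\E\!\left[\prod_i (2\Delta(i)+1)\right] \;\le\; \frac{\sqrt{r}}{c(B,p)} \sum_{(d_i)\in\mathbb{Z}_{\ge 0}^n} \prod_{i\in[n]} (2d_i+1)\,\frac{\lambda^{d_i}}{d_i!}\,e^{-\lambda} \;=\; \frac{\sqrt{r}}{c(B,p)}\,(2\lambda+1)^n,
\end{equation*}
using the factorization of the product over $i$ and $\E_{X\sim\Pois(\lambda)}[2X+1] = 2\lambda+1$. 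Substituting $\lambda = pr/n$ yields the claim.

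The main obstacle is justifying the configuration-model upper bound on $N(d_1,\ldots,d_n)$ cleanly for $p \ge 3$: unlike the graph case $p = 2$, one must carefully argue that each \emph{simple} $p$-uniform hypergraph with a given degree sequence is overcounted by exactly $r!(p!)^r$ ordered half-edge configurations (and that the total number of such configurations is $(pr)!/\prod_i d_i!$). A second, lesser technical point is tracking the dependence of the correction factor $(1 - r/\binom{n}{p})^r$ on both $B$ and $p$ to produce a single constant $c(B,p)$, which is where the hypothesis $p \ge 3$ is used.
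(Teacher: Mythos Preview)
Your proof is correct and takes a genuinely different route from the paper. The paper realizes the degree vector $(\Delta(i))_i$ as the conditional law of i.i.d.\ $\Pois(\lambda)$ variables $(D_i)_i$ given $\sum_i D_i = pr$ and an event $\mathcal{E}_{\rm distinct}$ (the induced tuples form $r$ distinct $p$-subsets); it then drops $\mathcal{E}_{\rm distinct}$ from the numerator and lower-bounds the conditioning probability via $\pr(\mathcal{E}_{\rm distinct}\mid D=pr)=\Omega(1)$ (this is where the paper uses $p\ge 3$) and $\pr(D=pr)\ge c(p)/\sqrt{r}$ from Stirling. You instead bound the number of simple $p$-uniform hypergraphs with a prescribed degree sequence directly by the configuration-model inequality $N(d_1,\ldots,d_n)\le (pr)!/\bigl(r!(p!)^r\prod_i d_i!\bigr)$ and combine with an elementary lower bound on $\binom{\binom{n}{p}}{r}$. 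Both arguments arrive at the same product-of-Poisson-pmf upper bound and finish the corollary identically by dropping the constraint $\sum_i d_i=pr$.

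Two minor remarks. First, your worry about the configuration-model step is unwarranted: the multinomial count $(pr)!/\prod_i d_i!$ enumerates all vertex-labelings of the $pr$ ordered slots, and every simple hypergraph is hit exactly $r!(p!)^r$ times regardless of $p$, so the inequality is immediate. Second, the place you invoke $p\ge 3$ (to get $(1-r/\binom{n}{p})^r\to 1$) is stronger than needed---already for $p\ge 2$ one has $(1-O(1/n))^{O(n)}\ge c(B,p)>0$, so your argument in fact goes through for $p\ge 2$, whereas the paper's Poisson-conditioning route genuinely uses $p\ge 3$ to control collisions among tuples.
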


\begin{proof}
Let $D$ be a random variable distributed as $\Pois(pr)$. Consider the process where each element in the set $1,\ldots,D$ is assigned a slot
$i$ independently and uniformly at random. Denote by $D_i$ the number of elements assigned to slot $i$. Then $\sum_i D_i=D$
and by the Poisson splitting theorem, $(D_i, i\in [n])$ is a collection of independent Poisson r.v. with parameter $\lambda=pr/n$. Observe that
the distribution of $(\Delta(i), i\in [n])$ is the distribution of $(D_i, i\in [n])$ conditioned on $D=pr$, and further conditioned on the following
event: (a) tuples $\indi_1\triangleq (i_1,\ldots,i_p),\indi_2\triangleq (i_{p+1},\ldots,i_{2p}),\ldots,\indi_r\triangleq (i_{p(r-1)+1},\ldots,i_{pr})$  
are distinct up to order (that is, viewed as sets) and (b)
each of the $p$ elements in the tuple $\indi_k=(i_{(k-1)p+1},\ldots,i_{kp})$ are assigned to distinct $p$ out of $n$ slots, for each $k\in [r]$. 
We denote this event by 
$\mathcal{E}_{\rm distinct}$.

Thus
\begin{equation} \label{eq:D-lower}
\begin{split}
\pr\left(\Delta(i)=d_i, i\in [n]\right)&=\pr\left(D_i=d_i, i\in [n] | D=pr, \mathcal{E}_{\rm distinct} \right) \\
&={\pr\left(D_i=d_i, i\in [n] , D=pr, \mathcal{E}_{\rm distinct}\right) \over \pr\left(D=pr, \mathcal{E}_{\rm distinct}\right)}  \\
&\le
{\pr\left(D_i=d_i, i\in [n] , D=pr\right) \over \pr\left(D=pr, \mathcal{E}_{\rm distinct}\right)} \\
&=\pr^{-1}\left(D=pr,\mathcal{E}_{\rm distinct}\right)\prod_{i\in [n]}{\lambda^{d_i}\over d_i !}\exp(-\lambda d_i) 
\end{split}
\end{equation}
We claim that $\pr\left(\mathcal{E}_{\rm distinct}|D=pr\right)=\Omega(1)$ in terms of $n$. Namely, it is bounded away from zero uniformly in $n$.
Indeed, regarding (a) note that the probability that two sets $\indi_{k_1}$ and $\indi_{k_2}$ coincide for fixed $k_1\ne k_2$, 
occurs with probability $O(1/n^p)$. Thus, by the union bound, the probability that all tuples are distinct is at least $1-O(n^{p-2})=1-o(1)$,
as $p\ge 3$.
Furthermore, regarding (b),
recall that $r\le Bn$. The probability that the $j$-th tuple $(i_{(j-1)p+1},\ldots, i_{jp})$ does not contain a repeated item is 
$n(n-1)\cdots (n-p+1)/n^p=1-p^2/(2n)+o(1/n)$. The probability that this occurs across all $r$ tuples is then at least
$ \left(1-p^2/(2n)+o(1/n)\right)^{Bn}=\exp(-Bp^2/2+o(1))=\Omega(1)$ as claimed. 

Then 
\begin{equation}
\begin{split}
\pr\left(D=pr,\mathcal{E}_{\rm distinct}\right) &=\pr\left(\mathcal{E}_{\rm distinct} | D=pr\right)\pr\left(D=pr\right) \\
&\ge \left(1-o(1)\right)\exp(-Bp^2/2+o(1))\pr\left(D=pr\right) \\
&=\exp(-Bp^2/2+o(1))\pr\left(D=pr\right).
\end{split}
\end{equation}
We have 
\begin{align}
\pr\left(D=pr\right) ={(pr)^{pr}\over (pr)!}\exp(-pr).
\end{align}
Recalling Stirling's formula $n!=(1+o_n(1))\sqrt{2\pi n}n^ne^{-n}$, we obtain
\begin{align}
\pr\left(D=pr\right) ={1\over (1+o_r(1))\sqrt{2\pi pr}}\ge {c(p)\over \sqrt{r}}
\end{align}
for some  constant $c(p)$ which depends on $p$ alone. We conclude also 
\begin{align}
\pr\left(D=pr,\mathcal{E}_{\rm distinct}\right) \ge {c(B,p)\over \sqrt{r}},
\end{align}
by incorporating the constant $\exp(-Bp^2/2+o(1))$ and $c(p)$ into $c(B,p)$.
Applying this to \Cref{eq:D-lower} we obtain \Cref{eq:d-i}.
The upper bound on $\EE{\prod_i (2\Delta(i)+1)}$ is then obtained by independence of $D_i$ and $\EE{D_i}=\lambda=pr/n$.
\end{proof}

Returning to the expression in \Cref{eq:tr_exp_expanded_pre_conjecture} copied below:
\begin{equation} 
\mathbb{E}\left[\Tr\left(\exp\left(\beta\sqrt{n}H_{n,p}\right)\right)\right]
=
\exp(o_n(n))
\sum\limits_{r=0}^{Cn}
\frac{\left(\beta^2 n \right)^r}{\binom{n}{p}^{r}(2r)!}\sum\limits_{\substack{\{ \indi_1, \dots, \indi_r \} \in {\mathcal{I}_p^n \choose r}\\ \inda_1, \dots, \inda_r \in \{1,2,3\}^p}} \sum_{\widetilde{M} \in \widetilde{\mathcal{M}}_r}
\Tr\left(P_{\bm{\indi_{\widetilde{M}_1}}}^{\inda_{\widetilde{M}_1}}\ldots P_{\bm{\indi_{\widetilde{M}_{2r}}}}^{\inda_{\widetilde{M}_{2r}}}\right)
+ o_n(1),
\end{equation}
we obtain from applying the above Lemma to the terms in the sum:
\begin{equation} \label{eq:bound_from_conjecture}
    \sum\limits_{\substack{\{ \indi_1, \dots, \indi_r \} \in {\mathcal{I}_p^n \choose r}\\ \inda_1, \dots, \inda_r \in \{1,2,3\}^p}} \sum_{\widetilde{M} \in \widetilde{\mathcal{M}}_r}
\Tr\left(P_{\bm{\indi_{\widetilde{M}_1}}}^{\bm{\inda_{\widetilde{M}_1}}}\ldots P_{\bm{\indi_{\widetilde{M}_{2r}}}}^{\bm{\inda_{\widetilde{M}_{2r}}}}\right)
\le 
2^{O_p(1)n} \gamma(p)^r
{{n\choose p} \choose r} {(2r)! \over 2^r} 
{\sqrt{r} \over c(B,p)} \left({2pr \over n}+1\right)^n.
\end{equation}

Returning to the expectation of the exponentiated Hamiltonian $\EE{\exp(\beta\sqrt{n}H)}$ and applying the bound above we obtain 
\begin{equation}
\begin{split}
\EE{\Trace\left(\exp(\beta\sqrt{n}H)\right)}
&\leq 2^{O_p(1)n+o_n(n)} 
\sum_{r=0}^{Cn}{(\beta^2 n \gamma(p))^{r}  \over (2r)! {n\choose p}^{r}}
{{n\choose p} \choose r} {(2r)!  \over 2^r}
{\sqrt{r} \over c(B,p)} \left({2pr \over n}+1\right)^n 
+ o(1)\\
&\le 
2^{O_p(1)n+o_n(n)}+\sum_{r\ge 1}{(\beta^2 n \gamma(p))^{r}  \over (2r)! {n\choose p}^{r}}
2^{O_p(1)n+o_n(n)}
{{n\choose p} \choose r} {(2r)!  \over 2^r}
{\sqrt{r} \over c(B,p)} \left({2pr \over n}+1\right)^n.
\end{split}
\end{equation}
Using ${{n\choose p} \choose r}\le {{n\choose p}^r \over r!}$ and simplifying, we obtain an upper bound
\begin{align}
2^{O_p(1)n+o_n(n)}+
{2^{O_p(1)n+o_n(n)} \over c(B,p)}
\sum_{r\ge 1}\left({\beta^2 n \gamma(p) \over 2}\right)^{r}
{1 \over r!} 
\sqrt{r}\left({2pr \over n}+1\right)^n
\end{align}
Using $\sqrt{r}/r!\le 1/(r-1)!$, we have
\begin{align}
\sum_{r\ge 1}\left({\beta^2 n \gamma(p)\over 2}\right)^{r}
{1 \over r!} 
\sqrt{r}\left({2pr \over n}+1\right)^n 
\le 
(\beta^2n\gamma(p)/2)
\sum_{r\ge 0}\left({\beta^2 n \gamma(p)\over 2}\right)^{r}
{1 \over r!} 
\left({2p(r+1) \over n}+1\right)^n.
\end{align}
Let $Z$ be chosen as $\Pois\left({\beta^2 n \gamma(p)\over 2}\right)$. We recognize the right-hand side as 
\begin{align}
\left({\beta^2 n \gamma(p)\over 2}\right)\exp\left({\beta^2 n \gamma(p)\over 2}\right)\E\left[\left({2p(Z+1) \over n}+1\right)^n\right].
\end{align}
We have $\pr(Z>\beta^2 n \gamma(p)/2+n^{2\over 3})=\exp(-\omega(n))$ as subexponential bounds imply that the probability that a Poisson mean $O(n)$ r.v.
exceeds its mean by order $n^{2/3}$ is superexponentially small in $n$. 

On the other hand, under the event $Z\le \beta^2 n \gamma(p)/2+n^{2\over 3}$ we obtain an upper bound 
\begin{align}
\left({\beta^2 n \gamma(p)\over 2}\right)\exp\left({\beta^2 n \gamma(p)\over 2}\right)\left({2p(\beta^2 n \gamma(p)/2 +n^{2\over 3}+1) \over n}+1\right)^n
=
\exp\left( n \left(\beta^2 \gamma(p)/2+\log(1+p\beta^2)
\right)+o_n(n)\right).
\end{align}
We obtain 
\begin{align}
\EE{\Trace\left(\exp(\beta\sqrt{n}H)\right)} 
\le \exp(O_p(1)n + o_n(n))+\exp\left( n \left({\beta^2 \gamma(p) \over 2}+\log(1+p\gamma\left(p\right)\beta^2) + O_p(1)\right)+o_n(n)\right).
\end{align}
From this we obtain
\begin{align}
(\beta n)^{-1}\log \EE{\Trace\left(\exp(\beta\sqrt{n}H)\right)} \le
{ C\over \beta}+{ \beta \gamma(p) \over 2}+{\log(1+p\gamma\left(p\right)\beta^2) \over \beta}+o_n(1)\triangleq g(\beta,p)+o_n(1)
\end{align}
for some constant $C>\log 2$ which bounds the $O_p(1)$ term.

The proof of the claimed bound $(1+o_p(1))\sqrt{2 \gamma(p) (\log p)}$ is obtained after the lemma below. 
\begin{lemma}
Let $C>1$ be a constant and $\gamma(p) \geq 1$ for all $p$. The following property holds for $g(\beta,p) \triangleq {C\over \beta}+{\beta \gamma(p) \over 2}+{\log(1+p\gamma\left(p\right)\beta^2) \over \beta}$:
\begin{align}
\min_{\beta\ge 0}g(\beta,p) \leq
(1+o_p(1)) \sqrt{2 \gamma(p) \log p}.
\end{align}
\end{lemma}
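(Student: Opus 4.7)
The plan is to exhibit an explicit near-optimal $\beta$ and verify the three terms of $g(\beta,p)$ at that point. The term $C/\beta$ only contributes a lower-order correction, so the real balance is between the linear-in-$\beta$ term $\beta\gamma(p)/2$ and the logarithmic term $\log(1+p\gamma(p)\beta^2)/\beta$. Setting the derivatives of these two terms (up to the slowly varying factor $\log(1+\cdot)$) equal motivates the choice
\[
\beta^{\ast}\;=\;\sqrt{\frac{2\log p}{\gamma(p)}}.
\]

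First I would substitute $\beta^{\ast}$ into each of the three summands and compute asymptotics as $p\to\infty$. The middle term gives exactly
\[
\frac{\beta^{\ast}\gamma(p)}{2}\;=\;\sqrt{\frac{\gamma(p)\log p}{2}}.
\]
For the logarithmic term, note $p\gamma(p)(\beta^{\ast})^{2}=2p\log p$, so $\log(1+p\gamma(p)(\beta^{\ast})^{2})=\log p+\log\log p+O(1)=(1+o_{p}(1))\log p$, yielding
\[
\frac{\log(1+p\gamma(p)(\beta^{\ast})^{2})}{\beta^{\ast}}\;=\;(1+o_{p}(1))\sqrt{\frac{\gamma(p)\log p}{2}}.
\]
Adding these two contributions produces $(1+o_{p}(1))\sqrt{2\gamma(p)\log p}$, which is precisely the target asymptotic.

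Finally I would confirm that the remaining summand is negligible: since $\gamma(p)\geq 1$,
\[
\frac{C}{\beta^{\ast}}\;=\;C\sqrt{\frac{\gamma(p)}{2\log p}}\;=\;\frac{C}{2\log p}\cdot\sqrt{2\gamma(p)\log p}\;=\;o_{p}\!\left(\sqrt{2\gamma(p)\log p}\right),
\]
and similarly the $\log\log p$ correction from expanding $\log(1+2p\log p)$ contributes a factor of order $\log\log p/\log p=o_{p}(1)$ relative to the main term. Combining, $g(\beta^{\ast},p)\leq(1+o_{p}(1))\sqrt{2\gamma(p)\log p}$, and since $\min_{\beta\geq 0}g(\beta,p)\leq g(\beta^{\ast},p)$ the lemma follows.

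There is no real obstacle here: the argument is just an asymptotic bookkeeping exercise once the correct scale of $\beta$ is identified. The only point warranting care is that the statement of the lemma is insensitive to the growth rate of $\gamma(p)$ (it may be as large as $3^{p}$); the choice $\beta^{\ast}\to 0$ for large $p$ whenever $\gamma(p)\to\infty$, but each of the three estimates above is uniform in $\gamma(p)\geq 1$, so no additional hypothesis is needed.
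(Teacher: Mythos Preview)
Your proposal is correct and follows essentially the same approach as the paper: both choose $\beta^\ast=\sqrt{2\log p/\gamma(p)}$, evaluate each of the three summands at this point, and observe that the first two combine to $(1+o_p(1))\sqrt{2\gamma(p)\log p}$ while $C/\beta^\ast$ is lower order. Your write-up is in fact slightly more explicit than the paper's in justifying the $o_p(1)$ corrections and in noting uniformity in $\gamma(p)\geq 1$.
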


\begin{proof} 
    Consider the value at $\beta = \sqrt{{2 \log p \over \gamma(p)}}$ where 
    \begin{equation}
    \begin{split}
        {C\over \beta}+{\beta \gamma(p) \over 2}+{\log(1+p \gamma(p) \beta^2) \over \beta} &= {C\sqrt{2\gamma(p)} \over \sqrt{\log p}} + {\sqrt{\gamma(p) \log p} \over \sqrt{2}} + \frac{\log(1+2p\log(p)) \sqrt{\gamma(p)}}{\sqrt{2 \log p}} \\
        &\leq {C\sqrt{2\gamma(p)} \over \sqrt{\log p}} + {\sqrt{\gamma(p) \log p} \over \sqrt{2}} + \frac{\log(4p\log(p)) \sqrt{\gamma(p)}}{\sqrt{2 \log p}} \\
        & = (1+o_p(1))\left( \sqrt{2\gamma(p)\log p} \right).
    \end{split}
    \end{equation}
\end{proof}

\subsection{Bounding contribution of non-unique terms}
\label{sec:unique_terms_only}

To show that we can restrict our analysis to the setting where we only have unique tuples, 
we now consider the expression
in \Cref{eq:pre_unique_pre_superlinear} copied below and restricted to $m=O(n)$ but without the restriction that each tuple $\indi_1,\ldots,\indi_m$ appears exactly twice.
\begin{equation}
\mathbb{E}\left[\Tr\left(\exp\left(\beta\sqrt{n}H_{n,p}\right)\right)\right]=\sum\limits_{m=0}^{Cn}\frac{\left(\beta\sqrt{n}\right)^m}{\binom{n}{p}^{\frac{m}{2}}m!}\sum\limits_{\substack{\bm{\indi_1},\ldots,\bm{\indi_m}\\\bm{\inda_1},\ldots,\bm{\inda_m} \\ \left[\bm{\indi};\bm{\inda}\right] \text{ even}} }
\Tr\left(P_{\bm{\indi_1}}^{\bm{\inda_1}}\ldots P_{\bm{\indi_m}}^{\bm{\inda_m}}\right) + o(1).
\end{equation}
The $o(1)$ above accounts for contributions from $m> Cn$ bounded in \Cref{sec:superlinear_restriction}.

For each $r\le m/2$
let 
\begin{equation}
    \Sigma_m(r) \coloneqq \sum\limits_{\substack{\bm{\indi_1},\ldots,\bm{\indi_m}\\\bm{\inda_1},\ldots,\bm{\inda_m} \\ \left[\bm{\indi};\bm{\inda}\right] \text{ even} \\ |\{ \indi_1,\ldots,\indi_m \}| = r }}
\Tr\left(P_{\bm{\indi_1}}^{\bm{\inda_1}}\ldots P_{\bm{\indi_m}}^{\bm{\inda_m}}\right)
\end{equation}
denote elements of this sum with the restriction that the number of distinct tuples appearing in the sequence $\indi_1,\ldots,\indi_m$ is $r$. Previously, we showed that when all tuples appear twice and $r=m/2$, we have the bound from \Cref{eq:bound_from_conjecture} which says that $\Sigma_m(m/2) \leq \Sigma_m^{\max}$ where
\begin{equation} \label{eq:def_sigma_m_max}
   \Sigma_m^{\max} \coloneqq 2^{O_p(1)n+o_n(n)}
{{n\choose p} \choose m/2} {(m)! \over 2^{m/2}} 
{\sqrt{m/2} \over c(B,p)} \left({pm \over n}+1\right)^n.
\end{equation}
Our main result shows that the bound above still holds when including contributions from duplicated tuples up to a $\exp(o_n(n))$ prefactor.

\begin{lemma}\label{lem:unique_terms_only}
    Let $r_{\rm max}=m/2=O(n)$. Then, assuming the validity of \Cref{conj:main-0}, it holds that
    \begin{equation}
        \sum_{r=1}^{r_{\rm max}} \Sigma_m(r) \leq \exp(o_n(n)) \Sigma_m^{\max}.
    \end{equation}
\end{lemma}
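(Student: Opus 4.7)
The plan is to bound each $\Sigma_m(r)$ individually for $r \leq m/2$ and show it is an $\exp(o_n(n))$ multiple of $\Sigma_m^{\max}$; summing over the $O(n)$ values of $r$ then yields the claim. To do so I would decompose $\Sigma_m(r)$ according to the multiplicity pattern $(m_1,\ldots,m_r)$ of the $r$ distinct $[\indi,\inda]$ pairs appearing in the length-$m$ sequence, with each $m_j$ an even integer $\geq 2$ and $\sum_j m_j = m$:
\begin{equation*}
\Sigma_m(r) \leq \binom{3^p\binom{n}{p}}{r} \sum_{(m_1,\ldots,m_r)}\frac{m!}{\prod_j m_j!}\,T(m_1,\ldots,m_r),
\end{equation*}
where the three factors count the choice of $r$ distinct $[\indi,\inda]$ pairs, the number of permutations of $[m]$ realizing the prescribed multiplicities, and a worst-case bound $T(m_1,\ldots,m_r)$ on the trace magnitude over all compatible arrangements. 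The case $(m_1,\ldots,m_r) = (2,\ldots,2)$ corresponds to $r = m/2$ and recovers $\Sigma_m^{\max}$ directly via \Cref{conj:main-0} and \Cref{eq:bound_from_conjecture}.

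For the trace bound $T(m_1,\ldots,m_r)$ with general multiplicities, I would extend the qubit-wise factorization of \Cref{eq:trace-product} and the pair-matching computation of \Cref{lemma:expected-trace-sum} to even-block partitions of $[m]$: the trace continues to factor over qubits as in \Cref{eq:trace-product-2}, and on each qubit it is determined only by the induced local block structure. An analogous block-partition version of \Cref{conj:main-0} then yields a bound of the form $T(m_1,\ldots,m_r) \leq \exp(O_p(1)n)\,(pm/n+1)^n\,\prod_j F(m_j)$ for a moderately-growing function $F$. Comparing against $\Sigma_m^{\max}$ from \Cref{eq:def_sigma_m_max}, the dominant ratio is the tuple-choice factor
\begin{equation*}
\frac{\binom{3^p\binom{n}{p}}{r}}{\binom{3^p\binom{n}{p}}{m/2}} \leq \left(\frac{m/2}{3^p\binom{n}{p}-m/2}\right)^{m/2-r}\frac{(m/2)!}{r!} = O\!\left(n^{-(p-1)(m/2-r)}\right),
\end{equation*}
which for $p \geq 2$ decays super-exponentially in $m/2-r$. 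The remaining multiplicity and arrangement ratios satisfy $2^{m/2}/\prod_j m_j! \leq 1$ (with equality iff all $m_j = 2$), while the trace-ratio inflation $\prod_j F(m_j)$ is at most $\exp(O_p(1)n)$. Combining these gives $\Sigma_m(r) \leq \exp(o_n(n))\,\Sigma_m^{\max}$ uniformly in $r$, with substantial additional decay whenever $m/2 - r$ is large.

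The principal obstacle is cleanly extending \Cref{conj:main-0} to even-block partitions. I expect this can largely be sidestepped by reducing to the pair case: pair up the $m_k$ positions occupied by each repeated tuple in one of $(m_k-1)!!$ ways, translating the block-partition trace into a sum over pair matchings to which \Cref{conj:main-0} applies, at the cost of an overall combinatorial factor $\prod_j (m_j-1)!!$. This factor is easily absorbed by the strong tuple-count deficit whenever $r < m/2$, so the stated form of the conjecture is in effect invoked only at the leading $r = m/2$ term, while cruder arguments (combining the trivial bound $|\Tr(\cdot)|\leq 2^n$ with the enumerative estimates from the proof of \Cref{lemma:Super-linear}) handle $r$ sufficiently far from $m/2$. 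Summing over the $O(n)$ values of $r \in \{1,\ldots,m/2\}$ contributes only a further polynomial-in-$n$ factor, yielding $\sum_{r=1}^{m/2}\Sigma_m(r) \leq \exp(o_n(n))\,\Sigma_m^{\max}$.
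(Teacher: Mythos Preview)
Your overall architecture matches the paper's: split according to how far $r$ is from $m/2$, use the crude enumerative bound (trace $\le 2^n$ plus the counting of \Cref{lemma:Super-linear}) when $r$ is well below $m/2$, and reduce to the pair-matching setting of \Cref{conj:main-0} when $r$ is close. The paper places the cut at $r = m/2 - n/\sqrt{\log n}$, handling the two regimes in \Cref{lemma:non-unique-many} and \Cref{lemma:non-unique-few}. The gap is in your reduction for the close regime. Pairing up the $m_k$ occurrences within each block and invoking \Cref{conj:main-0} on the resulting length-$m$ pair matching does not work as written: the $m/2$ resulting ``pairs'' do not carry \emph{distinct} tuples $\indi$ (several pairs share the same $\indi$), whereas \Cref{conj:main-0} is stated for distinct tuples drawn without replacement---so you have not actually sidestepped the block-partition extension, only repackaged it. Nor can you simply replace the block constraint (all $m_k$ copies carry the same $\inda$) by the looser independent-per-pair constraint, since individual Pauli-product traces take both signs and there is no monotonicity to exploit.

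The paper's reduction in \Cref{lemma:non-unique-few} is different and avoids any extension of the conjecture. For a pair $[\indi_j,\inda_j]=[\indi_\ell,\inda_\ell]$ occurring more than twice, commute $P_{\indi_j}^{\inda_j}$ through the product until it is adjacent to $P_{\indi_\ell}^{\inda_\ell}$ (each swap changes only a sign), then cancel via $P^2=I$. This \emph{shortens} the sequence by two. Iterating removes all excess occurrences and leaves a length-$2r_0$ sequence of $r_0$ distinct tuples, each appearing exactly twice, to which \Cref{conj:main-0} applies verbatim. The only inflation is a factor $3^{p(m-2r_0)/2}$ for the discarded $\inda$-choices and a $\binom{m}{2r_0}$-type factor for the choice of deleted positions; both are $2^{o(n)}$ because $m-2r_0 = O(n/\sqrt{\log n})$ in this regime.
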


The proof of the above will immediately follow from showing that $\Sigma_m(r) \leq \exp(o(n)) \Sigma_m^{\max}$ in the cases where $r \leq m/2 - n/\sqrt{\log n}$ (\Cref{lemma:non-unique-many}) and $m/2 \geq r \geq m/2 - n/\sqrt{\log n}$ (\Cref{lemma:non-unique-few}). We first begin by considering the setting where $r \leq m/2 - n/\sqrt{\log n}$.

\begin{lemma}\label{lemma:non-unique-many}
Suppose $r\le m/2-n/\sqrt{\log n}$. Then the sum $\Sigma_m(r)$ is at most
\begin{align}
    2^{-\Omega(n\sqrt{\log n})}
{n\choose p}^{m/2} {m!\over (m/2)!}.
\end{align}

\end{lemma}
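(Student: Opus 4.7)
The plan is to bound $\Sigma_m(r)$ via the trivial trace estimate $|\Tr(P_{\indi_1}^{\inda_1}\cdots P_{\indi_m}^{\inda_m})|\le 2^n$ together with an exponential-generating-function count of the admissible orderings, and then to exhibit a factor of $\binom{n}{p}^{-s}$, where $s\coloneqq m/2-r$, that dominates all competing $\exp(O_p(n))$ factors whenever $s\ge n/\sqrt{\log n}$. First I would choose the $r$ distinct pairs $(\indi,\inda)$, of which there are at most $(3^p\binom{n}{p})^r/r!$ options, and then count length-$m$ orderings that use each of these $r$ symbols an even number of times $\ge 2$. The number of such orderings equals $m!\,[x^m](\cosh(x)-1)^r$, and the coefficient-wise inequality $\cosh(x)-1\le x^2 e^x/2$ (checked term by term) yields
\[
m!\,[x^m](\cosh(x)-1)^r \le \frac{m!\, r^{m-2r}}{(m-2r)!\,2^{r}}=\frac{m!\, r^{2s}}{(2s)!\,2^{m/2-s}};
\]
note that setting $r=m/2$ recovers the expected $m!/2^{m/2}$.

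Combining these estimates gives
\[
\Sigma_m(r)\le 2^n\cdot 3^{pr}\cdot \binom{n}{p}^{r}\cdot\frac{m!\,r^{2s}}{r!\,(2s)!\,2^{m/2-s}}.
\]
Dividing by the target $\binom{n}{p}^{m/2}m!/(m/2)!$ factors the ratio as $\exp(O_p(n))\cdot\binom{n}{p}^{-s}\cdot\tfrac{(m/2)!}{r!}\cdot\tfrac{r^{2s}}{(2s)!}$. Applying $(m/2)!/r!\le(m/2)^{s}$ (a product of $s$ integers each $\le m/2$), Stirling's bound $r^{2s}/(2s)!\le (er/(2s))^{2s}$, and the inequality $r/s\le(m/2)/s=O(\sqrt{\log n})$ that follows from $m=O(n)$ and $s\ge n/\sqrt{\log n}$, I obtain $r^{2s}/(2s)!\le\exp(O(n))(\log n)^s$. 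Collapsing the combinatorial prefactor and using $\binom{n}{p}\ge (1-o(1))n^p/p!$, the ratio becomes
\[
\frac{\Sigma_m(r)}{\binom{n}{p}^{m/2}\,m!/(m/2)!}\le \exp(O_p(n))\cdot\left(\frac{C\log n}{n^{p-1}}\right)^{s}
\]
for some absolute constant $C$. For $p\ge 2$ the bracket is $o(1)$, so $s\log n\ge n\sqrt{\log n}$ swamps the $\exp(O_p(n))$ prefactor and delivers the advertised bound $\exp(-\Omega(n\sqrt{\log n}))=2^{-\Omega(n\sqrt{\log n})}$.

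The main obstacle will be ensuring the generating-function estimate is sharp enough at the boundary $s=n/\sqrt{\log n}$: the bound $r^{2s}/(2s)!\le O(\log n)^s$ is only useful because the threshold $n/\sqrt{\log n}$ is chosen precisely so that $s\log n$ exceeds $s\log\log n$ by a factor of $\sqrt{\log n}$. A weaker deficit, such as $s\ge n/\log n$, would fail because the combinatorial prefactor $r^{2s}/(2s)!$ could then swallow the $\binom{n}{p}^{-s}$ savings. One must also verify that neither the $3^{pr}$ factor, the Stirling corrections to $r!$ and $(2s)!$, nor the $2^{m/2-s}$ denominator contribute anything larger than $\exp(O_p(n))$, which is straightforward once $m=O(n)$ is used but requires careful bookkeeping throughout the range $1 \le r \le m/2-n/\sqrt{\log n}$.
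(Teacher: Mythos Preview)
Your argument contains a genuine gap in the first step. In the definition of $\Sigma_m(r)$, the integer $r$ counts the number of distinct $\indi$-tuples, i.e.\ $r=|\{\indi_1,\ldots,\indi_m\}|$, not the number of distinct pairs $[\indi,\inda]$. These differ: a single $\indi$ may appear with several different $\inda$'s, each such pair still having even multiplicity. Your bound ``choose $r$ distinct pairs $(\indi,\inda)$, at most $(3^p\binom{n}{p})^r/r!$ options, then order them'' therefore undercounts and is not an upper bound on $\Sigma_m(r)$. Concretely, take $m=4$, $r=1$: the single $\indi$ can carry two different $\inda$'s in the pattern $\inda^{(1)}\inda^{(1)}\inda^{(2)}\inda^{(2)}$, contributing $\Theta(3^{2p})$ sequences, whereas your count gives only $O(3^{p})$.

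The fix is immediate and in fact simplifies the argument: separate the $\inda$-count crudely by $3^{pm}=2^{O_p(n)}$ (using $m\le Cn$), and then apply your generating-function estimate to orderings of the $\indi$-tuples alone, which do satisfy ``$r$ distinct symbols, each appearing an even number $\ge 2$ of times''. Everything after your displayed bound then goes through with $3^{pr}$ replaced by $3^{pm}$, both being $\exp(O_p(n))$.

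Once repaired, your route differs from the paper's only in the ordering count. The paper does not use generating functions: it bounds the number of multiplicity profiles $(m_1,\ldots,m_r)$ by $2^{O(m)}$ and drops the $\prod_j m_j!$ in the multinomial, obtaining directly $\Sigma_m(r)\le 2^{O(n)}\,m!\,\binom{n}{p}^r/r!$. Your $(\cosh x-1)^r\preceq (x^2e^x/2)^r$ estimate gives a sharper ordering count, but the extra precision is unnecessary here; both approaches reduce to the same comparison $\binom{n}{p}^r/r!$ versus $\binom{n}{p}^{m/2}/(m/2)!$, whose ratio is $\prod_{k=r+1}^{m/2}(k/\binom{n}{p})=O(n^{-(p-1)})^{s}$ and furnishes the $2^{-\Omega(n\sqrt{\log n})}$ saving.
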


\begin{proof}
For each fixed choice of a sequence of  
pairs $[\indi_k,\inda_k], 
k\in [m]$ we have
$\Tr{\prod_{1\le k\le m}P_{\indi_k}^{\inda_k}}\le 2^n$. For each fixed choices of 
$\indi_1,\ldots,\indi_m$, the number of choices for $\inda_1,\ldots,\inda_m$ is at most
$3^{pm}=2^{O(n)}$ since $m=O(n)$. Thus $\Sigma_m(r)$ is bounded by $2^{O(n)}$ times
the number
of ordered sequences of tuples $\indi_1,\ldots,\indi_m$ such that the number of distinct tuples is  $r\le m/2-n/\sqrt{\log n}$.

With some abuse of notation, we denote the set of distinct $p$-sets appearing in the $m$-tuple 
$\indi_1,\ldots,\indi_m$ by $\indi_1,\ldots,\indi_r$
and the corresponding multiplicities by $m_1,\ldots,m_r$, so that $m_j\ge 2$ are even and $\sum_{1\le j\le r} m_j=m$. 
By assumption we have $r\le m/2-n/\sqrt{\log n}$.
The number of choices
for $r$ such distinct sets is 
\begin{align}
{{n\choose p} \choose r}\le {{n\choose p}^r \over r!}.
\end{align}
For each multi-set consisting of $r$ distinct $p$-sets with multiplicities $(m_1,\ldots,m_r)$, the number of distinct orderings of them is 
\begin{align}
{m! \over \prod_{1\le j\le r} m_j!}.
\end{align}
We obtain an upper bound:
\begin{align}\label{eq:orders}
\sum_{m_1,\ldots,m_r}
{{n\choose p}^r \over r!}{m!\over \prod_j m_j!}.
\end{align}

The total number of terms in the sum 
$\sum_{m_1,\ldots,m_r}$
is $2^{O(m)}=2^{O(n)}$. Ignoring the $m_j!$ terms, we obtain an upper bound:
\begin{align}
2^{O(n)}m! {{n\choose p}^r\over r!}.
\end{align}
Since $r\le m=O(n)$ and ${n\choose p}=\Theta(n^p)$, with $p\ge 2$, we have
$k/{n\choose  p}=O(1/n^{p-1})$ for $r\le k\le m/2$,
and therefore
\begin{align}
{{n\choose p}^r\over r!}
=
n^{-\Omega(n/\sqrt{\log n})}
{{n\choose p}^{m/2}\over (m/2)!} 
=
2^{-\Omega(n\sqrt{\log n})}
{{n\choose p}^{m/2}\over (m/2)!} .
\end{align}
\end{proof}

Next we consider the case when 
\begin{align}
r\ge m/2-n/\sqrt{\log n}. \label{eq:r-large}
\end{align}
Here, we will show that contributions are equivalent to that from $\sigma_m(m/2)$ up to an $\exp(o_n(n))$ pre-factor.
\begin{lemma} \label{lemma:non-unique-few}
    Suppose $r\ge m/2-n/\sqrt{\log n}$. Then assuming the validity of \Cref{conj:main-0}, it holds that
\begin{align}
    \Sigma_m(r) = \exp(o_n(n)) \Sigma_m^{\max},
\end{align}
\end{lemma}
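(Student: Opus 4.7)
The plan is to reduce the general-multiplicity case to the pair-ordered matching argument already used for $\Sigma_m(m/2)$ in \eqref{eq:tr_exp_expanded_pre_conjecture}--\eqref{eq:bound_from_conjecture}, paying only an $\exp(o_n(n))$ overhead for the combinatorial bookkeeping of higher-multiplicity tuples. Write $k\triangleq m/2-r$, so by hypothesis $k\leq n/\sqrt{\log n}$, and parametrize each configuration contributing to $\Sigma_m(r)$ by (i) the set of $r$ distinct $\indi$-tuples, (ii) a multiplicity vector $(m_1,\ldots,m_r)$ with $m_\ell\geq 2$ even and $\sum_\ell m_\ell=m$ (equivalently $\sum_\ell(m_\ell/2-1)=k$), (iii) an ordering of the $m$ slots, and (iv) a valid $\bm\inda$-assignment. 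Since at most $k$ tuples can have multiplicity strictly larger than two, the number of multiplicity profiles together with their assignments to tuples is at most $\binom{r+k-1}{k}\binom{r}{k}\leq\exp(O(k\log(r/k)))=\exp(O(n\log\log n/\sqrt{\log n}))=\exp(o_n(n))$ by Stirling, given $r\leq Cn$ and $k\leq n/\sqrt{\log n}$.

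For a fixed profile, the remaining sum mirrors the $r=m/2$ argument. The number of orderings of the $m$ slots is $m!/\prod_\ell m_\ell!\leq m!/2^{m/2}$ since $m_\ell!\geq 2^{m_\ell/2}$ for every even $m_\ell\geq 2$. The sum over valid $\bm\inda$-assignments at tuple $\ell$ is upper bounded by first partitioning its $m_\ell$ slots into $m_\ell/2$ pairs (contributing $m_\ell!/(2^{m_\ell/2}(m_\ell/2)!)$ pairings) and assigning a common $\inda\in\{1,2,3\}^p$ to each pair (contributing $(3^p)^{m_\ell/2}$ choices). This overcounts $\bm\inda$-configurations but remains a legitimate upper bound. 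For each pairing $\Pi$ of the $m$ slots, the trace factorizes across qubits, producing at qubit $j$ a per-qubit $\Tr_{\rm sum}(M_j(\Pi))$ where $M_j(\Pi)$ is the matching induced on qubit $j$. Invoking the natural extension of \Cref{conj:main-0} to the $m/2$ ``virtual'' pairs (identical copies of the $r$ distinct tuples) bounds the expected per-qubit product by $\gamma(p)^{m/2}\exp(O_p(1)n)\prod_j(\Delta(j)+1)$ with $\Delta(j)=\sum_{\ell:\,j\in\indi_\ell}m_\ell$; the analogue of \Cref{lem:poisson_from_conjecture} then yields $\E\prod_j(\Delta(j)+1)\leq(\sqrt{m/2}/c(B,p))(pm/n+1)^n$, since the mean degree at each qubit is still $pm/n$.

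Assembling these estimates produces
\begin{equation*}
\Sigma_m(r)\leq\exp(o_n(n))\cdot\binom{\binom{n}{p}}{r}\cdot\frac{m!}{2^{m/2}\prod_\ell(m_\ell/2)!}\cdot\gamma(p)^{m/2}\exp(O_p(1)n)\cdot\frac{\sqrt{m/2}}{c(B,p)}\left(\frac{pm}{n}+1\right)^{n},
\end{equation*}
so that, comparing with $\Sigma_m^{\max}$ from \eqref{eq:def_sigma_m_max},
\begin{equation*}
\frac{\Sigma_m(r)}{\Sigma_m^{\max}}\leq\exp(o_n(n))\cdot\frac{\binom{\binom{n}{p}}{r}}{\binom{\binom{n}{p}}{m/2}}\cdot\frac{1}{\prod_\ell(m_\ell/2)!}.
\end{equation*}
Both remaining factors are at most $1$: the product of factorials is trivially $\geq 1$, and the binomial coefficient is monotone increasing in its lower argument throughout the regime $r\leq m/2=O(n)\ll\binom{n}{p}/2$. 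Hence $\Sigma_m(r)\leq\exp(o_n(n))\Sigma_m^{\max}$, which is the desired bound.

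The main obstacle I anticipate is justifying the invocation of \Cref{conj:main-0} in the high-multiplicity setting, since the conjecture is formally stated for $r$ distinct tuples each appearing exactly twice, whereas our pairings $\Pi$ are constrained to pair only copies of the same $\indi$-tuple. Treating the $m/2$ virtual pairs as the conjecture's ``tuples'' (now with possible label repetitions) aligns the two hypotheses, and the matchings $M_j$ induced at each qubit remain uniformly random conditional on their size $\Delta(j)$; the essential asymptotic independence should carry over with the same constant $\gamma(p)$. Formalizing this reduction—either by strengthening the statement of \Cref{conj:main-0} to permit tuple repetition or via a direct comparison between constrained and unconstrained pairings at each qubit—is the delicate step. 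Reassuringly, the binomial discount $\binom{\binom{n}{p}}{r}/\binom{\binom{n}{p}}{m/2}\leq n^{-(p-1)k}$ provides substantial slack: each ``excess'' pair costs at least a factor $n^{p-1}$, so even a cruder per-pair trace bound with an $O_p(1)$ multiplicative loss would suffice to close the argument when combined with the $\exp(o_n(n))$ budget already available.
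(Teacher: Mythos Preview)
Your route differs from the paper's, and the obstacle you flag in your last paragraph is genuine: \Cref{conj:main-0} is stated for $r$ \emph{distinct} tuples each appearing exactly twice, and your ``virtual pairs'' setting---repeated $\indi$-labels with pairings constrained to stay within a single tuple---is not covered by it. The paper sidesteps this entirely via a swap-and-cancel reduction rather than by extending the conjecture. For any tuple appearing $\geq 4$ times it selects two positions carrying the same $[\indi,\inda]$, commutes one copy through the Pauli product (each transposition contributing only a $\pm 1$ sign) until it sits adjacent to the other, and cancels the pair to the identity. Iterating removes all excess copies and leaves a product supported on the $2r_0$ positions of multiplicity-two tuples; since $m-2r_0\leq 4n/\sqrt{\log n}$, the $3^{p(m-2r_0)/2}$ cost for the $\inda$-choices on the removed pairs and the combinatorial factors for their positions and identities are all $\exp(o_n(n))$. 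The residual object is exactly $\Sigma_{2r_0}(r_0)$, to which \eqref{eq:bound_from_conjecture} applies under the conjecture \emph{as stated}, and one finishes by checking $\binom{\binom{n}{p}}{r-r_0}(2(r-r_0))!\,\binom{\binom{n}{p}}{r_0}(2r_0)!\leq\binom{\binom{n}{p}}{r}(2r)!$ together with monotonicity in $r\leq m/2$.

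There is also an earlier, unflagged gap: the claim that overcounting $\bm\inda$-configurations via within-tuple pairings ``remains a legitimate upper bound.'' Traces of Pauli words can be negative (e.g.\ $\Tr(\sigma^1\sigma^2\sigma^1\sigma^2)=-2$), so enlarging the index set of a signed sum need not dominate the original; your overcount weights the ``all-same-$\inda$'' configurations more heavily, and nothing guarantees those carry nonnegative trace once interleaved with the other tuples' Paulis. Your closing remark---that the binomial deficit $\binom{\binom{n}{p}}{r}/\binom{\binom{n}{p}}{m/2}\leq n^{-(p-1)k}$ absorbs any crude per-excess-pair loss---is precisely the slack the paper exploits, but it is realized through the swap trick at the level of individual traces rather than through a strengthened conjecture.
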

where $\Sigma_m^{max}$ is the quantity defined in \Cref{eq:def_sigma_m_max}.
\begin{proof}
Let $r_0(\indi_1,\ldots,\indi_m)\le r$ correspond to the number of tuples $\indi$ appearing
exactly twice in the sequences $\indi_1,\ldots,\indi_m$ contained within the sum $\Sigma_m(r)$ which as a reminder takes the form
\begin{equation}
    \begin{split}        
    \Sigma_m(r) &\coloneqq \sum\limits_{\substack{\bm{\indi_1},\ldots,\bm{\indi_m}\\\bm{\inda_1},\ldots,\bm{\inda_m} \\ \left[\bm{\indi};\bm{\inda}\right] \text{ even} \\ |\{ \indi_1,\ldots,\indi_m \}| = r }}
    \Tr\left(P_{\bm{\indi_1}}^{\bm{\inda_1}}\ldots P_{\bm{\indi_m}}^{\bm{\inda_m}}\right) \\
    &= 
    \sum_{\left([\indi_1,\inda_1],\ldots,[\indi_m,\inda_m]\right)\in\mathcal{R}} \Tr\left(P_{\bm{\indi_1}}^{\bm{\inda_1}}\ldots P_{\bm{\indi_m}}^{\bm{\inda_m}}\right).
    \end{split}
\end{equation}
In the above, we introduce the set $\mathcal{R}$ to more easily denote the support of the summation corresponding 
the set of tuples $[\indi_1,\inda_1],\ldots,[\indi_m,\inda_m]$ which have even multiplicities and having the property that $|\{ \indi_1,\ldots,\indi_m \}| = r$.
To ease notation, we will also write $r_0$ without inputs when its inputs are obvious from context. Given the bound on $r$, we have for any sequence of $\indi_1, \dots, \indi_m$ in the sum $\Sigma_m(r)$ above, the following bound on $r_0(\indi_1, \dots, \indi_m)$:
\begin{align}
m\ge 2r_0+4(r-r_0)=2r+2(r-r_0)\ge m-2n/\sqrt{\log n}+2(r-r_0).
\end{align}
This then implies that $r-r_0\le n/\sqrt{\log n}$ and
\begin{align}
m-2r_0\le m-2(r-n/\sqrt{\log n}) \le 4n/\sqrt{\log n} \label{eq:non-paired}.
\end{align}
We now rewrite $\Sigma_m(r)$  as follows. For any subset $A\subset [m]$ with cardinality
$|A|=2r_0$, let $\mathcal{R}(A)\subset \mathcal{R}$ be the subset
of $\mathcal{R}$ corresponding to sequences such that the tuples
$\indi$ which appear exactly twice in the sequence 
$[\indi_1,\inda_1],\ldots,[\indi_m,\inda_m]$ (of which there are precisely $2r_0$) appear precisely in positions
corresponding to  $A\subset [m]$. We define
\begin{align}
\Sigma_m(r,A)\coloneqq \sum_{\left([\indi_1,\inda_1],\ldots,[\indi_m,\inda_m]\right)\in\mathcal{R}(A)} \Tr\left(P_{\bm{\indi_1}}^{\bm{\inda_1}}\ldots P_{\bm{\indi_m}}^{\bm{\inda_m}}\right),
\end{align}
so that
\begin{align}
\Sigma_m(r)=\sum_{1 \le r_0 \le r}\sum_{A\subset [m], |A|=2r_0} \Sigma_m(r,A).
\end{align}

We now upper bound $\Sigma_m(r,A)$  in terms of $\Sigma_{2r_0}(r_0,[2r_0])=\Sigma_{2r_0}(r_0)$ which is 
the sum corresponding to $r$ replaced by $r_0$ and $[m]$ as well as $A$ replaced by $[2r_0]$. 
Equivalently, it corresponds to the sum where there are $r_0$ distinct tuples
$\indi$ and each pair $[\indi,\inda]$ appears exactly twice---the case considered in previous subsections and particularly in \Cref{conj:main-0}.

For this purpose fix $j \in [m]$ where $j\notin A$. The corresponding tuple $\indi_j$ appears at least four times (as it must appear an even number of times) and the corresponding pair $[\indi_j,\inda_j]$ appears
at least twice, say also as $[\indi_\ell,\inda_\ell]$. Namely, $\indi_j=\indi_\ell, \inda_j=\inda_\ell$.
In the corresponding
Pauli product $P_{\bm{\indi_1}}^{\bm{\inda_1}}\ldots P_{\bm{\indi_m}}^{\bm{\inda_m}}$ consider
swapping the location of $P_{\bm{\indi_j}}^{\bm{\inda_j}}$ to a location in the trace adjacent to $P_{\bm{\indi_\ell}}^{\bm{\inda_\ell}}$.
Completing this procedure, 
the product of $P_{\bm{\indi_j}}^{\bm{\inda_j}}$ and 
$P_{\bm{\indi_\ell}}^{\bm{\inda_\ell}}$ is then the identity as they are neighboring. Thus the 
swapping results
in the same trace value up to a sign since each swap operation applies a multiplicative factor of $-1$ or $+1$ depending on whether the swapping terms anti-commute or commute. 
Since there are $3^p$ choices
for $\inda_j$, by removing entries $[\indi_j,\inda_j]$ and $[\indi_\ell,\inda_\ell]$ from the sequence we obtain the new trace value which is at most the 
original trace value times $3^p$ up to a sign. Repeating this for all indices
$j\notin A$ we obtain a value which is at most the 
original trace value times $3^{p(m-2r_0)/2}$ up to a sign. Applying \Cref{eq:non-paired} we have
\begin{align}
3^{p(m-2r_0)/2}\le 3^{p 2n/\sqrt{\log n}}=2^{o(n)}.
\end{align}

Next we account for the choices of tuples $\indi$ for positions in $[m]\setminus A$. We have at most
${{n\choose p}\choose r-r_0}$ choices
for distinct tuples appearing there (in fact smaller since these should be distinct from the remaining
$r_0$ tuples), 
and (crudely) at most 
$(2(r-r_0))!$ orderings of them in the $m-2r_0$ locations of
$[m]\setminus A$. We conclude 
\begin{align}
\Sigma_m(r,A)\le 2^{o(n)}{{n\choose p}\choose r-r_0}(2(r-r_0))!\Sigma_{2r_0}(r_0,[2r_0]).
\end{align}
We note that there are ${m\choose 2r_0}$ choices for the set $A$. 
Since $m=O(n)$ and $m-2r_0=O(n/\sqrt{\log n})$ this term is  $2^{o(n)}$.
We obtain a bound
\begin{align}
\Sigma_m(r)\le \sum_{r_0\le r}
2^{o(n)}{{n\choose p}\choose r-r_0}(2(r-r_0))!\Sigma_{2r_0}(r_0,[2r_0]).
\end{align}
Noting that $\Sigma_{2r_0}(r_0,[2r_0])$ can be handled by the setting of \Cref{conj:main-0} and adopting the bound in \Cref{eq:bound_from_conjecture} for $\Sigma_{2r_0}(r_0,[2r_0])$:
\begin{align}
\Sigma_m(r)\le \sum_{r_0\le r}
2^{o_n(n)}{{n\choose p}\choose r-r_0}(2(r-r_0))!
2^{O_p(1)n+o_n(n)}
{{n\choose p} \choose r_0} {(2r_0)! \over 2^{r_0}} 
{\sqrt{r_0} \over c(B,p)} \left({2pr_0 \over n}+1\right)^n.
\end{align}
We have 
\begin{equation}
\begin{split}
{{n\choose p}\choose r-r_0}(2(r-r_0))!
{{n\choose p} \choose r_0} (2r_0)!
&= {{n\choose p}\choose r}(2r)! \prod_{k=0}^{r_0-1} \frac{{n \choose p}-k}{{n \choose p}-k+r_0-r}\frac{2r_0 - k}{r-k+r_0} \\
&\leq
{{n\choose p}\choose r}(2r)!.
\end{split}
\end{equation}
since the fractions $\frac{{n \choose p}-k}{{n \choose p}-k+r_0-r}\frac{2r_0 - k}{r-k+r_0} \leq 1$ for $r=O(n)$ and $r-r_0=O(n)$. We have $2^{r_0}=2^{r+o(n)}$ for the same reason. The term
${\sqrt{r_0} \over c(B,p)} \left({2pr_0 \over n}+1\right)^n$ is non-decreasing in $r_0$. As a result,
\begin{align}
    \Sigma_m(r)\le 
    2^{O_p(1)n+o_n(n)}
{{n\choose p} \choose r} {(2r)! \over 2^{r}} 
{\sqrt{r} \over c(B,p)} \left({2pr \over n}+1\right)^n \leq \exp(o_n(n)) \Sigma_m^{max}.
\end{align}
In the above, we note $r\le m/2$ and the first bound on  $\Sigma_m(r)$ is monotonically increasing in $r$ which allows us to replace $r$ by $m/2$ in the final expression. This is the bound which was used earlier in \Cref{eq:bound_from_conjecture} where $r$ stood for $m/2$.
Thus in the case of \Cref{eq:r-large}, we obtain up to $2^{o(n)}$ the same upper bound as in the case when every
tuple $\indi$ appears exactly twice, analyzed in the earlier subsection.
    
\end{proof}

\section{Alternative proof of \Cref{lemma:expected-trace-sum}}
Here, we present an alternative proof of \Cref{lemma:expected-trace-sum} based on a recursive formula over the input matchings. For convenience, we restate the lemma below.

\trsumlemma*

\begin{proof}
Fix for now an arbitrary matching $M$ on $2d$ elements. Fix any sequence of $2d$ Paulis $\sigma_1,\ldots,\sigma_{2d}$. 
Then by Lemma 6 of \cite{erdHos2014phase}:
\begin{align}\label{eq:recursion}
(1/2)\sum_{\sigma:M}\Trace\left(\prod_{1\le i\le 2d}\sigma_i\right)=(1/2)\sum_{\sigma:M}\sum_{2\le j\le 2d}{\bf 1}(\sigma_1=\sigma_j)(-1)^j \Trace(2,-j),
\end{align}
where the sum on the left and the outer sum on the right is over all $2d$ sequences of Paulis matched according to $M$ as above, and
where $\Trace(2,-j)$ is trace of product $\sigma_2\cdots\sigma_{2d}$ with term $\sigma_j$ omitted. 

Suppose $1$ is matched with $k$ according to $M$,
that is $1\sim k$, 
so that $\sigma_1=\sigma_k$. 
Fix $j\ne k$ and find   $r\notin \{1,k\}$ which is the match of $j$. Consider the modified matching $M_j$ on $\{2,\ldots,2d\} \setminus \{j\}$ with $r$ matched with $k$. 
We claim that 
\begin{align}\label{eq:R-recursion}
\Trace_{\rm sum}(M)=\sum_{2\le j\le 2d}(-1)^j \Trace_{\rm sum}(M_j)+2(-1)^k\Trace_{\rm sum}(M_k).
\end{align}
Indeed,  consider the contribution to \Cref{eq:recursion} of the sum $(1/2){\bf 1}(\sigma_1=\sigma_j)\sum\Trace(2,-j)$, 
where the sum is now over all choices of $\sigma_2,\ldots,\sigma_{2d}\in \{X,Y,Z\}$ with term $\sigma_j$ omitted and respecting matching $M$. 
When $\sigma_1\ne \sigma_j$ the contribution of this sum is zero. When $\sigma_1=\sigma_j$ we  have $\sigma_1=\sigma_k=\sigma_j=\sigma_r$.
And the contribution is $(-1)^j\Trace_{\rm sum}(M_j)$. 

On the other hand,  consider the contribution to \Cref{eq:recursion} when $j=k$. In this case we have $\sigma_1=\sigma_j$. Let $M_k$ be the induced matching
on $\{2,\ldots,2d\}\setminus k$. The contribution is then $3(-1)^k\Trace_{\rm sum}(M_k)$, where the factor $3$ is due
to the choices $\sigma_1=\sigma_k \in \{\sigma^1, \sigma^2, \sigma^3\}$. We conclude  that \Cref{eq:R-recursion} holds. 

Now suppose $M$ is chosen u.a.r. Observe that for each $j\ne k$,  $M_j, 2\le j\le 2d$ obtained after rewiring is also generated uniformly
at random. Also the match $k$ of 1 is an even number (thus $(-1)^k=1$)  with probability $d/(2d-1)$, and is an odd number  (thus $(-1)^k=-1$) 
with probability $(d-1)/(2d-1)$. We obtain
\begin{align*}
\E_{M}[\Trace_{\rm sum}(M)]=\sum_{2\le j\le 2d}(-1)^j r_{2d-2}+ 2\left({d\over 2d-1}-{d-1\over 2d-1}\right) r_{2d-2} = r_{2d-2}+{2\over 2d-1}r_{2d-2} = r_{2d-2}{2d+1\over 2d-1}.
\end{align*}
We have $r_2=1/2 \left(\Trace(XX)+\Trace(YY)+\Trace(ZZ)\right)=3$. Thus 
\begin{align*}
\E_{M}[\Trace_{\rm sum}(M)]=2d+1.
\end{align*}
\end{proof}

\section{Bounding the $p=2$-local Pauli anticommutation graph Lovász number}
\label{app:Lovasz_bounds}

Here, we consider the Lovász theta function of the commutation graph in the proof of \Cref{lem:max_variance_p2}. Namely, we set $S$ equal to the set of ${n \choose 2}9$ many $2$-local Paulis on $n$ qubits and denote the anti-commutativity graph of this set of ${n \choose 2}9$ Paulis as $G_n$. 
We bound the Lovasz number of the graph by using the properties of the automorphism group of the graph.

\begin{definition}[Vertex symmetric graph \cite{knuth1993sandwich}] \label{def:vertex_symmetric}
    Given a graph $G$ on $n$ nodes labeled $\{1,\dots, n\}$, let $\operatorname{Aut}(G) \subset S_n$ be the automorphism group of the graph. $G$ is vertex symmetric if for any two nodes $u, v \in \{1,\dots, n\}$, there exists a permutation $P \in \operatorname{Aut}(G)$ such that $P(u) = v$.
\end{definition}

\begin{theorem}[Theorem 25 of \cite{knuth1993sandwich}] \label{thm:vertex_transitive_lovasz_upper}
    For any vertex symmetric graph $G$ and its complement $\bar G$ on $N$ nodes,
    \begin{equation}
        \vartheta(G) \vartheta(\bar G) = N.
    \end{equation}
\end{theorem}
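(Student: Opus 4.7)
The plan is to establish the two inequalities $\vartheta(G)\vartheta(\bar G) \geq N$ (valid for every graph) and $\vartheta(G)\vartheta(\bar G) \leq N$ (valid in the vertex-symmetric case) separately.

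For the lower bound, I would combine Lov\'asz's multiplicativity identity $\vartheta(G \boxtimes H) = \vartheta(G)\vartheta(H)$ for the strong graph product with the sandwich inequality $\alpha(H) \leq \vartheta(H)$. The diagonal $\{(u,u) : u \in V(G)\}$ is an independent set of $G \boxtimes \bar G$ of size $N$, since for distinct $u_1 \ne u_2$ the pair cannot simultaneously lie in $E(G)$ and $E(\bar G)$. This yields $N \leq \alpha(G \boxtimes \bar G) \leq \vartheta(G \boxtimes \bar G) = \vartheta(G)\vartheta(\bar G)$. The easy half of multiplicativity, $\vartheta(G \boxtimes H) \geq \vartheta(G)\vartheta(H)$, follows by tensoring optimal PSD matrices: $B_G \otimes B_H$ is feasible for the strong product's max formulation because any edge of $G \boxtimes H$ forces at least one factor entry to vanish. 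The matching upper direction $\vartheta(G \boxtimes H) \leq \vartheta(G)\vartheta(H)$ is deeper and requires a dual construction via orthonormal representations in the opposite convention.

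For the upper bound in the vertex-symmetric case, I would use an averaging argument followed by an explicit min-formulation certificate. Passing to the equivalent trace-SDP formulation $\vartheta(G) = \max \Tr(JX)$ over $X \succeq 0$, $\Tr X = 1$, $X_{ij} = 0$ for $ij \in E(G)$, both the objective $\Tr(JX)$ and the feasibility conditions are preserved under conjugation by $P_\sigma$ for $\sigma \in \operatorname{Aut}(G)$, so averaging an optimum over $\operatorname{Aut}(G)$ yields an invariant optimum $X^*$. Vertex-transitivity then forces $X^*_{ii} = 1/N$ and $X^*\bm{1} = (\vartheta(G)/N)\bm{1}$. The rescaled $B^* = NX^*$ is feasible for the max formulation with $\bm{1}$ as leading eigenvector of eigenvalue $\vartheta(G)$, by combining the Rayleigh bound with feasibility. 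From $B^*$ I would construct
\begin{equation*}
\bar M \;=\; J + \frac{N}{\vartheta(G)}\bigl(I - B^*\bigr),
\end{equation*}
as a certificate for the min-formulation of $\vartheta(\bar G)$: feasibility follows from $B^*_{ii} = 1$ and $B^*_{ij} = 0$ on $E(G)$, while the spectrum splits so that on $\operatorname{span}(\bm{1})$ the eigenvalue is $N + \tfrac{N}{\vartheta(G)}(1 - \vartheta(G)) = \tfrac{N}{\vartheta(G)}$, and on $\bm{1}^\perp$ the eigenvalues are $\tfrac{N}{\vartheta(G)}(1 - \mu) \leq \tfrac{N}{\vartheta(G)}$ since $B^* \succeq 0$ forces $\mu \geq 0$. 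Hence $\vartheta(\bar G) \leq N/\vartheta(G)$.

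The main obstacle is the harder direction $\vartheta(G \boxtimes \bar G) \leq \vartheta(G)\vartheta(\bar G)$ of the multiplicativity identity used in the lower-bound step, which cannot be proved by the easy tensor-of-optima construction and instead requires a careful dual orthonormal-representation argument. The averaging trick and the dual certificate in the upper-bound step are then elementary spectral computations once the standard equivalence between the max $\lambda_{\max}$ formulation and the trace-SDP formulation of $\vartheta$ is in hand.
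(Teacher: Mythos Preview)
The paper does not prove this statement; it is quoted verbatim as Theorem~25 of Knuth's survey \cite{knuth1993sandwich} and used as a black box in the bound on $\vartheta(G_n)$. Your outline is a correct reconstruction of the standard Lov\'asz--Knuth argument: the lower bound via the diagonal independent set in $G\boxtimes\bar G$ together with strong-product multiplicativity of $\vartheta$, and the upper bound via automorphism-averaging in the trace-SDP to produce an optimal $B^*=NX^*$ with $\bm 1$ as Perron vector of eigenvalue $\vartheta(G)$, followed by the explicit min-formulation certificate $\bar M=J+\tfrac{N}{\vartheta(G)}(I-B^*)$ for $\bar G$. The spectral split you describe is correct, and the only nontrivial external ingredient you invoke is the $\leq$ direction of strong-product multiplicativity, which is indeed the step requiring Lov\'asz's orthonormal-representation duality.
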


Finally, we are ready to prove the main Lemma of this section.

\begin{lemma}
    Let $G_n$ denote the anti-commutativity graph of the set of $9{n \choose 2}$ 2-local Pauli matrices (see \Cref{def:anti_comm_graph}). Then, for all $n \geq 4$ it holds that
    \begin{equation}
        \vartheta(G_n) \leq {n \choose 2}.
    \end{equation}
\end{lemma}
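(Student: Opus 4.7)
The plan is to combine the Lovász sandwich inequality $\omega(G_n) \le \vartheta(\bar G_n)$ with the vertex-transitive reciprocity identity $\vartheta(G_n)\vartheta(\bar G_n) = N$ stated as \Cref{thm:vertex_transitive_lovasz_upper}. Concretely, I will first argue that $G_n$ is vertex symmetric so the reciprocity identity applies, and then exhibit an explicit clique of size $9$ in $G_n$ (nine pairwise anticommuting $2$-local Paulis). Combining the two yields $\vartheta(G_n) \le 9\binom{n}{2}/9 = \binom{n}{2}$.

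For vertex symmetry, fix two $2$-local Paulis $P = \sigma^a_i \sigma^b_j$ and $P' = \sigma^{a'}_{i'} \sigma^{b'}_{j'}$ on $n$ qubits. A qubit permutation $\pi \in S_n$ mapping $\{i,j\}$ to $\{i',j'\}$, followed by a tensor product of single-qubit Clifford conjugations on qubits $i',j'$ that realize the appropriate permutation of $\{X,Y,Z\}$ on each qubit, sends $P$ to $P'$. Both qubit permutations and single-qubit Clifford conjugations preserve all commutation/anticommutation relations among Paulis, so they act as automorphisms of $G_n$; hence $\mathrm{Aut}(G_n)$ acts transitively on the $N = 9\binom{n}{2}$ vertices, and $G_n$ is vertex symmetric in the sense of \Cref{def:vertex_symmetric}. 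Applying \Cref{thm:vertex_transitive_lovasz_upper} then gives $\vartheta(G_n)\,\vartheta(\bar G_n) = 9\binom{n}{2}$, reducing the problem to showing $\vartheta(\bar G_n) \ge 9$.

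For the clique bound, since identity components do not affect commutation, it suffices to construct nine pairwise anticommuting $2$-local Paulis on $4$ qubits; the same operators then witness $\omega(G_n) \ge 9$ for every $n \ge 4$. I will take
\begin{equation}
\mathcal{C} \;=\; \{X_1 X_2,\; X_1 Y_2,\; X_1 Z_2,\; Y_1 X_3,\; Y_1 Y_3,\; Y_1 Z_3,\; Z_1 X_4,\; Z_1 Y_4,\; Z_1 Z_4\}.
\end{equation}
The verification is transparent from the structure: within each of the three groups $\{X_1\cdot_2\}$, $\{Y_1\cdot_3\}$, $\{Z_1\cdot_4\}$, the operators share qubit $1$ with equal single-qubit Paulis and share the second qubit with distinct single-qubit Paulis (one anticommuting overlap); between different groups, the operators share only qubit $1$, with distinct single-qubit Paulis there (again one anticommuting overlap). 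Thus every pair in $\mathcal{C}$ anticommutes, giving $\omega(G_n) \ge 9$, and by the sandwich bound $\vartheta(\bar G_n) \ge \omega(G_n) \ge 9$. Dividing the reciprocity identity by this lower bound yields the stated $\vartheta(G_n) \le \binom{n}{2}$.

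The only non-mechanical step is the clique construction; one might worry that $2$-locality severely restricts anticommuting families (as it does on $2$ qubits, where the maximum is $3$), but the three-by-three pattern above uses qubit $1$ as a common ``hub'' to avoid this bottleneck and saturates the general $2n+1$ bound at $n=4$.
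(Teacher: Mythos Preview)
Your proposal is correct and essentially identical to the paper's proof: both establish vertex symmetry of $G_n$ via qubit permutations and per-qubit Pauli permutations, invoke the reciprocity identity $\vartheta(G_n)\vartheta(\bar G_n)=9\binom{n}{2}$, and exhibit the very same nine pairwise anticommuting $2$-local Paulis on four qubits to lower-bound $\vartheta(\bar G_n)$ by $9$. The only cosmetic difference is that the paper phrases the witness as an independent set in $\bar G_n$ while you phrase it as a clique in $G_n$, which are of course the same object.
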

\begin{proof}
We first note that the graph $G_n$ is vertex symmetric (see \Cref{def:vertex_symmetric}). Indeed, each node in $G_n$ is labeled by qubit indices $(i_1, i_2) = \indi \in \mathcal{I}^n_2$ and Paulis $(a_1, a_2) = \inda \in \{1,2,3\}^2$. Note, that the following permutations leave commutation relations invariant and consist of permutations in the automorphism group of $G_n$:
\begin{itemize}
    \item Permutations $P_\pi$ which for any qubit $i$, permute the Paulis for qubit $i$ corresponding to a permutation $\pi \in S_3$ of three elements acting on qubit $k$:
    \begin{equation}
        P_\pi \left\{i_1, i_2, a_1, a_2\right\} = \left\{i_1, i_2, \delta_{k,i_1} \pi a_1 +  (1- \delta_{k,i_1})a_1, \delta_{k,i_2} \pi a_2 +  (1- \delta_{k,i_2}) a_2\right\}.
    \end{equation}
    \item Permutations $P_{\pi'}$ which relabel the qubit indices $i_1, i_2$ based on a permutation $\pi' \in S_n$:
    \begin{equation}
        P_{\pi'} \left\{i_1, i_2, a_1, a_2\right\} = \left\{\pi' i_1, \pi' i_2, a_1, a_2\right\} .
    \end{equation}
\end{itemize}
Combining permutations of the form above, it is easy to see that the graph $G_n$ is vertex symmetric so by \Cref{thm:vertex_transitive_lovasz_upper}:
\begin{equation}
    \vartheta(G_n) \vartheta(\bar G_n) = 9{n \choose 2} .
\end{equation}

The graph $\bar G_n$ contains an edge between nodes $u$ and $v$ whenever the Pauli operators corresponding to $u$ and $v$ commute. Note, that the following set of Paulis correspond to nodes which constitute an independent set of the graph $\bar G_n$:
\begin{equation}
    \begin{split}
        \biggl\{ & X \otimes X \otimes I \otimes I \otimes I^{\otimes n-4}, \quad X \otimes Y \otimes I \otimes I \otimes I^{\otimes n-4}, \quad X \otimes Z \otimes I \otimes I \otimes I^{\otimes n-4} \\
        & Y \otimes I \otimes X \otimes I \otimes I^{\otimes n-4}, \quad Y \otimes I \otimes Y \otimes I \otimes I^{\otimes n-4}, \quad Y \otimes I \otimes Z \otimes I \otimes I^{\otimes n-4}  \\
        & Z \otimes I \otimes I \otimes X \otimes I^{\otimes n-4}, \quad Z \otimes I \otimes I \otimes Y \otimes  I^{\otimes n-4}, \quad X \otimes I \otimes I \otimes Z \otimes I^{\otimes n-4} \biggr\}
    \end{split}
\end{equation}
In fact, each pair of the Pauli operators above share a single index for which the Paulis in that index are not the same. This implies that any such pair does not commute and there are no edges between them in $\bar G_n$. Therefore, since the largest independent set of $G_n$ is a lower bound to $\vartheta(\bar G_n)$, then $9 \leq \vartheta(\bar G_n)$. This implies:
\begin{equation}
    \vartheta(G_n)  =  9{n \choose 2} \vartheta(\bar G_n)^{-1} \leq {n \choose 2} .
\end{equation}

\end{proof}

\clearpage
\bibliographystyle{alpha}
\bibliography{main.bib}

\end{document}